\documentclass[11pt]{article}
\usepackage{blindtext}
\usepackage[usenames,dvipsnames,svgnames,table]{xcolor}
\usepackage{enumitem}
\usepackage{caption}
\usepackage{placeins}
\usepackage{graphicx}

\usepackage{fancybox}
\usepackage{hyperref}

\usepackage{amsmath,amsfonts,amsthm,amssymb,xcolor}
\usepackage{bm} 

\usepackage{nicefrac}

\usepackage{float}


\newtheorem{theorem}{Theorem}[section]

\newtheorem{corollary}[theorem]{Corollary}
\newtheorem{lemma}[theorem]{Lemma}

\newtheorem{claim}[theorem]{Claim}

\newtheorem{fact}[theorem]{Fact}

\newtheorem*{theorem*}{Theorem}
\newtheorem*{corollary*}{Corollary}
\newtheorem*{conjecture*}{Conjecture}
\newtheorem*{lemma*}{Lemma}
\newtheorem*{thm*}{Theorem}
\newtheorem*{prop*}{Proposition}
\newtheorem*{obs*}{Observation}

\newtheorem*{rec*}{Recommendation}

\theoremstyle{definition}
\newtheorem{definition}[theorem]{Definition}

\newtheorem*{remark*}{Remark}
\newtheorem*{definition*}{Definition}

\newenvironment{fminipage}%
  {\begin{Sbox}\begin{minipage}}%
  {\end{minipage}\end{Sbox}\fbox{\TheSbox}}

\def\defeq{\stackrel{\mathrm{def}}{=}}
\def\setof#1{\left\{#1  \right\}}
\def\sizeof#1{\left|#1  \right|}

\def\union{\cup}
\def\intersect{\cap}

\def\abs#1{\left|#1  \right|}

\def\norm#1{\left\| #1 \right\|}

\def\calA{\mathcal{A}}
\def\calC{\mathcal{C}}
\def\calD{\mathcal{D}}

\def\calG{\mathcal{G}}

\def\calK{\mathcal{K}}
\def\calL{\mathcal{L}}
\def\calS{\mathcal{S}}
\def\calT{\mathcal{T}}
\def\calM{\mathcal{M}}

\newcommand\PPi{\boldsymbol{\Pi}}

\newcommand\aalpha{\boldsymbol{\alpha}}

\newcommand{\assign}{\leftarrow}

\def\aa{\pmb{\mathit{a}}}
\newcommand\bb{\boldsymbol{\mathit{b}}}
\newcommand\cc{\boldsymbol{\mathit{c}}}
\newcommand\dd{\boldsymbol{\mathit{d}}}

\newcommand\ff{\boldsymbol{\mathit{f}}}
\renewcommand\gg{\boldsymbol{\mathit{g}}}

\newcommand\pp{\boldsymbol{\mathit{p}}}
\newcommand\qq{\boldsymbol{\mathit{q}}}

\newcommand\uu{\boldsymbol{\mathit{u}}}

\newcommand\yy{\boldsymbol{\mathit{y}}}
\newcommand\zz{\boldsymbol{\mathit{z}}}
\newcommand\xx{\boldsymbol{\mathit{x}}}

\newcommand\veczero{\boldsymbol{0}}
\newcommand\vecone{\boldsymbol{1}}
\newcommand{\zero}{\mathbf{0}}

\newcommand\bz{\mathbb Z}

\newcommand\rea{\mathbb R}

\renewcommand\AA{\boldsymbol{\mathit{A}}}
\newcommand\BB{\boldsymbol{\mathit{B}}}
\newcommand\CC{\boldsymbol{\mathit{C}}}

\newcommand\GG{\boldsymbol{\mathit{G}}}
\newcommand\HH{\boldsymbol{{H}}}
\newcommand\II{\boldsymbol{\mathit{I}}}

\newcommand\MM{\boldsymbol{\mathit{M}}}

\newcommand\WW{\boldsymbol{\mathit{W}}}

\newcommand\BBtil{\boldsymbol{\mathit{\widetilde{B}}}}

\newcommand\WWtil{\boldsymbol{\mathit{\widetilde{W}}}}

\newcommand\AAhat{\boldsymbol{\widehat{\mathit{A}}}}

\newcommand\BBhat{\boldsymbol{\widehat{\mathit{B}}}}

\newcommand\fftil{\boldsymbol{\tilde{\mathit{f}}}}
\newcommand\xxtil{\boldsymbol{\tilde{\mathit{x}}}}

\newcommand\Otil{\widetilde{O}}

\newcommand\R{\mathbb{R}}





\newcommand{\trp}{\top}


\DeclareMathOperator{\nnz}{nnz}

\DeclareMathOperator*{\argmin}{arg\,min}

\DeclareMathOperator*{\diag}{diag}
\DeclareMathOperator*{\Span}{span}
\DeclareMathOperator*{\conv}{conv}
\DeclareMathOperator*{\Null}{null}

\DeclareMathOperator*{\im}{im}

\DeclareMathOperator*{\poly}{poly}


\def\eps{\epsilon}

\def\lsa{\textsc{lea}}
\def\ls{\textsc{le}}
\def\slec{\textsc{sle}-complete}

\def\one{{\bf 1}}
\def\epsda{\epsilon^{DA}}
\def\epsb{\epsilon^{B_2}}
\def\calP{\mathcal{P}}
\def\da{\calD\calA}

\def\b2{\mathcal{B}_2}

\def\aalpha{\boldsymbol{\alpha}}
\def\ttil{\tilde{t}}
\def\mtil{\tilde{m}}

\def\triangle{\boldsymbol{\Delta}}

\usepackage{fullpage}

\usepackage[letterpaper,
left=1in,right=1in,top=1in,bottom=1in]{geometry}

\usepackage[
  backend=biber,
  backref=true,
  backrefstyle=none,
  date=year,
  doi=false,
  giveninits=true,
  hyperref=true,
  maxbibnames=10,
  sortcites=false,
  style=alphabetic,
  url=false, 
]{biblatex}
\addbibresource{References.bib}

\usepackage{amsmath}
\usepackage{amssymb}
\usepackage{amsthm}
\usepackage{graphicx}
\usepackage{subfigure}
\usepackage{float}
\usepackage{color}
\usepackage{pifont}
\usepackage[ruled, linesnumbered]{algorithm2e}
\usepackage{booktabs}
\usepackage{multirow}
\usepackage{threeparttable}
\usepackage{arydshln}
\usepackage{multirow}
\usepackage{threeparttable}
\usepackage[toc,page]{appendix}
\usepackage{bbm}
\usepackage{mathrsfs} 
\usepackage{tablefootnote}
\usepackage{blkarray}
\usepackage{nicematrix}


\newcommand{\todo}[1]{{\bf \color{red} TODO: #1}}
\newcommand{\todolow}[1]{{\bf \color{o range} TODOLOW: #1}}

\newcommand{\rasmus}[1]{{\bf \color{orange} Rasmus: #1}}
\newcommand{\ming}[1]{{\bf \color{blue} [Ming: #1]}}
\newcommand{\peng}[1]{{\bf \color{orange}[Peng: #1]}}

 \renewcommand\todo[1]{}
 \renewcommand{\todolow}[1]{}
 \renewcommand{\rasmus}[1]{}
 \renewcommand{\ming}[1]{}
 \renewcommand{\peng}[1]{}


\addbibresource{ref.bib}

\begin{document}

\title{Hardness Results for Laplacians of Simplicial Complexes via
  Sparse-Linear Equation Complete Gadgets}

\author{
	Ming Ding\\ 
	\texttt{ming.ding@inf.ethz.ch}\\
	Department of Computer Science\\
	ETH Zurich
	\and
	Rasmus Kyng\thanks{The research leading to these results has received funding from grant no. 200021 204787 of the Swiss National Science Foundation.}\\ 
	\texttt{kyng@inf.ethz.ch}\\
	Department of Computer Science\\
	ETH Zurich
	\and
	Maximilian Probst Gutenberg\footnotemark[1]\\ 
	\texttt{maximilian.probst@inf.ethz.ch}\\
	Department of Computer Science\\
	ETH Zurich
	\and
	Peng Zhang\\
	\texttt{pz149@rutgers.edu}\\
	Department of Computer Science\\
	Rutgers University
}

\date{} 

\clearpage\maketitle
\thispagestyle{empty}

\begin{abstract}
  We study linear equations in combinatorial Laplacians of $k$-dimensional simplicial complexes ($k$-complexes), a natural generalization of graph Laplacians. 
  Combinatorial Laplacians play a crucial role in homology and are a central tool in topology. Beyond this, they have various applications in data analysis and physical modeling problems. 
  It is known that nearly-linear time solvers exist for graph Laplacians. However, nearly-linear time solvers for combinatorial Laplacians are only known for restricted classes of complexes.

  This paper shows that linear equations in combinatorial Laplacians of 2-complexes are as hard to solve as general linear equations. 
  More precisely, for any constant $c \geq 1$, if we can solve linear equations in combinatorial Laplacians of 2-complexes up to high accuracy in time $\tilde{O}((\# \text{ of nonzero coefficients})^c)$, then we can solve general linear equations with polynomially bounded integer coefficients and condition numbers up to high accuracy in time $\tilde{O}((\# \text{ of nonzero coefficients})^c)$. We prove this by a nearly-linear time reduction from general linear equations to combinatorial Laplacians of 2-complexes.
  Our reduction preserves the sparsity of the problem instances up to poly-logarithmic factors.

\end{abstract}

\newpage
\pagenumbering{gobble}

\sloppy

{\footnotesize
	\tableofcontents
}

\newpage

\pagenumbering{arabic}


\clearpage

\section{Introduction}

\subsection{Simplicial Complexes, Homology, and Combinatorial Laplacians}
\label{sec:background}
We study linear equations whose coefficient matrices are combinatorial Laplacians of $k$-dimensional abstract simplicial complexes ($k$-complexes), which generalize the well-studied graph Laplacians. An abstract simplicial complex $\calK$ is a family of sets, known as simplices, closed under taking subsets, i.e., every subset of a set in $\calK$ is also in $\calK$. The dimension of $\calK$ is the largest size of the simplices in $\calK$ minus 1. A geometric notion of abstract simplicial complexes is simplicial complexes, under which a $k$-simplex is the convex hull of $k+1$ vertices (for example, 0,1,2-simplexes are vertices, edges, and triangles, respectively). In particular, complexes in 1 dimension are graphs; combinatorial Laplacians in 1-complexes are graph Laplacians.

Nearly-linear time solvers exist for linear equations in graph Laplacians~\cite{ST14,KMP10,KMP11,PS14,CKMPPRX14,KS16,JS21}, and some natural generalizations such as connection Laplacians~\cite{KLPSS16} and directed Laplacians~\cite{CKPPRSV17,CKKPPRS18}. However, nearly-linear time solvers for linear equations in combinatorial Laplacians are only known for very restricted classes of 2-complexes~\cite{CFMNPW14, BMNW22}. We ask whether one can extend these nearly-linear solvers to general combinatorial Laplacians.

Combinatorial Laplacians are defined via boundary operators of the chain spaces of an oriented complex. Given an oriented simplicial complex $\calK$, a $k$-chain is a (signed) weighted sum of the $k$-simplices in $\calK$. The boundary operator $\partial_k$ is a linear map from the $k$-chain space to the $(k-1)$-chain space; in particular, it maps a $k$-simplex to a signed sum of its boundary $(k-1)$-simplices, where the signs are determined by the orientations of the $k$-simplex and its boundary $(k-1)$-simplices. For example, $\partial_1$ is the oriented vertex-edge incidence matrix. The combinatorial Laplacian $\calL_k$ is defined to be $\partial_{k+1} \partial_{k+1}^\top + \partial_{k}^\top \partial_{k}$. In particular, $\calL_0 = \partial_1 \partial_1^\top$ is the graph Laplacian.

Combinatorial Laplacians play an important role in both pure
mathematics and applied areas. These matrices originate in the study
of discrete Hodge decomposition~\cite{eckmann44}: The kernel of
$\calL_k$ is isomorphic to the $k$th homology space of $\calK$. The
properties of combinatorial Laplacians have been studied in many subsequent
works \cite{friedman98, DW02, DKM09, DKM15, mn21}. A central problem
in homology theory is evaluating the Betti number of the $k$th
homology space, which equals the rank of $\calL_k$. In the case of
homology over the reals, computing the rank of $\calL_k$ can be
reduced to solving a poly-logarithmic number of linear equations in
$\calL_k$ \cite{BV21}. Computation of Betti numbers over the reals is
a key step in numerous problems in applied topology, computational
topology, and topological data analysis \cite{Z05, G08, C09, EH10,
  CSGO16}. In addition, combinatorial Laplacians have applications in
statistical ranking \cite{JLYT11,XHJYLY12}, graphics and image
processing \cite{MMOC11,TLHD03},
electromagnetism and fluids mechanics \cite{DKT08}, data representations \cite{CMZ18}, cryo-electron microscopy \cite{KL17}, biology \cite{SBHLJ20}. We refer to the readers to \cite{lim20} for an accessible survey.

The reader may be puzzled that despite a vast literature on
combinatorial Laplacians and their central role in topology, little is
known about solving linear equations in these matrices except in very
restricted cases \cite{CFMNPW14,BMNW22}. In this paper, we show that approximately solving linear equations in general combinatorial Laplacians
is as hard as approximately solving general linear equations over the
reals, which explains the lack of special-purpose solvers for this
class of equations. More precisely, if one can solve linear equations
in combinatorial Laplacians of general $2$-complexes to high accuracy
in time $\tilde{O}((\# \text{ of nonzero
  coefficients})^c)$\footnote{$\tilde{O}$ hides poly-logarithmic
  factors in following parameters of the input: ratio of maximum and
  minimum non-zero singular values, the maximum ratio of non-zero entries
  (in absolute value), and the inverse of the accuracy parameter.} 
for some constant $c \ge 1$, then one can solve general linear equations with polynomially bounded integer coefficients and condition numbers
up to high accuracy in time $\tilde{O}((\# \text{ of nonzero
  coefficients})^c)$. A recent breakthrough shows that general linear
equations can be solved up to high accuracy in time $\Otil((\# \text{
  of nonzero coefficients})^{2.27159})$~\cite{PV21,nie21}, which for
sparse linear equations is asymptotically faster than the
long-standing runtime barrier of fast matrix multiplication \cite{S69},
which currently achieves a running time of $\tilde{O}(n^{2.3728596})$~\cite{AW21}.
Understanding the optimal value of $c$ is a major open problem in
numerical linear algebra.
Our result, viewed positively, shows that one can reduce the problem
of designing fast solvers for general linear equations 
to that for combinatorial Laplacians.

\subsection{Hardness Results Based on Linear Equations}

Kyng and Zhang~\cite{KZ17} initiated the study of hardness results for solving structured linear equations. 
They showed that solving linear equations in a slight generalization of graph Laplacians such as 2-commodity Laplacians, 2-dimensional truss stiffness matrices, and 2-total-variation matrices is as hard as solving general linear equations.

Suppose given an invertible matrix $\AA$ and a vector $\bb$ over the reals, 
we want to approximately solve linear equation $\AA \xx =
\bb$, 
i.e., find $\xxtil$ such that $\norm{\AA \xxtil - \bb}_2 \leq \epsilon
\norm{\bb}_2$ for some $\epsilon$.

\begin{definition*}[(Informal) Sparse-linear-equation
    completeness of matrix family $\mathcal{B}$.]
Consider a family of matrices $\mathcal{B}$, and suppose that for
any instance $(\AA,\bb,\epsilon)$ 
we can produce matrix $\BB \in \mathcal{B}$, vector $\cc$, and accuracy parameter $\delta$, such that if we can solve $\BB \yy = \cc$ up to error $\delta$, 
then we can produce $\xxtil$ that solves $\AA \xx = \bb$ to the
desired accuracy.

If, given  $(\AA,\bb,\epsilon)$,  we can compute $(\BB,\cc,\delta)$ in
$\tilde{O}(\nnz(\AA))$ time with $\nnz(\BB) = \tilde{O}(\nnz(\AA))$
then we say that the class $\mathcal{B}$ is \emph{sparse-linear-equation complete}.
\end{definition*}

In our preliminaries in Section~\ref{sect:preliminaries}, we state a
formal definition of sparse-linear equation completeness that also
extends to non-invertible matrices.

The reason for our use of the term ``completeness'' is that if a solver 
with runtime $\tilde{O}((\# \text{ of nonzero coefficients})^c)$ is known for the class $\mathcal{B}$,
then a solver with runtime $\tilde{O}((\# \text{ of nonzero
  coefficients})^c)$ exists for general matrices.
Such solvers are known for the classes of Laplacian Matrices, Directed Laplacian Matrices, Connection Laplacian Matrices, and several more classes, all with $c = 1$. Thus, if any of
these classes were sparse-linear-equation complete, we would
immediately get nearly-linear time solvers for general linear
equations.

In this language, Kyng and Zhang~\cite{KZ17} showed that 2-commodity
Laplacians, 2-dimensional truss stiffness matrices, and 2-total-variation matrices are all sparse-linear-equation complete.
 We note that \cite{KWZ20} considered a larger family of hardness assumptions based on linear equations, 
 which, among other things, can express weaker hardness statements based on weaker reductions.

\subsection{Our Contributions}

In the terminology established above, our main result can be stated
very succinctly:
\begin{theorem}[Informal First Main Theorem]
  \label{thm:informalCombLapHard}
  Linear equations in combinatorial Laplacians of 2-complexes are sparse-linear-equation complete.
\end{theorem}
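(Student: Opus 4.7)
The plan is to reduce from a matrix class already shown to be sparse-linear-equation complete in \cite{KZ17}, such as the 2-commodity Laplacians or 2D truss stiffness matrices. Given an instance $(\AA,\bb,\epsilon)$ from such a class, the reduction will build a 2-complex $\calK$ together with a right-hand side $\cc$ and target accuracy $\delta$ such that $\nnz(\calL_1(\calK)) = \Otil(\nnz(\AA))$, the construction runs in $\Otil(\nnz(\AA))$ time, and any $\delta$-accurate solution to $\calL_1(\calK)\yy = \cc$ can be transformed in nearly-linear time into an $\epsilon$-accurate solution to $\AA\xx = \bb$. The target correctness identity is that after Schur-complementing out all non-logical (``auxiliary'') edges of $\calK$, the resulting operator on a designated set of logical edges is spectrally equivalent to $\AA$, or to a symmetric reformulation of $\AA$ of comparable sparsity.

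The core construction is a library of small 2-complex gadgets whose boundary operators $\partial_1,\partial_2$ realize prescribed signed off-diagonal couplings between a few interface edges while leaving other logical edges untouched. Because every entry of $\partial_1$ and $\partial_2$ lies in $\{-1,0,1\}$, a single entry of $\calL_1$ cannot carry a polynomially large coefficient. I would therefore realize a coupling of magnitude $w$ between two logical edges by attaching $\Theta(w)$ parallel 2-simplices that share a common auxiliary edge, with orientations chosen to yield the required sign. After first preprocessing $\AA$ so its entries lie in a $\polylog(n)$ range (rescaling and integer discretization, absorbed in the polylog slack of the $\Otil$ definition of completeness), the number of parallel simplices per nonzero of $\AA$ is $\polylog(n)$, so the sparsity budget is respected.

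The central structural difficulty is the fixed additive term $\partial_1^\top \partial_1$ in $\calL_1$: it is determined entirely by the 1-skeleton of $\calK$ and couples any two edges sharing a vertex with a $\pm 1$ entry. To control it, I would arrange the 1-skeleton so that no two logical edges share a vertex, with each logical edge isolated inside its own neighborhood of auxiliary vertices. Then $\partial_1^\top \partial_1$ contributes to the logical block only on the diagonal, as a known uniform shift that can be subtracted analytically, and all off-diagonal couplings between logical edges must come from $\partial_2 \partial_2^\top$, realized by the triangle gadgets above. The couplings that $\partial_1^\top \partial_1$ imposes on auxiliary edges are absorbed during Schur-complement elimination, and I would need to verify that this elimination is well-conditioned and introduces no spurious off-diagonal contributions to the logical block.

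The main obstacle I expect is the simultaneous bookkeeping needed to make the Schur complement land, up to polylog spectral distortion, on the target matrix. Each gadget must be local, produce the correct signed coupling through orientation choices, never introduce shared vertices between logical edges, and --- critically --- have an internal Laplacian well-conditioned enough that eliminating its auxiliary edges is numerically stable and leaves the logical block close to the prescribed pattern. Establishing $\sc{\calL_1(\calK)}{\text{logical}} \approx \AA$ within $\Otil(\nnz(\AA))$ total nonzeros, for a family of gadgets rich enough to cover every matrix in a sparse-linear-equation complete class, is the technical heart of the reduction.
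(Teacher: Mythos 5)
Your proposal takes a genuinely different route from the paper, and while the high-level template (build 2-complex gadgets, analyze the resulting $\calL_1$) is reasonable, there are two concrete gaps that the paper's own path is specifically designed to sidestep.

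First, the paper does not try to make the combinatorial Laplacian $\calL_1$ of the constructed complex match a target matrix (or its Schur complement onto ``logical'' edges). Instead it proves hardness for the simpler operator $\partial_2$ alone: it encodes a difference-average system $\AA\xx=\bb$ as a triangle-flow problem $\partial_2\ff=\gamma$ via spheres (one per variable) glued by tubes (one cluster per equation), with signs realized purely through orientation. Only afterwards does it derive $\calL_1$-hardness as a corollary, using the orthogonality $\im(\partial_1^\top)\perp\im(\partial_2)$, so that an approximate $\calL_1$-solve projects to an approximate $\partial_2$-solve via $\ff=\partial_2^\top\xx$ (Lemma~\ref{thm:compLap2Boundary_informal} / Theorem~\ref{thm:compLap2Boundary}). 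This entirely avoids the interference of $\partial_1^\top\partial_1$ that your Schur-complement strategy has to fight head on. Your idea of isolating logical edges so they share no vertices is plausible in spirit, but you would still have to show that eliminating \emph{all} auxiliary edges (whose couplings mix $\partial_1^\top\partial_1$ and $\partial_2\partial_2^\top$ terms) leaves the logical block spectrally close to your target, with controlled condition number; you flag this yourself as the open ``technical heart,'' and it is exactly the step the paper routes around.

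Second, the coefficient-magnitude step is underspecified in a way that is not a cosmetic issue. You propose to preprocess $\AA$ by ``rescaling and integer discretization'' to $\polylog$-bounded entries and then use $\Theta(w)$ parallel triangles per coefficient $w$. But general invertible systems with polynomially bounded integer entries and condition number cannot simply be rescaled to $\polylog$ entries without changing the problem or its condition number in ways that break the reduction definition (the $\Otil$ in sparse-linear-equation completeness only hides polylog factors in the condition number, not polynomial entry magnitudes). The paper handles exactly this by first reducing $\calG$ to the difference-average class $\calD\calA$ (Theorem~\ref{thm:prelim_kz17}, a bit-decomposition that only blows up $\nnz$ by $O(\log\norm{\AA}_{\max})$), and then encoding $\calD\calA$ systems into $\partial_2$ with $O(1)$ triangles per nonzero. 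If you want to keep your gadget-and-Schur-complement plan, you would still need something equivalent to the $\calG\to\calD\calA$ reduction to make the ``$\Theta(w)$ parallel simplices'' count honest; that is itself a substantial part of the argument, not a preprocessing footnote.

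In short, your approach is a plausible alternative design in the style of the Kyng-Zhang truss gadgets, but it has two unproven load-bearing pieces: a rigorous small-coefficient normalization matching the sparse-linear-equation completeness budget, and a well-conditioned Schur-complement elimination of the auxiliary block of $\calL_1$. The paper's route (general $\to$ difference-average $\to$ $\partial_2$ $\to$ $\calL_1$) avoids both by working at the level of the boundary operator and exploiting $\PPi_{\partial_2}\partial_1^\top=\veczero$.
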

In fact, we show this by showing an even simpler problem is
sparse-linear-equation complete, namely linear equations in the
boundary operator of a 2-complex, which is our second main result.
\begin{theorem}[Informal Second Main Theorem]
  Linear equations in the boundary operators $\partial_2$ of 2-complexes are sparse-linear-equation complete.
\end{theorem}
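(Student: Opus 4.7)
The plan is to exhibit a reduction running in $\Otil(\nnz(\AA))$ time that, given a sparse linear system $\AA \xx = \bb$ with polynomially bounded integer coefficients and condition number, produces the boundary operator $\partial_2$ of an explicitly constructed oriented $2$-complex with $\Otil(\nnz(\AA))$ nonzero entries, together with a right-hand side $\cc$ and an accuracy parameter $\delta$, so that any sufficiently accurate solution of $\partial_2 \yy = \cc$ can be post-processed within the same budget into a solution of $\AA \xx = \bb$ with the required accuracy.

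I would proceed in three reduction stages. First, apply a \emph{binary decomposition} to replace every coefficient of magnitude at most $\poly(n)$ by an $O(\log n)$-length chain of auxiliary variables and equations whose entries lie only in $\{-1,0,+1\}$; doubling is implemented by equations of the form $z = x_1 + x_2$ together with equality constraints $x_1 = x = x_2$, so arbitrary integer linear combinations can be simulated with only polylogarithmic overhead in sparsity. Second, reduce the column degree: for a column with $\pm 1$ entries in $k > 3$ rows, introduce $k-2$ fresh ``chain'' variables and split the column into $k-2$ columns with exactly three $\pm 1$ entries each, where a chaining row between consecutive new columns propagates equality of the auxiliaries; columns with fewer than three nonzeros are padded analogously. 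After these two stages, the coefficient matrix has the column shape of a $\partial_2$ operator, and the right-hand side has been adjusted accordingly.

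The third and most delicate stage is to realize this column-$3$ signed matrix as the boundary operator of an oriented abstract simplicial $2$-complex. Having three $\pm 1$ entries per column is not sufficient on its own: the three rows where a column is nonzero must correspond to three edges that pairwise share exactly one vertex and jointly span exactly three vertices, and the signs must be consistent with a global orientation of edges and triangles. My plan is to build the complex bottom-up, giving each intended triangle its own fresh triple of auxiliary vertices so that disjoint triangles are initially independent, and then gluing them along ``row edges'' via small \emph{vertex-identification gadgets}, each a constant-size sub-complex whose rows force two nominally distinct edges to carry the same value in any valid solution. With enough of these gadgets, one copy of each logical row is singled out as a genuine shared edge, and a greedy orientation pass then fixes edge orientations so as to match the desired $\pm 1$ pattern, at the cost of sign flips in a few auxiliary rows that are absorbed into $\cc$.

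The main obstacle will be the third stage: showing that the vertex-identification gadgets compose without exceeding polylogarithmic overhead in nonzero count and that a consistent global orientation always exists for any output of the first two stages, so that the resulting matrix is literally a $\partial_2$. The first two stages reduce to standard binary-encoding and degree-reduction tricks familiar from hardness results such as those of~\cite{KZ17}, and once the complex and orientations are in place, both the $\Otil(\nnz(\AA))$ running time of the reduction and the bound $\nnz(\partial_2) = \Otil(\nnz(\AA))$ follow immediately, yielding sparse-linear-equation completeness for the class of $\partial_2$ operators of $2$-complexes.
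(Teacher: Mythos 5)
Your stage three has a genuine gap, and it is precisely the step you flag as ``most delicate.'' First, the claim that \emph{any} output of stages one and two can be given a consistent global orientation is false as stated, and the proposed fix does not work: flipping the orientation of an edge $e$ negates the sign of $e$ in \emph{every} triangle incident to $e$, not just the one you are trying to repair. Since after your degree-reduction step an edge can still appear in many columns, a ``greedy orientation pass'' cannot realize an arbitrary $\{0,\pm 1\}$, column-degree-3 sign pattern as a $\partial_2$; deciding when such a pattern is realizable is a nontrivial question that your proposal does not answer. Second, the ``vertex-identification gadget'' is underspecified in a way that hides the real difficulty. In the flow view, a $\partial_2$ system does not let you add arbitrary linear side constraints equating two edge values; the only constraints you get are of the form (sum of incident triangle flows with signs) $= \cc(e)$. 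To make two \emph{triangles} carry the same flow you need a genuine topological gadget (the paper uses tubes and punctured spheres for exactly this), and to make two \emph{edges} effectively coincide while keeping the object a legal simplicial complex (faces of simplices present, pairwise intersections are common faces, no accidental extra simplices) is more than a constant-size local surgery once gadgets start sharing vertices. You acknowledge the obstacle, but you have not exhibited a gadget that works, and the central realizability lemma you would need is missing.

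The paper avoids this pitfall by not attempting to realize a given matrix as a $\partial_2$ at all. Instead it goes through the intermediate class of difference-average equations ($x_i - x_j = b$ and $x_i + x_j - 2x_k = 0$), and then builds an \emph{explicit} oriented 2-complex from topological primitives with known oriented triangulations: one punctured sphere per variable (all its triangles are forced to equal flow because all interior edge demands are zero), one 3-edge loop per equation carrying demand $\bb(q)$, and constant-size tubes joining spheres to loops, with tube orientation chosen to encode the sign $\pm 1$ and the coefficient $-2$ split into two parallel tubes. Realizability and orientability are automatic by construction, and the encoding of equations falls out of how flows add along shared loop edges. Your proposal also omits the error and conditioning analysis that makes this a reduction between \emph{approximation} problems (the paper spends Sections~\ref{sect:approximate_solvers}--\ref{sect:regression} bounding $\kappa(\partial_2)$, choosing $\epsilon^{B_2}$ as a function of $\epsilon^{DA}$, and reweighting edges to handle the infeasible case). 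Without that, even a correct combinatorial construction would not establish sparse-linear-equation completeness as defined in Section~\ref{sect:preliminaries}.
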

This result is formally stated in Theorem~\ref{thm:boundarySLEC}.
Below, in Section~\ref{sect:boundary_and_combinatorial}, we sketch how
our first main theorem above follows from our second main theorem. We
give a formal proof of this in Appendix~\ref{sect:appendix_lap2boundary}.

Our proof establishes the sparse-linear-equation completeness of
boundary operators of a 2-complex via a two-step reduction.
In our first reduction step, we show sparse-linear-equation completeness of a very
simple class of linear equations which we call \emph{difference-average
equations}: these are equations where every row either restricts the
difference of two variables:
$\xx(i) - \xx(j) = b$, or it sets one variable to be the average of
two others: $\xx(i) + \xx(j) = 2\xx(k)$.
This reduction was implicitly proved in~\cite{KZ17} as an intermediate
step.
In this paper, we make the reduction explicit, which may be of
independent interest, as this reduction class is likely to be a good
starting point for many other hardness reductions.
One can think of this step as analogous to showing that 3-SAT is
NP-complete: It gives us a simple starting point for proving the hardness of other problems.
The formal theorem statement appears in Theorem~\ref{thm:prelim_kz17}.
In our second reduction step, we reduce a given difference-average
equation problem to a linear equation in the boundary operator of a
2-complex.

Both the two steps preserve the number of nonzero coefficients in the linear equations
up to a logarithmic factor, and only blow up the coefficients and condition numbers polynomially.
The reductions are also robust to error in the sense that to solve the
original problem to high accuracy, it suffices to solve the
reduced problem to accuracy at most polynomially higher.
Finally, we can compose the two reductions to show that solving linear
equations in 2-complex boundary operators to high accuracy is as hard as solving
general linear equations with polynomially bounded integer
coefficients and condition numbers to high accuracy.

We give more details on both reductions below, but first we describe how to
show that solving linear equations in combinatorial Laplacians
$\calL_1$ is also sparse-linear-equation complete.

\subsubsection{Hardness for Combinatorial Laplacians From Hardness for Boundary Operators}
\label{sect:boundary_and_combinatorial}
Our main technical result, Theorem~\ref{thm:boundarySLEC}, shows that the class of linear equations in the
boundary operators of 2-complexes is sparse-linear-equation
complete.
But, as the following simple lemma shows, we can reduce the problem of
solving in a boundary operator $\partial_2$ to solving in the corresponding
combinatorial Laplacian $\calL_1$, and hence the latter problem must
be at least as hard.
This then immediately implies our first main result,
Theorem~\ref{thm:informalCombLapHard}.
The reduction is captured in the following lemma.
\begin{lemma}[Informal reduction from boundary operators to
  combinatorial Laplacians in 2-complexes]
  Suppose we can solve linear equations in combinatorial Laplacians of 2-complexes 
  to high accuracy in nearly-linear time. Then, we can 
  solve linear equations in boundary operators $\partial_2$ of 2-complexes 
  to high accuracy in nearly-linear time.
  \label{thm:compLap2Boundary_informal}
\end{lemma}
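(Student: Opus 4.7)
The plan is to reduce $\partial_2 \xx = \bb$ to a linear equation in $\calL_1 = \partial_2 \partial_2^\top + \partial_1^\top \partial_1$, exploiting the Hodge decomposition of the edge space. Given a 2-complex $\calK$ (so both $\partial_1$ and $\partial_2$ are available) and $\bb \in \im(\partial_2)$, I first construct $\calL_1$. Each column of $\partial_1$ has at most two nonzero entries, so $\nnz(\partial_1) = O(\nnz(\partial_2))$ and hence $\nnz(\calL_1) = O(\nnz(\partial_2))$; the construction takes $\Otil(\nnz(\partial_2))$ time. Next, I invoke the hypothesised nearly-linear time combinatorial Laplacian solver on $\calL_1 \yy = \bb$ (which is consistent because $\im(\partial_2) \subseteq \im(\calL_1)$) to obtain $\yytil$, and return $\xxtil = \partial_2^\top \yytil$.

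The correctness argument rests on the identity $\partial_1 \partial_2 = 0$ (boundary of a boundary is zero), which implies $\im(\partial_2) \perp \im(\partial_1^\top)$ inside the edge space. Since $\bb \in \im(\partial_2)$ and $\partial_2 \partial_2^\top \yytil \in \im(\partial_2)$ while $\partial_1^\top \partial_1 \yytil \in \im(\partial_1^\top)$, the two summands on the right of
\[
\calL_1 \yytil - \bb \;=\; \bigl(\partial_2 \partial_2^\top \yytil - \bb\bigr) \;+\; \partial_1^\top \partial_1 \yytil
\]
are orthogonal, so by the Pythagorean theorem
\[
\norm{\partial_2 \xxtil - \bb} \;=\; \norm{\partial_2 \partial_2^\top \yytil - \bb} \;\le\; \norm{\calL_1 \yytil - \bb}.
\]
Hence any high-accuracy Laplacian solution yields a boundary-operator solution with at least the same residual quality.

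The main obstacle is threading the accuracy and conditioning parameters through the reduction so that everything fits the SLEC framework. Two checks are needed: (i) the nonzero singular values of $\calL_1$ equal $\{\sigma_i(\partial_2)^2\} \cup \{\sigma_j(\partial_1)^2\}$---again using that $\calL_1$ acts as $\partial_2\partial_2^\top$ on $\im(\partial_2)$ and as $\partial_1^\top\partial_1$ on $\im(\partial_1^\top)$, which are orthogonal---so the condition number of $\calL_1$ is polynomially bounded in those of $\partial_1$ and $\partial_2$; and (ii) the inverse-polylog relative error required for a high-accuracy solution of $\partial_2\xx = \bb$ translates to an inverse-polylog relative error for $\calL_1 \yy = \bb$ (using $\norm{\bb}$ as the reference on both sides), which is immediate from the residual bound above. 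Integer-coefficient and polynomial-condition-number bounds are preserved because the entries of $\calL_1$ are bounded polynomially in the entries of $\partial_1,\partial_2$ and the simplex degrees; all remaining ingredients of the sparse-linear-equation completeness definition are inherited directly from the assumed Laplacian solver.
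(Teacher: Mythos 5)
Your reduction is the same one the paper uses: solve $\calL_1 \yy = \bb$, set $\xxtil = \partial_2^\top \yytil$, and exploit $\partial_1\partial_2 = \matzero$ so that $\im(\partial_2) \perp \im(\partial_1^\top)$. Your Pythagorean step
\[
\norm{\calL_1\yytil - \bb}_2^2 = \norm{\partial_2\partial_2^\top\yytil - \bb}_2^2 + \norm{\partial_1^\top\partial_1\yytil}_2^2
\]
is in fact cleaner than the chain of norm-change inequalities in the paper's Appendix~\ref{sect:appendix_lap2boundary} and, in the feasible case, gives the residual bound directly with no loss of accuracy factors.

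The gap is your explicit restriction to $\bb \in \im(\partial_2)$. The \lsa~formulation (Definition~\ref{def:lsa}) must handle infeasible right-hand sides, and the $\b2$ instances produced by the main reduction in Theorem~\ref{thm:main_general} are precisely of that kind, since they encode possibly-infeasible difference-average systems. When $\bb \notin \im(\partial_2)$ your Pythagorean identity does not give what you wrote; you have to run it against $\PPi_{\calL_1}\bb$, getting
\[
\norm{\partial_2\xxtil - \PPi_{\partial_2}\bb}_2 \le \norm{\calL_1\yytil - \PPi_{\calL_1}\bb}_2 \le \eps\norm{\PPi_{\calL_1}\bb}_2,
\]
but then the two sides of the \lsa~guarantee use \emph{different} reference norms: the left needs $\norm{\PPi_{\partial_2}\bb}_2$, not $\norm{\PPi_{\calL_1}\bb}_2$, and these are not equal (the latter also picks up the $\im(\partial_1^\top)$ component of $\bb$). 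Bridging them requires a separate lower bound on $\norm{\PPi_{\partial_2}\bb}_2$; the paper gets this from Claim~\ref{lm:projection_norm}, using that $\bb$ and $\partial_2$ have integer entries so that $\norm{\partial_2^\top\bb}_2 \ge 1$, and this is exactly what produces the extra $\sigma_{\max}(\partial_2)$ and $\norm{\dd}_2$ factors in the choice of $\eps$ in Theorem~\ref{thm:compLap2Boundary}. Your check~(ii), which ``uses $\norm{\bb}$ as the reference on both sides,'' silently assumes these norms coincide, which holds only under your feasibility assumption. So the proof needs an explicit lower bound on $\norm{\PPi_{\partial_2}\bb}_2$ (analogous to Claim~\ref{lm:projection_norm}) to cover the general case.
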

The proof is by standard arguments which we sketch below. In Appendix \ref{sect:appendix_lap2boundary}, 
we will formally state the theorem and provide a rigorous proof.
Suppose we have a high-accuracy solver 
for combinatorial Laplacians of 2-complexes.
Using this, we want to obtain a solver for linear equations in the
boundary operator $\partial_2$.
Note that when the equation $\partial_2 \ff= \dd$ is infeasible, we
measure the solution quality by $\norm{\partial_2 \ff -
  \PPi_{\partial_2} \dd}_2$ where $\PPi_{\partial_2} $ denotes the orthogonal
projection onto the image $\im(\partial_2)$.
The minimum over $\ff$ of the quantity $\norm{\partial_2 \ff -
  \PPi_{\partial_2} \dd}_2$ is zero, which
is obtained by setting $\ff =
\partial_2^{\dagger} \dd$ (where $\partial_2^{\dagger}$ is the
Moore-Penrose pseudo-inverse of $\partial_2$).
The equation $\partial_2 \ff= \dd$ is feasible exactly when
$\PPi_{\partial_2} \dd = \dd$.

A central and basic fact in the study of simplicial homology is that
$\im(\partial_1^{\top}) \intersect \im(\partial_2) =
\setof{\veczero}$.
This implies that $\PPi_{\partial_2} \partial_1^{\top} = \veczero$.
Now, suppose that $\xxtil$ approximately solves 
$\calL_1 \xx  = \dd$, i.e. $\calL_1 \xxtil  \approx \dd$.
We can rewrite this as $\partial_1^\top \partial_1 \xxtil +
\partial_2\partial_2^\top \xxtil   \approx \dd$.
Now, if we apply $\PPi_{\partial_2}$ on both sides,
$\PPi_{\partial_2} \dd \approx \PPi_{\partial_2}
\partial_2\partial_2^\top \xxtil  = \partial_2\partial_2^\top \xxtil$.
Thus if we set $\fftil = \partial_2^\top \xxtil$, then we have
$\PPi_{\partial_2} \dd \approx \partial_2 \fftil$, which matches our
notion of $\fftil$ approximately solving the (possibly infeasible) 
linear equation $\partial_2 \ff= \dd$.
This means that if we can approximately solve linear equations in 
$\calL_1$, we can solve linear
equations in $\partial_{2}$.
This way we can also argue that if we can solve linear
equations in $\partial_2\partial_2^\top$, then we can solve linear
equations in $\partial_2$.
Finally, one should note that $\nnz(\calL_1) = O(\nnz(\partial_2))$
and that using our definition of condition number (see
Section~\ref{sect:preliminaries}), both have polynomially related
condition number\footnote{This is because $\partial_1$ has polynomially
  bounded singular values.}.
This also means a high accuracy solve in one
can be converted to a high accuracy solve in the other.

\subsubsection{Linear Equations in $\partial_2 \partial_2^\top$}
In addition to the many applications discussed in
Section~\ref{sec:background}, the problem of solving linear equations
in $\partial_2 \partial_2^\top$
also arises when using Interior Point Methods to solve  a generalized max-flow problem in higher-dimensional
simplicial complexes as defined in~\cite{mn21}. 
We sketch how this inverse problem arises when using an Interior Point Method in Appendix \ref{sect:appendix_IPM}. 
By a similar argument as Lemma \ref{thm:compLap2Boundary_informal}, we can show that if we can solve 
linear equations in $\partial_2 \partial_2^\top$ to high accuracy in nearly-linear time, then we can solve 
linear equations in $\partial_2 $ to high accuracy in nearly-linear time.

\subsubsection{Sparse-Linear-Equation Completeness of Difference-Average Equations}

Our first reduction transforms general linear equations with polynomially bounded integer entries and condition numbers into difference-average equations. 
We first transform a general linear equation instance to a linear equation instance such that the coefficient matrix has row sum zero
and the sum of positive coefficients in each row is a power of 2, by introducing a constant number of more variables and equations.
Then, we transform each single equation to a set of difference-average equations 
by bit-wise pairing and replacing each pair of variables 
with a new variable via an average equation.

\subsubsection{Sparse-Linear-Equation Completeness of Boundary Operators of Simplicial Complexes}
Our second reduction transforms difference-average linear equations into linear equations in the boundary operators of 2-complexes.
Solving $\partial_2 \ff = \dd$ can be interpreted as computing a flow $\ff$ in the triangle space of a 2-complex subject to pre-specified edge demands $\ff$.

Our reduction 
is inspired by a reduction in~\cite{mn21} that proves 
NP-hardness of computing maximum \emph{integral} flows in 2-complexes 
via a reduction from graph 3-coloring problem. %
However, the correctness of their reduction heavily
relies on that the flow values in the 2-complex are 0-1 integers,
which does not apply in our setting.
 In addition, it is unclear how to encode linear equations as a graph coloring problem even if fractional colors are allowed.

We employ some basic building blocks used in \cite{mn21} including punctured spheres and tubes. 
However, we need to carefully arrange and orient the triangles in the 2-complex
to encode both the positive and negative coefficients in difference-average equations, 
and we need to express the averaging relations not covered by the previous work.

An important aspect of our contribution is that we carefully control the
number of non-zeros of the boundary operator matrix that we construct,
and we bound the condition number of this matrix and how error propagates from an approximate solution to the
boundary operator problem back to the original difference-average equations.
In order to do so, 
we develop explicit triangulation algorithms that specify the precise number of triangles
needed to triangulate each building block and allow a detailed error
and condition number analysis.

We remark that our constructed 2-complex does not admit an embedding into a sphere in
3 dimensions.
Recent work \cite{BMNW22} has shown that simplicial complexes with a
known embedding into $\R^3$ have non-trivial linear equation solvers,
but the full extent to which embeddability can lead to better solvers
remains an open question.

We analyze our construction in the Real RAM model. However, it
can be transferred to the fixed point arithmetic model with $(\log N)^{O(1)}$ bits per number, where $N$ is the size of the problem instance.

\subsection{Related Works}

\paragraph{Generalized Flows.}
One can generalize the notions of flows, demands vectors,
circulations, and cuts to higher-dimensional simplicial complexes
\cite{DKM15,mn21}.
Recall that in a graph, flows and circulations are defined on a vector
space over edges, while demands and cuts  are defined on a vector
space over vertices.
On a connected graph, a demand vector is a vector orthogonal to the all-ones
vector, i.e. in the kernel of the boundary operator $\partial_0$.
A flow that routes demand $\dd$, is a vector $\ff$ such that
$\partial_1 \ff = \dd$.

More generally, on a $k$-complex,
we say a demand vector is a vector $\dd$ on $(k-1)$-simplices with $\dd \in
\ker(\partial_{k-1})$.
We say a flow is a vector $\ff$ on $k$-simplices, and that the flows
$\ff$ routes demand $\dd$ if $\partial_k \ff = \dd$.
Given a demand vector $\dd \in
\ker(\partial_{k-1})$ and a capacity vector $\cc$ for the $k$-simplexes,
a reasonable generalization of the max-flow problem to $k$-complexes
is to compute a flow $\ff$ satisfying $\partial_k \ff = \alpha \dd$ 
and ${\bf 0} \le \ff \le \cc$ to maximize the flow value $\alpha$.

\paragraph{Solving Linear Equations.}
Linear equations are ubiquitous in computational mathematics, computer science, engineering, physics, biology, and economics.
Currently, the best known algorithm for solving general dense linear equations in dimensions $n \times n$
runs in time $\tilde{O}(n^{\omega})$, where $\omega < 2.3728596$ is the matrix multiplication constant~\cite{AW21}.
For sparse linear equations with $N$ nonzero coefficients and condition number $\kappa$, 
the best known approximate algorithms run in time $\Otil(\min\{N^{2.27159}, N\kappa\})$, 
where the first runtime is from \cite{PV21,nie21} and the second is by the conjugate gradient\footnote{If the coefficient matrix is symmetric positive semidefinite, 
the runtime is $\tilde{O}(N\sqrt{\kappa})$.} \cite{HS52}.

In contrast to general linear equations, linear equations in graph Laplacians and its generalizations
can be solved asymptotically faster, as mentioned earlier.
In addition, faster solvers are also known for restricted classes of total-variation matrices \cite{KMT11},
stiffness matrices from elliptic finite element systems \cite{BHV08},
and 2 and 3-dimensional truss stiffness matrices~\cite{DS07,KPSZ18}.
An interesting open question is to what extent one can generalize these faster solvers to more 
classes of matrices.

\paragraph{Reduction From Sparse Linear Equations.}
\cite{KWZ20} defines a parameterized family of hypotheses for runtime of solving sparse linear equations.
Under these hypotheses, they prove hardness of approximately solving packing and covering linear programs.
For example, if one can solve a packing linear program up to $\eps$ accuracy 
in time $\tilde{O}(\# \text{ of nonzero coefficients} \times \eps^{-0.165})$, then one can solve 
a system of linear equations in time asymptotically faster than $\tilde{O}(\# \text{ of nonzero coefficients} \times \text{condition number of matrix})$, 
which is the runtime of conjugate gradient.

\subsection{Organization of the Remaining Paper}
In Section \ref{sect:preliminaries}, we present some basic background knowledge related to simplicial homology and systems of linear equations.
In Section \ref{sect:main_results}, we state our main theorem, together with an overview of our proof. 
We first show a reduction for difference-average linear equations to 2-complex boundary operation linear equations 
under the assumption that the right-hand side vector is in the image of the coefficient matrix.
We describe the reduction algorithm and analyze it for exact solvers in Section \ref{sect:construction}, 
and analyze it for approximate solvers in Section \ref{sect:approximate_solvers}.
Then in Section \ref{sect:regression}, we slightly modify the
reduction for approximate solvers to handle the general case, when the
right-hand side vector may not be in the image of the coefficient matrix.
For completeness, we reduce general linear equations to difference-average linear equations in Appendix \ref{sect:appendix_first_reduction} and reduce 2-complex boundary linear equations to combinatorial Laplacians linear equations in Appendix \ref{sect:appendix_lap2boundary}. In Appendix \ref{sect:appendix_IPM}, we present how linear equations in $\partial_{2}\partial_{2}^\top$ are related to Interior Point Methods.


\section{Preliminaries}
\label{sect:preliminaries}

\subsection{Simplicial Homology}
\label{sect:prelim_simplicial}

We define the basic concepts of simplicial homology. We recommend the readers the books~\cite{munkres18}
and~\cite{hatcher00} for a more complete treatment.

\paragraph{Simplicial Complexes.}
A \emph{$k$-dimensional simplex} (or $k$-simplex) $\sigma = \conv \{v_0, \ldots, v_k\}$ is the convex hull of $k+1$ affinely independent points $v_0, \ldots, v_k$.
For example, 0,1,2-simplexes are vertices, edges, and triangles, respectively.
A \emph{face} of $\sigma$ is the convex hull of a non-empty subset of $\{v_0, v_1,\ldots, v_k\}$. %
An \emph{orientation} of $\sigma$ is given by an ordering $\Pi$ of its vertices, 
written as $\sigma = [v_{\Pi(0)}, \ldots, v_{\Pi(k)}]$, such that two orderings define 
the same orientation if and only if they differ by an even permutation.
If $\Pi$ is even, then $[v_{\Pi(0)}, \ldots, v_{\Pi(k)}]=[v_0,\ldots, v_k]$;
if $\Pi$ is odd, then $[v_{\Pi(0)}, \ldots, v_{\Pi(k)}]=-[v_0,\ldots, v_k]$.
 
A \emph{simplicial complex} $\mathcal{K}$ is a finite collection of simplexes such that (1) for every $\sigma\in\mathcal{K}$ if $\tau\subset \sigma$ then $\tau\in \mathcal{K}$ 
and (2) for every $\sigma_1, \sigma_2\in\mathcal{K}$, $\sigma_1\cap\sigma_2$ is either empty or a face of both $\sigma_1, \sigma_2$. %
The \emph{dimensions} of $\calK$ is the maximum dimension of any simplex in $\calK$.
We refer to a simplicial complex in $k$ dimensions as a $k$-complex.

\paragraph{Boundary Operators.}

A $k$-chain is a formal sum of the oriented $k$-simplices in $\calK$ 
with the coefficients over $\mathbb{R}$.
Let $C_k(\mathcal{K})$ denote the $k$th chain space.
The \textit{boundary operator} is a linear map $\partial_k: C_k(\mathcal{K})\rightarrow C_{k-1}(\mathcal{K})$ 
such that for an oriented $k$-simplex $\sigma = [v_0, v_1, \ldots, v_k]$,
\[
	\partial_k(\sigma)=\sum_{i=0}^{k}(-1)^i[v_0,\ldots,\hat{v}_i,\ldots, v_k],
\]
where $[v_0,\ldots,\hat{v}_i,\ldots, v_k]$ is the oriented $(k-1)$-simplex obtained by removing $v_i$ from $\sigma$,
and $(-1)^i$ is its \textit{induced orientation}.
The operator $\partial_k$ can be written as a matrix in $n_{k-1} \times n_k$ dimensions, 
where $n_{d}$ is the number of $d$-simplices in $\calK$.
The $(i,j)$th entry of $\partial_k$ is $\pm 1$ if the $i$th $(k-1)$-simplex is a face of the $j$th $k$-simplex 
where the sign is determined by the orientations, and $0$ otherwise.
See Figure \ref{fig:partial_example} and Eq. \eqref{eq:boundary_example} for an example.
\begin{figure}[!h] 
	\centering 
	\includegraphics[width=0.3\textwidth]{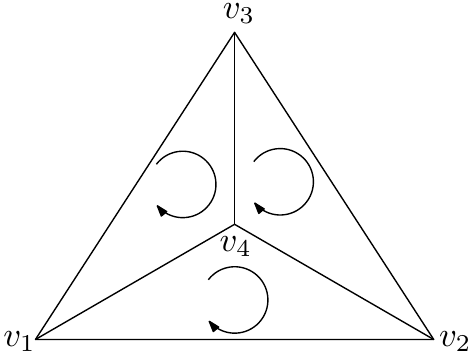} 
	\caption{An example of boundary operator and oriented triangulation. 
		We set a clockwise orientation for 2-simplices, and set the orientation for 1-simplices as the order of increasing vertex indices.}
	\label{fig:partial_example} 
\end{figure}
\begin{equation}
	\label{eq:boundary_example}
	\footnotesize
	\partial_2=\begin{bNiceArray}{ccc}[first-row,first-col]
		& [v_1,v_4,v_2] & [v_2,v_4,v_3] & [v_1,v_3,v_4] \\
		[v_1,v_2] & -1 & 0 & 0 \\
		[v_2,v_3] & 0 & -1 & 0 \\
		[v_1,v_3] & 0 & 0 & 1 \\
		[v_1,v_4] & 1 & 0 & -1 \\
		[v_2,v_4] & -1 & 1 & 0 \\
		[v_3,v_4] & 0 & -1 & 1 \\
	\end{bNiceArray}.
\end{equation}

An important property of the boundary operator is that applying the boundary operator twice results in the zero operator, i.e.,
\begin{equation}
	\label{eq:two_boundary}
	\partial_{k-1}\partial_{k}=\mathbf{0}.
\end{equation}
This implies $\im(\partial_{k})\subseteq \ker(\partial_{k-1})$.
Thus,  we can define the quotient space $H_k = \ker(\partial_{k}) \setminus \im(\partial_{k+1})$, 
referred to as the $k$th homology space of $\calK$.
The dimension of $H_k$ is the $k$th Betti number of $\calK$, which plays an important role 
in understanding the homology spaces.

\paragraph{Hodge Theory and Combinatorial Laplacians.} 

Combinatorial Laplacians arise from the discrete Hodge decomposition.
\begin{theorem}[Hodge decomposition \cite{lim20}]
	Let $\AA\in\R^{m\times n}$ and $\BB\in\R^{n\times p}$ be matrices satisfying $\AA\BB=\mathbf{0}$. 
	Then, there is an orthogonal direct sum decomposition
	\[\R^n=\im(\AA^\top)\oplus\ker (\AA^\top\AA+\BB\BB^\top)\oplus\im(\BB).\]
\end{theorem}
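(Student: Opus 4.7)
The plan is to establish pairwise orthogonality of the three subspaces, characterize the middle kernel in terms of $\ker(\AA)$ and $\ker(\BB^\top)$, and then conclude by a complement argument. Everything reduces to standard facts about ranges and kernels of matrices; the hypothesis $\AA\BB = \matzero$ is used in essentially one place.

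First, I would observe that $\AA\BB = \matzero$ immediately gives $\im(\BB) \subseteq \ker(\AA) = \im(\AA^\top)^\perp$, so $\im(\AA^\top) \perp \im(\BB)$. This handles one of the three required orthogonality conditions for free and also lets me replace the sum $\im(\AA^\top) + \im(\BB)$ by the direct sum $\im(\AA^\top) \oplus \im(\BB)$ with no ambiguity.

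Next, the central identity I would prove is
\[
  \ker(\AA^\top \AA + \BB\BB^\top) \;=\; \ker(\AA) \cap \ker(\BB^\top).
\]
The inclusion $\supseteq$ is immediate. For $\subseteq$, if $(\AA^\top \AA + \BB\BB^\top)\zz = \veczero$, I take the inner product with $\zz$ to get $\norm{\AA\zz}_2^2 + \norm{\BB^\top\zz}_2^2 = 0$, forcing both $\AA\zz = \veczero$ and $\BB^\top\zz = \veczero$. This identity simultaneously gives the remaining two orthogonality conditions, since $\ker(\AA) \perp \im(\AA^\top)$ and $\ker(\BB^\top) \perp \im(\BB)$ by the standard fundamental subspaces decomposition.

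Finally, for the spanning property I would invoke the textbook decomposition $\R^n = \im(\MM^\top) \oplus \ker(\MM)$. Combined with the previous step,
\[
  \bigl(\im(\AA^\top) \oplus \im(\BB)\bigr)^{\!\perp} = \im(\AA^\top)^\perp \cap \im(\BB)^\perp = \ker(\AA) \cap \ker(\BB^\top) = \ker(\AA^\top \AA + \BB\BB^\top),
\]
which delivers the desired orthogonal direct sum decomposition of $\R^n$. I do not anticipate a genuine obstacle: the argument is elementary, and the only mildly subtle step is the positive-semidefiniteness argument for identifying the middle kernel, which is just the familiar fact $\ker(\MM^\top \MM) = \ker(\MM)$ lifted to a sum of two positive semidefinite pieces.
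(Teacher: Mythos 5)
Your proof is correct. The paper does not actually prove this theorem --- it imports it from Lim's survey~\cite{lim20} without argument --- so there is no in-paper proof to compare against; but your argument is precisely the standard derivation that appears in that reference. Each step is sound: the hypothesis $\AA\BB=\matzero$ gives $\im(\BB)\subseteq\ker(\AA)=\im(\AA^\top)^\perp$; the positive-semidefiniteness argument $\zz^\top(\AA^\top\AA+\BB\BB^\top)\zz=\norm{\AA\zz}_2^2+\norm{\BB^\top\zz}_2^2$ correctly yields $\ker(\AA^\top\AA+\BB\BB^\top)=\ker(\AA)\cap\ker(\BB^\top)$; and passing to orthogonal complements via $(\im(\AA^\top)+\im(\BB))^\perp=\ker(\AA)\cap\ker(\BB^\top)$ completes the orthogonal direct sum. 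One stylistic note: your chain establishes mutual orthogonality and that the three subspaces together span $\R^n$, which is exactly what is needed; you might state explicitly that the spanning step uses $\R^n = V\oplus V^\perp$ applied to $V=\im(\AA^\top)+\im(\BB)$, since that is the lynchpin connecting the complement computation to the decomposition claim, but this is a matter of exposition rather than a gap.
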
	
By Eq. \eqref{eq:two_boundary}, it is valid to set $\AA=\partial_{k}$ and $\BB=\partial_{k+1}$. 
The matrix we get in the middle term is the \emph{combinatorial Laplacian}:
$$\calL_k\defeq\partial_{k}^\top \partial_{k}+\partial_{k+1} \partial_{k+1}^\top$$ 
In particular, $\calL_0 = \partial_1 \partial_1^\top$ is the graph Laplacian.
The $k$th homology space $H_k(\calK)$ is isomorphic to $\ker(\calL_k)$, 
and thus the $k$th Betti number of $\calK$ equals the dimension of $\ker(\calL_k)$.

\paragraph{Triangulation.}

A \textit{triangulation} of a topological space $\mathcal{X}$ is a
simplicial complex $\mathcal{K}$ together with a homeomorphism between
$\mathcal{X}$ and $\mathcal{K}$.
In this paper, the only topological spaces that we compute
triangulations of are 2-dimensional manifolds.
A 2-dimensional manifold can be triangulated by a 2-complex, where every edge in the 2-complex is contained in exactly one triangle (boundary edge) or two triangles (interior edge).
An \textit{oriented triangulation} of a 2-dimensional manifold is a
triangulation together with an orientation for each triangle such that
any two neighboring triangles induce opposite signs on their shared
interior edge.

Figure \ref{fig:partial_example} is an example of (oriented) triangulation: the topological space is a disk; boundary edges are $[v_1,v_2], [v_2,v_3], [v_1,v_3]$; interior edges are $[v_1,v_4], [v_2,v_4], [v_3,v_4]$; the orientation for each triangle is clockwise.

\subsection{Notation for Matrices and Vectors}

We use parentheses to denote entries of a matrix or a vector: 
Let $\AA(i,j)$ bet the $(i,j)$th entry of a matrix $\AA$, and let $\xx(i)$ bet the $i$th entry of a vector $\xx$. We use $\vecone_n, \mathbf{0}_n$ to denote $n$-dimensional all-one vector and all-zero vector, respectively. 
We define $\norm{\xx}_{\max}=\max_{i\in [n]} \abs{\xx(i)}$, $\norm{\xx}_1=\sum_{i\in[n]}\abs{\xx(i)}$.
Given a matrix $\AA\in\R^{d\times n}$, we use $\AA(i)$ to denote the $i$th row of $\AA$ and $\nnz(\AA)$ the number of nonzero entries of $\AA$. 
Without loss of generality, we assume that $\nnz(\AA)\geq \max\{d,n\}$. We let $\norm{\AA}_{\max}=\max_{i,j}\abs{\AA(i,j)}$. 
We use $\im(\AA)$ to denote the image (i.e., the column space) of $\AA$ and $\Null(\AA)$ the null space of $\AA$.
We let $\PPi_{\AA}= \AA(\AA\AA^\top)^\dagger \AA^\top$ be the orthogonal projection onto $\im(\AA)$, where $\MM^\dagger$ is the pseudo-inverse of $\MM$.
Let $\lambda_{\max}(\AA)$ be the maximum eigenvalue of $\AA$ and $\lambda_{\min}(\AA)$
the minimum \emph{nonzero} eigenvalue of $\AA$.
Similarly, let $\sigma_{\max}(\AA)$ be the maximum eigenvalue of $\AA$ and $\sigma_{\min}(\AA)$
the minimum \emph{nonzero} singular value of $\AA$.
The condition number of $\AA$, denoted by $\kappa(\AA)$, 
is the ratio of the maximum to the minimum \emph{nonzero} singular value of $\AA$.

We define a function $U$ that takes a matrix $\AA$ and a vector $\bb$ as arguments
and returns the maximum of $\norm{\cdot}_{\max}$ of all the arguments, that is, 
\[
U(\AA,\bb) = \max\{\norm{\AA}_{\max}, \norm{\bb}_{\max}\}.	
\]

\subsection{Systems of Linear Equations}
\label{sect:prelim_linear_system}

We define approximately solving linear equations in a general form, following \cite{KZ17}.
For more details, we refer the readers to Section 2.1 of \cite{KZ17}.

\begin{definition}[Linear Equation Problem (\ls)]
	\label{def:ls}
	Given a matrix $\AA \in \R^{d \times n}$, a vector $\bb \in \R^{d}$, we refer to the linear equation problem for the tuple $(\AA,\bb)$, 
	denoted by \ls~$(\AA, \bb)$, as the problem of finding an $\xx\in\R^n$ such that 
	\[\xx \in \argmin_{\xx\in\R^n}\norm{\AA\xx-\bb}_2.\]
\end{definition}

\begin{fact}
Let $\xx^* \in \argmin_{\xx\in\R^n}\norm{\AA\xx-\bb}_2$. Then, 
$$\AA\xx^* = \AA (\AA^\top \AA)^{\dagger} \AA^\top \bb = \PPi_{\AA} \bb$$
and 
\[
\norm{\AA\xx^* - \bb}_2^2 = \norm{(\II - \PPi_{\AA})\bb}_2^2	
\]
\end{fact}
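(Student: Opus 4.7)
The plan is to reduce the minimization problem to the normal equations and then invoke standard pseudoinverse identities. I would first observe that the objective $\norm{\AA\xx - \bb}_2^2$ is a convex quadratic in $\xx$, so any minimizer $\xx^*$ is characterized by the first-order condition $\AA^\top\AA\xx^* = \AA^\top\bb$ (the normal equations), obtained by setting the gradient to zero.

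Next, I would exhibit one particular minimizer, namely $\xxhat \defeq (\AA^\top\AA)^\dagger\AA^\top\bb$. Since $\AA^\top\bb \in \im(\AA^\top) = \im(\AA^\top\AA)$, the pseudoinverse identity that $(\AA^\top\AA)(\AA^\top\AA)^\dagger$ acts as the identity on $\im(\AA^\top\AA)$ gives $\AA^\top\AA\xxhat = \AA^\top\bb$, verifying that $\xxhat$ solves the normal equations. Any two minimizers must differ by a vector in $\ker(\AA^\top\AA) = \ker(\AA)$, so for \emph{every} minimizer one has $\AA\xx^* = \AA\xxhat = \AA(\AA^\top\AA)^\dagger\AA^\top\bb$, which establishes the first equality in the displayed formula.

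For the second equality $\AA(\AA^\top\AA)^\dagger\AA^\top\bb = \PPi_\AA\bb$, I would set $\MM \defeq \AA(\AA^\top\AA)^\dagger\AA^\top$ and verify that $\MM$ is the orthogonal projector onto $\im(\AA)$: it is manifestly symmetric, has image contained in $\im(\AA)$, and fixes every vector of the form $\AA\zz$ (since $\AA^\top\AA\zz \in \im(\AA^\top\AA)$ and hence $(\AA^\top\AA)^\dagger$ inverts $\AA^\top\AA$ on it). These properties pin $\MM$ down as the orthogonal projector onto $\im(\AA)$, matching $\PPi_\AA$ as defined in the preliminaries.

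Finally, the residual identity follows from the orthogonal decomposition $\bb = \PPi_\AA\bb + (\II-\PPi_\AA)\bb$ into pieces lying in $\im(\AA)$ and $\im(\AA)^\perp$ respectively: substituting $\AA\xx^* = \PPi_\AA\bb$ yields $\AA\xx^* - \bb = -(\II-\PPi_\AA)\bb$, and taking squared norms gives the claim. There is no real obstacle here, as the statement is a textbook characterization of least squares solutions; the only mild care needed is with pseudoinverse manipulations, since $\AA$ need not have full column rank and so $\AA^\top\AA$ may be singular.
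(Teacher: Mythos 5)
Your proof is correct; this is the standard characterization of least-squares solutions via the normal equations and the pseudoinverse identity $\AA(\AA^\top\AA)^\dagger\AA^\top = \PPi_{\im(\AA)}$. The paper states this as a Fact with no proof, so there is no argument to compare against. One small caveat worth flagging: the preliminaries define $\PPi_\AA = \AA(\AA\AA^\top)^\dagger\AA^\top$, which is a typo --- for non-square $\AA \in \R^{d\times n}$ the product does not even conform dimensionally, and the intended expression is $\AA(\AA^\top\AA)^\dagger\AA^\top$, exactly the matrix you verify to be the orthogonal projector. Your step ``matching $\PPi_\AA$ as defined in the preliminaries'' is therefore right in spirit even though the paper's displayed formula has $\AA\AA^\top$ where $\AA^\top\AA$ belongs.
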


By the above fact,
solving \ls~$(\AA,\bb)$ is equivalent to finding an $\xx$ such that $\AA\xx = \PPi_{\AA}\bb$.
This equation is known as the normal equation, and it is always feasible.
If $\bb \in \im(\AA)$, then $\PPi_{\AA} \bb = \bb$.

In practice, we are more interested in \emph{approximately} solving linear equations, 
since numerical errors are unavoidably in data collection and computation 
and approximate solvers may run faster.

\begin{definition}[Linear Equation Approximation Problem (\lsa)]
	\label{def:lsa}
	Given a matrix $\AA \in \R^{d \times n}$, vectors $\bb \in
        \R^{d}$, and an error parameter $\epsilon\in (0,1]$, we refer to linear equation approximate problem for the tuple $(\AA,\bb,\epsilon)$,
		denoted by \lsa~$(\AA,\bb,\eps)$,
		 as the problem of finding an $\xx\in\R^n$ such that
	\[\norm{\AA\xx-\PPi_{\AA}\bb}_2\leq \eps\norm{\PPi_{\AA}\bb}_2.\]
\end{definition}

\begin{fact}
	\label{lm:exact_approximate}
	Let $\xx$ be a solution to \lsa~$(\AA,\bb,\epsilon)$. Then,
	\[
		\norm{\AA\xx-\bb}_2^2\leq \norm{\AA\xx^*- \bb}_2^2+\epsilon^2\norm{\PPi_{\AA}\bb}_2^2.
		\]
\end{fact}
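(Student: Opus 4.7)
The plan is to prove this fact via a single orthogonal decomposition of the residual $\AA\xx - \bb$, combined with the Pythagorean theorem. The key observation is that $\PPi_{\AA}$ is the orthogonal projector onto $\im(\AA)$, so any vector of the form $\AA\xx - \PPi_{\AA}\bb$ lies in $\im(\AA)$, while $\bb - \PPi_{\AA}\bb = (\II - \PPi_{\AA})\bb$ lies in $\im(\AA)^{\perp}$. This orthogonality is the engine of the whole argument.

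First I would write
\[
\AA\xx - \bb \;=\; \bigl(\AA\xx - \PPi_{\AA}\bb\bigr) \;-\; \bigl(\II - \PPi_{\AA}\bigr)\bb,
\]
and observe that the two summands on the right-hand side are orthogonal: the first lies in $\im(\AA)$ (since $\AA\xx \in \im(\AA)$ and $\PPi_{\AA}\bb \in \im(\AA)$), while the second lies in $\im(\AA)^{\perp}$. Applying the Pythagorean theorem then gives
\[
\norm{\AA\xx - \bb}_2^2 \;=\; \norm{\AA\xx - \PPi_{\AA}\bb}_2^2 \;+\; \norm{(\II - \PPi_{\AA})\bb}_2^2.
\]

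Next I would identify each term with quantities appearing in the statement. By the fact stated immediately before Definition \lsa{}, $\AA\xxstar = \PPi_{\AA}\bb$ and $\norm{\AA\xxstar - \bb}_2^2 = \norm{(\II - \PPi_{\AA})\bb}_2^2$, so the second term on the right is exactly $\norm{\AA\xxstar - \bb}_2^2$. For the first term, the hypothesis that $\xx$ solves \lsa~$(\AA,\bb,\epsilon)$ yields
\[
\norm{\AA\xx - \PPi_{\AA}\bb}_2^2 \;\le\; \epsilon^2 \norm{\PPi_{\AA}\bb}_2^2.
\]
Combining these two observations gives the desired bound.

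There is essentially no obstacle: the statement is a routine consequence of orthogonal projection theory, and the only care required is in bookkeeping the identity $\AA\xxstar = \PPi_{\AA}\bb$ and noting that $\AA\xx - \PPi_{\AA}\bb \in \im(\AA)$ so the Pythagorean split is valid. The proof is about three short lines once these identifications are in place.
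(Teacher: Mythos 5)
Your proof is correct and is the natural argument; the paper itself states this as a \texttt{Fact} without proof, implicitly treating it as a routine consequence of the Pythagorean identity for the orthogonal projector $\PPi_{\AA}$. Your decomposition $\AA\xx - \bb = (\AA\xx - \PPi_{\AA}\bb) - (\II-\PPi_{\AA})\bb$, orthogonality of the two summands, and the substitution $\norm{\AA\xx^*-\bb}_2^2 = \norm{(\II-\PPi_{\AA})\bb}_2^2$ together with the \lsa\ hypothesis is exactly the intended derivation.
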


The definition of the approximate error in Definition~\ref{def:lsa} is equivalent to several error notions 
that are commonly used in solving linear equations. In particular,
\begin{align*}
	\norm{\AA \xx - \PPi_{\AA} \bb}_2 =
	\norm{\AA^\top \AA \xx - \AA^\top \bb}_{(\AA^\top \AA)^{\dagger}}
	= \norm{\xx - \xx^*}_{\AA^\top \AA}.
\end{align*}

\subsubsection{Matrix Classes}

We are interested in linear equations whose coefficient matrices belonging to
the following matrix classes.
\begin{enumerate}
	\item $\mathcal{G}$ refers to the class of \textit{General Matrices} that have integer entries
	and do not have all-0 rows and all-0 columns.
	We refer to linear equations whose coefficient matrix is in $\mathcal{G}$ 
	as \textit{general linear equations}.
	\item $\mathcal{DA}$ refers to the class of \textit{Difference-Average Matrices} whose rows fall into two categories:
	\begin{enumerate}
		\item A \textit{difference row} which has exactly two nonzero entries $1$ and $-1$;
		\item An \textit{average row} which has exactly three nonzero entries  $1$, $1$, and $-2$. 
	\end{enumerate}
	Multiplying a difference row vector to a column vector $\xx$ gives $\xx(i) - \xx(j)$;
	multiplying an average row vector to $\xx$ gives $\xx(i) + \xx(j) - 2\xx(k)$.
	We refer to linear equations whose coefficient matrix is in $\mathcal{DA}$ 
	as \textit{difference-average linear equations}.
	\item $\mathcal{B}_{2}$ refers to the class of \textit{Boundary Operator Matrices $\partial_2$ in 2-complexes}.
	We refer to linear equations whose coefficient matrix is in $\mathcal{B}_2$ 
	as \textit{2-complex boundary linear equations}.
\end{enumerate}

Our definition of ``general matrices'' specifies the matrix must have integer entries.
However, when the input matrix is invertible, using a simple rounding
argument, we can convert any linear equation into an linear equation with
integer entries $\Otil(1)$ bits per
entry.
  We caution the reader this relies on our
  definition of $\Otil(\cdot)$ as hiding polylogarithmic factors in
  the input condition number.
  In general, the condition number can be exponentially large --
  however, our results are mainly of interest when the condition number
  is quasipolynomially bounded.

\subsubsection{Reduction Between Linear Equations}

We will again follow the definition of efficient reductions in~\cite{KZ17}.
We say \lsa~over matrix class $\calM_1$ is \emph{nearly-linear time reducible} to 
\lsa~over matrix class $\calM_2$, denoted by $\calM_1 \le_{nlt} \calM_2$, if the following holds:
\begin{enumerate}
	\item There is an algorithm that maps 
	an arbitrary instance \lsa~$(\MM_1, \cc_1, \epsilon_1)$ where $\MM_1 \in \calM_1$ 
	to an instance \lsa~$(\MM_2, \cc_2, \epsilon_2)$ where $\MM_2 \in \calM_2$ such that 
	there is another algorithm that can map a solution to \lsa~$(\MM_2,  \cc_2, \epsilon_2)$ to a solution 
	to  \lsa~$(\MM_1,  \cc_1, \epsilon_1)$.
	\item Both the two algorithms run in time $\Otil(\nnz(\MM_1))$.
	\item In addition, we can guarantee
	$\nnz(\MM_2) = \Otil{(\nnz(\MM_1))}, \text{ and } \eps_2^{-1},\kappa(\MM_2),U(\MM_2,\bb_2)
	= \poly(\nnz(\MM_1), \eps_1^{-1},  \kappa(\MM_1), U(\MM_1,\bb_1)).$
\end{enumerate}

We do not require a nearly-linear time reduction to preserve the
number of variables or constraints (dimensions) of a system of linear equations.
The dimensions of the new linear equation instance that we construct can be much larger than that 
of the original instance. 
On the other hand, a reduction that \emph{only} preserves dimensions may construct a dense linear equation instance even if the original instance is sparse.
A nearly-linear time reduction that preserves \emph{both} the number of nonzeros and dimensions 
would be stronger than what we achieve.

\begin{fact}
	If $\calM_1 \le_{nlt} \calM_2$ and $\calM_2 \le_{nlt} \calM_3$, 
	then $\calM_1 \le_{nlt} \calM_3$.
\end{fact}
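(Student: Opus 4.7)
The plan is to prove transitivity by simply composing the two given reductions and verifying that each of the three requirements in the definition of $\le_{nlt}$ is preserved under composition. Given an instance $\lsa~(\MM_1,\cc_1,\eps_1)$ with $\MM_1\in\calM_1$, I would first apply the forward map of $\calM_1 \le_{nlt} \calM_2$ to produce an intermediate instance $\lsa~(\MM_2,\cc_2,\eps_2)$ with $\MM_2\in\calM_2$, and then feed this intermediate instance into the forward map of $\calM_2 \le_{nlt} \calM_3$ to produce $\lsa~(\MM_3,\cc_3,\eps_3)$ with $\MM_3\in\calM_3$. The backward map would be the reverse composition: given an approximate solution $\xx_3$ to the $\calM_3$ instance, first invoke the back-map of the second reduction to obtain an $\xx_2$ that solves $\lsa~(\MM_2,\cc_2,\eps_2)$, and then the back-map of the first reduction to obtain an $\xx_1$ that solves $\lsa~(\MM_1,\cc_1,\eps_1)$.

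Next I would check the three bullet points in the definition of $\le_{nlt}$ one by one. Correctness is immediate from chaining the correctness guarantees of the two reductions. For the runtime, the first reduction produces the intermediate instance in $\Otil(\nnz(\MM_1))$ time, which in particular gives $\nnz(\MM_2)=\Otil(\nnz(\MM_1))$. Then the second reduction runs in $\Otil(\nnz(\MM_2))=\Otil(\nnz(\MM_1))$ time, since nested polylogarithmic factors multiply to polylogarithmic factors. The same composition argument applies to the two back-maps.

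For the last bullet, sparsity and the auxiliary parameters, I would chain the $\nnz$ bounds directly to get $\nnz(\MM_3)=\Otil(\nnz(\MM_2))=\Otil(\nnz(\MM_1))$. For the other parameters, by definition we have
\[
\eps_2^{-1},\,\kappa(\MM_2),\,U(\MM_2,\cc_2) \;=\; \poly\bigl(\nnz(\MM_1),\eps_1^{-1},\kappa(\MM_1),U(\MM_1,\cc_1)\bigr),
\]
and similarly
\[
\eps_3^{-1},\,\kappa(\MM_3),\,U(\MM_3,\cc_3) \;=\; \poly\bigl(\nnz(\MM_2),\eps_2^{-1},\kappa(\MM_2),U(\MM_2,\cc_2)\bigr).
\]
Substituting the first chain of bounds into the second and using the fact that a polynomial of a polynomial is still a polynomial yields the required bound in terms of $\nnz(\MM_1)$, $\eps_1^{-1}$, $\kappa(\MM_1)$, and $U(\MM_1,\cc_1)$.

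The main thing to be careful about, rather than a real obstacle, is the bookkeeping of $\Otil$: since the paper's convention is that $\Otil$ hides polylogarithmic factors in $\nnz$, $\eps^{-1}$, $\kappa$, and $U$, one must verify that after substitution these polylogarithmic factors remain polylogarithmic in the original parameters. This holds because $\polylog(\poly(x))=\polylog(x)$, so each $\Otil$ from the two stages re-expresses as an $\Otil$ in the parameters of the input instance. With this check in hand, all three requirements of $\le_{nlt}$ are satisfied by the composed reduction, completing the proof.
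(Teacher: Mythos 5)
Your proof is correct, and since the paper states this as a \textsc{Fact} without giving a proof, the natural composition argument you spell out is exactly what the authors have implicitly in mind. The key point you flag at the end --- that $\polylog(\poly(x))=\polylog(x)$, so the hidden polylogarithmic factors in the $\Otil$'s and the polynomial parameter blowups both close under composition --- is precisely the right thing to check, and your verification of the three requirements in the definition of $\le_{nlt}$ is complete.
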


\begin{definition}[Sparse linear equation complete (\slec)]
	We say \lsa~over a matrix class $\calM$ is \emph{sparse-linear-equation-complete}
	if $\calG \le_{nlt} \calM$.
\end{definition}

\begin{fact}
	Suppose \lsa~over $\calM$ is \slec.
	If one can solve all instances \lsa~$(\AA,\bb,\eps)$ with $\AA \in \calM$ in time $\tilde{O}(\nnz(\AA)^c)$ where $c \ge 1$,
	then one can solve all instances \lsa~$(\AA',\bb',\eps')$ with $\AA' \in \calG$ in time $\tilde{O}(\nnz(\AA')^c)$.
	\label{fact:slec}
\end{fact}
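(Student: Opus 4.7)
The plan is to directly compose the reduction guaranteed by $\calG \le_{nlt} \calM$ with the assumed fast solver for $\calM$, and then verify that the total running time has the claimed form. Given an instance \lsa~$(\AA',\bb',\eps')$ with $\AA' \in \calG$, the first step is to invoke the forward algorithm of the reduction $\calG \le_{nlt} \calM$ to produce an instance \lsa~$(\AA,\bb,\eps)$ with $\AA \in \calM$; by the definition of nearly-linear time reducibility this takes time $\Otil(\nnz(\AA'))$ and yields $\nnz(\AA) = \Otil(\nnz(\AA'))$ with $\eps^{-1}, \kappa(\AA), U(\AA,\bb)$ all polynomially bounded in $\nnz(\AA'), (\eps')^{-1}, \kappa(\AA'), U(\AA',\bb')$. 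The second step is to run the hypothesized solver on \lsa~$(\AA,\bb,\eps)$ in time $\tilde{O}(\nnz(\AA)^c)$ to obtain an $\eps$-accurate solution. The third step is to apply the back-mapping algorithm of the reduction, which in time $\Otil(\nnz(\AA'))$ converts this solution into a valid solution to \lsa~$(\AA',\bb',\eps')$.

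For the runtime bookkeeping, the dominant term is the middle step. Since $\nnz(\AA) = \Otil(\nnz(\AA'))$ and $c \ge 1$, we get
\[
\tilde{O}(\nnz(\AA)^c) \;=\; \tilde{O}\bigl((\Otil(\nnz(\AA')))^c\bigr) \;=\; \tilde{O}(\nnz(\AA')^c),
\]
where the last equality absorbs the $c$-th power of a polylogarithmic factor back into the $\tilde{O}$ notation. The first and third steps contribute only an additive $\Otil(\nnz(\AA'))$, which is dominated by $\tilde{O}(\nnz(\AA')^c)$ since $c \ge 1$. Summing, the total running time is $\tilde{O}(\nnz(\AA')^c)$, matching the stated bound.

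The only subtlety worth flagging is the hidden-parameter dependence of $\tilde{O}$ and $\Otil$. The definition of $\Otil$ in this paper hides polylogarithmic factors in the input conditioning parameters ($\eps^{-1}, \kappa, U$), and the definition of nearly-linear time reducibility guarantees that the corresponding parameters for \lsa~$(\AA,\bb,\eps)$ are at most polynomial in those of \lsa~$(\AA',\bb',\eps')$. Taking a logarithm of a polynomial yields a quantity that is polylogarithmic in the original parameters, so the polylog factors hidden by the solver's $\tilde{O}$ translate cleanly into polylog factors in the original instance's parameters, which are precisely what is hidden by the $\tilde{O}$ on the right-hand side of the claimed bound. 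Hence no step is actually hard; the content of the fact is just that $\le_{nlt}$ was defined with exactly the right parameter preservation to make this composition go through.
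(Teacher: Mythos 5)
Your proposal is correct and matches the intended argument: the paper states this as a \emph{Fact} without an explicit proof, precisely because it is the direct composition of the $\le_{nlt}$ reduction with the assumed solver, plus the parameter bookkeeping that you spell out. Your observation that the polynomial parameter bounds in the definition of $\le_{nlt}$ become polylogarithmic after the logarithm implicit in $\tilde{O}$, and that raising a polylog to the constant power $c$ stays polylog, is exactly the point that makes the definition of $\le_{nlt}$ work as intended.
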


Under the above definitions,
\cite{KZ17} implicitly shows the following results. 
We provide an explicit and simplified proof in Appendix~\ref{sect:appendix_first_reduction}.

\begin{theorem}[Implicitly stated in~\cite{KZ17}]
	\lsa~over $\mathcal{DA}$ is \slec.
	\label{thm:prelim_kz17}
\end{theorem}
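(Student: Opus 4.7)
The plan is to establish $\calG \le_{nlt} \mathcal{DA}$ in two reduction stages, echoing the two-step sketch in the introduction.

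\textbf{Stage one (row normalization).} Given an instance $(\AA,\bb,\eps)$ with $\AA \in \calG$, I first rewrite $\AA\xx=\bb$ so that (i) every row has zero row-sum and (ii) the sum of positive entries in each row is a power of two. To zero out row-sums, I introduce one ``shift'' variable $z$ and replace each row $\AA(i)\xx=\bb(i)$ by $\AA(i)\xx - s_i z = \bb(i)$, where $s_i=\sum_j \AA(i,j)$. Any solution $(\xx,z)$ of the modified system then yields a solution $\xx - z\vecone$ of the original, because the new row-sums vanish; no extra equation is needed to pin $z$ down. To round the positive-coefficient sum $p_i$ of row $i$ up to the next power of two $2^{k_i}$, I add two padding variables $u_i, v_i$ with row-coefficients $+d_i,-d_i$ (where $d_i=2^{k_i}-p_i$) and one difference equation $u_i - v_i = 0$, ensuring the padding self-cancels on feasible solutions. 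Stage one introduces $O(d)$ new variables, rows, and nonzeros.

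\textbf{Stage two (bit-wise averaging tree).} After stage one each row has the form $\sum_j c_j y_j = b'_i$ with $\sum_{c_j>0} c_j = -\sum_{c_j<0} c_j = 2^{k_i}$ and $k_i = O(\log U(\AA,\bb) + \log d)$. Writing each positive coefficient in binary, $c_j = \sum_\ell 2^\ell \beta_{j\ell}$, I associate to $y_j$ one ``leaf-placement'' at depth $k_i - \ell$ for every bit with $\beta_{j\ell} = 1$. The Kraft identity $\sum_{\text{leaves}} 2^{-\mathrm{depth}} = 2^{-k_i}\sum_j c_j = 1$ guarantees these placements pack into a single binary tree with $O(|\{j : c_j > 0\}| \cdot k_i)$ internal nodes. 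At each internal node $p$, I emit the average equation $y_{\mathrm{left}(p)} + y_{\mathrm{right}(p)} - 2 y_p = 0$; induction up the tree forces the root $y_{r^+_i}$ to equal $2^{-k_i}\sum_{c_j>0} c_j y_j$. Repeating the construction on the negative side yields a root $y_{r^-_i}$, and the single difference equation $y_{r^+_i} - y_{r^-_i} = b'_i/2^{k_i}$ reproduces the original row. Summing over all rows gives a DA system $(\MM_2,\cc_2)$ with $\Otil(\nnz(\AA))$ nonzeros and polynomially many variables and rows.

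\textbf{The main obstacle} is not this combinatorial construction but the error and condition-number bookkeeping required by the definition of $\le_{nlt}$. I must show that an approximate solution to \lsa~$(\MM_2,\cc_2,\eps_2)$ produces data from which a solution to \lsa~$(\AA,\bb,\eps_1)$ can be recovered, with $\eps_2^{-1} = \poly(\nnz(\AA), U(\AA,\bb), \kappa(\AA), \eps_1^{-1})$. The key observation is that each average equation contracts residuals, so if every DA row is satisfied to within $\eta$, then induction up a tree of depth $k_i$ bounds the root residual by $O(k_i \eta)$; the $\ell_2$-error in $\AA\xx - \PPi_{\AA}\bb$ then stays polynomially controlled. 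Because $\MM_2$ has entries in $\{-2,-1,0,1\}$ and polynomially bounded dimensions, $\kappa(\MM_2)$ should be polynomially bounded by a direct spectral argument on the tree-plus-difference structure. Finally, both the forward map (walking each row to emit tree nodes) and the backward map (reading the $y_j$-values at original-variable leaves and subtracting $z$) run in $\Otil(\nnz(\AA))$ time, closing the reduction. The construction presumes $\bb \in \im(\AA)$; the infeasible case admits a regression-style modification analogous to the one used in Section~\ref{sect:regression} for the second reduction.
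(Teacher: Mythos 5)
Your combinatorial construction matches the paper's: the paper's Algorithm~$\algGZTtoMCT$ iteratively pairs the variables whose coefficients have an odd bit at each scale, which is precisely an implicit version of your Kraft-packed averaging tree, and your stage one is the paper's $\calG \to \calG_z \to \calG_{z,2}$ normalization (done per-row rather than with a single shared padding pair, but both work). The forward/backward maps and the $\Otil(\nnz(\AA))$ size/time accounting also line up.

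Where you fall short is exactly where you flag the difficulty, and the gap is real rather than a deferral. You bound the root residual by $O(k_i\eta)$ and conclude that ``the $\ell_2$-error in $\AA\xx - \PPi_{\AA}\bb$ then stays polynomially controlled.'' That inference is only valid when $\bb \in \im(\AA)$, because only then is $\PPi_{\AA}\bb = \bb$ and the per-row residuals in the DA system directly control $\AA\xx - \bb$. In the infeasible (least-squares) case, adding auxiliary rows changes the minimizer: the relationship between $\PPi_{\BB}\ccb$ and $\PPi_{\AA}\bb$ is not under your control unless you \emph{reweight} the auxiliary rows. This is exactly the phenomenon the paper illustrates in its warm-up in Section~\ref{sect:regression}, and it is why the paper's algorithm bakes a weight $w_i^{1/2} = (\alpha\,|\calB_i|)^{1/2}$ into the auxiliary rows at construction time. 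That weight is what makes Lemma~\ref{clm:exactReduction}'s \emph{exact} Schur-complement identity $\norm{\AA\xxa-\cca}_2^2 = \frac{\alpha+1}{\alpha}\min_{\xxaux}\norm{\BB\xxb-\ccb}_2^2$ hold for all $\xxa$, from which both the feasible and infeasible error analyses fall out uniformly, including the bound $\norm{\PPi_{\BB}\ccb}_2^2 \le (1+\tfrac{1}{\alpha}\lambda_{\max}(\AA^\top\AA)\norm{\cca}_2^2)\norm{\PPi_{\AA}\cca}_2^2$ needed to translate the relative accuracy parameter. Describing this as ``a regression-style modification analogous to Section~\ref{sect:regression}'' understates it: the weights are not a patch applied to your construction but they \emph{are} the construction; without them Lemma~\ref{clm:exactReduction} is false and the $\eps$-bookkeeping does not close.

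The condition-number step has a similar hole. ``Entries in $\{-2,-1,0,1\}$ and polynomially bounded dimensions'' do not by themselves give a polynomial condition number (easy counterexamples exist with $\pm 1$ bidiagonal matrices). The paper proves Lemma~\ref{lem:GZtoMC2Null} (the null space of $\BB$ is the lift of the null space of $\AA$ along the averaging tree), then lower-bounds $\lambda_{\min}(\BB^\top\BB)$ by splitting a unit vector into its tree-lift $\pp(\xxa)$ plus a residual $\yy$ and doing a case analysis on $\norm{\yy}_\infty$; the weights appear again in that argument. If you want to avoid the weighting and still complete the reduction, you would need to reprove both the error propagation and the condition-number lower bound from scratch in the least-squares regime, which is the actual content of the theorem. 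As it stands, your proposal is a correct and efficient reduction for the feasible case, matching the paper's combinatorics, but the infeasible-case analysis and the spectral bound are hand-waved precisely at the points where the paper does the work.
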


\def\epsda{\epsilon^{DA}}
\def\epsb{\epsilon^{B_2}}

\section{Main Results}
\label{sect:main_results}

Our main result is stated in the following theorem.

\begin{theorem}
  \label{thm:boundarySLEC}
\lsa~over $\mathcal{B}_2$ is \slec.
\end{theorem}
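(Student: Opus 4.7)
The plan is to invoke Theorem~\ref{thm:prelim_kz17}, which gives $\mathcal{G} \le_{nlt} \mathcal{DA}$, together with a new reduction $\mathcal{DA} \le_{nlt} \mathcal{B}_2$, and then to conclude by transitivity of $\le_{nlt}$ that $\mathcal{G} \le_{nlt} \mathcal{B}_2$, i.e., that \lsa{} over $\mathcal{B}_2$ is \slec. Thus all the real work is in the second reduction: given an instance \lsa~$(\MM,\cc,\eps)$ with $\MM \in \mathcal{DA}$, I would build a 2-complex $\calK$ together with its boundary operator $\partial_2$, a demand vector $\dd$, and an accuracy parameter $\eps'$, and show that any $\ff$ solving \lsa~$(\partial_2,\dd,\eps')$ can be mapped in nearly-linear time back to an $\xx$ solving the original \lsa~$(\MM,\cc,\eps)$.

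The construction associates to each DA variable $x_i$ a distinguished oriented edge $e_i$ in $\calK$; the recovery map reads off $\xx(i)$ as a flow value on a canonical triangle $\widehat T_i$ incident to $e_i$. Each row of $\MM$ is realized by a small topological gadget built from three primitives, following and extending the toolkit of~\cite{mn21}: \emph{triangulated tubes}, which transport a single flow value between distant edges with a sign set by orientation; \emph{triangulated thrice-punctured spheres} (pairs of pants), whose three boundary cycles are forced under $\partial_2 \ff = \dd$ to satisfy a signed linear relation; and \emph{triangulated disk caps}, whose interior demand injects a prescribed scalar from $\cc$. For a difference row $x_i - x_j = c$, a tube from $e_i$ to $e_j$ with suitably opposite orientations yields $\ff(e_i) - \ff(e_j) = c$, with $c$ routed in through a cap. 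For an average row $x_i + x_j - 2 x_k = c$, a pair of pants attaches to $e_i$, $e_j$, and to a tube that doubles the $e_k$-cycle, so that the shared-edge cancellations in $\partial_2 \ff$ produce the signs $+1,+1,-2$ rather than the unsigned graph-coloring constraint of~\cite{mn21}.

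Because each row of $\MM$ consumes only $O(1)$ triangles, the resulting $\partial_2$ has $\nnz(\partial_2) = \Otil(\nnz(\MM))$ with all entries in $\{-1,0,+1\}$, so $U(\partial_2,\dd) = O(U(\MM,\cc))$; explicit, size-controlled triangulations of each primitive then give a polynomial bound on $\kappa(\partial_2)$. A forward error analysis of the gadget arithmetic shows that choosing $\eps' = 1/\poly(\nnz(\MM),\eps^{-1},U(\MM,\cc),\kappa(\MM))$ suffices to translate a $\partial_2$-approximate solution back into an $\eps$-approximate solution of the DA system. The case $\dd \notin \im(\partial_2)$, which arises because gadget constraints may not be globally consistent, is handled by a small modification described in Section~\ref{sect:regression}, reducing to the image case via the projection identity $\partial_2 \ff = \PPi_{\partial_2} \dd$.

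The hard part will be the simultaneous topological and quantitative bookkeeping. The 2-complex must be a genuine simplicial complex, consistently oriented so that neighboring triangles induce opposite signs on shared edges (so that the constructed $\partial_2$ is genuinely the signed boundary matrix), while at the same time its homology $\ker(\partial_2)$ must not accidentally identify two variable edges, or else the recovery step collapses. Encoding the $-2$ coefficient of the averaging row via topology, rather than by duplicating a constraint (which would blow up the sparsity), is a genuinely new ingredient beyond~\cite{mn21}. Equally delicate is bounding $\kappa(\partial_2)$ by an explicit polynomial in the input size, for which the paper develops explicit triangulation algorithms that fix the exact number of triangles in each primitive and so let the propagation of error from approximate $\ff$ back to approximate $\xx$ be controlled end-to-end.
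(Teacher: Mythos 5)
Your overall two-step plan is exactly the paper's: invoke Theorem~\ref{thm:prelim_kz17} for $\calG \le_{nlt} \da$, construct a new reduction $\da \le_{nlt} \b2$, and compose by transitivity; you also correctly identify the four pieces of that second reduction that need to be controlled (nearly-linear size via $O(1)$ triangles per occurrence, polynomially bounded $\kappa(\partial_2)$ via explicit triangulations, a $\poly$-inverse accuracy parameter, and a separate treatment of $\bb \notin \im(\AA)$, cf.\ Section~\ref{sect:regression}). But your gadget design has a genuine gap. You propose encoding each variable $x_i$ by a single distinguished oriented edge $e_i$ and reading $\xx(i)$ off a canonical incident triangle. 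That does not scale: if $x_i$ appears in $b_i$ rows of $\MM$, you need $b_i$ distinct ``ports'' all carrying the same flow value, and a single interior edge with many incident triangles only forces the \emph{signed sum} of their flow values to vanish, not their pairwise equality. The paper's fix is to build a punctured sphere $\calS_i$ per variable with $b_i$ boundary components; with an oriented triangulation and zero demand on the sphere's interior edges, all $\tilde t_i$ triangles on $\calS_i$ are forced to a common value (Claim~\ref{clm:fi}), which is the broadcast structure your construction lacks.

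Your average-row gadget is also off. You suggest a pair of pants (thrice-punctured sphere) attached to $e_i$, $e_j$, and ``a tube that doubles the $e_k$-cycle.'' It is not explained how the shared-edge cancellations then yield the coefficient $-2$, and the natural analysis of a pair-of-pants boundary constraint gives $\pm 1$ coefficients, not $-2$. The paper instead realizes $x_i + x_j - 2 x_k = 0$ as $x_i + x_j - x_{k,1} - x_{k,2} = 0$ with $x_{k,1} = x_{k,2}$ enforced for free: it punches \emph{two} holes $\beta_{q,k,1},\beta_{q,k,2}$ in $\calS_k$ and runs two tubes from $\calS_k$ to the shared loop $\alpha_q$, so that the $-2$ arises from duplication plus the sphere's equality constraint, all with $\{-1,0,+1\}$ entries in $\partial_2$ (Figure~\ref{fig:construction2}). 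Likewise, the constant $\bb(q)$ is not injected via a ``disk cap''; it is placed as an edge demand directly on the three boundary edges of the loop $\alpha_q$, and the loop is deliberately \emph{not} an identification of the two tube boundaries precisely so it can carry a nonzero demand (footnote in Section~\ref{sect:construction}). Finally, a minor attribution point: \cite{mn21} supplies punctured spheres and tubes; pairs of pants and disk caps are not part of that toolkit. Fixing your gadgets would essentially force you back onto the paper's sphere-plus-loop architecture.
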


Although our main theorem focuses on linear equation approximate problems, we construct nearly-linear time reductions for both linear equation problem \ls~and its approximate counterpart \lsa. 
We first reduce \ls~instances $(\AA,\bb)$ (and \lsa~instances $(\AA,\bb,\epsilon)$)
over difference-average matrices 
to those over 2-complex boundary operator matrices, 
\emph{under the assumption $\bb \in \im(\AA)$} (stated in Theorem \ref{thm:main_feasible} and \ref{thm:main_feasible_approx}). 
In this case, the constructed 2-complexes have unit edge weights. 
We then provide a slightly modified nearly-linear time reduction 
for \lsa~$(\AA,\bb,\eps)$ over difference-average matrices to \lsa~over 2-complex boundary operator matrices
\emph{without assuming $\bb \in \im(\AA)$} (stated in Theorem \ref{thm:main_general}). In this case, we introduce polynomially bounded edge weights for the constructed 2-complexes.

\begin{theorem} 
	\label{thm:main_feasible}
	Given a linear equation instance \ls~$(\AA,\bb)$ where $\AA\in \mathcal{DA}$ and $\bb\in\im(\AA)$, 
	we can reduce it to an instance \ls~$(\partial_{2},\gamma)$ where $\partial_2 \in \mathcal{B}_2$, in time $O(\nnz(\AA))$, 
	such that a solution to \ls~$(\partial_2, \gamma)$ can be mapped to a solution to \ls~$(\AA,\bb)$ in time $O(\nnz(\AA))$.
	
\end{theorem}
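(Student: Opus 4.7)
The plan is to encode each variable of the difference-average system as a flow value living on a dedicated \emph{variable gadget}—a triangulated, oriented, punctured 2-sphere whose triangles are all forced to carry the same flow value—and to glue these gadgets together by \emph{tube gadgets} (triangulated cylinders) that encode the individual rows of $\AA$. Concretely, for each variable $\xx(i)$ I would build a punctured 2-sphere $S_i$ with one puncture for each occurrence of $\xx(i)$ in $\AA$; orient all triangles of $S_i$ consistently so that, if we put the common value $\alpha_i$ on every triangle, then $\partial_2$ applied to this flow produces $\pm\alpha_i$ on each puncture's boundary cycle and $0$ on every interior edge. The key structural fact I would use is that on a punctured sphere the kernel of $\partial_2$ restricted to constant-on-components flows is exactly the span of the all-$\alpha$ flow, so any flow whose image under $\partial_2$ is supported on the puncture cycles must be constant on the sphere; this is what will let me extract a well-defined $\xx(i)$ from any solution.

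Next, for each row of $\AA$ I would build an equation gadget by attaching tubes between the punctures of the appropriate $S_i$'s. For a difference row $\xx(i)-\xx(j)=\bb(p)$ I connect one free puncture of $S_i$ to one free puncture of $S_j$ by a triangulated tube whose triangles are oriented so that the two endpoints contribute $+\alpha_i$ and $-\alpha_j$ to a single distinguished edge on the tube (the \emph{demand edge}), with all interior edges of the tube canceling by opposite orientations of neighboring triangles; I then set $\gamma$ equal to $\bb(p)$ on this demand edge. For an average row $\xx(i)+\xx(j)-2\xx(k)=\bb(p)$ I use two free punctures of $S_k$ (so that $\xx(k)$ contributes twice) and route one tube from each into the tubes coming from $S_i$ and $S_j$; by a careful choice of tube orientations the net demand on the distinguished edge is $\alpha_i+\alpha_j-2\alpha_k$. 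Summing over all rows produces a single 2-complex $\calK$ with boundary operator $\partial_2 \in \calB_2$ and a demand vector $\gamma$ supported on the demand edges.

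For the correctness of the reduction, the forward direction is easy: given any solution $\xx^\star$ of $\AA\xx=\bb$, I define $\ff^\star$ by assigning $\xx^\star(i)$ to every triangle of $S_i$ and then, on each tube, filling in the unique flow on the tube triangles dictated by the adjacent spheres and the demand on the distinguished edge; verifying $\partial_2 \ff^\star = \gamma$ reduces to a local check at each edge of each gadget. The backward direction is the content of the reduction: given any $\fftil$ with $\partial_2 \fftil = \gamma$, I would use the punctured-sphere rigidity above to argue that $\fftil$ is constant on the triangles of each $S_i$ (since there are no demands on the interior of $S_i$), call this common value $\xxtil(i)$, and then the boundary constraint at the distinguished edge of each equation gadget is exactly the corresponding row of $\AA \xxtil = \bb$. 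The mapping back simply reads off $\xxtil(i)$ from any single triangle of $S_i$, which runs in $O(\nnz(\AA))$ time. Feasibility of $\partial_2 \ff = \gamma$ on the constructed complex follows from the assumed feasibility $\bb \in \im(\AA)$, via the forward direction applied to any exact solution $\xx^\star$.

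The main obstacle I expect is the average gadget, for two related reasons. First, the coefficient $-2$ cannot arise at an edge of a 2-manifold (where only two triangles meet), so I cannot simply place three triangles around one edge with multiplicities $1,1,-2$; the resolution is to ``double'' $\xx(k)$ by using two separate punctures of $S_k$ and then merge their contributions at a demand edge through a carefully oriented tube junction, which must still locally look like an oriented triangulation of a surface (possibly with boundary and a controlled number of higher-valence edges). Second, the orientations on the tubes must be chosen globally consistently so that interior edges cancel but the signs at the demand edge are exactly what the row of $\AA$ demands; this bookkeeping is what will later support the condition-number and error analyses in Sections \ref{sect:construction} and \ref{sect:approximate_solvers}, so for Theorem \ref{thm:main_feasible} I would present the explicit triangulation algorithm, specify the orientation at each triangle, and check locally that the contribution to every non-demand edge vanishes while the contribution to each demand edge matches the corresponding row of $\AA$.
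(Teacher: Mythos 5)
Your proposal follows essentially the same approach as the paper: punctured, oriented spheres as variable gadgets forcing a constant flow value per component (the paper's Claim~\ref{clm:fi}), tube gadgets terminating at a distinguished loop that carries the right-hand side demand, sign control via consistent versus reversed tube orientations to realize $+1$ and $-1$ coefficients, and two punctures on $\calS_k$ to realize the $-2$ in an average row. Your ``single distinguished demand edge'' appears in the paper as a triangular loop $\alpha_q$ of three edges each carrying $\bb(q)$, and your warning about higher-valence edges at the average junction matches the paper's remark that four tubes attach to one loop so the complex need not embed in $\R^3$; these are presentation-level differences, not a different reduction.
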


\begin{theorem} 
	\label{thm:main_feasible_approx}
	Given a linear equation instance \lsa~$(\AA,\bb, \epsda)$ where $\AA\in \mathcal{DA}$ and $\bb\in\im(\AA)$, 
	we can reduce it to an instance \lsa~$(\partial_{2},\gamma, \epsb)$ where $\partial_2 \in \mathcal{B}_2$ 
	and $\epsb \le \frac{\epsda}{42 \nnz(\AA)}$, in time $O(\nnz(\AA))$, 
	such that a solution to \lsa~$(\partial_2, \gamma, \epsb)$ can be mapped to a solution to \lsa~$(\AA,\bb, \epsda)$ in time $O(\nnz(\AA))$.
	
\end{theorem}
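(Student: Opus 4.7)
The plan is to reuse verbatim the combinatorial construction of Theorem~\ref{thm:main_feasible}---the same 2-complex $\calK$, boundary operator $\partial_2$, and right-hand side $\gamma$---and to show that this reduction is quantitatively stable: an approximate flow $\fftil$ for the constructed boundary equation, once fed through the same linear decoding used in the exact case, gives a vector $\xxtil$ whose residual for $\AA\xx=\bb$ is larger by at most a factor of $O(\nnz(\AA))$ in relative terms. Since $\bb\in\im(\AA)$ forces $\PPi_{\AA}\bb=\bb$ and makes the constructed $\partial_2$-instance feasible, any such polynomial amplification can be absorbed by tightening the target to $\epsb\le\epsda/(42\,\nnz(\AA))$, provided the hidden constant coming from the fixed gadget sizes does not exceed $42$.

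The first step is to record, gadget by gadget, what the construction of Theorem~\ref{thm:main_feasible} produces. Each difference row $\xx(i)-\xx(j)=\bb(r)$ is encoded by a tube and each average row $\xx(i)+\xx(j)-2\xx(k)=\bb(r)$ by a punctured sphere whose three boundary circles carry the relevant variable values; the variables of the $\partial_2$-instance are the triangles of $\calK$ and the equations are indexed by its edges, with $\gamma$ supported on boundary-circle edges carrying (scaled) entries of $\bb$ and zero elsewhere. The decoding map reads each original variable $\xx(i)$ as a linear function of a constant number of coordinates of $\ff$, namely the flow values on the triangles adjacent to one of the boundary circles associated with $\xx(i)$.

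The technical heart of the argument is a two-part stability lemma: for every $\ff$ and the resulting $\xxtil$,
\[
\norm{\AA\xxtil-\bb}_2 \;\le\; C_1\sqrt{\nnz(\AA)}\cdot\norm{\partial_2\ff-\gamma}_2,
\qquad
\norm{\gamma}_2 \;\le\; C_2\sqrt{\nnz(\AA)}\cdot\norm{\bb}_2,
\]
with absolute constants $C_1,C_2$ determined solely by the gadget triangulation. The first inequality is proved row by row: the residual of DA row $r$ equals a signed telescoping sum, around the corresponding boundary circle, of a constant number of edge-residuals of $\partial_2\ff-\gamma$, and Cauchy--Schwarz across the $O(\nnz(\AA))$ rows produces the $\sqrt{\nnz(\AA)}$ factor. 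The second follows because $\gamma$ places at most a constant number of copies of each entry of $\bb$ on its boundary edges and is zero on interior edges. Feasibility $\bb\in\im(\AA)$ implies $\PPi_{\partial_2}\gamma=\gamma$, so combining the two bounds yields $\norm{\AA\xxtil-\bb}_2\le C_1C_2\,\nnz(\AA)\cdot\epsb\,\norm{\bb}_2$, which is at most $\epsda\norm{\bb}_2$ once $\epsb\le\epsda/(42\,\nnz(\AA))$ is imposed and $C_1C_2\le 42$ is verified.

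The main obstacle is the bookkeeping of constants: the explicit triangulation algorithms of Section~\ref{sect:construction} fix a precise number of triangles and edges per gadget, and the $42$ in the theorem statement must be matched by carefully tracking the per-row degree of $\AA$ (bounded by $3$), the number of gadgets incident to each original variable, and the signs in the boundary-circle telescoping sums. The runtime bounds for both the construction and the decoding are immediate and inherited from Theorem~\ref{thm:main_feasible}, since the stability argument is purely analytic and adds no computation.
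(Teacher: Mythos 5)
Your high-level plan---reuse the construction of Theorem~\ref{thm:main_feasible} verbatim, track how error in $\partial_2\ff=\gamma$ propagates to $\AA\xx=\bb$, and absorb the amplification into the choice of $\epsb$---is the same as the paper's. But the quantitative claims and the argument behind the first inequality are wrong in a way that reflects a missed aspect of the construction.

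The central error is the claim that ``the residual of DA row $r$ equals a signed telescoping sum, around the corresponding boundary circle, of a \emph{constant} number of edge-residuals.'' That would be true if the decoding read off $\xx(i)$ from a triangle \emph{adjacent to the loop $\alpha_q$}, but then the decoding would depend on which equation $q$ you are looking at, and a variable that appears in many equations would get many inconsistent values. The paper instead decodes $\xx(i)=\ff(\Delta_i)$ from a single fixed \emph{central} triangle $\Delta_i$ on the sphere $\calS_i$, and the telescoping has to travel a triangle path from $\Delta_i$ all the way out to the boundary triangle on tube $\calT_{q,i}$. A sphere $\calS_i$ has $b_i=\sum_q|\AA(q,i)|$ holes and hence $\Theta(b_i)$ triangles, so that path can have length $\Theta(\nnz(\AA))$, not $O(1)$. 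Bounding the $\ell_1$-sum of edge-residuals along such a long path by the $\ell_2$-norm supplies a second $\sqrt{\nnz(\AA)}$ factor (the paper bounds each path by $\sqrt{t_i}\,\epsb\norm{\gamma}_2$ using Cauchy--Schwarz over the path, on top of the $\sqrt{d}$ from aggregating rows), so the correct first inequality is $\norm{\AA\xx-\bb}_2\le C_1\,\nnz(\AA)\cdot\norm{\partial_2\ff-\gamma}_2$, not $C_1\sqrt{\nnz(\AA)}\cdots$.

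Correspondingly, your second inequality is off in the opposite direction: you claim $\norm{\gamma}_2\le C_2\sqrt{\nnz(\AA)}\,\norm{\bb}_2$, but your own justification (``$\gamma$ places at most a constant number of copies of each entry of $\bb$'') gives $\norm{\gamma}_2\le C_2\norm{\bb}_2$ with $C_2=\sqrt{3}$ and no $\nnz(\AA)$ factor, which is what the paper uses. The two mistakes cancel in the final product, so the end-to-end constant of $42\,\nnz(\AA)$ is recoverable, but as written the argument for the first inequality has a genuine gap. Two smaller inaccuracies worth flagging: the gadget roles are swapped in your description (in the paper it is each \emph{variable} that gets a sphere, with holes for each equation it appears in, and each \emph{equation} gets a loop with one tube per incidence), and the decoding is from one central triangle per variable, not from triangles chosen near boundary circles.
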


\begin{theorem}
	\label{thm:main_general}
	Given an instance \lsa~$(\AA,\bb,\epsda)$ where $\AA\in \mathcal{DA}$, 
	we can reduce it to an instance \lsa~$(\WW^{1/2}\partial_{2},\WW^{1/2}\gamma, \epsb)$ 
	where $\partial_2 \in \mathcal{B}_2$ and $\WW$ is a diagonal matrix with nonnegative diagonals, in time $O(\nnz(\AA))$.
	Let $s, \eps, K,U$ denote $\nnz(\AA), \epsda, \kappa(\AA), U(\AA,\bb)$, respectively. 
	Then, we can guarantee that
	\begin{align*}
		& \nnz(\partial_{2})=O(s), ~ U(\WW^{1/2} \partial_2,\WW^{1/2} \gamma) = O\left(s U \eps^{-1}  \right), \\
		& ~ \epsb = \Omega(\eps U^{-1} s^{-1} ),
		~ \kappa(\WW^{1/2}\partial_2) = O \left( s^{15/2} K^2 \eps^{-2} \right)
	\end{align*}
	and a solution to \lsa~$(\WW^{1/2}\partial_{2}, \WW^{1/2}\gamma, \epsb)$ can be mapped to a solution to 
	\lsa~$(\AA,\bb,\epsda)$ in time $O(\nnz(\AA))$.
\end{theorem}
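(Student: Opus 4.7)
The plan is to lift the feasible-case reduction of Theorem~\ref{thm:main_feasible_approx} to the regression setting by attaching a small number of weighted ``slack'' simplices to the 2-complex so that every right-hand side can be routed by the augmented boundary operator, and then tuning the weights so that paying for slack corresponds to incurring regression residual. Concretely, I would first run the construction behind Theorem~\ref{thm:main_feasible_approx} on $(\AA,\bb)$ — pretending for a moment that $\bb \in \im(\AA)$ — to produce a candidate pair $(\partial_2^{\text{core}}, \gamma)$ whose decoding map sends a flow $\ff$ to a DA-solution $\xx$. When $\bb \notin \im(\AA)$, the vector $\gamma$ escapes $\im(\partial_2^{\text{core}})$ in a controlled way, because the only obstructions to feasibility come from rows of $\AA$ whose demands are inconsistent. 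I would glue to $\partial_2^{\text{core}}$ a family of additional ``slack'' triangles, one for each edge whose demand could absorb the missing part of $\bb$, and give them a small common diagonal weight $w$ in $\WW$, while the triangles of $\partial_2^{\text{core}}$ receive weight $1$. The resulting $\WW^{1/2}\partial_2$ remains a weighted boundary operator of a 2-complex, and by construction $\WW^{1/2}\gamma \in \im(\WW^{1/2}\partial_2)$, which sidesteps the inconsistency issue at the boundary-operator level.

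Next I would set up a two-sided correspondence between the weighted least-squares objectives. In the upper direction, given any $\xx$ for the DA problem, I exhibit a flow $\ff$ whose core part encodes $\xx$ exactly as in Theorem~\ref{thm:main_feasible_approx} and whose slack part carries the residual $\bb - \AA\xx$, showing that $\min_{\ff}\|\WW^{1/2}(\partial_2\ff - \gamma)\|_2^2 \le w \cdot \|\AA\xx - \bb\|_2^2 + \text{(core terms)}$. In the lower direction, given an approximate minimizer $\ff$ of the weighted boundary problem, I decode it via the same map and show, using the Lipschitz-style bounds from the feasible analysis plus the identity $\|\AA\xx - \PPi_\AA\bb\|_2 = \|\xx - \xx^*\|_{\AA^\top\AA}$ from Section~\ref{sect:prelim_linear_system}, that the decoded $\xx$ satisfies $\|\AA\xx - \PPi_\AA\bb\|_2 \le \epsda \|\PPi_\AA\bb\|_2$ whenever the boundary solution meets accuracy $\epsb$. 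The slack weight $w$ acts as a Lagrange parameter: setting it of order $\Theta(\eps/(sU))$ makes the weighted problem ``prefer'' to satisfy the core equations first (recovering $\PPi_\AA\bb$) and only then pay to absorb the unavoidable residual via slack flow.

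The main obstacle is the simultaneous control of $\nnz$, $\kappa(\WW^{1/2}\partial_2)$, $\epsb$, and $U(\WW^{1/2}\partial_2,\WW^{1/2}\gamma)$. Adding slack cheaply sacrifices condition number: the ratio between the largest weight ($1$) and the smallest ($w \sim \eps/(sU)$) together with $\kappa(\AA)^2$ arising from converting an $\ell_2$ residual on the boundary side into an $\|\cdot\|_{\AA^\top\AA}$ error on the DA side yields the claimed $\kappa(\WW^{1/2}\partial_2) = O(s^{15/2}K^2\eps^{-2})$. Conversely, the error budget $\epsb$ shrinks only by a polynomial factor in $s$ and $U$ because the decoding map is already shown to be polynomially Lipschitz in Theorem~\ref{thm:main_feasible_approx}, and the slack contributes an additive $O(w\,U^2)$ term that is absorbed into $\epsda\|\PPi_\AA\bb\|_2$ by the choice of $w$. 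The delicate step is therefore the propagation of error through the weighted normal equation $\partial_2^\top \WW \partial_2 \ff = \partial_2^\top \WW \gamma$, where I would need to argue that the orthogonal projection $\PPi_{\WW^{1/2}\partial_2}(\WW^{1/2}\gamma)$ pulled back through the decoder agrees with $\PPi_\AA \bb$ up to the tolerance allowed by $\epsb$; this step, together with a careful bookkeeping of multiplicative losses, is where the bulk of the technical work lies, though each individual estimate is routine given the machinery of the feasible case.
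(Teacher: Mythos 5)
Your proposal takes a genuinely different route from the paper, and the route as described has a concrete gap.

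The paper does not add any new simplices. It runs the same reduction \textsc{Reduce$\da$To$\b2$} and then reweights the \emph{existing} edges: boundary edges keep weight $1$, while each interior edge $e$ receives the large weight $w_e = \alpha\sum_q k_{q,e}l_q$ with $\alpha = 2/(\epsda)^2$, where $k_{q,e}$ and $l_q$ record how $e$ participates in the triangle paths used by the decoder. The direction of the weighting is the opposite of what you propose: the residual-carrying boundary edges keep weight $1$ (mirroring the DA residuals), while the flow-equality constraints inside each complex group are made nearly hard by a weight of order $\eps^{-2}$ times path lengths. The key technical ingredient is then Claim~\ref{clm:f_and_x}, a Cauchy--Schwarz computation along triangle paths giving $\tfrac{\alpha}{\alpha+1}\norm{\AA\xx-\bb}_2^2 \le \norm{\WW^{1/2}\partial_2\ff - \WW^{1/2}\gamma}_2^2$ for \emph{any} $\ff$, and a matching upper bound (Lemma~\ref{lm:general_case}) feeding into Claims~\ref{lm:projection_norm} and~\ref{lm:norm_ratio} to control the projections. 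The reduced problem is \emph{not} made feasible; the whole point is that it retains the regression structure so that $\norm{\PPi_{\WW^{1/2}\partial_2}\WW^{1/2}\gamma}_2$ tracks $\norm{\PPi_\AA\bb}_2$ up to a benign factor.

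Your slack-triangle proposal has two difficulties you would need to resolve. First, $\WW$ in $\WW^{1/2}\partial_2$ scales \emph{rows}, i.e., edges; a slack triangle is a \emph{column}. Giving "a small common diagonal weight $w$" to a triangle is not expressible in the allowed matrix class, so the weighting device you describe does not exist in this formulation. Second, if you instead put small (or zero) weight on the two new edges introduced by a slack triangle, then either those demand-$0$ rows pin the slack flow to zero (so nothing is absorbed), or with zero weight they vanish from the objective and the boundary constraints become trivially satisfiable by pumping arbitrary slack flow, collapsing the regression residual to zero and severing the link to $\norm{(\II-\PPi_\AA)\bb}_2$ that the theorem must preserve. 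Concretely, the claim "by construction $\WW^{1/2}\gamma\in\im(\WW^{1/2}\partial_2)$" is precisely the failure mode: once the reduced problem is feasible, $\norm{\WW^{1/2}\partial_2\ff - \PPi_{\WW^{1/2}\partial_2}\WW^{1/2}\gamma}_2 = \norm{\WW^{1/2}(\partial_2\ff-\gamma)}_2$ has optimum $0$, and an $\epsb$-approximate solution carries no information about the DA regression error. You would need an additional device to make the boundary residual quantitatively track the DA residual, which is exactly what the paper's Cauchy--Schwarz reweighting accomplishes without altering the complex.
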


We will prove Theorem \ref{thm:main_feasible} in Section \ref{sect:construction}, 
Theorem \ref{thm:main_feasible_approx} in Section \ref{sect:approximate_solvers}, 
and Theorem \ref{thm:main_general} in Section \ref{sect:regression}.

\subsection{Overview of Our Proof}

Multiplying a 2-complex boundary operator $\partial_2 \in \mathbb{R}^{m \times t}$ to a vector $\ff \in \mathbb{R}^t$ 
can be interpreted as transforming flows in the triangle space to demands in the edge space.
Given $\dd \in \mathbb{R}^d$, solving $\partial_2 \ff = \dd$ can be interpreted as finding flows in the triangle space 
subject to edge demands in $\dd$.
We will encode difference-average linear equations as a 2-complex flow network.

\paragraph{Encoding a Single Equation.}
We observe a simple fact:
If we glue two triangles $\Delta_1, \Delta_2$ with the same orientation,
then the net flow $\partial_2 \ff$ on the shared interior edge is $\ff(\Delta_1) - \ff(\Delta_2)$ (see Figure \ref{fig:proof_overview}~(a)); 
if we glue two triangles $\Delta_1, \Delta_2$ with opposite orientations,
then the net flow $\partial_2 \ff$ on the shared interior edge is $\ff(\Delta_1) + \ff(\Delta_2)$ (see Figure \ref{fig:proof_overview}~(b)).
Given an equation $\aa^\top \xx = b$ with the nonzero coefficients being $\pm 1$, 
we can encode it by gluing more triangles as above and setting the demand of the shared interior edge to be $b$.
To handle the coefficient $-2$ in an average equation, say $\xx(i) + \xx(j) - 2\xx(k)$, 
we implicitly interpret it as $\xx(i) + \xx(j) -  \xx(k_1) - \xx(k_2)$ 
together with an additional difference equation $\xx(k_1) = \xx(k_2)$ (see Figure \ref{fig:proof_overview}~(c)).

\begin{figure}[!h] 
	\centering 
	\includegraphics[width=\textwidth]{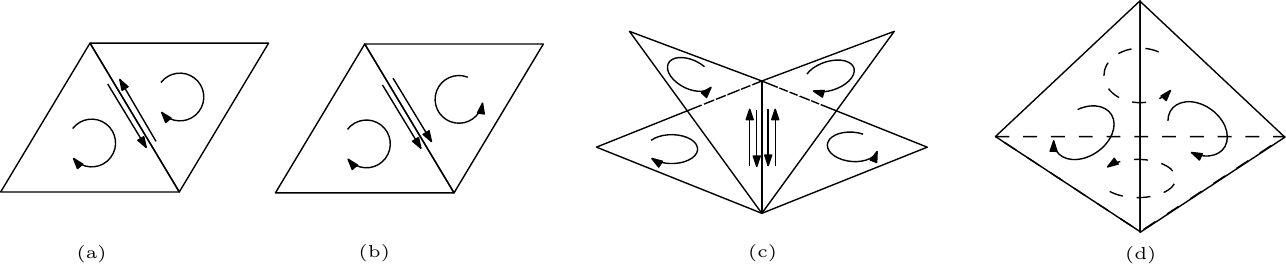} 
	\caption{An illustration for encoding a single equation and encoding a variable.}
	\label{fig:proof_overview} 
\end{figure}

\paragraph{Encoding a Variable.}
We use a sphere to encode a variable involved in many equations. 
We can obtain an oriented triangulation of the sphere and set all the edge demand to be 0 
so that all the triangles on the sphere must have an equal flow value (see Figure \ref{fig:proof_overview}~(d)).

\paragraph{Putting All Together.}
For each variable $\xx(i)$ and the sphere for $\xx(i)$, we create a ``hole'' for each equation that involves $\xx(i)$, 
and then attach a tube. We can have an oriented triangulation of the tubes so that the triangles on the tubes have equal value as the triangles on the sphere.
We then connect these tubes properly to encode each given difference and average equation.

\paragraph{Discussion.}
\begin{itemize}

	\item \textit{Why encode difference/average equations rather
            than directly encoding general equations with integer coefficients?}\\
	We can generalize the above encoding method to encode a
        general equation $\gg^\top\yy=c$ with arbitrary integer
        coefficients into a 2-complex with roughly $\norm{\gg}_1$
        tubes.
        However, the encoding size required to express a general system of linear
        equations $\GG\yy=\cc$ this way can be as large as
        $\Omega(\nnz(\GG)\norm{\GG}_{\max})$.
        This dependence on $\norm{\GG}_{\max}$ is
        prohibitive, and makes for a fairly weak result.
	
	On the other hand, we can first reduce the general linear
        equations $\GG\yy=\cc$ into difference-average linear
        equations $\AA\xx=\bb$, where $\norm{\AA}_{\max}=2$ and
        $\nnz(\AA)=O\left(\nnz(\GG)\log \norm{\GG}_{\max}\right)$ (by
        Lemma \ref{lem:GZToMC2NNZ}). Then we can encode $\AA\xx=\bb$ into
        a 2-complex. The encoding size required to express the the difference-average
        linear equations as a 2-complex is thus $O(\nnz(\AA))$ (by Lemma
        \ref{lem:size_boundary}). Thus, the overall encoding size
        required to express the original linear equation $\GG\yy=\cc$
        is now $\Otil\left(\nnz(\GG)\right)$, exponentially improving
        the dependence on $\norm{\GG}_{\max}$.
      
	Therefore, the two-step reduction is a nearly-linear time
        reduction while the one-step reduction is not.

      \item \textit{Why encode into a 2-complex rather than a
          1-complex?}\\
        We do not expect that general linear equations with integer
        coefficients can be efficiently encoded using a
        1-complex. This would immediately imply a nearly-linear time
        solver for general linear equations, as fast solvers for 1-complex
        operators exist (using Laplacian linear equation solvers).
        
\end{itemize}

%
\section{Reducing Exact Solvers for $\da$ to $\b2$ Assuming the Right-Hand Side Vector in the Image of the Coefficient Matrix}
\label{sect:construction}

In this section, we describe a nearly-linear time reduction from instances \ls~$(\AA,\bb)$ over $\da$
to instances \ls~over $\b2$, under the assumption that $\bb \in \im(\AA)$.
In Section \ref{sect:approximate_solvers}, we will show that the same reduction with a 
carefully chosen error parameter reduces 
linear equation approximate problem \lsa~over $\da$ to \lsa~over $\b2$, assuming $\bb \in \im(\AA)$.
 In Section~\ref{sect:regression}, we will slightly modify the reduction to drop the assumption $\bb \in \im(\AA)$
for \lsa.

Recall that an instance \ls~$(\AA,\bb)$ over $\da$
only consists of two types of linear equations:
\begin{enumerate}
	\item \textit{Difference equation}: $\xx(i)-\xx(j)=\bb(q)$,
	\item \textit{Average equation}: $\xx(i)+\xx(j)-2\xx(k)=0$.
\end{enumerate}

Suppose \ls~$(\AA,\bb)$ has $d_1$ difference equations and $d_2$ average equations.
Without loss of generality, 
we reorder all the equations so that the first $d_1$ equations are difference equations
and the rest are average equations.

\subsection{Reduction Algorithm}

Given an instance \ls~$(\AA,\bb)$ where $\AA$ is a $d \times n$ matrix in $\mathcal{DA}$, 
the following algorithm \textsc{Reduce$\da$To$\b2$} constructs a 2-complex and a system of linear equations in its boundary operator.\\\\
\noindent\textbf{Algorithm \textsc{Reduce$\da$To$\b2$}}\\
\textbf{Input}: an instance \ls~$(\AA,\bb)$ where $\AA\in\da$ is a $d\times n$ matrix and $\bb \in \mathbb{R}^d$.\\
\textbf{Output}: $(\partial_2, \gamma, \boldsymbol{\Delta}^c)$ where $\partial_2\in \b2$ is an $m \times t$ matrix, $\gamma\in\R^m$, and $\triangle^c$ is a set of $n$ triangles.
\begin{enumerate}
	\item For each $i \in [n]$ and variable $\xx(i)$ in \ls~$(\AA,\bb)$, we construct a sphere $\calS_i$.
	\item For each $q \in [d_1]$, which corresponds to a difference equation $\xx(i) - \xx(j) = \bb(q)$, we add a \textit{loop} $\alpha_q$ with a net flow demand $\bb(q)$. Then,
	\begin{enumerate}
		\item we add a boundary component $\beta_{q,i}$ on $\calS_i$, and a boundary component $\beta_{q,j}$ on $\calS_j$;
		\item we construct a tube $\calT_{q,i}$ with boundary components $\{-\beta_{q,i}, \alpha_q \}$, and a tube $\calT_{q,j}$ with boundary components $\{-\beta_{q,j}, -\alpha_q \}$. 
	\end{enumerate}
	See Figure \ref{fig:construction1} for an illustration.
	\footnote{Note that since the loop $\alpha_q$ has demand $\bb(q)$, our construction is different from identifying the boundary component $\alpha_q$ of $\calT_{q,i}$
		and the boundary component $-\alpha_q$ of $\calT_{q,j}$.}
	\begin{figure}[!h] 
		\centering 
		\includegraphics[width=0.6\textwidth]{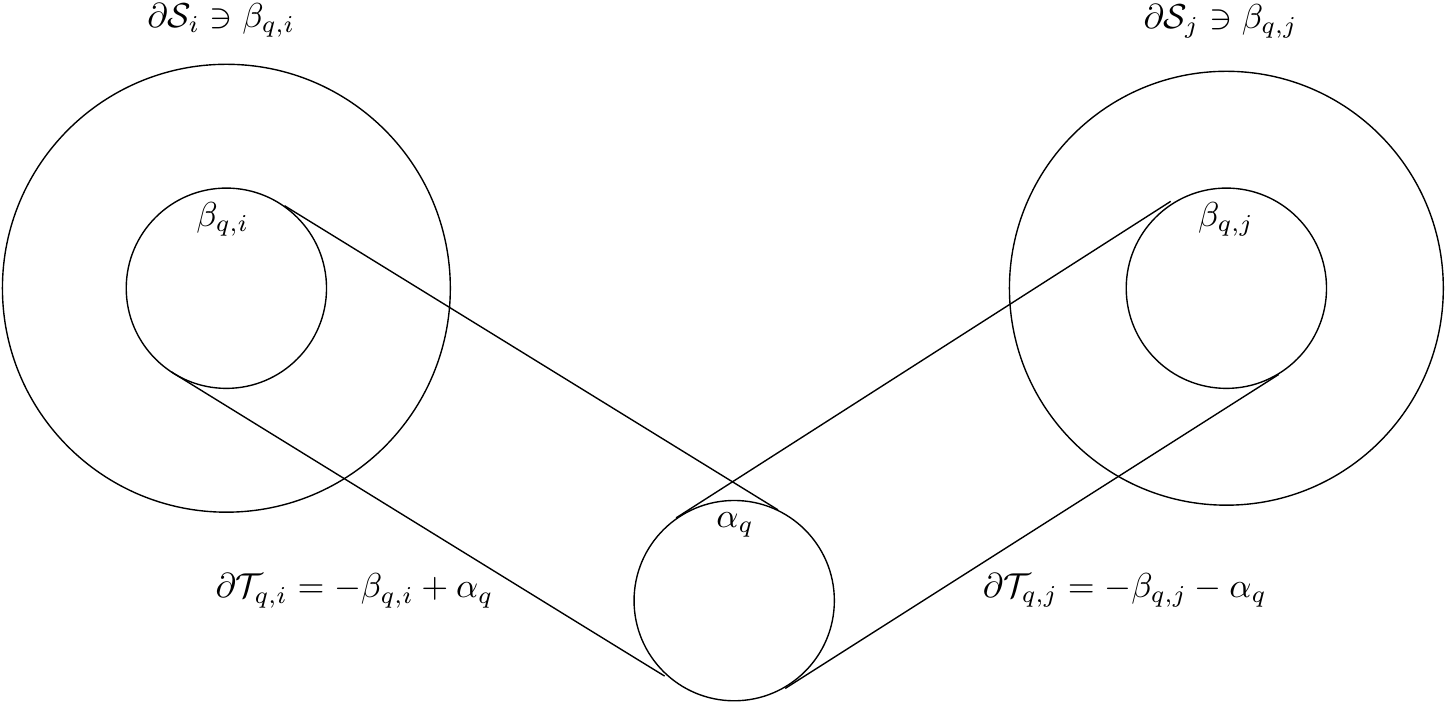} 
		\caption{The construction for a difference equation $\xx(i)-\xx(j)=\bb(q)$.}
		\label{fig:construction1} 
	\end{figure}
	\item For each $q \in  \{d_1+1,\ldots, d \}$, which corresponds to an average equation $\xx(i) + \xx(j) - 2\xx(k) = \bb(q) = 0$, 
	we add a loop $\alpha_q$ with zero net flow demand. Then,
	\begin{enumerate}
		\item we add a boundary component $\beta_{q,i}$ on $\calS_i$, a boundary component $\beta_{q,j}$ on $\calS_j$,
		and two boundary components $\beta_{q,k,1}, \beta_{q,k,2}$ on $\calS_k$; 
		\item we construct a tube $\calT_{q,i}$ with boundary components $\{-\beta_{q,i}, \alpha_q \}$,
		a tube $\calT_{q,j}$ with boundary components $\{-\beta_{q,j}, \alpha_q \}$,
		and two tubes $\calT_{q,k,1}, \calT_{q,k,2}$ with boundary components $\{-\beta_{q,k,1}, -\alpha_q \}$ and $\{-\beta_{q,k,2}, -\alpha_q \}$, respectively.
	\end{enumerate}
	See Figure \ref{fig:construction2} for an illustration\footnote{
		As four tubes are connected to a single loop, to avoid the intersection of tubes before attaching the loop, a higher-dimensional space is required. }.
	\begin{figure}[!h] 
		\centering 
		\includegraphics[width=0.6\textwidth]{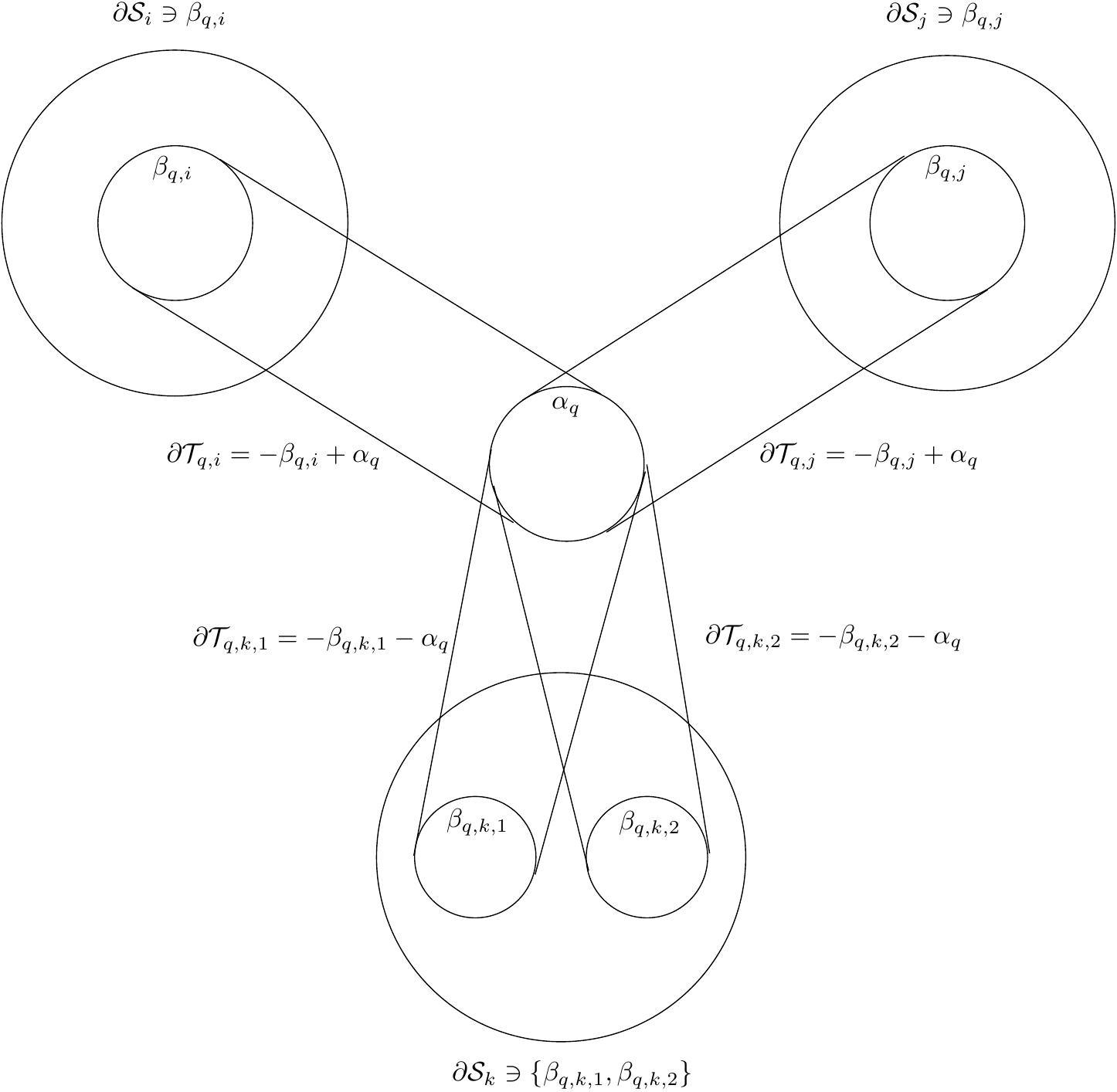} 
		\caption{The construction for an average equation $\xx(i)+\xx(j)-2\xx(k)=0$.}
		\label{fig:construction2} 
	\end{figure}
	\item For each $i \in [n]$, the punctured sphere $\calS_i$ and the tubes connected to $\calS_i$ form a continuous topological space.
	We construct an oriented triangulation for this space such that
	the induced orientation of each edge on a loop $\alpha_q$ is consistent with the orientation of $\alpha_q$.
	We will describe this oriented triangulation subroutine in Section \ref{sec:oriented_triangulation}.
	Let $\calK$ be the oriented 2-complex.
	Let $\partial_2$ be the boundary operator of $\calK$.

	\item Each edge on a loop $\alpha_q$
	has net demand $\bb(q)$;
	each other edge
	has net demand $0$.
	Let $\gamma$ be the vector of the net flow demands.	

	\item On each triangulated sphere $\calS_i$, we choose an arbitrary triangle $\Delta_i \in \calS_i$ as the \emph{central triangle}. 
	Let $\triangle^c$ be the set of all the central triangles.

	\item We return $(\partial_2, \gamma, \triangle^c)$.
\end{enumerate}

The following algorithm \textsc{MapSoln$\b2$to$\da$} maps a solution $\ff$ to \ls~$(\partial_2, \gamma)$ to a solution $\xx$ to \ls~$(\AA,\bb)$.\\\\
\noindent\textbf{Algorithm \textsc{MapSoln$\b2$to$\da$}}\\
\textbf{Input}: a tuple $(\AA, \bb, \ff, \boldsymbol{\Delta}^c)$, where $\AA \in \da$ is a $d\times n$ matrix, $\bb \in \mathbb{R}^d$, $\ff \in \mathbb{R}^t$, and $\boldsymbol{\Delta}^c$ is the set of $n$ central triangles.\\
\textbf{Output}: a vector $\xx \in \mathbb{R}^n$.
\begin{enumerate}
	\item If $\AA^\top \bb = 0$, we return $\xx = \bf 0$.
	\item Otherwise, we set $\xx(i) = \ff(\Delta_i)$, where $\Delta_i\in \triangle^c$ is the central triangle on sphere $\calS_i$.
\end{enumerate}

\subsubsection{Oriented Triangulation for Punctured Spheres and Tubes}
\label{sec:oriented_triangulation}

We provide a concrete triangulation subroutine here for the benefit of analyzing our reduction algorithm.

\paragraph*{Oriented Triangulation for Punctured Spheres.}	
By our construction, each sphere $\calS_i$ has $b_i = \sum_{q=1}^{d} \abs{\AA(q,i)}$ boundary components.
We will create $\ttil_i$ triangles and $\mtil_i$ edges on $\calS_i$, based on $b_i$.
\begin{enumerate}
	\item If $b_i=1$ (see Figure \ref{fig:triangulation_sphere} (a)), the punctured sphere is topologically equivalent to a disk.
	In this case, $\calS_i$ can be triangulated using a single triangle $[v_{(1)}^1,v_{(1)}^2,v_{(1)}^3]$, thus $\ttil_i=1, \mtil_i=3$. 

	\item If $b_i=2$ (see Figure \ref{fig:triangulation_sphere} (b)), the punctured sphere is topologically equivalent to an annulus. We subdivide the triangle $[v_{(1)}^1,v_{(1)}^2,v_{(1)}^3]$ obtained in the previous case by adding 6 interior edges between vertices of the inner and the outer boundaries: $[v_{(1)}^1, v_{(2)}^1],[v_{(1)}^1,v_{(2)}^2], [v_{(1)}^2, v_{(2)}^1],[v_{(1)}^2,v_{(2)}^3], [v_{(1)}^3, v_{(2)}^2], [v_{(1)}^3, v_{(2)}^3]$, thus $\ttil_i=6, \mtil_i=12$.

	\item Generally, if $b_i=k$ (see Figure \ref{fig:triangulation_sphere} (c)), we subdivide the rightmost triangle $[v_{(1)}^1,v_{(1)}^2,v_{(k-1)}^1]$ obtained in the case of $b_i=k-1$ with the same method. By induction, we have
	\begin{equation}
		\label{eq:sphere_triangulation}
		\ttil_i=5b_i-4, \qquad \mtil_i=9b_i-6, \qquad \text{for}~ b_i\geq 1.
	\end{equation}
\end{enumerate}

\begin{figure}[!h] 
	\centering 
	\includegraphics[width=0.9\textwidth]{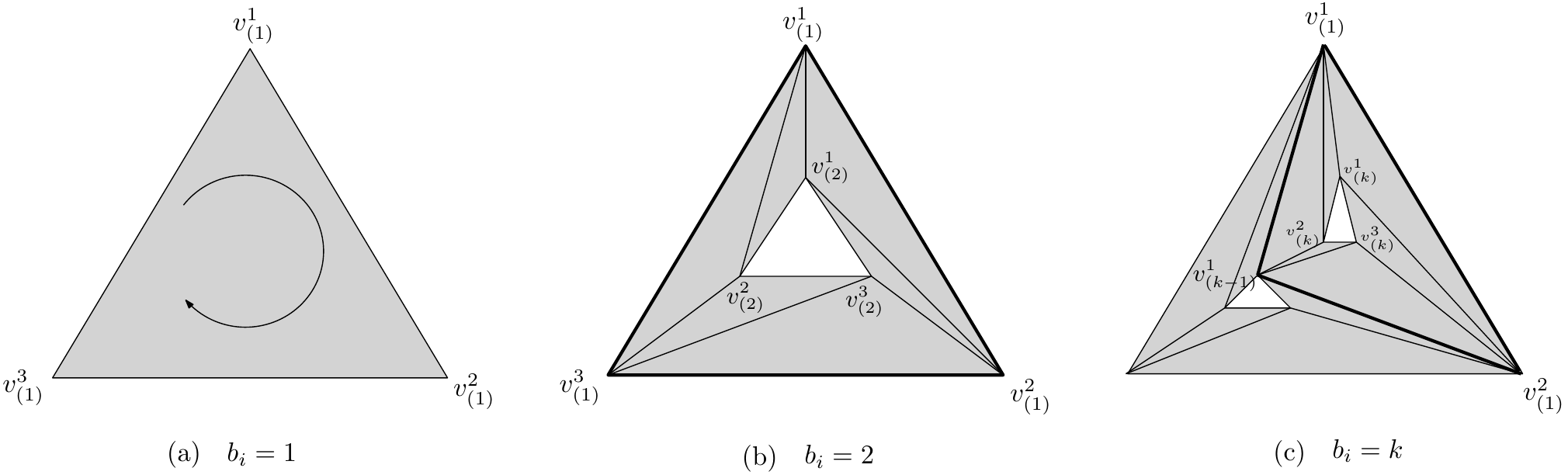} 
	\caption{Oriented triangulation of punctured spheres. The light area represents the ``holes'' defined by boundary components.}
	\label{fig:triangulation_sphere} 
\end{figure}

The orientation for triangles on the same sphere should be identical.
Without loss of generality, we orient all triangles clockwise. Note that with this triangulation method, all boundary components are composed of 3 edges.

\paragraph*{Oriented Triangulation for Tubes.}	
A tube is defined by two boundary components. By our construction, for every tube connected to $\calS_i$, one of the two boundary components is always $-\beta_{q,i,*}$ \footnote{We introduce a third element $*\in\{1,2\}$ in the subscript of $\beta_{q,k,*}$, which is activated only when $\AA(q,k)=-2$.}, and the other one is $\pm\alpha_q$, whose orientation depends on the sign of the entry $\AA(q,i)$. Without loss of generality, we orient anti-clockwise for all $\alpha_q$, thus clockwise for all $-\alpha_q$. 
Therefore, there are two possibilities of boundary component combinations.
\begin{enumerate}
	\item If $\AA(q,i)>0$ (see Figure \ref{fig:triangulation_tube} (a)), then the two boundary components have opposite orientations: $-\beta_{q,i,*} = [v^1_{q,i,*},v^3_{q,i,*},v^2_{q,i,*}]$ and $\alpha_q=[v^1_q,v^2_q, v^3_q]$. We triangulate by matching $v^1_{q,i,*}$ to $v^1_q$, $v^2_{q,i,*}$ to $v^2_q$, and $v^3_{q,i,*}$ to $v^3_q$.
	\item If $\AA(q,i)<0$ (see Figure \ref{fig:triangulation_tube} (b)), then the two boundary components have identical orientations: $-\beta_{q,i,*} = [v^1_{q,i,*},v^3_{q,i,*},v^2_{q,i,*}]$ and $-\alpha_q=[v^1_q,v^3_q, v^2_q]$. We triangulate by matching $v^1_{q,i,*}$ to $v^1_q$, $v^3_{q,i,*}$ to $v^2_q$, and $v^2_{q,i,*}$ to $v^3_q$.
\end{enumerate}
\begin{figure}[!h] 
	\centering 
	\includegraphics[width=\textwidth]{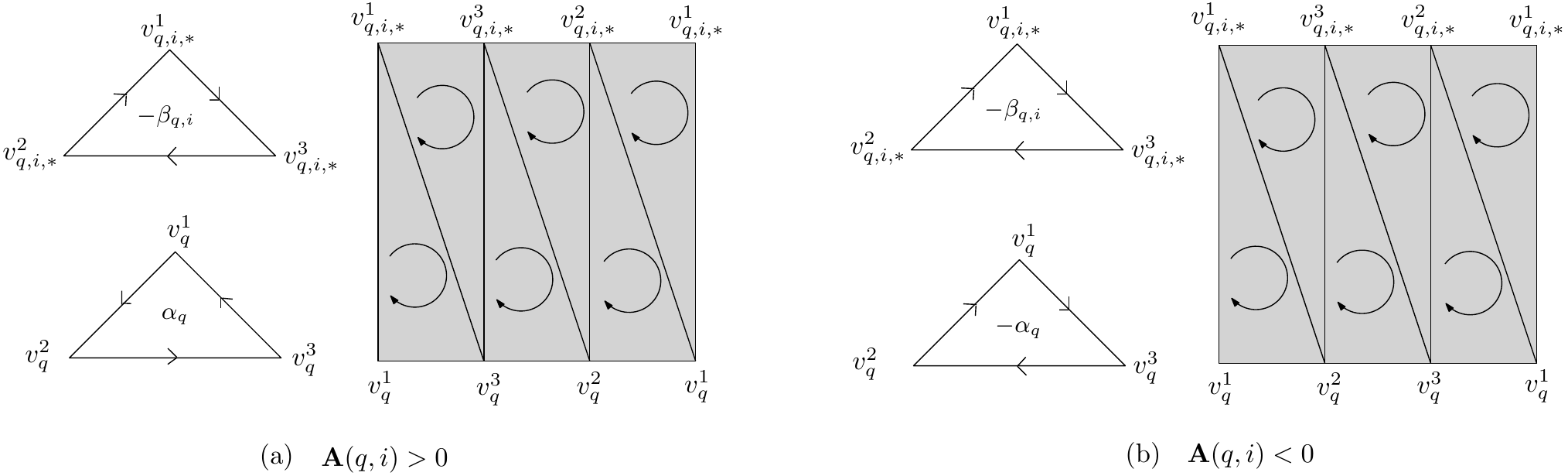} 
	\caption{Oriented triangulation of tubes with opposite or identical boundary orientations.}
	\label{fig:triangulation_tube} 
\end{figure}

In either case, only 6 triangles and 12 edges are required for an oriented triangulation of any tube $\mathcal{T}_{q,i,*}$.
Again, we orient all triangles clockwise.

\subsection{Notations}	
\label{sect:triange_notations}

We introduce several notions and corresponding notations about the constructed 
2-complex. 
These notions and notations will be used in the rest of the paper.

For each $i \in [n]$, let $\calK_i$ be the the union of the triangulated $\calS_i$ and the triangulated tubes that are connected to $\calS_i$, 
which we refer to as the $i$th \emph{complex group};
let $t_i$ be the number of triangles in $\calK_i$;
let $m_i$ be the number of the interior edges in $\calK_i$.

We refer to an edge on a loop $\alpha_q$ as a \emph{boundary edge}
and an edge not on any loop a \emph{interior edge}.
According to our triangulation, each loop $\alpha_q$ has three boundary edges, denoted by $\alpha_q^1=[v^1_q,v^2_q], ~\alpha_q^2=[v^2_q,v^3_q],~ \alpha_q^3=[v^3_q,v^1_q]$. 
A triangle containing a boundary edge is called a \textit{boundary triangle}.
For each boundary edge $\alpha_q^r$, where $q \in [d_1]$ and $r \in [3]$, corresponding to equation 
$\xx(i) - \xx(j) = \bb(q)$, we denote 
the boundary triangles by $\Delta_{q,i,1}^r, \Delta_{q,j,1}^r$
where $\Delta_{q,i,1}^r \in \calT_{q,i},~ \Delta_{q,j,1}^r \in \calT_{q,j}$;
for each boundary edge $\alpha_q^r$, where $q \in [d_1+1:d_2]$ and $r \in [3]$, corresponding to equation 
$\xx(i) + \xx(j) = 2 \xx(k)$,
we denote the boundary triangles by $\Delta_{q,i,1}^r, \Delta_{q,j,1}^r, \Delta_{q,k,1}^r, \Delta_{q,k,2}^r$
where $\Delta_{q,i,1}^r \in \calT_{q,i},~ \Delta_{q,j,1}^r \in \calT_{q,j}, ~\Delta_{q,k,1}^r, \Delta_{q,k,2}^r \in \calT_{q,k}$.

Given any two triangles $\Delta, \Delta' \in \calK$, a \textit{triangle path} from $\Delta$ to $\Delta'$ 
is an ordered collection of triangles $\mathcal{P}=[\Delta^{(0)}=\Delta, \Delta^{(1)},\ldots, \Delta^{(l)}=\Delta']$
such that every neighboring triangles share an edge. The length of $\calP$ is $l$.
A triangle path can also be defined by an ordered collection of edges $\mathcal{P}=[e^{(1)}, \ldots, e^{(l)}]$, where $e^{(i)}$ denotes the edge shared by $\Delta^{(i-1)}$ and $\Delta^{(i)}$, for $i\in[l]$.

\subsection{Algorithm Runtime and Problem Size}

In this section, we show that the reduction algorithm  \textsc{Reduce$\da$To$\b2$}
and the solution mapping algorithm \textsc{MapSoln$\b2$to$\da$}
both run in linear time, 
and \textsc{Reduce$\da$To$\b2$} constructs a 2-complex 
whose size is linear in the number of nonzeros in the input linear equations.

\begin{lemma}[Runtime]
\label{lm:runtime}
Given a difference-average instance \ls~$(\AA,\bb)$ where $\AA\in\R^{d\times n}$, 
Algorithm \textsc{Reduce$\da$To$\b2(\AA,\bb)$} returns $(\partial_2, \gamma, \triangle^c)$ in time $O(\nnz(\AA))$.
Given a solution $\ff$ to \ls~$(\partial_2, \gamma)$, 
Algorithm \textsc{MapSoln$\b2$to$\da(\AA, \bb, \ff, \boldsymbol{\Delta}^c)$}
returns $\xx$ in time $O(n)$.
\end{lemma}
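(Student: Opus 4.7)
The plan is to verify that each step of \textsc{Reduce$\da$To$\b2$} costs at most linear time in $\nnz(\AA)$, and that \textsc{MapSoln$\b2$to$\da$} performs only a constant amount of work per variable. The central accounting identity I would use is that the number of boundary components on $\calS_i$ is exactly $b_i = \sum_{q=1}^{d}\abs{\AA(q,i)}$, since every occurrence of $\xx(i)$ in an equation contributes $\abs{\AA(q,i)}$ boundary components to $\calS_i$ (one for coefficients $\pm 1$, two for the $-2$ coefficient of an average equation). Summing over $i$ gives $\sum_i b_i = \sum_{q,i}\abs{\AA(q,i)} = 2d_1 + 3d_2 = O(\nnz(\AA))$, which bounds the combinatorial size of the constructed 2-complex.

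First I would handle Steps 1--3 of the reduction. Creating $n$ abstract spheres costs $O(n) \leq O(\nnz(\AA))$. For each equation $q$ we add a loop with constantly many boundary edges and either two tubes (difference) or four tubes (average), so the total number of such attachments is $O(d_1+d_2)$, each executable in $O(1)$ using pointer-based representations of spheres, tubes, and their boundary components. This already gives an $O(\nnz(\AA))$ total for Steps 1--3.

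Next I would bound the triangulation in Step 4. By \eqref{eq:sphere_triangulation}, triangulating $\calS_i$ produces $5b_i-4$ triangles and $9b_i-6$ edges, and the incremental construction in Section~\ref{sec:oriented_triangulation}, which subdivides the rightmost triangle to spawn each new hole, runs in $O(b_i)$ time; summing over $i$ yields $O(\nnz(\AA))$ for all punctured spheres. Each tube is triangulated using a fixed $6$ triangles and $12$ edges via the explicit vertex matching in Section~\ref{sec:oriented_triangulation}, so all $O(d_1+d_2)$ tubes together cost $O(\nnz(\AA))$. Writing the demand vector $\gamma$ in Step 5 touches only the $3d$ boundary edges and zeros on the rest, and Step 6 picks one central triangle per sphere in $O(n)$ time; both fit within $O(\nnz(\AA))$.

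Finally, for \textsc{MapSoln$\b2$to$\da$}, I would store $\boldsymbol{\Delta}^c$ as an array indexed by $i \in [n]$, so that the nontrivial branch outputs $\xx(i) = \ff(\Delta_i)$ in $O(1)$ per index for $O(n)$ total; the trivial branch returning $\veczero$ is no more expensive, provided the Boolean outcome of the test $\AA^\top \bb = \veczero$ has been precomputed and cached during the reduction stage (since otherwise reevaluating it would cost $\Theta(\nnz(\AA))$ and exceed the claimed bound). I do not expect a serious obstacle in either algorithm; the only delicate point is the bookkeeping to give each loop $\alpha_q$, each boundary component $\beta_{q,i,*}$, and each central triangle $\Delta_i$ direct $O(1)$ access to its incident vertices and triangles, so that the case analysis for tube orientations and the central-triangle lookup both run in constant time per operation. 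With that data-structure scaffolding in place, the two runtime bounds follow directly from the counting argument above.
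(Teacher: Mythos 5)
Your approach is essentially the same as the paper's: both bound the reduction time by noting that tube triangulation is $O(1)$ per tube, sphere triangulation is linear in the number of boundary components $b_i$, and $\sum_i b_i = \norm{\AA}_1 = O(\nnz(\AA))$ since $\norm{\AA}_{\max}=2$. One small arithmetic slip: you wrote $\sum_{q,i}\abs{\AA(q,i)} = 2d_1 + 3d_2$, but an average row contributes $1+1+2=4$ to the $\ell_1$ norm, so $\norm{\AA}_1 = 2d_1 + 4d_2$ (it is $\nnz(\AA)$ that equals $2d_1+3d_2$); this does not affect the $O(\nnz(\AA))$ conclusion. Your observation that the test $\AA^\top\bb = \veczero$ in \textsc{MapSoln$\b2$to$\da$} naively costs $\Theta(\nnz(\AA))$ and must be cached during the reduction to meet the stated $O(n)$ bound is a sharper reading than the paper, which dismisses the solution-mapping runtime as obvious.
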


\begin{proof}
	For reduction, \textsc{Reduce$\da$To$\b2(\AA,\bb)$} calls the tube triangulation subroutine for $\norm{\AA}_1$ times, and the punctured sphere triangulation subroutine for $n$ times. The tube triangulation subroutine runs in time $O(1)$ since the there are a constant number of triangles in a tube; and the punctured sphere triangulation subroutine runs in time $O(\norm{\AA(:,j)}_1)$ for the $j$th call, $j\in[n]$. 
	Putting all together, the total runtime of \textsc{Reduce$\da$To$\b2(\AA,\bb)$} is $O\left(\norm{\AA}_1+\sum_{j\in[n]}\norm{\AA(:,j)}_1\right)\leq O(\nnz(\AA))$, where we use the fact $\norm{\AA}_{\max}=2$.
	
	For solution mapping, the runtime of the algorithm \textsc{MapSoln$\b2$to$\da$} is obvious.
\end{proof}

	\begin{lemma}[Size of $\partial_2$]
		\label{lm:size}
		Given a difference-average instance \ls~$(\AA,\bb)$, 
		let $(\partial_2, \gamma, \triangle^c)$ be returned by \textsc{Reduce$\da$To$\b2(\AA,\bb)$}.
		Suppose $\partial_2 \in \mathbb{R}^{m \times t}$. Then,
		\begin{itemize}
			\item $t\leq 22\nnz(\AA)$;
			\item $m\leq 33\nnz(\AA)$;
			\item $\nnz(\partial_2)\leq 66\nnz(\AA)$.			
		\end{itemize}
		\label{lem:size_boundary}
	\end{lemma}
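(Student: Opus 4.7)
The plan is to compute $t$, $m$, and $\nnz(\partial_{2})$ directly by summing the contributions of each building block in the construction (spheres, tubes, and loops), being careful to count shared edges only once. The main technical care is in the edge accounting, since each $\beta$-boundary is shared between a sphere and one tube, while each $\alpha_q$-loop is shared among all tubes attached to it.

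\textbf{Counting triangles.} I will first observe that the number of boundary components on sphere $\calS_i$ is $b_{i} = \sum_{q=1}^{d} |\AA(q,i)|$, so $\sum_{i=1}^{n} b_i = \|\AA\|_1$, and the total number of tubes equals $\|\AA\|_1$ as well (one tube per boundary component on some sphere). By \eqref{eq:sphere_triangulation}, sphere $\calS_i$ contributes $5b_i - 4$ triangles, and each tube contributes $6$ triangles by the triangulation of Section~\ref{sec:oriented_triangulation}. Summing,
\[
t \;=\; \sum_{i=1}^{n}(5b_i - 4) + 6\,\|\AA\|_1 \;=\; 11\,\|\AA\|_1 - 4n \;\leq\; 11\,\|\AA\|_1 .
\]
Since every nonzero entry of $\AA$ has absolute value at most $2$, we have $\|\AA\|_1 \leq 2\,\nnz(\AA)$, which yields $t \leq 22\,\nnz(\AA)$.

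\textbf{Counting edges.} Here I must avoid double-counting. I will partition the edges of $\calK$ into four disjoint types: (i) edges strictly interior to some sphere $\calS_i$, (ii) $\beta$-boundary edges (each shared between one sphere and one tube), (iii) edges strictly interior to some tube, and (iv) loop edges on some $\alpha_q$. Sphere $\calS_i$ has $9b_i - 6$ edges total and $3b_i$ of them are $\beta$-boundary edges, so it contributes $6b_i - 6$ interior edges. Each tube has $12$ edges, of which $3$ lie on a $\beta$-boundary and $3$ lie on an $\alpha_q$-loop, leaving $6$ strictly interior tube edges. Each of the $d$ loops contributes $3$ edges. Summing without duplication,
\[
m \;=\; \underbrace{\sum_{i=1}^{n}(6b_i - 6)}_{\text{sphere interiors}} \;+\; \underbrace{3\,\|\AA\|_1}_{\beta\text{-edges}} \;+\; \underbrace{6\,\|\AA\|_1}_{\text{tube interiors}} \;+\; \underbrace{3d}_{\alpha\text{-edges}} \;=\; 15\,\|\AA\|_1 - 6n + 3d.
\]
Since every row of $\AA$ has at least two nonzeros, $d \leq \nnz(\AA)/2$, and together with $\|\AA\|_1 \leq 2\,\nnz(\AA)$ this gives $m \leq 30\,\nnz(\AA) + \tfrac{3}{2}\nnz(\AA) \leq 33\,\nnz(\AA)$.

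\textbf{Counting nonzeros of $\partial_2$.} Each column of $\partial_2$ corresponds to a triangle of $\calK$, and by definition of the boundary operator that column has exactly three nonzero entries, one per edge of the triangle. Therefore $\nnz(\partial_2) = 3t \leq 66\,\nnz(\AA)$. There is no real obstacle here beyond careful bookkeeping; the only place it is easy to slip is the shared-edge accounting in Step~2, and the partition above is designed specifically to make that unambiguous.
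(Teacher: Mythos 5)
Your triangle and nonzero counts are identical to the paper's. For the edge count the paper takes a one-line shortcut: it observes that $\sum_e (\text{number of triangles containing } e) = 3t$ and every edge lies in at least two triangles, so $m \le 1.5t \le 33\nnz(\AA)$. You instead partition edges explicitly into sphere-interior, $\beta$-boundary, tube-interior, and $\alpha$-loop classes, arriving at the exact count $m = 15\|\AA\|_1 - 6n + 3d$, then bound $d \le \nnz(\AA)/2$ and $\|\AA\|_1 \le 2\nnz(\AA)$. Both are correct and give the same headline constant; your version is more work but is a bit more informative (it makes visible which edges are shared and by how many triangles, and it would let one tighten the constant if desired), while the paper's double-counting trick requires nothing beyond knowing each edge appears in at least two triangles. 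One minor note: your bookkeeping silently uses that each $\beta$-boundary component is shared with exactly one tube and that each $\alpha_q$-loop carries exactly three edges, both of which hold by construction but are worth stating when you do an exact partition count rather than a one-sided bound.
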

	\begin{proof}
		We first compute the total number of triangles in the constructed 2-complex $\calK$.
		For sphere $\mathcal{S}_j$, we have $\ttil_j=5b_j-4$ triangles by  \eqref{eq:sphere_triangulation}, where $b_j=\sum_{i\in[d]}\abs{\AA(i,j)}$. Therefore, the number of triangles of all spheres is
		\[\sum_{j=1}^n \ttil_j= \sum_{j=1}^n (5\sum_{i\in[d]}\abs{\AA(i,j)} -4)=5\norm{\AA}_1-4n.\]
		Moreover, each boundary component on spheres corresponds to a tube, and each tube has 6 triangles. Hence, the number of triangles of all tubes is $6\norm{\AA}_1$. Putting spheres and tubes together, we get
		\[t=11\norm{\AA}_1-4n\leq 22\nnz(\AA),\]
		where the last inequality is because entries of $\AA$ are bounded by 2.

		Next, we compute the total number of edges in $\calK$. By construction, each triangle has 3 incident edges and each edge is shared by a constant number of triangles (2 for interior edges, and 4 for boundary edges). Thus, we have
		\[m\leq 1.5 t\leq 33\nnz(\AA).\]
		
		Since each column of $\partial_2$ has exactly 3 nonzero entries, we have
		\[\nnz(\partial_2)=3t\leq 66\nnz(\AA).\]
	\end{proof}

%
\subsection{Relation Between Exact Solutions}

\label{sect:exact_solvers}

In this section, we show that the algorithm \textsc{Reduce$\da$To$\b2$} and the algorithm \textsc{MapSoln$\b2$to$\da$} reduce instances
\ls~$(\AA,\bb)$ over $\da$ to instances \ls~over $\b2$,
under the assumption that $\bb$ is in the image of $\AA$.

Given \ls~$(\AA,\bb)$ where $\AA$ is a $d \times n$ matrix in $\da$, let $(\partial_2, \gamma, \triangle^c)$
be returned by \textsc{Reduce$\da$To$\b2$}$(\AA,\bb)$, and let $\calK$ be the 2-complex 
constructed in \textsc{Reduce$\da$To$\b2$}$(\AA,\bb)$
and the boundary operator of $\calK$ is $\partial_2$.
Let $\ff$ be a solution to \ls~$(\AA,\bb)$.

To simplify the analysis, we reorder the columns and rows of $\partial_2$.
The columns $[1:t_1]$ of $\partial_2$ correspond
to the triangles in $\calK_1$, the columns $[t_1+1:t_2]$ correspond
to the triangles in $\calK_2$, and so on. Then, $\ff$ can be written as
\[\ff=\begin{bmatrix}
	\ff_1\\
	\vdots\\
	\ff_n
\end{bmatrix}, \qquad \text{where }  \ff_i\in\R^{t_i}, \forall i\in[n].\]
The rows of $\partial_2$ and the entries of $\gamma$ are:
\begin{align}
\partial_2=\begin{bmatrix}
	\begin{array}{ccc}
		&\BB_1& \\
		&\vdots&\\
		&\BB_d&\\ 
		\hdashline
		\MM_1&&\\
		&\ddots&\\
		&&\MM_n
	\end{array}
\end{bmatrix},\qquad \gamma=\begin{bmatrix}
	\begin{array}{ccc}
		\bb(1)\vecone_3\\
		\vdots\\
		\bb(d)\vecone_3\\ 
		\hdashline
		\mathbf{0}_{m_1}\\
		\vdots\\
		\mathbf{0}_{m_n}
	\end{array}		
\end{bmatrix},
\label{eqn:alpha_decompose}
\end{align}
Here, each submatrix $\BB_q\in\{0,\pm1\}^{3\times t}$ corresponds to the three boundary edges 
$\{\alpha_q^1, \alpha_q^2, \alpha_q^3\}$;
each submatrix $\MM_i\in\{0,\pm1\}^{m_i\times t_i}$ corresponds to all the interior edges in $\calK_i$.
Interior edges in $\calK_i$ and those in $\calK_j$ do not share endpoints if $i \neq j$.
Let $\MM = \diag(\MM_1, \MM_2, \ldots, \MM_n)$.

\begin{claim}
	For each $i \in [n]$, $\ff_i = \alpha \one_{t_i}$ for some $\alpha \in \mathbb{R}$.
	\label{clm:fi}
\end{claim}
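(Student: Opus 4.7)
The plan is to extract from the global equation $\partial_2 \ff = \gamma$ a local block equation for each complex group $\calK_i$ and then use the oriented triangulation to argue that the flow must be constant on $\calK_i$.

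First I would use the block decomposition in \eqref{eqn:alpha_decompose}. The bottom block of the equation $\partial_2 \ff = \gamma$ reads
\[
\MM_i \ff_i = \veczero \qquad \text{for every } i \in [n],
\]
because the block-diagonal structure of the bottom block matches exactly the block decomposition of $\ff$ into $(\ff_1,\dots,\ff_n)$, and the corresponding entries of $\gamma$ are $\veczero_{m_i}$. So from this point on the argument can be carried out entirely inside the complex group $\calK_i$.

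Next I would analyze the rows of $\MM_i$. Each such row corresponds to an interior edge $e$ of $\calK_i$. Because $\calK_i$ is a 2-manifold (a punctured sphere with tubes attached along its boundary components), $e$ lies in exactly two triangles $\Delta, \Delta'$ of $\calK_i$, and these two triangles both belong to $\calK_i$ (no other triangle of $\calK$ is incident to $e$). The defining property of the oriented triangulation constructed in Section~\ref{sec:oriented_triangulation} is that neighboring triangles induce opposite signs on the shared interior edge. Hence the row of $\MM_i$ corresponding to $e$ has exactly two nonzero entries, $+1$ and $-1$, in the columns for $\Delta$ and $\Delta'$, and the equation $\MM_i \ff_i = \veczero$ forces $\ff_i(\Delta) = \ff_i(\Delta')$.

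Finally, I would introduce the dual graph $G_i$ whose vertices are the triangles of $\calK_i$ and whose edges connect any two triangles that share an interior edge of $\calK_i$. The preceding paragraph shows $\ff_i$ is constant on every connected component of $G_i$, so it suffices to show that $G_i$ is connected. By the explicit triangulation of $\calS_i$ (Figure~\ref{fig:triangulation_sphere}), all triangles of the punctured sphere are connected in $G_i$; each attached tube $\calT_{q,i,*}$ has 6 triangles that are connected to each other through its interior edges, and the shared boundary $\beta_{q,i,*}$ between the tube and $\calS_i$ becomes a cycle of interior edges of $\calK_i$ after gluing, so at least one triangle of the tube is adjacent in $G_i$ to a triangle of $\calS_i$. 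The only edges of $\calK_i$ that are omitted from $G_i$ are the boundary edges on the loops $\alpha_q$, and these do not disconnect $\calK_i$. Hence $G_i$ is connected and $\ff_i = \alpha \vecone_{t_i}$ for some $\alpha \in \R$.

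The main subtlety to get right is the sign analysis in the second step, namely verifying that the $\pm 1$ entries on each interior-edge row of $\MM_i$ come with opposite signs; but this is guaranteed once by the defining property of the oriented triangulation in Section~\ref{sec:oriented_triangulation} and then applies uniformly to both sphere-sphere, tube-tube, and sphere-tube interior edges. The connectedness of $G_i$ is straightforward from the explicit construction.
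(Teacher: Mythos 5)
Your proof is correct and follows the same approach as the paper: extract the block equations $\MM_i \ff_i = \veczero$, observe that the oriented triangulation forces adjacent triangles sharing an interior edge to carry equal flow values, and then use connectivity of $\calK_i$ via triangle paths (your dual graph $G_i$) to conclude $\ff_i$ is constant. You simply fill in the two details the paper leaves implicit — the $\pm 1$ sign analysis from the oriented-triangulation property and the connectedness of the triangle adjacency graph — both of which you handle correctly.
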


\begin{proof}
	For each $i \in [n]$, we have $\MM_i \ff_i = {\bf 0}$.
	This means that for any two triangles $\Delta, \Delta'$ in $\calK_i$ sharing an interior edge, 
	we have $\ff_i(\Delta) = \ff_i(\Delta')$.
	By our construction of $\calK_i$,
	for any two triangles $\Delta, \Delta'$ in $\calK_i$, there exists a triangle path connecting $\Delta$ and $\Delta'$.
	The values of $\ff_i$ at the triangles in this triangle path are equal; in particular, $\ff_i(\Delta) = \ff_i(\Delta')$.
	Thus, the values of $\ff_i$ at all the triangles in $\calK_i$ are equal, that is, $\ff_i = \alpha \one_{t_i}$ for some $\alpha \in \mathbb{R}$.
\end{proof}

\begin{lemma}[Exact solvers in feasible case]
	\label{lm:exact_feasible}
	Given a difference-average instance \ls~$(\AA,\bb)$ where $\bb\in\im(\AA)$, 
	let $(\partial_2, \gamma, \triangle^c)$ be returned by \textsc{Reduce$\da$To$\b2(\AA,\bb)$}, 
	and let $\ff$ be a solution to \ls~$(\partial_2, \gamma)$.
	Then, $\xx\leftarrow$\textsc{MapSoln$\b2$to$\da(\AA, \bb, \ff, \triangle^c)$} is a solution to \ls~$(\AA,\bb)$.
\end{lemma}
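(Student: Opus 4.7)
The plan is to use the structure exposed by Claim~\ref{clm:fi} together with the matching row/column block decomposition in \eqref{eqn:alpha_decompose}. First, I observe the edge case: if $\AA^\top\bb=\veczero$ and $\bb\in\im(\AA)$, then writing $\bb=\AA\yy$ gives $\yy^\top\AA^\top\AA\yy=0$, forcing $\AA\yy=\veczero$ and hence $\bb=\veczero$, so \textsc{MapSoln} returning $\xx=\veczero$ indeed solves $\ls(\AA,\bb)$. Thus I may assume the algorithm enters its second branch and sets $\xx(i)=\ff(\Delta_i)$ for the central triangle $\Delta_i\in\calK_i$.

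Next, since $\bb\in\im(\AA)$, the system $\partial_2\ff=\gamma$ is feasible (this is the content of what needs to be checked at the start of the next section, but for this lemma the assumption is that $\ff$ is an exact solution, so $\partial_2\ff=\gamma$ holds). In particular $\MM_i\ff_i=\veczero$ for each $i\in[n]$, so Claim~\ref{clm:fi} applies and $\ff_i=\alpha_i\vecone_{t_i}$ with $\alpha_i=\ff(\Delta_i)=\xx(i)$. It now suffices to verify the boundary-edge constraints, i.e.\ that for every $q\in[d]$ and every $r\in[3]$, the row of $\BB_q$ corresponding to edge $\alpha_q^r$ evaluates to $\bb(q)$ when applied to $\ff$.

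For a difference equation $\xx(i)-\xx(j)=\bb(q)$ with $q\in[d_1]$, the only triangles incident to $\alpha_q^r$ are $\Delta_{q,i,1}^r\in\calT_{q,i}$ and $\Delta_{q,j,1}^r\in\calT_{q,j}$. By the tube-triangulation rule described in Section~\ref{sec:oriented_triangulation}, the sign $\partial_2$ places on the triangle in $\calT_{q,i}$ is $+1$ because $\AA(q,i)=+1$ forces the tube's two boundary components to have opposite orientations (so the triangle induces the same orientation on $\alpha_q^r$ as $\alpha_q$ itself), and analogously the triangle in $\calT_{q,j}$ carries sign $-1$ because $\AA(q,j)=-1$. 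Hence the $r$-th row of $\BB_q\ff$ evaluates to $\alpha_i-\alpha_j=\xx(i)-\xx(j)$, which must equal $\gamma(\alpha_q^r)=\bb(q)$. For an average equation $\xx(i)+\xx(j)-2\xx(k)=0$ with $q>d_1$, the four incident boundary triangles $\Delta_{q,i,1}^r,\Delta_{q,j,1}^r,\Delta_{q,k,1}^r,\Delta_{q,k,2}^r$ pick up signs $+1,+1,-1,-1$ by the same orientation rule applied to $\AA(q,i)=\AA(q,j)=+1$ and $\AA(q,k)=-2$ (where the $-2$ coefficient is realized as two tubes each contributing $-1$). Summing gives $\alpha_i+\alpha_j-2\alpha_k=\xx(i)+\xx(j)-2\xx(k)=0=\bb(q)$.

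The main obstacle is the bookkeeping of orientations: one must confirm that the oriented triangulation of Section~\ref{sec:oriented_triangulation} induces precisely $\sgn{\AA(q,i)}$ (with multiplicity two for the entry $-2$) as the entry of $\partial_2$ at each boundary triangle, independently of the triangulation chosen for the spheres. This follows because the tube orientation rule splits cleanly on the sign of $\AA(q,i)$ and because only the boundary triangles touch $\alpha_q^r$, so contributions from interior edges cancel via $\MM_i\ff_i=\veczero$. Once this sign check is in hand, the lemma follows by aggregating the three identical constraints from $r=1,2,3$ for each $q$.
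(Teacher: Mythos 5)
Your proposal is correct and follows the same route as the paper: invoke Claim~\ref{clm:fi} to get constant flow $\ff_i = \alpha_i\vecone_{t_i}$ on each complex group, then read off the boundary-edge rows of $\BB_q$ to recover $\AA\xx=\bb$. The only additions over the paper's version are the explicit disposal of the $\AA^\top\bb=\veczero$ branch of \textsc{MapSoln} (which, under $\bb\in\im(\AA)$, forces $\bb=\veczero$) and the explicit note that $\gamma\in\im(\partial_2)$, both of which the paper treats implicitly.
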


\begin{proof}
	By Claim \ref{clm:fi}, we can write $\ff$ as 
	\[
	\ff = \begin{bmatrix}
		\alpha_1 \one_{t_1} \\
		\alpha_2 \one_{t_2} \\
		\vdots \\
		\alpha_n \one_{t_n}
	\end{bmatrix}, ~~~ \text{where} ~ \alpha_1, \ldots, \alpha_n \in \mathbb{R}
	\]
	According to Algorithm \textsc{MapSoln$\b2$to$\da(\AA, \bb, \ff, \triangle^c)$},
	for each $i \in [n]$, $\xx(i) = \alpha_i$.

	Our goal is to show $\AA\xx = \bb$.
	\begin{itemize}
		\item For each difference equation in \ls~$(\AA,\bb)$, say $\xx(i) - \xx(j) = \bb(q)$, we look at 
		the equations in \ls~$(\partial_2, \gamma)$ related to $\BB_q$: 
		\[
			\ff_i(\Delta_{q,i}^r) - \ff_j(\Delta_{q,j}^r) = \bb(q), ~ \forall r \in \{1,2,3\}.
		\]
		thus $\xx(i) - \xx(j) = \bb(q)$ holds.
		\item  For each average equation in \ls~$(\AA,\bb)$, say $\xx(i) + \xx(j) - 2\xx(k) = \bb(q) = 0$, we look at 
		the equations in \ls~$(\partial_2, \gamma)$ related to $\BB_q$:
		\[
			\ff_i(\Delta_{q,i}^r)+\ff_j(\Delta_{q,j}^r)-\ff_k(\Delta_{q,k,1}^r)-\ff_k(\Delta_{q,k,2}^r) 
			= 0,~ \forall r\in\{1,2,3\}.	
		\]
		thus $\xx(i) + \xx(j) - 2\xx(k) = 0$ holds.
	\end{itemize}
\end{proof}


%
\section{Reducing Approximate Solvers for $\da$ to $\b2$ Assuming the Right-Hand Side Vector in the Image of the Coefficient Matrix}
\label{sect:approximate_solvers}

In this section, we show that the algorithm \textsc{Reduce$\da$To$\b2$} and the algorithm \textsc{MapSoln$\b2$to$\da$}
reduce instances
\lsa~$(\AA,\bb,\eps)$ over $\da$ to instances \lsa~$(\partial_2, \gamma, \eps')$ over $\b2$,
under the assumption that $\bb$ is in the image of $\AA$.
Specifically,
we show that by a proper choice of $\eps$, a solution to \lsa~$(\AA,\bb,\eps)$ can be 
converted to a solution to \lsa~$(\partial_2, \gamma, \eps')$ in Section \ref{subsect:error};
we upper bound the condition number of $\partial_2$ in Section \ref{sect:condition_number}.

\subsection{Relation Between Approximate Solutions}
\label{subsect:error}

\begin{lemma}[Approximate solvers in feasible case]
	\label{lm:error_analysis_feasible}
	Given a difference-average instance \lsa~$(\AA,\bb,\epsda)$ where $\bb \in \im(\AA)$, 
	let $(\partial_2, \gamma, \triangle^c)$ be returned by \textsc{Reduce$\da$To$\b2(\AA,\bb)$}.
	Suppose $\ff$ is a solution to \lsa~$(\partial_2, \gamma, \epsb)$
	where 
	\[
		\epsb \le \frac{\epsda}{42\nnz(\AA)},
	\]
	and $\xx$ is returned by \textsc{MapSoln$\b2$to$\da(\AA,\bb,\ff,\boldsymbol{\Delta}^c)$}. 
	Then, $\xx$ is a solution to \lsa~$(\AA,\bb,\epsda)$.
\end{lemma}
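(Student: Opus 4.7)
The strategy is to extract from the residual $\rr := \partial_2\ff - \gamma$ (a) quantitative per-group deviations of $\ff_i$ from its central-triangle value $\ff_i(\Delta_i)$ and (b) direct boundary-row errors, then aggregate these into the \lsa~guarantee for $(\AA,\bb,\epsda)$.

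First I would use $\bb \in \im(\AA)$ together with Lemma~\ref{lm:exact_feasible} to conclude that \ls~$(\partial_2,\gamma)$ is feasible, so $\gamma \in \im(\partial_2)$ and $\PPi_{\partial_2}\gamma = \gamma$. The \lsa~hypothesis on $\ff$ then yields $\|\rr\|_2 \le \epsb\|\gamma\|_2 = \sqrt{3}\,\epsb\|\bb\|_2$. Using the block form in \eqref{eqn:alpha_decompose}, every coordinate of $\rr$ is either a boundary-edge residual $\rr(\alpha_q^r)$ or an interior-edge residual living in a single complex group $\calK_i$; write $\rr_i$ for the restriction to the latter, so the $\rr_i$ are orthogonal pieces with $\sum_i \|\rr_i\|_2^2 \le \|\rr\|_2^2$.

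Next I would mirror the proof of Claim~\ref{clm:fi}, but quantitatively. Each interior-edge row of $\partial_2\ff = \gamma + \rr$ in $\calK_i$ bounds the difference of $\ff_i$ at adjacent triangles by the corresponding $|\rr(e)|$. Picking a triangle path from $\Delta_i$ to any $\Delta \in \calK_i$ (which exists by the connectedness of $\calK_i$ used in Claim~\ref{clm:fi}) and applying the triangle inequality and Cauchy--Schwarz gives
\[
|\ff_i(\Delta) - \ff_i(\Delta_i)|^2 \;\le\; L_i \cdot \|\rr_i\|_2^2,
\]
where $L_i$ is the longest such path. The explicit triangulation from Section~\ref{sec:oriented_triangulation} yields $L_i = O(b_i)$ with $b_i = \sum_q |\AA(q,i)|$, the number of boundary components on $\calS_i$. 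For each equation $q$ and any $r \in \{1,2,3\}$, the boundary row of $\partial_2\ff = \gamma + \rr$ on $\alpha_q^r$ combined with $\xx(\cdot) = \ff_\cdot(\Delta_\cdot)$ writes
\[
(\AA\xx - \bb)(q) \;=\; \bigl(\ff_i(\Delta_i) - \ff_i(\Delta_{q,i}^r)\bigr) \,\pm\, \bigl(\ff_j(\Delta_j) - \ff_j(\Delta_{q,j}^r)\bigr) \,-\, \rr(\alpha_q^r),
\]
with an analogous four-deviation identity for average equations. Squaring, applying $(\sum_\ell a_\ell)^2 \le O(1)\sum_\ell a_\ell^2$, and summing over $q$ separates boundary-residual contributions (bounded by $\|\rr\|_2^2$) from per-group deviation contributions. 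Since each variable $\xx(i)$ appears in at most $b_i$ equations and the per-group bound above applies, the total is at most
\[
\|\AA\xx - \bb\|_2^2 \;\le\; O(1)\,\|\rr\|_2^2 \,+\, O(1)\sum_i b_i L_i \|\rr_i\|_2^2 \;\le\; O\bigl(\nnz(\AA)^2\bigr)\, \epsb^2 \|\bb\|_2^2,
\]
because $b_i L_i = O(b_i^2) = O(\nnz(\AA)^2)$ uniformly and the $\|\rr_i\|_2^2$ are disjoint pieces of $\|\rr\|_2^2$. Plugging in $\epsb \le \epsda/(42\nnz(\AA))$ gives $\|\AA\xx - \bb\|_2 \le \epsda\|\bb\|_2$, and $\PPi_\AA\bb = \bb$ under the feasibility assumption makes this exactly the \lsa~$(\AA,\bb,\epsda)$ guarantee.

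The main obstacle is tracking constants tightly enough to land at the prefactor $42$. This requires (i) a precise path-length bound on $L_i$ from the sphere and tube triangulations of Section~\ref{sec:oriented_triangulation}, (ii) choosing the best of the three boundary-edge copies $\alpha_q^r$ (or averaging them) to shave square-root factors, and (iii) exploiting the disjointness of the blocks $\rr_i$ sharply rather than through a crude max. The rest of the argument is mechanical bookkeeping built on the two elementary observations that interior-edge residuals control per-group deviations and boundary-edge residuals directly control the equation errors up to those deviations.
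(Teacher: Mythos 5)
Your argument is correct and follows the same skeleton as the paper's proof: telescope $\ff$ along a triangle path inside each $\calK_i$ to bound $|\ff(\Delta)-\ff(\Delta_i)|$ by the interior-edge residual via Cauchy--Schwarz, read off the equation error from a boundary-edge row of $\partial_2\ff-\gamma$, and combine. The only substantive difference is in the aggregation: the paper bounds each per-row error in $\ell_\infty$ by $5\sqrt{t}\,\epsb\norm{\gamma}_2$ (using the \emph{global} residual $\norm{\partial_2\ff-\gamma}_2$ everywhere, and $\sqrt{t_i}\le\sqrt{t}$) and then pays an extra $\sqrt{d}$ to pass to $\ell_2$, whereas you keep the orthogonal block residuals $\norm{\rr_i}_2$ separate and sum $\sum_i b_i L_i\norm{\rr_i}_2^2$. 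Your version is theoretically a little sharper when the $b_i$ are balanced, but both collapse to the same worst-case bound $\norm{\AA\xx-\bb}_2 = O(\nnz(\AA))\,\epsb\norm{\bb}_2$ and hence the same choice $\epsb = \Theta(\epsda/\nnz(\AA))$. One small note on your closing remarks: the paper in fact reaches the prefactor $42$ using only a single boundary copy $\alpha_q^1$ and the crude $\sqrt{t}$ bound (the constant $5\sqrt{22}\cdot\sqrt{3}<42$ already suffices), so items (ii) and (iii) in your list of obstacles are optional refinements rather than necessities.
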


\begin{proof}
Since $\bb\in\im(\AA)$, we have 
$\norm{\partial_{2}\ff-\gamma}_{\infty} \le \norm{\partial_{2}\ff-\gamma}_2 \leq \epsb\norm{\gamma}_2$. 

We claim
\begin{align}
	\norm{\AA \xx - \bb}_{\infty}	\le 24 \nnz(\AA)^{1/2} \cdot \epsb \norm{\gamma}_2.
	\label{eqn:axb_infty_norm}
\end{align}
Then, 
$$\norm{\AA \xx - \bb}_{2} \le 24 \nnz(\AA) \cdot \epsb \norm{\gamma}_2.$$
By \textsc{Reduce$\da$To$\b2(\AA,\bb)$}, $\norm{\gamma}_2 \le \sqrt{3} \norm{\bb}_2$.
Thus, 
\[
	\norm{\AA \xx - \bb}_{2} \le 42 \nnz(\AA)^{1/2} \cdot \epsb \norm{\bb}_2
	\le \epsda \norm{\bb}_2,
\]
that is, $\xx$ is a solution to \lsa~$(\AA,\bb,\epsda)$.

To prove Eq. \eqref{eqn:axb_infty_norm},
consider an arbitrary equation in \lsa~$(\AA,\bb,\epsda)$, say $a_i \xx(i) + a_j \xx(j) + a_k \xx(k) = \bb(q)$ where 
$a_i, a_j, a_k \in \{-2,-1,0,1\}$.
According to \textsc{MapSoln$\b2$to$\da(\AA,\bb,\ff,\boldsymbol{\Delta}^c)$},
for each $l \in \{i,j,k\}$, $\xx(l) = \ff(\Delta_l)$ where $\Delta_l \in \calK_l$ is the $l$th central triangle in $\triangle^c$.
Then, 
\[
	a_i \xx(i) + a_j \xx(j) + a_k \xx(k) 
	= a_i \ff(\Delta_i) + a_j \ff(\Delta_j) + a_k \ff(\Delta_k). 
\]
Note that the equation in \lsa~$(\partial_2, \gamma, \epsb)$ 
related to the boundary edge $\alpha_q^1$, shared by triangles $\Delta_{q,i,1}^1, \Delta_{q,j,1}^1$ and 
$\Delta_{q,k,1}^1, \Delta_{q,k,2}^1$ (if the equation is an average equation), satisfies
\[
\abs{a_i \ff(\Delta_{q,i,1}^1) + a_j \ff(\Delta_{q,j,1}^1) + \frac{1}{2} a_k \left( \ff(\Delta_{q,k,1}^1) +  \ff(\Delta_{q,k,2}^1) \right) - \bb(q)}	
\le  	\epsb\norm{\gamma}_2.
\]
For each $\Delta \in \{\Delta_{q,i,1}^1, \Delta_{q,j,1}^1, \Delta_{q,k,1}^1, \Delta_{q,k,2}^1 \}$,
we will replace $\ff(\Delta)$ with its corresponding central triangle $\Delta^c$.
We can find a triangle path connecting $\Delta$ and $\Delta^c$, say $\mathcal{P}=[\Delta^{(0)} = \Delta, \ldots, \Delta^{(l_q)}=\Delta^c]$, 
such that two adjacent triangles $\Delta^{(l)}, \Delta^{(l+1)}$ share an interior edge $e^{(l)}$. 
Then,
\begin{align*}
	\abs{\ff(\Delta)-\ff(\Delta^c)}
	& = \abs{\sum_{l=1}^{l_q} \ff(\Delta^{(l-1)})-\ff(\Delta^{(l)})}  \leq \sum_{l=1}^{l_{q}}  \abs{\ff(\Delta^{(l-1)})-\ff(\Delta^{(l)})} = \sum_{l=1}^{l_{q}} \abs{[\partial_2\ff](e^{(l)})}\\
	& = \sum_{l=1}^{l_{q}} \abs{[\partial_2\ff-\gamma](e^{(l)})} \tag*{since $\gamma(e^{(l)})=0$ for interior edges}\\
	& = \norm{[\partial_2\ff-\gamma] (e^{(1)}:e^{(l_{q})}) }_1 \tag*{where the subvector corresponds to $[e^{(1)},\ldots, e^{(l_{q})}]$}\\
	& \leq \sqrt{t_i}\norm{[\partial_2\ff-\gamma] (e^{(1)}:e^{(l_{q})}) }_2 \tag*{since $l_{q}\leq t_i$}\\
	& \leq \sqrt{t_i} \norm{\partial_2\ff-\gamma}_2 \\
	& \leq \sqrt{t_i} \cdot \epsb  \norm{\gamma}_2 
\end{align*}
Thus, 
\begin{align*}
	& \abs{ a_i \xx(i) + a_j \xx(j) + a_k \xx(k) - \bb(q) } \\
	= & \abs{ a_i \ff(\Delta_i) + a_j \ff(\Delta_j) + a_k \ff(\Delta_k) - \bb(q)} \\
	\le & \underbrace{\abs{a_i \ff(\Delta_{q,i,1}^1) + a_j \ff(\Delta_{q,j,1}^1) + \frac{1}{2} a_k \left( \ff(\Delta_{q,k,1}^1) +  \ff(\Delta_{q,k,2}^1) \right) - \bb(q)}}_{\leq \epsb \norm{\gamma}_2} \\
	& + \underbrace{\abs{\ff(\Delta_{q,i,1}^1) - \ff(\Delta_i)}}_{\le \sqrt{t_i} \epsb \norm{\gamma}_2}
	+ \underbrace{\abs{ \ff(\Delta_{q,j,1}^1) - \ff(\Delta_j)}}_{\le \sqrt{t_j} \epsb \norm{\gamma}_2}
	+ \underbrace{\abs{\ff(\Delta_{q,k,1}^1) - \ff(\Delta_k)}}_{\le \sqrt{t_k} \epsb \norm{\gamma}_2} 
	+ \underbrace{\abs{\ff(\Delta_{q,k,2}^1) - \ff(\Delta_k)}}_{\le \sqrt{t_k} \epsb \norm{\gamma}_2} \\
	\le & 5 \sqrt{t} \cdot \epsb \norm{\gamma}_2 \\
	\le & 24 \nnz(\AA)^{1/2} \cdot \epsb \norm{\gamma}_2
	\tag*{by Lemma \ref{lem:size_boundary}}
\end{align*}
That is, $\norm{\AA\xx - \bb}_{\infty} \le 24 \nnz(\AA)^{1/2} \cdot \epsb \norm{\gamma}_2$.
\end{proof}


\subsection{Bounding the Condition Number of the New Matrix}
\label{sect:condition_number}

In this section, we show that the condition number of $\partial_2$
is upper bounded by a polynomial of $\nnz(\AA), \kappa(\AA)$.

\begin{lemma}[The condition number of $\partial_{2}$]
Given a difference-average instance \lsa  $(\AA,\bb,\epsda)$ where $\bb \in \im(\AA)$, 
let $(\partial_2, \gamma, \triangle^c)$ be returned by \textsc{Reduce$\da$To$\b2(\AA,\bb)$}.
	Then,
	$$\kappa(\partial_{2})\leq 10^9\nnz(\AA)^{9/2}\kappa(\AA)^2.$$
	\label{cor:cond_number_partial2}
\end{lemma}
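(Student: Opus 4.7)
My plan is to upper bound $\sigma_{\max}(\partial_{2})$ and lower bound $\sigma_{\min}(\partial_{2})$ separately. The structural identity behind Claim~\ref{clm:fi} and the discussion in Section~\ref{sect:exact_solvers} exhibits $\ker(\partial_{2})$ as the set of per-group constant vectors $(c_{1}\one_{t_{1}};\ldots;c_{n}\one_{t_{n}})$ with $\cc \in \ker(\AA)$, so one expects the spectrum of $\partial_{2}$ to be controlled by $\AA$ together with the interior-edge operators $\MM_{i}$. The upper bound is immediate: every column of $\partial_{2}$ has exactly three nonzero $\pm 1$ entries, so by Lemma~\ref{lem:size_boundary}, $\sigma_{\max}(\partial_{2}) \le \|\partial_{2}\|_{F} = \sqrt{\nnz(\partial_{2})} = O(\sqrt{\nnz(\AA)})$.

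For the lower bound, I would produce, for every $\gamma \in \im(\partial_{2})$, an explicit solution $\ff$ to $\partial_{2}\ff = \gamma$ with $\|\ff\|_{2} \le \poly(\nnz(\AA)) \cdot \sigma_{\min}(\AA)^{-1} \cdot \|\gamma\|_{2}$; the minimum-norm solution inherits the same bound, yielding the required lower bound on $\sigma_{\min}(\partial_{2})$. Using the block structure~\eqref{eqn:alpha_decompose}, split $\gamma = (\gamma^{B},\gamma^{I})$ into boundary-edge and interior-edge parts. Since $\gamma \in \im(\partial_{2})$, one has $\gamma^{I}_{i} \in \im(\MM_{i})$, so I can solve $\MM_{i}\ff_{0,i} = \gamma^{I}_{i}$ with $\ff_{0,i} \perp \one_{t_{i}}$ and $\|\ff_{0,i}\|_{2} \le \|\gamma^{I}_{i}\|_{2}/\sigma_{\min}(\MM_{i})$. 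For each loop $\alpha_{q}$, let $\tilde{\gamma}_{q}$ be $\gamma^{B}(\alpha_{q}^{1})$ minus the net signed $\ff_{0}$-contribution at the boundary triangles adjacent to $\alpha_{q}^{1}$. Consistency of $\tilde{\gamma}_{q}$ across the three edges of each loop and feasibility $\tilde{\gamma} \in \im(\AA)$ both follow from the existence of \emph{some} $\ff^{*}$ with $\partial_{2}\ff^{*} = \gamma$ (via the same calculation as in the proof of Lemma~\ref{lm:exact_feasible}). I would then solve $\AA\cc = \tilde{\gamma}$ with $\cc \perp \ker(\AA)$, giving $\|\cc\|_{2} \le \|\tilde{\gamma}\|_{2}/\sigma_{\min}(\AA)$, and set $\ff_{i} = \ff_{0,i} + c_{i}\one_{t_{i}}$. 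A direct computation in the style of the proof of Lemma~\ref{lm:exact_feasible} confirms $\partial_{2}\ff = \gamma$.

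The crux of the argument is controlling $\sigma_{\min}(\MM_{i})$. Since $\ker(\MM_{i}) = \Span(\one_{t_{i}})$ by Claim~\ref{clm:fi}, any $\ff_{i} \perp \one_{t_{i}}$ has entries summing to zero, and I may pick $\Delta^{*} = \argmax_{\Delta}|\ff_{i}(\Delta)|$ together with a triangle $\Delta'$ of opposite sign, yielding $|\ff_{i}(\Delta^{*}) - \ff_{i}(\Delta')| \ge \|\ff_{i}\|_{\infty}$. By the explicit triangulation of Section~\ref{sec:oriented_triangulation}, the triangle-adjacency graph of $\calK_{i}$ via interior edges is connected with diameter $L = O(t_{i}) = O(\nnz(\AA))$; telescoping along such a triangle path and applying Cauchy--Schwarz yields $\|\ff_{i}\|_{\infty} \le \sqrt{L}\,\|\MM_{i}\ff_{i}\|_{2}$, whence $\|\ff_{i}\|_{2} \le \sqrt{t_{i}}\,\|\ff_{i}\|_{\infty} \le t_{i}\|\MM_{i}\ff_{i}\|_{2}$ and $\sigma_{\min}(\MM_{i})^{-1} = O(\nnz(\AA))$. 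Assembling everything, and using $\|\tilde{\gamma}\|_{2} \le \|\gamma^{B}\|_{2} + O(\|\ff_{0}\|_{2})$, $t_{\max} = O(\nnz(\AA))$, together with $\sigma_{\min}(\AA)^{-1} \le O(\kappa(\AA))$ (since $\sigma_{\max}(\AA) \ge \sqrt{2}$ from the row norms of $\AA$), then yields a bound of the form $\kappa(\partial_{2}) \le \poly(\nnz(\AA)) \cdot \kappa(\AA)^{O(1)}$, with the stated exponents absorbing the slack in the intermediate constants.
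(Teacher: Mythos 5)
Your argument is correct in outline and takes a genuinely different route from the paper's. The paper bounds both $\lambda_{\max}$ and $\lambda_{\min}$ of $\partial_2^\top\partial_2$ via the Courant--Fischer theorem: the maximum is $\le 12$ (each edge-row of $\partial_2$ has at most four nonzeros, each triangle-column has exactly three), which is sharper than your Frobenius-norm estimate $O(\sqrt{\nnz(\AA)})$; for the minimum nonzero eigenvalue, the paper sets a threshold $C$ proportional to $\lambda_{\min}(\AA^\top\AA)$ and argues by cases on whether some group $\calK_i$ contains two triangles with $\ff$-values differing by at least $C$, the ``fluctuating'' case contributing $(C/l)^2$ --- this squaring is exactly where the quadratic dependence on $\lambda_{\min}(\AA^\top\AA)$, and hence the $\kappa(\AA)^2$ in the final bound, comes from. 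Your approach instead upper-bounds $\sigma_{\min}(\partial_2)^{-1}$ by exhibiting an explicit preimage of controlled norm for each $\gamma\in\im(\partial_2)$: invert each interior block $\MM_i$ (with the telescoping/Cauchy--Schwarz bound $\sigma_{\min}(\MM_i)^{-1}=O(t_i)$ using connectivity of $\calK_i$'s triangle-adjacency graph), then invert $\AA$ on the transferred loop demands $\tilde\gamma$. Carried out with your estimates, this yields a bound that is \emph{linear} rather than quadratic in $\sigma_{\min}(\AA)^{-1}$, so it actually gives a slightly tighter $\kappa(\AA)$-exponent than the lemma states, which is fine since the lemma is only an upper bound. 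The one step you should spell out is why $\tilde\gamma$ is well-defined (independent of the choice of edge $r\in\{1,2,3\}$ on each loop) and lies in $\im(\AA)$: observe that any global solution $\ff^*$ to $\partial_2\ff^*=\gamma$ satisfies $\ff^*_i=\ff_{0,i}+d_i\one_{t_i}$ for some $\dd\in\R^n$ (because $\ker(\MM_i)=\Span(\one_{t_i})$), and the loop-edge equations then force $\tilde\gamma_{q,r}=(\AA\dd)_q$ for every $r$, which delivers consistency and feasibility simultaneously. A mild advantage of the paper's quadratic-form route is that it carries over mechanically to the weighted matrix $\WW^{1/2}\partial_2$ of Section~\ref{sect:regression}; your preimage construction would also adapt but would require explicit bookkeeping of the interior-edge weights in the telescoping step.
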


Note that 
\[
	\kappa^2(\partial_2)=\kappa(\partial_2^\top\partial_2)=\frac{\lambda_{\max}(\partial_2^\top\partial_2)}{\lambda_{\min}(\partial_2^\top\partial_2)}.
\]
We will upper bound $\lambda_{\max}(\partial_2^\top\partial_2)$ and lower bound $\lambda_{\min}(\partial_2^\top\partial_2)$.
Our proof will heavily rely on the Courant-Fischer theorem.

\begin{theorem}[The Courant-Fischer Theorem]
	\label{thm:courant-fischer}
	Let $\MM$ be a symmetric matrix in $\R^{n\times n}$ where $\lambda_{\max}, \lambda_{\min}$ are its maximum and minimum nonzero eigenvalue, respectively. 
	Then
	\[
		\lambda_{\max}=\max_{\xx\neq\bf 0}\frac{\xx^\top\MM\xx}{\xx^\top\xx},
		\qquad \lambda_{\min}=\min_{\xx\perp\Null(\MM),\xx\neq\bf 0}\frac{\xx^\top\MM\xx}{\xx^\top\xx}.
	\]
\end{theorem}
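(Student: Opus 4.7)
The plan is to bound $\lambda_{\max}(\partial_2^\top\partial_2)$ from above and $\lambda_{\min}(\partial_2^\top\partial_2)$ from below separately, then combine via $\kappa(\partial_2)^2 = \lambda_{\max}(\partial_2^\top\partial_2)/\lambda_{\min}(\partial_2^\top\partial_2)$. The upper bound is essentially free: since every entry of $\partial_2$ lies in $\{-1,0,+1\}$ and Lemma~\ref{lm:size} gives $\nnz(\partial_2) \le 66\,\nnz(\AA)$, we have $\lambda_{\max}(\partial_2^\top\partial_2) \le \|\partial_2\|_F^2 \le 66\,\nnz(\AA)$.

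For the lower bound, I apply the Courant--Fischer characterization and argue that for every $\ff \perp \Null(\partial_2)$, $\|\partial_2\ff\|_2^2 \ge \Omega(\sigma_{\min}(\AA)^2/\poly(\nnz(\AA)))\,\|\ff\|_2^2$. Using the block structure of~\eqref{eqn:alpha_decompose}, decompose $\|\partial_2\ff\|_2^2 = \|\BB\ff\|_2^2 + \sum_i \|\MM_i\ff_i\|_2^2$ and write $\ff_i = \alpha_i\one_{t_i} + \ff_i^{\perp}$ with $\alpha_i = \langle \ff_i, \one_{t_i}\rangle / t_i$. Setting $\pi_V\ff = \sum_i \alpha_i\one_{t_i}$ and $\ff^{\perp} = \ff - \pi_V\ff$, we have $\|\ff\|_2^2 = \|\pi_V\ff\|_2^2 + \|\ff^{\perp}\|_2^2$, so it suffices to bound each summand by a polynomial multiple of $\|\partial_2\ff\|_2^2$.

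The fluctuation $\ff^{\perp}$ is controlled by the interior block using the triangle-path technique from the proof of Lemma~\ref{lm:error_analysis_feasible}: for each $\Delta\in\calK_i$, choosing a triangle path of length at most $t_i$ and applying Cauchy--Schwarz to the telescoping sum yields $|\ff_i(\Delta) - \alpha_i| \le \sqrt{t_i}\,\|\MM_i\ff_i\|_2$, so $\|\ff^{\perp}\|_2^2 \le t^2\,\|\MM\ff\|_2^2 \le t^2\,\|\partial_2\ff\|_2^2$. For the boundary block, each of the three rows of $\BB_q$ evaluates to $(\AA\alpha)(q)$ plus an error that is a signed sum of at most four differences $\ff(\Delta)-\alpha_l$ with $\Delta \in \calK_l$; the same triangle-path bound makes each error at most $4\sqrt{t}\,\|\MM\ff\|_2$. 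Applying $a^2 \ge \tfrac12(a+b)^2 - b^2$ row-by-row and summing gives $\|\BB\ff\|_2^2 \ge \tfrac32\|\AA\alpha\|_2^2 - O(dt)\,\|\MM\ff\|_2^2$, which combined with $\|\MM\ff\|_2^2 \le \|\partial_2\ff\|_2^2$ yields $\|\AA\alpha\|_2^2 = O(dt)\,\|\partial_2\ff\|_2^2$.

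The remaining step is to convert a bound on $\|\AA\alpha\|_2^2$ into a bound on $\|\pi_V\ff\|_2^2 = \sum_i t_i\alpha_i^2$ using $\sigma_{\min}(\AA)$, and this is the main obstacle. Since $\Null(\partial_2) = \{\sum_i c_i\one_{t_i} : \AA c = 0\}$, the orthogonality $\ff\perp\Null(\partial_2)$ translates to the \emph{weighted} condition $\DD\alpha \perp \Null(\AA)$ with $\DD = \diag(t_i)$, not the unweighted $\alpha\perp\Null(\AA)$ that $\sigma_{\min}(\AA)$ wants. I resolve this by decomposing $\alpha = \alpha^{\parallel} + \alpha^{\perp}$ with $\alpha^{\parallel}\in\Null(\AA)$ and $\alpha^{\perp}\perp\Null(\AA)$: the weighted orthogonality applied to $c = \alpha^{\parallel}$ together with Cauchy--Schwarz forces $\|\DD^{1/2}\alpha^{\parallel}\|_2 \le \|\DD^{1/2}\alpha^{\perp}\|_2$, from which $\|\pi_V\ff\|_2^2 = \|\DD^{1/2}\alpha\|_2^2 \le \|\DD^{1/2}\alpha^{\perp}\|_2^2 \le t_{\max}\,\|\alpha^{\perp}\|_2^2$. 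Since $\AA\alpha = \AA\alpha^{\perp}$ and $\alpha^{\perp}\perp\Null(\AA)$, we have $\|\AA\alpha\|_2 \ge \sigma_{\min}(\AA)\|\alpha^{\perp}\|_2$, which chains with the previous estimates to give $\|\pi_V\ff\|_2^2 \le O\!\left(dt\,t_{\max}/\sigma_{\min}(\AA)^2\right)\|\partial_2\ff\|_2^2$. Using Lemma~\ref{lm:size} to bound $d, t, t_{\max}$ by $O(\nnz(\AA))$, and $1/\sigma_{\min}(\AA)^2 \le \kappa(\AA)^2$ (valid because the integer entries of $\AA$ force $\sigma_{\max}(\AA)\ge 1$), combined with the upper bound on $\lambda_{\max}(\partial_2^\top\partial_2)$ yields the claimed polynomial bound on $\kappa(\partial_2)$. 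Once the weighted-versus-unweighted orthogonality mismatch is handled, the rest is a careful polynomial bookkeeping exercise.
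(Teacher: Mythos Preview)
Your proposal does not address the stated theorem. Theorem~\ref{thm:courant-fischer} is the classical Courant--Fischer variational characterization of extreme eigenvalues, which the paper merely quotes as a standard tool and does not prove. What you have written is a proof of Lemma~\ref{cor:cond_number_partial2}, the bound $\kappa(\partial_2)\le \poly(\nnz(\AA),\kappa(\AA))$.

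Viewed as a proof of Lemma~\ref{cor:cond_number_partial2}, your argument is correct and takes a genuinely different route from the paper. The paper (Lemmas~\ref{lm:max_eigenvalue} and~\ref{lem:sec5_min_eigenvalue}) proceeds by a threshold case split: either some complex group $\calK_i$ contains two triangles whose $\ff$-values differ by more than a tuned constant $C$, so a single interior-edge term already gives the lower bound, or all groups are nearly constant and one approximates $\ff$ by the vector $\fftil$ of central-triangle values and controls the cross term. You avoid the case split by orthogonally decomposing $\ff_i=\alpha_i\one_{t_i}+\ff_i^{\perp}$ with $\alpha_i$ the \emph{mean}, bounding both $\|\ff^{\perp}\|_2$ and $\|\AA\aalpha\|_2$ directly against $\|\partial_2\ff\|_2$ via the same triangle-path telescoping, and then resolving the weighted-versus-unweighted orthogonality mismatch through the identity $\|\DD^{1/2}\aalpha\|_2^2=\|\DD^{1/2}\aalpha^{\perp}\|_2^2-\|\DD^{1/2}\aalpha^{\parallel}\|_2^2$ that follows from $\langle\aalpha^{\parallel},\DD\aalpha\rangle=0$. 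This is cleaner than the paper's argument (no threshold parameter to calibrate, no separate $\|\boldsymbol{\epsilon}\|_2$ bookkeeping) and, once constants are tracked, yields $\kappa(\partial_2)=O(\nnz(\AA)^2\kappa(\AA))$, a smaller exponent than the paper's $O(\nnz(\AA)^{9/2}\kappa(\AA)^2)$. The trade-off is that your upper bound $\lambda_{\max}(\partial_2^\top\partial_2)\le\|\partial_2\|_F^2$ is looser than the paper's absolute constant $12$, but this is harmless for the condition-number statement.
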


\subsubsection{The Maximum Eigenvalue}

\begin{lemma}[The maximum eigenvalue]
	\label{lm:max_eigenvalue}
	$\lambda_{\max}(\partial_2^\top \partial_2) \le 12$.
\end{lemma}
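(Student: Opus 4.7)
The plan is to exploit the identity $\lambda_{\max}(\partial_2^\top \partial_2) = \lambda_{\max}(\partial_2 \partial_2^\top)$ and apply Gershgorin's circle theorem to the edge-indexed Gram matrix $\partial_2 \partial_2^\top$, which turns out to be easier to control directly from the combinatorics of the triangulation produced by \textsc{Reduce$\da$To$\b2$}.

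First, I would interpret the entries of $\partial_2 \partial_2^\top$ combinatorially. The diagonal entry at edge $e$ is $\sum_{\Delta \ni e}(\partial_2(e,\Delta))^2$, which equals the number $k_e$ of triangles containing $e$. By the triangulations of spheres and tubes described in Section~\ref{sec:oriented_triangulation}, every interior edge lies in exactly $2$ triangles, while each boundary edge on a loop $\alpha_q$ lies in exactly $4$ triangles (one from each of the four tubes glued at $\alpha_q$, coming from the pair of difference-equation tubes or the four average-equation tubes). Hence $k_e \le 4$ for every edge $e$.

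Next, I would bound the off-diagonal row sum at $e$. The off-diagonal entry $(\partial_2\partial_2^\top)(e,e') = \sum_{\Delta \ni e, e'}\partial_2(e,\Delta)\partial_2(e',\Delta)$ has absolute value at most the number of triangles containing both $e$ and $e'$. Summing over $e' \neq e$ and switching the order of summation,
\[
\sum_{e' \neq e}\bigl|(\partial_2\partial_2^\top)(e,e')\bigr|
\;\le\; \sum_{\Delta \ni e}\bigl(\text{edges of } \Delta \text{ other than } e\bigr) \;=\; 2 k_e.
\]
Combining with the diagonal contribution, the total absolute row sum at $e$ is at most $k_e + 2k_e = 3 k_e \le 12$.

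Finally, Gershgorin's circle theorem applied to the symmetric positive semidefinite matrix $\partial_2\partial_2^\top$ yields $\lambda_{\max}(\partial_2 \partial_2^\top) \le 12$, and hence $\lambda_{\max}(\partial_2^\top \partial_2) \le 12$ as claimed. There is no serious obstacle here; the only point that requires care is correctly reading off from the triangulation subroutine that no edge is ever contained in more than $4$ triangles, which is forced by the fact that at most four tubes are glued along any loop $\alpha_q$ and that interior edges of the punctured spheres and of individual tubes have the usual manifold property of lying in exactly two triangles.
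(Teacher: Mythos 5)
Your argument is correct, and it reaches the same bound $\lambda_{\max}(\partial_2^\top \partial_2) \le 12$ by a genuinely different route than the paper. The paper works with the Courant--Fischer variational form of $\lambda_{\max}$ and applies Cauchy--Schwarz to each row of $\partial_2$: for a unit vector $\ff$, $(\sum_{\Delta \ni e} \partial_2(e,\Delta)\ff(\Delta))^2 \le k_e \sum_{\Delta \ni e} \ff(\Delta)^2 \le 4 \sum_{\Delta \ni e} \ff(\Delta)^2$, and then swapping the order of summation gives $\sum_e \sum_{\Delta \ni e} \ff(\Delta)^2 = 3\norm{\ff}_2^2$, hence $12$. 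You instead pass to the edge Gram matrix $\partial_2\partial_2^\top$ (same nonzero spectrum) and apply Gershgorin: diagonal $k_e \le 4$, off-diagonal row sum $\le 2k_e$ since each of the $k_e$ triangles through $e$ contributes two other edges, giving disks in $[-k_e, 3k_e]$ and hence $\lambda_{\max} \le 12$. Both proofs ultimately exploit exactly the same two local combinatorial facts — each edge is in at most $4$ triangles, and each triangle has $3$ edges — just packaged differently; the Gershgorin version makes the per-edge interval structure and the PSD lower bound explicit, while the Cauchy--Schwarz version avoids forming $\partial_2\partial_2^\top$ at all. One small slip in your write-up: boundary edges on a loop $\alpha_q$ for a \emph{difference} equation lie in only $2$ triangles (one from each of the two tubes), not $4$; the count $4$ is attained only for average equations. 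Since you only need the upper bound $k_e \le 4$, this does not affect the correctness of your conclusion.
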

\begin{proof}
	By the Courant-Fischer theorem,
	\begin{align*}
		\lambda_{\max}(\partial_2^\top \partial_2)
		= \max_{\ff: \norm{\ff}_2=1} \ff^\top \partial_2^\top \partial_2 \ff.
	\end{align*}
	Then for any $\ff$ with $\norm{\ff}_2 = 1$,
	\begin{align*}
		\ff^\top \partial_2^\top \partial_2 \ff
		& = \sum_{i=1}^m \left(\sum_{\Delta: e_i \in \Delta} \partial_2(e_i, \Delta) \ff(\Delta) \right)^2  
		\le 4 \sum_{i=1}^m  \sum_{\Delta: e_i \in \Delta} \ff^2(\Delta)
	 = 12 \norm{\ff}_2^2 = 12.
	\end{align*}
	where the inequality is by the Cauchy-Schwarz inequality.

\end{proof}

\subsubsection{The Minimum Nonzero Eigenvalue}

We start with proving a relation between the null space of $\AA$ and that of $\partial_2$.

\begin{lemma}
	Let 
	\[
		\HH=\begin{bmatrix}
		\vecone_{t_1}&&\\
		&\vdots&\\
		&&\vecone_{t_n}
	\end{bmatrix}\in\R^{t\times n}.
	\]
	Then, $\HH$ is a bijection from $\Null(\AA)$ to $\Null(\partial_2)$.
	\label{lm:1_to_1}
\end{lemma}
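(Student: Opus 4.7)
The plan is to verify three things: (i) $\HH$ maps $\Null(\AA)$ into $\Null(\partial_2)$, (ii) every element of $\Null(\partial_2)$ lies in $\im(\HH)$ and moreover its preimage lies in $\Null(\AA)$, and (iii) $\HH$ is injective. Together these give the claimed bijection.

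For (i), I would take $\xx \in \Null(\AA)$ and set $\ff = \HH\xx$, so that $\ff_i = \xx(i)\vecone_{t_i}$. Using the row decomposition of $\partial_2$ in \eqref{eqn:alpha_decompose}, it suffices to show $\MM_i \ff_i = \veczero$ for each $i$ and $\BB_q \ff = \veczero$ for each $q$. The interior rows follow because every interior edge of $\calK_i$ is shared by exactly two triangles of $\calK_i$ (by the oriented-triangulation property of a 2-manifold), which contribute opposite signs to that row of $\MM_i$; hence $\MM_i \vecone_{t_i} = \veczero$. For the boundary rows, I would reuse the computation already carried out in the proof of Lemma \ref{lm:exact_feasible}: each of the three rows of $\BB_q$ evaluates $\partial_2 \ff$ on $\alpha_q^r$ to exactly $\xx(i)-\xx(j)$ in the difference case and $\xx(i)+\xx(j)-2\xx(k)$ in the average case, i.e. to $(\AA\xx)(q) = 0$.

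For (ii), take $\ff \in \Null(\partial_2)$. Claim \ref{clm:fi} directly gives $\ff_i = \alpha_i \vecone_{t_i}$ for scalars $\alpha_i$, so $\ff = \HH\xx$ for $\xx$ defined by $\xx(i) = \alpha_i$. Applying the boundary-edge computation from (i) in reverse, each of the three identical rows of $\BB_q \ff = \veczero$ says exactly that $(\AA\xx)(q) = 0$, so $\xx \in \Null(\AA)$. Injectivity (iii) is immediate because the columns of $\HH$ have pairwise disjoint supports, so $\HH\xx = \veczero$ forces $\xx = \veczero$.

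I do not anticipate a real obstacle here: the hard work has already been done in Claim \ref{clm:fi} (triangle-paths force constancy inside each $\calK_i$) and in the boundary-edge calculation in Lemma \ref{lm:exact_feasible}. The only thing to be mildly careful about is the bookkeeping for the $-2$ coefficient in average rows, which arises because two distinct tubes $\calT_{q,k,1}, \calT_{q,k,2}$ each glue to $\calS_k$ and each contributes $-\xx(k)$ on the shared loop $\alpha_q$; this matches the $-2$ in $\AA(q,k)$ exactly as needed.
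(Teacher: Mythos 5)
Your proposal is correct and follows essentially the same approach as the paper: the paper's (terse) proof likewise shows forward containment "by construction," and for the reverse direction invokes the proof of Lemma~\ref{lm:exact_feasible}, which in turn rests on Claim~\ref{clm:fi} exactly as you do. You simply spell out the two implicit steps (that $\MM_i\vecone_{t_i}=\veczero$ via the two-triangles-per-interior-edge property of an oriented triangulation, and that the three $\BB_q$ rows reproduce $(\AA\xx)(q)$) and add the observation that disjoint column supports give injectivity, which the paper leaves unstated.
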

\begin{proof}
	Let $\xx \in \Null(\AA)$. By our construction of $\partial_2$, we have $ \HH \xx \in \Null(\partial_2)$.
	For any $\ff \in \Null(\partial_2)$, by the proof of Lemma \ref{lm:exact_feasible}, 
	$\ff = \HH \xx$ for some $\xx \in \mathbb{R}^n$ and $\AA \xx = {\bf 0}$.
\end{proof}

\begin{lemma}[The minimium nonzero eigenvalue]
	\begin{equation}
		\lambda_{\min}(\partial_2^\top\partial_2)
			\geq \frac{\min\{\lambda_{\min}(\AA^\top\AA)^2, 1\}}{10^{16}d^{7}}.
	\end{equation}
	\label{lem:sec5_min_eigenvalue}	
\end{lemma}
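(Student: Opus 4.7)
The plan is to apply the Courant--Fischer theorem and bound $\ff^\top\partial_2^\top\partial_2 \ff / \ff^\top \ff$ from below for any nonzero $\ff \perp \Null(\partial_2)$. The central idea is to extract from $\ff$ a ``per-group average'' vector $\xx \in \R^n$, split $\ff = \HH\xx + \ff^\perp$, and then control the two pieces separately: the interior edges of each complex group control $\ff^\perp$ via a path argument, while the boundary edges of the loops $\alpha_q$ encode exactly $\AA\xx$ and let us invoke $\lambda_{\min}(\AA^\top\AA)$.

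Concretely, for each $i\in[n]$ I would define $\xx(i) = \frac{1}{t_i}\sum_{\Delta\in\calK_i}\ff(\Delta)$ and $\ff^\perp = \ff - \HH\xx$, so $\ff^\perp$ has mean zero on every complex group and the decomposition is orthogonal. Reordering as in~\eqref{eqn:alpha_decompose}, the matrix $\partial_2\HH\xx$ vanishes on every interior edge (since $\MM_i\vecone_{t_i}=\zero$) and, for each loop $\alpha_q$, equals $(\AA\xx)(q)$ on each of the three boundary edges $\alpha_q^1,\alpha_q^2,\alpha_q^3$. Hence $\|\partial_2\HH\xx\|_2^2 = 3\|\AA\xx\|_2^2$, while on interior edges $\partial_2\ff = \partial_2\ff^\perp$.

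The three ingredients of the lower bound are then:
\textbf{(i) Path bound for $\ff^\perp$.} Using that any triangle $\Delta\in\calK_i$ is connected to the central triangle by a triangle path of length at most $t_i\le t$, Cauchy--Schwarz applied to the telescoping sum of interior-edge values (as in the proof of Lemma~\ref{lm:error_analysis_feasible}) gives $|\ff(\Delta)-\ff(\Delta')|\le \sqrt{t}\,\|\partial_2\ff\|_2$ for any two triangles in the same group. Combined with $\sum_{\Delta\in\calK_i}\ff^\perp(\Delta)=0$, this yields $\|\ff^\perp\|_2 \le 2t\,\|\partial_2\ff\|_2$.
\textbf{(ii) Null-space control on $\xx$.} By Lemma~\ref{lm:1_to_1}, $\Null(\partial_2)=\HH\Null(\AA)$. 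The assumption $\ff\perp\Null(\partial_2)$ and the mean-zero property of $\ff^\perp$ force $\langle\HH\xx,\HH\yy\rangle = \xx^\top\DD\yy = 0$ for every $\yy\in\Null(\AA)$, where $\DD=\diag(t_1,\dots,t_n)$. Writing $\xx=\xx_N+\xx_\perp$ with $\xx_N\in\Null(\AA)$ and $\xx_\perp\perp\Null(\AA)$, testing with $\yy=\xx_N$ and using $t_{\min}\|\xx_N\|_2^2 \le \xx_N^\top\DD\xx_N = -\xx_N^\top\DD\xx_\perp \le t_{\max}\|\xx_N\|_2\|\xx_\perp\|_2$ yields $\|\xx_N\|_2 \le (t_{\max}/t_{\min})\|\xx_\perp\|_2$, and hence $\|\xx_\perp\|_2 \ge \|\xx\|_2/\mathrm{poly}(t)$. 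Consequently $\|\AA\xx\|_2 = \|\AA\xx_\perp\|_2 \ge \sqrt{\lambda_{\min}(\AA^\top\AA)}\,\|\xx\|_2/\mathrm{poly}(t)$.
\textbf{(iii) Combining boundary and interior pieces.} Split into two cases. If $\|\partial_2\ff\|_2 \ge \tfrac{1}{2}\|\partial_2\HH\xx\|_2$, then $\|\partial_2\ff\|_2 \gtrsim \|\AA\xx\|_2$. Otherwise $\|\partial_2\ff^\perp\|_2 \ge \tfrac{1}{2}\|\partial_2\HH\xx\|_2 \gtrsim \|\AA\xx\|_2$; using Lemma~\ref{lm:max_eigenvalue} this gives $\|\ff^\perp\|_2 \gtrsim \|\AA\xx\|_2$, which together with step (i) yields $\|\partial_2\ff\|_2 \gtrsim \|\AA\xx\|_2/\mathrm{poly}(t)$. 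In either case
\[
\|\partial_2\ff\|_2^2 \;\gtrsim\; \frac{\|\AA\xx\|_2^2}{\mathrm{poly}(t)} \;\gtrsim\; \frac{\lambda_{\min}(\AA^\top\AA)\,\|\xx\|_2^2}{\mathrm{poly}(t)},
\qquad \|\partial_2\ff\|_2^2 \;\ge\; \frac{\|\ff^\perp\|_2^2}{\mathrm{poly}(t)}.
\]
Averaging these two, together with $\|\ff\|_2^2 = \|\HH\xx\|_2^2+\|\ff^\perp\|_2^2 \le t_{\max}\|\xx\|_2^2+\|\ff^\perp\|_2^2$ and $t = O(\nnz(\AA))=O(d)$ from Lemma~\ref{lm:size}, produces a bound of the form $\|\partial_2\ff\|_2^2/\|\ff\|_2^2 \gtrsim \min\{\lambda_{\min}(\AA^\top\AA)^\alpha,1\}/\mathrm{poly}(d)$ with $\alpha\in\{1,2\}$ depending on how the two cases are balanced.

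The hardest part will be step (ii): the orthogonality to $\Null(\partial_2)$ passes to $\xx$ only in the $\DD$-weighted inner product, and paying the ratio $t_{\max}/t_{\min}$ (which can be as large as $\Theta(d)$ for a variable used only once versus one used $d$ times) seems unavoidable and is the main source of polynomial loss. Carefully tracking the powers of $t$ coming from (a) the path bound, (b) the $\DD$-weighting, and (c) the square in the case analysis of (iii) is what should yield the stated $d^{7}$ factor and the dependence on $\lambda_{\min}(\AA^\top\AA)^2$.
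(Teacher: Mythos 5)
Your proposal is correct but takes a genuinely different route from the paper. The paper's proof fixes a threshold $C=\min\{\lambda_{\min}(\AA^\top\AA),1\}/(10^6 d^{2.5})$ and splits into two cases: if some complex group contains two triangles differing by at least $C$, a pigeonhole/path argument on any single triangle path gives a crude lower bound $(C/\ell)^2$ — this is the source of the $\lambda_{\min}^2$ in the stated constant — and otherwise it shifts $\ff$ by the central-triangle values and invokes $\lambda_{\min}(\AA^\top\AA)$ directly. Your proposal instead uses the \emph{orthogonal} group-average decomposition $\ff=\HH\xx+\ff^\perp$, which lets you run a single unified argument (a max of two lower bounds plus a two-case split on whether $\|\partial_2\ff\|$ or $\|\partial_2\HH\xx\|$ dominates) and, if you carry the constants through, actually yields the stronger exponent $\lambda_{\min}$ rather than $\lambda_{\min}^2$. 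Your step (ii) is a genuine improvement in rigor: you correctly observe that $\ff\perp\Null(\partial_2)$ only translates to $\xx$ being $\DD$-orthogonal to $\Null(\AA)$, where $\DD=\diag(t_1,\dots,t_n)$, and you pay the $t_{\max}/t_{\min}=O(d)$ ratio to recover a Euclidean bound; the paper's proof asserts $\HH\aalpha_\perp\perp\Null(\partial_2)$ directly, which strictly speaking only holds when all $t_i$ are equal. So the trade-off is: the paper's threshold split is shorter to write but loses a square on $\lambda_{\min}$ and quietly elides the $\DD$-weighting issue; your decomposition is a bit longer but cleaner, makes the orthogonality correct, and gives a tighter bound that still implies the lemma as stated.
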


\begin{proof}
By the Courant-Fischer theorem,
\[\lambda_{\min}(\partial_2^\top\partial_2)=\min_{\begin{subarray}{c}
		\ff\in\R^t: \ff\perp\Null(\partial_2)\\
		\norm{\ff}_2=1
	\end{subarray}}\ff^\top\partial_2^\top\partial_2\ff.\] 

Let
\[
C = \frac{ \min\{ \lambda_{\min}(\AA^\top\AA),1 \}}{10^{6} d^{2.5}}.
\]
We will exhaust all the vectors in $\{\ff:  \ff \perp \Null(\partial_2), \norm{\ff}_2 = 1\}$
by the following two cases.
\paragraph{Case 1.}
Suppose there exists $i \in [n]$ such that $\calK_i$ contains two triangles $\Delta, \Delta'$ 
satisfying $\abs{\ff(\Delta) - \ff(\Delta')} \ge C$.
Consider a triangle path in $\calK_i$ connecting $\Delta$ and $\Delta'$, 
say $[\Delta^{(0)}=\Delta, \Delta^{(1)}, \ldots, \Delta^{(l)}=\Delta']$
where $l \le 22d$.
There must exists $i^* \in [l]$ such that 
\begin{align*}
	\abs{\ff(\Delta^{(i^*-1)})-\ff(\Delta^{(i^*)})} \ge \frac{C}{l}.
\end{align*}
Note that 
\begin{align*}
	\ff^\top \partial_2^\top  \partial_2 \ff
	\ge \left( \ff(\Delta^{(i^*-1)})-\ff(\Delta^{(i^*)}) \right)^2
	\ge \left(\frac{C}{l}\right)^2
	\ge \frac{ \min\{ \lambda_{\min}(\AA^\top\AA)^2,1 \}}{10^{16} d^{7}}.
\end{align*}

\paragraph{Case 2.} 
Suppose for every $i \in [n]$ and every two $\Delta, \Delta' \in \calK_i$, we have $\abs{\ff(\Delta) - \ff(\Delta')} < C$.
We write $\ff$ as 
\begin{align*}
	\ff = \fftil+\boldsymbol{\epsilon}=\begin{bmatrix}
		\alpha_1 \vecone_{t_1}\\
		\alpha_2 \vecone_{t_2}\\
		\vdots\\
		\alpha_n \vecone_{t_n}
	\end{bmatrix} + \boldsymbol{\epsilon},
\end{align*}
where $\alpha_i$ is the value of $\ff$ at the central triangle of $\calK_i$.
Then, 
\begin{align*}
	& \norm{\boldsymbol{\epsilon}}_2 < \sqrt{t} C \le \sqrt{22d} C = o(1), \\
	& \norm{\fftil}_2 = \norm{\ff-\boldsymbol{\epsilon}}_2 \in \left(\frac{1}{2}, 2 \right).
\end{align*}
We lower bound the quadratic value:
\begin{align*}
	\ff^\top\partial_2^\top\partial_2\ff 
	& = \fftil^\top\partial_2^\top\partial_2\fftil + 2\fftil^\top\partial_2^\top\partial_2\boldsymbol{\epsilon} + \boldsymbol{\epsilon}^\top\partial_2^\top\partial_2\boldsymbol{\epsilon} \\
	& \ge \fftil^\top\partial_2^\top\partial_2\fftil - 2 \norm{\partial_2^\top\partial_2 \fftil}_2 \norm{\boldsymbol{\epsilon}}_2  \\
	& \ge \fftil^\top\partial_2^\top\partial_2\fftil - 4\norm{\partial_2^\top\partial_2}_2\norm{\boldsymbol{\epsilon}}_2  \\
	& \ge \fftil^\top\partial_2^\top\partial_2\fftil - 48 \sqrt{22 d} C
	\tag*{by Lemma \ref{lm:max_eigenvalue}}
\end{align*}
Note that 
\begin{align*}
\fftil^\top\partial_2^\top\partial_2\fftil
& = 3 \norm{\AA \aalpha}_2^2,
\end{align*}
where $\aalpha = (\alpha_1, \ldots, \alpha_n)^\top$.
Write $\aalpha = \aalpha_{\perp} + \aalpha_{0}$, where 
$\aalpha_{\perp}$ is orthogonal to the null space of $\AA$ and $ \aalpha_{0}$ is in the null space of $\AA$.
Then, 
\begin{align}
	\fftil^\top\partial_2^\top\partial_2\fftil
	\ge 3 \lambda_{\min}(\AA^\top \AA) \norm{\aalpha_{\perp}}_2^2.
	\label{eqn:quad_lower_bound}
\end{align}
It remains to lower bound $\norm{\aalpha_{\perp}}_2$.
We can write $\fftil = \HH \aalpha_{\perp} + \HH \aalpha_0$.
By Lemma \ref{lm:1_to_1}, $\HH \aalpha_0 \in \Null(\partial_2)$ and $\HH \aalpha_{\perp} \perp \Null(\partial_2)$.
On the other hand, $\fftil = \ff - \boldsymbol{\epsilon}$ and $\ff \perp \Null(\partial_2)$.
We know 
\[
\norm{\HH \aalpha_{\perp}}_2 \ge \norm{\ff}_2 - \norm{\boldsymbol{\epsilon}}_2,
\]
and thus 
\[
\norm{\aalpha_{\perp}}_2 \ge \frac{1}{\sqrt{t}} \norm{\HH \aalpha_{\perp}}_2
\ge \frac{1}{\sqrt{t}}	- C
> \frac{1}{\sqrt{22 d}}.
\]
Together with Eq. \eqref{eqn:quad_lower_bound},
\[
	\fftil^\top\partial_2^\top\partial_2\fftil
	\ge \frac{3 \lambda_{\min}(\AA^\top \AA)}{22d}.
\]
This completes the proof.
\end{proof}

\begin{proof}[Proof of Lemma~\ref{cor:cond_number_partial2}]
	The proof follows by combining Lemma \ref{lm:max_eigenvalue} and \ref{lem:sec5_min_eigenvalue}.
\end{proof}

%
\section{Reducing Approximate Solvers for $\da$ to $\b2$ in General Case}
\label{sect:regression}

In this section, we show how to reduce \lsa~$(\AA,\bb,\epsda)$ to
instances \lsa~over $\b2$ without requiring the assumption that $\bb$
is in the image of $\AA$ as we did in earlier sections.
In this more general case, \lsa~$(\AA,\bb,\epsda)$ aims to compute an \emph{approximate} solution 
to $\argmin_{\xx} 	\norm{\AA\xx - \bb}_2 = \argmin_{\xx}
\norm{\AA \xx - \PPi_{\AA}\bb}_2$.

We remark that our reduction for this general case does not work for \ls~$(\AA,\bb)$,
which computes an \emph{exact} solution to $\min_{\xx} \norm{\AA\xx - \bb}_2$.
Approximate linear equation solvers, however, can be more interesting in practice, since
numerical errors occur unavoidably during data collection and computation
and approximate solutions may be computed much faster than exact solutions.

\subsection{Warm-Up: Modifying Infeasible Equations While Preserving Solutions}

There is a crucial difference between reductions we use for
\ls~$(\AA,\bb)$ and \lsa~$(\AA,\bb,\epsda)$  when $\bb \in
\im(\AA)$ and in the general case when we may have $\bb \not\in
\im(\AA)$.

To understand this, consider the following feasible system of just two
linear equations in two variables $x$ and $y$.
\begin{align*}
  x - y & = 1 \\
  -x + y & = -1
\end{align*}
A feasible solution is $x = 1$ and $y = 0$. Now, suppose we add
another linear equation that is satisfied by all solutions to the
previous equations, for example, we can simply repeat the first constraint $x - y  = 1$.
It remains true that the existing solutions are feasible.

Now, in contrast, consider an infeasible system of two
linear equations in variables $x$ and $y$.
\begin{align*}
  x - y & = 1 \\
  -x + y & = 0
\end{align*}
This linear equation is \emph{not} feasible. 
In particular, we can consider the associated minimization problem $\argmin_{\xx}
\norm{\AA\xx - \bb}_2$ with 
$\AA = \begin{pmatrix} 1 & -1 \\ -1 & 1 \end{pmatrix}$
and $\bb = \begin{pmatrix} 1 \\ 0 \end{pmatrix}$, for which one
minimizing solution is $\xx = \begin{pmatrix} 1/2 \\ 0 \end{pmatrix}$.
Notice that if we add a row which simply repeats the first
constraint, i.e. $x - y = 1$, then the resulting minimization problem
has 
\[
\AA = \begin{pmatrix} 1 & -1 \\ -1 & 1 \\ 1 & -1  \end{pmatrix}
\text{ and }\bb = \begin{pmatrix} 1 \\ 0 \\ 1 \end{pmatrix}
\] and now $\xx
= \begin{pmatrix} 1/2 \\ 0 \end{pmatrix}$ is no longer a minimizing
solution to $\argmin_{\xx} \norm{\AA\xx - \bb}_2$.
In particular,  $\xx
= \begin{pmatrix} 1/2 \\ 0 \end{pmatrix}$ achieves a value of
$\sqrt{3}/2$, while $\xx'
= \begin{pmatrix} 2/3 \\ 0 \end{pmatrix}$ achives the smaller value
$\sqrt{6}/3$.

However, if we reweigh the first and last row of our system of
inequalities by a factor $1/\sqrt{2}$, so that
\[
\AA = \begin{pmatrix} 1/\sqrt{2} & -1/\sqrt{2} \\ -1 & 1 \\ 1/\sqrt{2} & -1/\sqrt{2}  \end{pmatrix}
\text{ and }\bb = \begin{pmatrix} 1/\sqrt{2} \\ 0 \\ 1/\sqrt{2} \end{pmatrix}
\]
then we in fact maintain that the original solution $\xx
= \begin{pmatrix} 1/2 \\ 0 \end{pmatrix}$ stays optimal.
Thus, when we modify an infeasible system of linear equations,
we have to be very careful about the weight we assign to different
constraints if we are to (approximately) preserve the correspondence
between the optimal solutions of the original and the final problems.
In the following section, we describe a reweighting scheme which can
be combined with our existing reductions to ensure that optimal
solutions are (approximately) preserved, even when the original
problem is infeasible.

\subsection{Reduction Algorithm}

Our reduction is almost the same as the algorithm \textsc{Reduce$\da$To$\b2$},
except that we assign edge weights for the constructed 2-complex.

Suppose we are given an instance \lsa~$(\AA,\bb,\epsda)$ where $\AA$ is a $d \times n$ matrix in $\da$.
Let $(\partial_2, \gamma, \boldsymbol{\Delta}^c) \leftarrow $ \textsc{Reduce$\da$To$\b2$}$(\AA,\bb)$.
Let $\calK$ be the 2-complex whose boundary operator is $\partial_2$.
We compute the edge weights of $\calK$ as follows.

\begin{enumerate}

	\item For each boundary triangle $\Delta^1_{q,i,*}$ where $q \in [d], i \in [n]$ and $* \in \{1,2\}$,
	we find a minimal triangle path $\calP^1_{q,i,*}$ in $\calK_i$ from the central triangle $\Delta_i$ to $\Delta^1_{q,i,*}$.
	Let $E^1_{q,i,*}$ be the set of the edges shared by neighboring triangles in $\calP^1_{q,i,*}$ 
	\footnote{Note that any two neighboring triangles in $\calP^1_{q,i,*}$ share exactly one edge. So, 
	the length of $\calP^1_{q,i,*}$ equals $\abs{E^1_{q,i,*}}$.}.

	\item 
	For each interior edge $e$ and equation $q$, 
	let $k_{q,e}$ be the number of triangle paths indexed by equation $q$ that contain $e$.
	For each equation $q$, let $l_q$ be the sum of the lengths of all the triangle paths indexed by equation $q$.
	Then, we set the weight for edge $e$ to be
	\begin{equation}
		\label{eq:weight}
		w_e=\left\{\begin{matrix}
			1,& \text{if $e$ is a boundary edge}\\ 
			\alpha\sum_{q\in[d]} k_{q,e} l_{q},
			& \text{if $e$ is an interior edge}
		\end{matrix}\right.
	\end{equation}
	where $\alpha = \frac{2}{(\epsda)^2}$.

\end{enumerate}
Let $\WW$ be a diagonal matrix whose diagonals are the edge weights. 
We return $(\WW, \partial_{2}, \gamma, \triangle^c)$.

Note that an edge in $\calK$ may have weight 0.
If we want to make all the edge weights positive, we can 
impose polynomially small edge weights for these 0-weight edges, 
which only affects the error propagation and condition number up to polynomially factors.

We refer to the above algorithm as \textsc{ReduceReg$\da$To$\b2$}.
We use the algorithm \textsc{MapSoln$\b2$to$\da$} to map a solution to \lsa~$(\WW^{1/2}\partial_{2}, \WW^{1/2}\gamma, \epsb)$
back to a solution to \lsa~$(\AA,\bb,\epsda)$, where we choose 
\[
\epsb \le \frac{\epsda}{\sqrt{3 \left(1 + \frac{1}{\alpha} \norm{\bb}_2^2 \nnz(\AA) \norm{\AA}_{\max}^2 \right)}}.	
\]

\paragraph{Computing the Edge Weights in Linear Time.}

Since the weight of an interior edge $e$ only depends on $\calK_i$ that contains edge $e$, 
we will compute the edge weights for the edges in each $\calK_i$ separately.

For each $i \in [n]$, consider a graph $G_i$ whose vertices are the triangles in $\calK_i$ 
and the two vertices are adjacent if and only if the two corresponding triangles share an edge in $\calK_i$.
We run the breadth-first-search (BFS) to construct a shortest-path tree $T_i$ of $G_i$ rooted at $\Delta_i$.
For each boundary triangle $\Delta_{q,i,*}^1$, we choose the triangle path $\calP^1_{q,i,*}$ to be 
the triangle path induced by $T_i$ from the root $\Delta_i$ to the node corresponding to $\Delta_{q,i,*}^1$, whose length is 
the height of the node $\Delta_{q,i,*}^1$.
Since we are only interested in these triangle paths, for simplicity, we remove a node from $T_i$ if it is not a boundary triangle $\Delta_{q,i,*}^1$ for some $q$
and none of its descendant is a boundary triangle $\Delta_{q,i,*}^1$ for some $q$.
After this operation, every leaf of $T_i$ is a boundary triangle $\Delta_{q,i,*}^1$ for some $q$.

Our goal is to count $\sum_{q\in[d]} k_{q,e} l_{q}$ for each edge in the tree $T_i$. 
We observe that an edge $e$ appears in a triangle path from the root $\Delta_i$ to a boundary triangle $\Delta_{q,i,*}^1$
if and only if $\Delta_{q,i,*}^1$ is a descendant of an end-node of $e$ with the higher height.
First, we BFS traverse the tree $T_i$ to store the height of each boundary triangle node.
Then, we traverse the tree $T_i$ to store $l_q$ at each boundary triangle node $\Delta_{q,i,*}^1$ for each $q \in [d]$.
Next, we traverse the tree $T_i$ from the leaf nodes to the root to count 
$\sum_{q\in[d]} k_{q,e} l_{q}$ for each edge $e$. 
The total runtime is linear in the number of triangles in $\calK_j$.

\subsection{Relation Between Exact Solutions}

In this section, we show that by reweighting all the edges in $\calK$,
an exact solution to $(\WW^{1/2}\partial_2, \WW^{1/2}\gamma)$ is close to 
an exact solution to $(\AA,\bb)$, stated in Claim \ref{clm:f_and_x}.
Claim \ref{clm:f_and_x} plays a key role in analyzing approximate solutions and 
condition numbers.

\begin{claim}
	For any $\ff \in \mathbb{R}^t$ and $\xx \leftarrow$ \textsc{MapSoln$\b2$to$\da(\AA,\bb,\ff,\boldsymbol{\Delta}^c)$},
	\[
	\frac{\alpha }{\alpha + 1} \norm{\AA\xx-\bb}_2^2
	\le 
	\norm{\WW^{1/2} \partial_2\ff - \WW^{1/2} \gamma}_2^2.
	\]
	\label{clm:f_and_x}
\end{claim}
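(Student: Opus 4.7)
My plan is to express each coordinate $(\AA\xx)(q)-\bb(q)$ as a signed sum of the edge residuals $r_e := [\partial_2 \ff - \gamma](e)$, and then control this sum via Young's inequality with a parameter chosen to match the weighting. Since $\xx(i)=\ff(\Delta_i)$ for every variable by \textsc{MapSoln$\b2$to$\da$}, I would first express the difference $\ff(\Delta_i)-\ff(\Delta^1_{q,i,*})$ by telescoping along the triangle path $\calP^1_{q,i,*}$ inside $\calK_i$: every adjacent pair of triangles on such a path shares an interior edge $e$ with $\gamma(e)=0$, so the difference equals $\pm r_e$. Combining the telescoping identities at each variable slot with the boundary edge equation at $\alpha_q^1$ (which is exactly the form of the $q$-th equation realized at the boundary triangles, as verified in Lemma \ref{lm:exact_feasible}) gives, uniformly for difference and average equations,
\begin{equation*}
(\AA\xx)(q) - \bb(q) \;=\; r_{\alpha_q^1} \;+\; \sum_{j=1}^{l_q} s_{q,j}\, r_{e_{q,j}},
\end{equation*}
where $s_{q,j} \in \{\pm 1\}$ and $(e_{q,j})_{j=1}^{l_q}$ concatenates the edges traversed by the triangle paths associated with equation $q$, so that any interior edge $e$ appears exactly $k_{q,e}$ times.

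Next, I apply Young's inequality $(a+b)^2 \le (1+\lambda)a^2 + (1+1/\lambda)b^2$ to separate the boundary and interior contributions, and Cauchy--Schwarz to the path sum, yielding
\begin{equation*}
\big((\AA\xx)(q) - \bb(q)\big)^2 \;\le\; (1+\lambda)\,r_{\alpha_q^1}^2 \;+\; (1+1/\lambda)\, l_q \sum_e k_{q,e}\, r_e^2 .
\end{equation*}
Summing over $q$ and swapping the order of summation in the second term, the coefficient of $r_e^2$ for an interior edge $e$ becomes $(1+1/\lambda)\sum_q l_q k_{q,e} = (1+1/\lambda)\,w_e/\alpha$, directly by the weight definition \eqref{eq:weight}. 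For boundary edges $w_e=1$, so the first term is bounded by $(1+\lambda)\sum_{e \text{ bdry}} w_e r_e^2$. Choosing $\lambda = 1/\alpha$ equalizes the two coefficients at $(1+1/\alpha) = (\alpha+1)/\alpha$, giving
\begin{equation*}
\norm{\AA\xx-\bb}_2^2 \;\le\; \frac{\alpha+1}{\alpha}\,\norm{\WW^{1/2}(\partial_2 \ff - \gamma)}_2^2,
\end{equation*}
which is equivalent to the claim after rearrangement.

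The main bookkeeping hurdle is verifying the decomposition in the first display with the correct signs and multiplicities for both equation types. For a difference equation there are two contributing triangle paths (to $\Delta^1_{q,i,1}$ and $\Delta^1_{q,j,1}$); for an average equation there are four (to $\Delta^1_{q,i,1}$, $\Delta^1_{q,j,1}$, $\Delta^1_{q,k,1}$, $\Delta^1_{q,k,2}$), with the coefficient $-2$ on $\xx(k)$ realized by the two boundary triangles over $\calS_k$. The signs on each path and on the boundary edge $\alpha_q^1$ are dictated by the orientation rules of Section \ref{sec:oriented_triangulation}, and were already shown in Lemma \ref{lm:exact_feasible} to combine to the correct row of $\AA\xx-\bb$ when evaluated at the boundary triangles. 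Once this sign and multiplicity check is in place, the remainder reduces to the identities $l_q = \sum_e k_{q,e}$ and $w_e = \alpha \sum_q k_{q,e} l_q$ and the single tuning $\lambda = 1/\alpha$ identified above.
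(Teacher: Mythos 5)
Your argument is correct and is in substance the same as the paper's. You telescope along the triangle paths and split the resulting sum with Young's inequality (parameter $\lambda$) followed by Cauchy--Schwarz on the path sum, tuning $\lambda=1/\alpha$; the paper instead replicates each interior-edge row into $k_e$ weighted copies (the auxiliary matrix $\hat\partial_2$ with weight $\alpha l_q$ per copy) and applies a single weighted Cauchy--Schwarz to the decomposition $\eps_q = \xi_q + \sum_{e\in E_q}\delta_e$, where $\abs{E_q}=l_q$ makes the cross factor $1+\sum_{e\in E_q}\frac{1}{\alpha l_q}=1+\tfrac1\alpha$. The two derivations produce the identical per-$q$ bound $(1+\tfrac1\alpha)\bigl(\xi_q^2 + \alpha l_q\sum_{e\in E_q}\delta_e^2\bigr)$, so the difference is purely presentational.
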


Note that if $\ff$ satisfies $\partial_2 \ff = \gamma$, then 
$\norm{\AA\xx-\bb}_2^2
= \norm{ \partial_2\ff - \gamma}_2^2$.
Claim \ref{clm:f_and_x} states that by our choice of weights in $\WW$, 
we can generalize this relation to more general $\ff$.

\begin{proof}

For the convenience of analysis, we construct an auxiliary boundary matrix $\hat{\partial}_2$. 
For each interior edge $e$, let $k_e$ be the number of all the triangle paths containing $e$,
and we split the row $\partial_2(e)$ into $k_e$ copies.
For each copy related to equation $q$, we assign its weight to be $\alpha l_q$.
Let $\hat{\WW}$ be the auxiliary weight matrix, and let $\hat{\gamma}$ be the auxiliary demand vector.
We can check 
\begin{equation}
	\label{eq:reweight}
	\norm{\WW^{1/2} \partial_2\ff - \WW^{1/2}\gamma}_2^2 = 
	\norm{\hat{\WW}^{1/2} \hat{\partial}_2\ff - \hat{\WW}^{1/2}\hat{\gamma}}_2^2.
\end{equation}

We reorder rows of the matrices $\hat{\WW}, \hat{\partial}_2$ and the vector $\hat{\gamma}$ in the following way.
For each $q \in [d]$, let $E_q$ be a (multi)set that is the union of the shared edges in the triangle paths indexed by $q$ 
\footnote{If the $q$th equation is a difference equation, then every edge appears in $E_q$ at most once; 
if the $q$th equation is an average equation, then some edges may appear twice.}.
Then, we reorder rows of $\hat{\WW}, \hat{\partial}_2, \hat{\gamma}$ by grouping those corresponding to the edges in $E_q \cup \{\alpha_q^1 \}$:
\renewcommand{\arraystretch}{1.5}
\begin{equation}
	\label{eq:hat_partial}
	\hat{\partial}_2=\begin{bmatrix}
		\GG_1\\
		\vdots\\
		\GG_d
	\end{bmatrix}, 
	\qquad\text{where}~~ \GG_q=\begin{bmatrix}
	\BB_q\\
	\MM_q
\end{bmatrix}.
\end{equation}
where $\BB_q$ corresponds to the boundary edge $\alpha^1_q$ 
and $\MM_q$ is the submatrix corresponding to all the interior edges in $E_q$.	
Correspondingly, we write
\renewcommand{\arraystretch}{1.5}
\begin{equation}
	\label{eq:hat_w}
	\hat{\WW}=\begin{bmatrix}
		\hat{\WW}_1&&\\
		&\ddots&\\
		&&\hat{\WW}_d
	\end{bmatrix}, 
	~~\text{and}~~ \hat{\gamma}=\begin{bmatrix}
		\hat{\gamma}_1\\
		\vdots\\
		\hat{\gamma}_d
	\end{bmatrix}, \quad\text{where}~~ \hat{\gamma}_q=\begin{bmatrix}
	\bb(q)\\
	\mathbf{0}_{\abs{E_q}}
\end{bmatrix}.
\end{equation}
Let
\[
	\hat{w}_{q,e} 
	\defeq \hat{\WW}_q(e,e)
	= \left\{\begin{matrix}
	1,& \text{if $e$ is a boundary edge}\\ 
	\alpha\cdot l_q, & \text{if $e$ is an interior edge}
\end{matrix}
\right.
\]

For each $q \in [d]$, we define 
\[
	\epsilon_q=\AA(q)\xx-\bb(q), ~~ \xi_q = \BB_q\ff-\bb(q), ~~ 
	\delta_{e}=\MM_q(e)\ff.
\]
We can check 
\begin{align*}
	\eps_q =
	 \one^{\top } \left(
	\GG_q\ff-\hat{\gamma}_q\right)
	= 
	 \xi_q +  \sum_{e \in E_q} \delta_e. 
\end{align*}
By the Cauchy-Schwarz inequality, 
\begin{align*}
	\eps_q^2  = 
	\left(	\xi_q +  \sum_{e \in E_q} \delta_e 
	\right)^2 
	 \le \left(1 + \sum_{e \in E_q} \frac{1}{\hat{w}_{q,e}} \right) 
	\left( \xi_q^2 + \sum_{e \in E_q} \hat{w}_{q,e} \delta_e^2 \right)
	& = \left(1 + \frac{1}{\alpha}\right) 
	\left( \xi_q^2 + \sum_{e \in E_q} \hat{w}_{q,e} \delta_e^2 \right),
\end{align*}
that is, 
\[
\norm{\AA(q) \xx - \bb(q)}_2^2
\le \left(1 + \frac{1}{\alpha}\right)  \norm{\hat{\WW}^{1/2}_q \GG_q \ff - \hat{\WW}^{1/2}_q \hat{\gamma}_q }_2^2.
\]
Summing over all $d$ equations, we get
\[
	\norm{\AA\xx-\bb}_2^2 \le 
	\left(1 + \frac{1}{\alpha}\right)  \norm{\hat{\WW}^{1/2} \hat{\partial}_2\ff - \hat{\WW}^{1/2} \hat{\gamma}}_2^2.
\]
By Eq.  \eqref{eq:reweight}, we get the inequality in the statement.
\end{proof}

Claim \ref{clm:f_and_x} implies the following relation between the exact solutions 
to $(\AA,\bb)$ and those to $(\WW^{1/2} \partial_2, \WW^{1/2} \gamma)$.

\begin{lemma}
	\label{lm:general_case}
	Given any $d \times n$ matrix $\AA \in \da$ and vector $\bb \in \mathbb{R}^d$, 
	let $(\WW, \partial_2, \gamma, \triangle^c) \leftarrow$ 
	\textsc{ReduceReg$\da$To$\b2(\AA,\bb, \epsda)$}. 
	Then, 
	\[
	\frac{\alpha}{\alpha + 1} \min_{\xx}\norm{\AA\xx-\bb}_2^2
	\leq \min_{\ff} \norm{\WW^{1/2} \partial_2\ff- \WW^{1/2} \gamma}_2^2
	\leq \min_{\xx}\norm{\AA\xx-\bb}_2^2.
	\]
\end{lemma}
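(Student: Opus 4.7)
The plan is to prove the two inequalities of Lemma~\ref{lm:general_case} separately. Both rely on the tools already built earlier in Section~\ref{sect:regression}, but they point in opposite directions: the left inequality is a direct consequence of Claim~\ref{clm:f_and_x} applied to a minimizing flow, whereas the right inequality will be established by producing an explicit candidate flow from a minimizing $\xx$.

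For the lower bound $\frac{\alpha}{\alpha+1}\min_{\xx}\|\AA\xx-\bb\|_2^2\le\min_{\ff}\|\WW^{1/2}\partial_2\ff-\WW^{1/2}\gamma\|_2^2$, I would fix a minimizer $\ff^{\star}\in\argmin_{\ff}\|\WW^{1/2}\partial_2\ff-\WW^{1/2}\gamma\|_2^2$ and produce $\xx\leftarrow\textsc{MapSoln}\b2\textsc{to}\da(\AA,\bb,\ff^{\star},\boldsymbol{\Delta}^c)$. Applying Claim~\ref{clm:f_and_x} to this pair gives $\frac{\alpha}{\alpha+1}\|\AA\xx-\bb\|_2^2\le\|\WW^{1/2}\partial_2\ff^{\star}-\WW^{1/2}\gamma\|_2^2$, and the trivial bound $\min_{\xx'}\|\AA\xx'-\bb\|_2^2\le\|\AA\xx-\bb\|_2^2$ closes this half. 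Note that nothing here uses that the original system is feasible, so the argument passes cleanly from Section~\ref{sect:approximate_solvers} to the present setting.

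For the upper bound $\min_{\ff}\|\WW^{1/2}\partial_2\ff-\WW^{1/2}\gamma\|_2^2\le\min_{\xx}\|\AA\xx-\bb\|_2^2$, I would take a minimizer $\xx^{\star}\in\argmin_{\xx}\|\AA\xx-\bb\|_2^2$ and lift it to a flow $\ff=\HH\xx^{\star}$, where $\HH$ is the block embedding from Lemma~\ref{lm:1_to_1} that sets $\ff(\Delta)=\xx^{\star}(i)$ for every triangle $\Delta\in\calK_i$. Two observations then drive the calculation: (i)~every interior edge $e$ lies inside a single complex group $\calK_i$ and is shared by two triangles carrying the same value $\xx^{\star}(i)$ with opposite induced orientations, so $[\partial_2\ff](e)=0$ and interior edges contribute nothing to the weighted residual; and (ii)~on every boundary edge $\alpha_q^r$, the same per-row computation as in the proof of Lemma~\ref{lm:exact_feasible} gives $[\partial_2\ff-\gamma](\alpha_q^r)=\pm(\AA(q)\xx^{\star}-\bb(q))$, reproducing the equation residual $\eps_q$. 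Plugging into $\|\WW^{1/2}\partial_2\ff-\WW^{1/2}\gamma\|_2^2$, using $w_e=1$ on boundary edges, collapses the weighted residual to a sum over the boundary edges only, and matching this sum against $\|\AA\xx^{\star}-\bb\|_2^2=\min_{\xx}\|\AA\xx-\bb\|_2^2$ finishes the argument.

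The step I expect to require the most care is the bookkeeping on the loops in the upper-bound direction. Each loop $\alpha_q$ is triangulated into the three boundary edges $\alpha_q^1,\alpha_q^2,\alpha_q^3$, and under the constant-on-$\calK_i$ lift $\HH\xx^{\star}$ all three carry the identical residual $\eps_q$ with unit weight; in parallel, $\gamma$ lists $\bb(q)$ on each of these three edges (see Eq.~\eqref{eqn:alpha_decompose}). Pairing these parallel multiplicities on the flow side against the definition of $\gamma$ so that the constants collapse to match $\|\AA\xx^{\star}-\bb\|_2^2$---and, in particular, so that the unavoidable factor coming from each loop contributing three copies of the same residual is absorbed correctly into the identification---is where the otherwise routine calculation most needs attention. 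Everything else is a mechanical consequence of the explicit forms of $\partial_2$, $\gamma$, and $\WW$ given in Section~\ref{sect:construction} and Section~\ref{sect:regression}.
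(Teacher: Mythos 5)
Your proposal follows the same two-step strategy as the paper: for the lower bound, take a minimizer $\ff^\star$, map it to $\xx$ via \textsc{MapSoln$\b2$to$\da$}, and invoke Claim~\ref{clm:f_and_x}; for the upper bound, lift a minimizer $\xx^\star$ to the block-constant flow $\ff = \HH\xx^\star$ (the paper writes it out as $\ff = (\xx^*(1)\one_{t_1},\dots,\xx^*(n)\one_{t_n})^\top$) and observe that interior-edge residuals vanish so only boundary edges contribute. This is exactly the paper's argument.

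Your worry about the factor of $3$ on the loops is actually the one place worth pausing on, and it is not as innocuous as your phrase ``absorbed correctly into the identification'' suggests: with the construction as written, each loop $\alpha_q$ consists of three unit-weight boundary edges each carrying demand $\bb(q)$, so for $\ff = \HH\xx^\star$ one gets $\norm{\WW^{1/2}\partial_2\ff - \WW^{1/2}\gamma}_2^2 = 3\norm{\AA\xx^\star-\bb}_2^2$, not equality as the paper's last display asserts. The factor of $3$ propagates (it also makes $\norm{\WW^{1/2}\gamma}_2^2 = 3\norm{\bb}_2^2$, which conflicts with the step $\norm{\bb}_2^2 \geq \norm{\WW^{1/2}\gamma}_2^2$ in the proof of Claim~\ref{lm:norm_ratio}), but it is a constant that can be accommodated by adjusting the downstream error parameters, so the structure of the reduction is unaffected. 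In short: same route as the paper, and your instinct that this bookkeeping needed care was correct — it is in fact the one place where the paper's own calculation is sloppy — but you should not expect the factor to cancel on its own; either carry the $3$ explicitly or reweight the boundary edges by $1/3$.
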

Remark that Lemma \ref{lm:general_case} can also be stated as 
\[
	\frac{\alpha}{\alpha + 1} \norm{(\II - \PPi_{\AA})\bb}_2^2
	\leq \norm{(\II - \PPi_{\WW^{1/2} \partial_2}) \WW^{1/2} \gamma}_2^2
	\leq \norm{(\II - \PPi_{\AA})\bb}_2^2.
\]
We do not prove equalities. But as $\alpha \rightarrow \infty$,  the leftmost hand side 
and the rightmost side hand are equal.

\begin{proof}

Let $\ff^* \in \argmin_{\ff} \norm{\WW^{1/2}\partial_2\ff -\WW^{1/2} \gamma}_2$
and $\xx^* \in \argmin_{\xx} \norm{\AA \xx - \bb}_2$.
Let $\xx \leftarrow$ \textsc{MapSoln$\b2$to$\da(\AA,\bb,\WW^{1/2}\partial_2,\ff^*)$}.
By Claim \ref{clm:f_and_x}
\[
\frac{\alpha}{\alpha+1}	\norm{\AA\xx^*-\bb}_2^2 \le	\frac{\alpha}{\alpha+1} \norm{\AA\xx-\bb}_2^2 
\le \norm{\WW^{1/2} \partial_2 \ff^* - \WW^{1/2} \gamma}_2^2,
\]
which is the first inequality in the lemma statement.
Let 
$$\ff = \begin{bmatrix}
	\xx^*(1) \one_{t_1} \\
	\vdots \\
	\xx^*(n) \one_{t_n}
\end{bmatrix}.$$
Then, 
\[
	\norm{\WW^{1/2} \partial_2 \ff^* - \WW^{1/2} \gamma}_2^2
	\le \norm{\WW^{1/2} \partial_2 \ff - \WW^{1/2} \gamma}_2^2 =  \norm{\AA \xx^* - \bb}_2^2,
\]
which is the second inequality in the lemma statement.
\end{proof}

\subsection{Relation Between Approximate Solutions}

Linear equation problem \ls~$(\AA,\bb)$ aims to find a vector $\xx$ 
such that $\AA\xx = \PPi_{\AA} \bb$.
In our setting, both $\AA$ and $\bb$ have integer entries. 
We will need the following claim to lower bound $\AA\xx = \PPi_{\AA} \bb$.

\begin{claim}
	\label{lm:projection_norm}
	Let $\AA \in \mathbb{R}^{d \times n}$ and $\bb \in \mathbb{R}^d$ 
	such that $\norm{\AA^\top\bb}_2 \ge 1$.
	Then,
	\[
	\norm{\PPi_{\AA}\bb}_2^2 
	\geq \frac{1}{\lambda_{\max}\left( \AA^{\trp} \AA \right)}.
	\]
\end{claim}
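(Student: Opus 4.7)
The plan is to exploit the fundamental identity $\AA^\top \bb = \AA^\top \PPi_{\AA} \bb$, which holds because the residual $\bb - \PPi_{\AA}\bb$ lies in $\im(\AA)^\perp = \ker(\AA^\top)$. This turns the hypothesis $\norm{\AA^\top \bb}_2 \ge 1$ into a lower bound on $\norm{\AA^\top \PPi_{\AA}\bb}_2$, which I can then relate to $\norm{\PPi_{\AA}\bb}_2$ via the operator norm of $\AA^\top$.

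Concretely, I would first verify the identity $\AA^\top \bb = \AA^\top \PPi_{\AA} \bb$ using $\PPi_{\AA} = \AA(\AA^\top\AA)^\dagger \AA^\top$ together with the standard fact $\AA^\top \AA (\AA^\top \AA)^\dagger \AA^\top = \AA^\top$ on $\im(\AA^\top)$ (equivalently, $\AA^\top(\II - \PPi_\AA) = \zero$). Squaring and applying the Rayleigh quotient bound yields
\begin{equation*}
1 \le \norm{\AA^\top \bb}_2^2 = \norm{\AA^\top \PPi_{\AA}\bb}_2^2 = (\PPi_{\AA}\bb)^\top (\AA \AA^\top)(\PPi_{\AA}\bb) \le \lambda_{\max}(\AA\AA^\top)\,\norm{\PPi_{\AA}\bb}_2^2.
\end{equation*}
Finally, since $\AA\AA^\top$ and $\AA^\top\AA$ share the same nonzero eigenvalues, $\lambda_{\max}(\AA\AA^\top) = \lambda_{\max}(\AA^\top\AA)$, and rearranging produces the claimed inequality.

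There is no real obstacle here: the only subtlety is making sure the pseudoinverse manipulations are correct, which is immediate because $\PPi_{\AA}\bb \in \im(\AA)$ is automatically orthogonal to $\ker(\AA^\top \AA) = \ker(\AA)$ in the relevant sense, so the Rayleigh bound against $\lambda_{\max}$ (not $\lambda_{\min}$) is what we want and no nonzero-eigenvalue restriction is needed on the upper side. The proof should fit in a few lines.
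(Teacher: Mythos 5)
Your proof is correct, and it takes a genuinely different (and arguably cleaner) route than the paper's. The paper directly expands $\norm{\PPi_\AA\bb}_2^2 = (\AA^\top\bb)^\top(\AA^\top\AA)^\dagger(\AA^\top\bb)$ and applies a Rayleigh \emph{lower} bound using $\lambda_{\min}((\AA^\top\AA)^\dagger) = 1/\lambda_{\max}(\AA^\top\AA)$ (recall the paper's convention that $\lambda_{\min}$ denotes the smallest \emph{nonzero} eigenvalue); this silently relies on $\AA^\top\bb$ lying in $\im(\AA^\top\AA)$, i.e.\ being orthogonal to $\ker((\AA^\top\AA)^\dagger)$, which is what makes the minimum-nonzero-eigenvalue lower bound valid. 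Your proof instead rewrites $\AA^\top\bb = \AA^\top\PPi_\AA\bb$ and applies the Rayleigh \emph{upper} bound to the quadratic form $(\PPi_\AA\bb)^\top \AA\AA^\top (\PPi_\AA\bb)$, then uses $\lambda_{\max}(\AA\AA^\top) = \lambda_{\max}(\AA^\top\AA)$. What your route buys is that it never touches the pseudoinverse and never needs any null-space restriction, because the $\lambda_{\max}$ Rayleigh bound holds unconditionally; what it costs is one extra (elementary) identity, $\AA^\top(\II - \PPi_\AA) = \zero$. Both proofs are short and correct, but yours dodges the one place where a careless reader of the paper's argument could trip.
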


\begin{proof}
Note that 
\begin{align*}
	\norm{\PPi_{\AA}\bb}_2^2 & = \bb^\top \AA (\AA^\top \AA)^{\dagger} \AA^\top \bb 
	\ge \lambda_{\min}((\AA^\top \AA)^{\dagger}) \norm{\AA^\top \bb}_2^2
	\ge \frac{1}{\lambda_{\max}\left( \AA^{\trp} \AA \right)}.
\end{align*}
\end{proof}

Claim \ref{lm:projection_norm} and Lemma \ref{lm:general_case} enable us to relate $\norm{\PPi_{\AA}\bb}_2$
with $\norm{\PPi_{\WW^{1/2} \partial_2} \WW^{1/2}\gamma}_2$.

\begin{claim}
	\label{lm:norm_ratio}
	\[
		\norm{\PPi_{\WW^{1/2} \partial_2} \WW^{1/2} \gamma}_2^2
		\le \left(1 + \frac{1}{\alpha} \lambda_{\max} (\AA^\top \AA)  \norm{\bb}_2^2 \right)
		\norm{\PPi_{\AA} \bb}_2^2.
	\]
\end{claim}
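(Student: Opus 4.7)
The plan is to combine a residual-type comparison coming from Lemma \ref{lm:general_case} with the Pythagorean theorem to obtain an additive bound of the form $\norm{\PPi_{\WW^{1/2}\partial_2}\WW^{1/2}\gamma}_2^2 \le \norm{\PPi_\AA\bb}_2^2 + O(\norm{\bb}_2^2/\alpha)$, and then to upgrade this to the stated multiplicative form via Claim \ref{lm:projection_norm}.

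First, I would decompose both projected norms via the Pythagorean identity:
\begin{align*}
\norm{\PPi_{\WW^{1/2}\partial_2}\WW^{1/2}\gamma}_2^2 &= \norm{\WW^{1/2}\gamma}_2^2 - \norm{(\II - \PPi_{\WW^{1/2}\partial_2})\WW^{1/2}\gamma}_2^2, \\
\norm{\PPi_\AA\bb}_2^2 &= \norm{\bb}_2^2 - \norm{(\II - \PPi_\AA)\bb}_2^2.
\end{align*}
By the construction of $\gamma$ and equation (\ref{eq:weight}), the vector $\gamma$ is supported exactly on boundary edges, each of which carries unit weight, so that $\norm{\WW^{1/2}\gamma}_2^2 = \norm{\gamma}_2^2$ is a simple fixed function of $\norm{\bb}_2^2$ determined by the triangulation. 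Subtracting the two Pythagorean identities then yields a clean expression involving only the leading constant and the two residual norms.

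Second, I would apply the residual-form of Lemma \ref{lm:general_case} stated in the remark after its proof, namely
\[
\norm{(\II - \PPi_{\WW^{1/2}\partial_2})\WW^{1/2}\gamma}_2^2 \ge \tfrac{\alpha}{\alpha+1}\norm{(\II - \PPi_\AA)\bb}_2^2,
\]
which converts the subtraction into the additive estimate
\[
\norm{\PPi_{\WW^{1/2}\partial_2}\WW^{1/2}\gamma}_2^2 \le \norm{\PPi_\AA\bb}_2^2 + \tfrac{1}{\alpha+1}\norm{(\II - \PPi_\AA)\bb}_2^2 \le \norm{\PPi_\AA\bb}_2^2 + \tfrac{1}{\alpha}\norm{\bb}_2^2,
\]
using $\norm{(\II-\PPi_\AA)\bb}_2^2 \le \norm{\bb}_2^2$ at the end.

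Third, I would upgrade this additive bound to the multiplicative one by factoring out $\norm{\PPi_\AA\bb}_2^2$ and invoking Claim \ref{lm:projection_norm}. Either $\AA^\top\bb = \mathbf{0}$---in which case $\PPi_\AA\bb = \mathbf{0}$ and the additive bound together with $\norm{(\II - \PPi_\AA)\bb}_2^2 = \norm{\bb}_2^2$ forces the left-hand side to vanish, so the claim is trivial---or $\AA^\top\bb \ne \mathbf{0}$, in which case the integrality of the entries of $\AA$ and $\bb$ gives $\norm{\AA^\top\bb}_2 \ge 1$, and Claim \ref{lm:projection_norm} supplies $\norm{\PPi_\AA\bb}_2^2 \ge 1/\lambda_{\max}(\AA^\top\AA)$. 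Dividing through gives
\[
\frac{\norm{\PPi_{\WW^{1/2}\partial_2}\WW^{1/2}\gamma}_2^2}{\norm{\PPi_\AA\bb}_2^2} \le 1 + \frac{\norm{\bb}_2^2}{\alpha\norm{\PPi_\AA\bb}_2^2} \le 1 + \frac{\lambda_{\max}(\AA^\top\AA)\norm{\bb}_2^2}{\alpha},
\]
which is exactly the bound asserted in the claim.

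The main point of care will be cleanly matching the leading $\norm{\WW^{1/2}\gamma}_2^2$ and $\norm{\bb}_2^2$ terms produced by the two Pythagorean identities, using the explicit choice of unit weights on boundary edges in equation (\ref{eq:weight}); all other steps are straightforward manipulations that plug Lemma \ref{lm:general_case} and Claim \ref{lm:projection_norm} into this Pythagorean framework. The degenerate case $\AA^\top\bb = \mathbf{0}$ also needs a separate sentence but falls out of Lemma \ref{lm:general_case} directly.
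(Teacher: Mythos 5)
Your approach is essentially identical to the paper's: the paper also chains the two Pythagorean identities $\norm{\PPi_{\AA}\bb}_2^2 = \norm{\bb}_2^2 - \norm{(\II-\PPi_{\AA})\bb}_2^2$ and $\norm{\PPi_{\WW^{1/2}\partial_2}\WW^{1/2}\gamma}_2^2 = \norm{\WW^{1/2}\gamma}_2^2 - \norm{(\II-\PPi_{\WW^{1/2}\partial_2})\WW^{1/2}\gamma}_2^2$, applies the residual bound from Lemma~\ref{lm:general_case}, and finishes with Claim~\ref{lm:projection_norm}, exactly as you propose. One small note: the ``point of care'' you flag about matching the leading $\norm{\WW^{1/2}\gamma}_2^2$ and $\norm{\bb}_2^2$ terms is real---by construction of $\gamma$ each $\bb(q)$ appears on three boundary edges, so neither your writeup nor the paper's displayed chain (which silently needs $\norm{\bb}_2^2 \ge \norm{\WW^{1/2}\gamma}_2^2$ in its first inequality) spells out how this normalization is reconciled, but this is inherited from the paper rather than a defect of your argument.
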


\begin{proof}
	We apply Lemma~\ref{lm:general_case}.
	\begin{align*}
		\norm{\PPi_{\AA}\bb}_2^2 &= \norm{\bb}_2^2 - \norm{(\II-\PPi_{\AA})\bb}_2^2\\
		&\geq \norm{\WW^{1/2} \gamma}_2^2 - \frac{\alpha+1}{\alpha}\norm{(\II-\PPi_{\WW^{1/2} \partial_2}) \WW^{1/2}\gamma}_2^2 
		\\
		& = \norm{\PPi_{\WW^{1/2}  \partial_2} \WW^{1/2}\gamma}_2^2
		- \frac{1}{\alpha}\norm{(\II-\PPi_{\WW^{1/2}  \partial_2}) \WW^{1/2} \gamma}_2^2\\
		&\geq \norm{\PPi_{\WW^{1/2} \partial_2} \WW^{1/2}\gamma}_2^2 - \frac{1}{\alpha}\norm{(\II-\PPi_{\AA})\bb}_2^2 
	\end{align*}
	Since $\AA, \bb$ have integer entries, by Claim \ref{lm:projection_norm}, 
	\begin{align*}
		\norm{(\II - \PPi_{\AA})\bb}_2^2 \le \norm{\bb}_2^2 
		\le \norm{\bb}_2^2 \cdot \lambda_{\max}(\AA^\top\AA) \norm{\PPi_{\AA}\bb}_2^2. 
	\end{align*}
	Thus, 
	\[
		\norm{\PPi_{\WW^{1/2} \partial_2} \WW^{1/2} \gamma}_2^2
		\le \left(1 + \frac{1}{\alpha} \lambda_{\max} (\AA^\top \AA)  \norm{\bb}_2^2 \right)
		\norm{\PPi_{\AA} \bb}_2^2.
	\]
\end{proof}

Now, we apply Claim \ref{clm:f_and_x} and Lemma \ref{lm:norm_ratio} to prove 
a relation between approximate solutions.

\begin{lemma}[Approximate solvers in general case]
	Given a difference-average instance \lsa~$(\AA,\bb,\epsda)$, 
	let $(\WW, \partial_2, \gamma,\triangle^c) \leftarrow$ \textsc{ReduceReg$\da$To$\b2(\AA,\bb,\alpha)$}. 
	Suppose $\ff$ is a solution to \lsa~$(\WW^{1/2}\partial_2, \WW^{1/2}\gamma, \epsb)$, where 
	$\epsb \le \frac{\epsda}{10}$.
	Let $\xx \leftarrow$ \textsc{MapSoln$\b2$to$\da(\AA,\bb,\ff,\triangle^c)$}.
	Then, $\xx$ is a solution to \lsa~$(\AA,\bb,\epsda)$.
	\label{lem:regressions_approx}
\end{lemma}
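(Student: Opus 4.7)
The plan is to convert the $\epsb$-approximation guarantee on $\ff$ into an $\epsda$-approximation guarantee on $\xx$ by assembling four already-established ingredients: Claim~\ref{clm:f_and_x} (relating the two residuals), Lemma~\ref{lm:general_case} (relating the two minima), Claim~\ref{lm:norm_ratio} (bounding the ratio of the two projection norms), and Claim~\ref{lm:projection_norm} (a lower bound for $\norm{\PPi_{\AA}\bb}_2$ when $\AA,\bb$ are integer). Before beginning the estimate I would dispose of the degenerate case $\AA^\top\bb = \zero$: here $\PPi_{\AA}\bb = \zero$ and \textsc{MapSoln$\b2$to$\da$} returns $\xx = \zero$, so $\norm{\AA\xx - \PPi_{\AA}\bb}_2 = 0$ trivially; in the non-trivial case, integrality forces $\norm{\AA^\top\bb}_2 \ge 1$, which is precisely the hypothesis needed to invoke Claim~\ref{lm:projection_norm}.

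First, I would Pythagorean-decompose both sides of the inequality supplied by Claim~\ref{clm:f_and_x}. On the right-hand side, writing the residual as its projection onto $\im(\WW^{1/2}\partial_2)$ plus its complement, the $\epsb$-approximation assumption on $\ff$ gives
\[
\norm{\WW^{1/2}\partial_2\ff - \WW^{1/2}\gamma}_2^2 \le (\epsb)^2 \norm{\PPi_{\WW^{1/2}\partial_2} \WW^{1/2}\gamma}_2^2 + \norm{(\II - \PPi_{\WW^{1/2}\partial_2}) \WW^{1/2}\gamma}_2^2,
\]
and the right-most term is at most $\norm{(\II - \PPi_{\AA})\bb}_2^2$ by Lemma~\ref{lm:general_case}. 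On the left-hand side, Claim~\ref{clm:f_and_x} supplies the lower bound $\frac{\alpha}{\alpha+1}\norm{\AA\xx - \bb}_2^2$, which by Pythagoras equals $\frac{\alpha}{\alpha+1}\bigl(\norm{\AA\xx - \PPi_{\AA}\bb}_2^2 + \norm{(\II - \PPi_{\AA})\bb}_2^2\bigr)$. Cancelling the matching $\norm{(\II - \PPi_{\AA})\bb}_2^2$ contributions yields
\[
\frac{\alpha}{\alpha+1}\norm{\AA\xx - \PPi_{\AA}\bb}_2^2 \le (\epsb)^2 \norm{\PPi_{\WW^{1/2}\partial_2} \WW^{1/2}\gamma}_2^2 + \frac{1}{\alpha+1}\norm{(\II - \PPi_{\AA})\bb}_2^2.
\]

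The final step is to convert both terms on the right into multiples of $\norm{\PPi_{\AA}\bb}_2^2$. I would apply Claim~\ref{lm:norm_ratio} to the first summand, producing the factor $(\epsb)^2\bigl(1 + \tfrac{1}{\alpha}\lambda_{\max}(\AA^\top\AA)\norm{\bb}_2^2\bigr)$, and for the second summand combine $\norm{(\II - \PPi_{\AA})\bb}_2^2 \le \norm{\bb}_2^2$ with Claim~\ref{lm:projection_norm} to get $\tfrac{1}{\alpha+1}\lambda_{\max}(\AA^\top\AA)\norm{\bb}_2^2\cdot\norm{\PPi_{\AA}\bb}_2^2$. Substituting the algorithm's choice $\alpha = 2/(\epsda)^2$ and the hypothesis $\epsb \le \epsda/10$, and simplifying the resulting coefficient of $\norm{\PPi_{\AA}\bb}_2^2$, one obtains $\norm{\AA\xx - \PPi_{\AA}\bb}_2 \le \epsda\norm{\PPi_{\AA}\bb}_2$, i.e., $\xx$ solves \lsa~$(\AA,\bb,\epsda)$.

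The main obstacle I expect is precisely this last numerical bookkeeping: both surviving summands carry an extra $\lambda_{\max}(\AA^\top\AA)\norm{\bb}_2^2$ factor that must be absorbed into $(\epsda)^2$ using the specific scaling $\alpha = \Theta(1/(\epsda)^2)$ together with the sharper upper bound on $\epsb$ prescribed earlier in Section~\ref{sect:regression} (which is strictly stronger than the simplified $\epsda/10$ restated in the lemma). No new geometric or combinatorial insight is required beyond the four ingredients listed above; what is needed is careful accounting of how each Pythagorean split, each cancellation, and each application of Claims \ref{lm:norm_ratio} and \ref{lm:projection_norm} contributes to the final constant.
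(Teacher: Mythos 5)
Your proposal follows essentially the same route as the paper: apply Claim~\ref{clm:f_and_x}, Pythagorean-decompose both sides, use Lemma~\ref{lm:general_case} to cancel the $\norm{(\II-\PPi_{\AA})\bb}_2^2$ contributions, then control the first surviving summand with Claim~\ref{lm:norm_ratio} and the second with Claim~\ref{lm:projection_norm}. Your algebra up through the inequality
\[
\tfrac{\alpha}{\alpha+1}\norm{\AA\xx - \PPi_{\AA}\bb}_2^2 \le (\epsb)^2 \norm{\PPi_{\WW^{1/2}\partial_2} \WW^{1/2}\gamma}_2^2 + \tfrac{1}{\alpha+1}\norm{(\II - \PPi_{\AA})\bb}_2^2
\]
matches the paper line for line (the $\tfrac{1}{\alpha+1}$ versus $\tfrac{1}{\alpha}$ discrepancy is only a choice of where to carry the $\tfrac{\alpha}{\alpha+1}$ prefactor).

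One point deserves a closer look, and it is actually a place where your explicitness exposes a weakness in the published argument rather than a flaw specific to your write-up. Your treatment of the second summand, via Claim~\ref{lm:projection_norm}, correctly produces the factor $\tfrac{1}{\alpha+1}\lambda_{\max}(\AA^\top\AA)\norm{\bb}_2^2$ in front of $\norm{\PPi_{\AA}\bb}_2^2$. With the algorithm's stated $\alpha = 2/(\epsda)^2$, this factor does \emph{not} reduce to $O\bigl((\epsda)^2\bigr)$ because it still carries the uncontrolled $\lambda_{\max}(\AA^\top\AA)\norm{\bb}_2^2$; and, unlike the first summand, this term does not involve $\epsb$, so the sharper $\epsb$ from Section~\ref{sect:regression} buys nothing here. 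For the second summand to close, one needs $\alpha = \Theta\bigl(\norm{\bb}_2^2\lambda_{\max}(\AA^\top\AA)/(\epsda)^2\bigr)$, not merely $\Theta\bigl(1/(\epsda)^2\bigr)$. The paper's own proof implicitly leans on the same thing: its line $\tfrac{1}{\alpha}\norm{\bb}_2^2 \le \tfrac{(\epsda)^2}{2}\norm{\PPi_{\AA}\bb}_2^2$ is, with $\alpha = 2/(\epsda)^2$, equivalent to $\norm{\bb}_2 \le \norm{\PPi_{\AA}\bb}_2$, which holds only when $\bb \in \im(\AA)$ --- precisely the case Section~\ref{sect:regression} is meant to go beyond. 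So your sketch and the paper share the same underlying issue; your tracking of the $\lambda_{\max}(\AA^\top\AA)\norm{\bb}_2^2$ factor is the version of the argument that makes clear what $\alpha$ must actually be, and a complete proof should either adjust $\alpha$ accordingly or invoke Claim~\ref{lm:projection_norm} at the last step with the larger $\alpha$ built in.
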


\begin{proof}
	By Claim~\ref{clm:f_and_x}, we have 
	\begin{align}
		\norm{\AA\xx - \bb}_2^2 \le \frac{\alpha + 1}{\alpha} \norm{\WW^{1/2} \partial_2 \ff - \WW^{1/2} \gamma}_2^2.	
		\label{eqn:fx}	
	\end{align}
	Also note that 
	\begin{align*}
		 \norm{\AA\xx - \bb}_2^2 &= \norm{\AA\xx - \PPi_{\AA}\bb}_2^2 + \norm{(\II - \PPi_{\AA})\bb}_2^2, \\
		 \norm{\WW^{1/2} \partial_2 \ff - \WW^{1/2} \gamma}_2^2	&= \norm{\WW^{1/2} \partial_2 \ff - \PPi_{\WW^{1/2}  \partial_2} \WW^{1/2} \gamma}_2^2	
		+ \norm{(\II - \PPi_{\WW^{1/2}\partial_2}) \WW^{1/2} \gamma}_2^2.
	\end{align*}	
	Plugging these into Eq. \eqref{eqn:fx} and apply Lemma \ref{lm:general_case},
	\begin{align*}
		\norm{\AA\xx - \PPi_{\AA}\bb}_2^2
		& \le \frac{\alpha + 1}{\alpha} \left(
		\norm{\WW^{1/2} \partial_2 \ff - \PPi_{\WW^{1/2} \partial_2} \WW^{1/2} \gamma}_2^2	
		+ \norm{(\II - \PPi_{\WW^{1/2}\partial_2})\WW^{1/2} \gamma}_2^2
		\right) - \norm{(\II - \PPi_{\AA})\bb}_2^2 \\
		& \le \frac{\alpha + 1}{\alpha}
		\norm{\WW^{1/2} \partial_2 \ff - \PPi_{\WW^{1/2} \partial_2} \WW^{1/2} \gamma}_2^2
		+ \frac{1}{\alpha} \norm{(\II - \PPi_{\AA})\bb}_2^2. 
	\end{align*}
	By the fact that $\ff$ is a solution to \lsa~$(\WW^{1/2} \partial_2,\WW^{1/2} \gamma, \epsb)$ 
	and by Claim \ref{lm:norm_ratio},
	\begin{align*}
		  \norm{\WW^{1/2} \partial_2 \ff - \PPi_{\WW^{1/2} \partial_2} \WW^{1/2} \gamma}_2^2
		& \le \left(\epsb\right)^2 \norm{\PPi_{\WW^{1/2} \partial_2} \WW^{1/2} \gamma}_2^2 
		\le \frac{\left(\epsda\right)^2}{3} \norm{\PPi_{\AA}\bb}_2^2.
	\end{align*}
In addition, 
	\begin{align*}
		\frac{1}{\alpha} \norm{(\II - \PPi_{\AA})\bb}_2^2 
		\le \frac{1}{\alpha} \norm{\bb}_2^2  
		 \le \frac{\left(\epsda\right)^2}{2} \norm{\PPi_{\AA}\bb}_2^2. 
	\end{align*}
Thus, 
\[
\norm{\AA\xx - \PPi_{\AA}\bb}_2^2 \le \left(\epsda\right)^2 \norm{\PPi_{\AA}\bb}_2^2,
\]
that is, $\xx$ is a solution to \lsa~$(\AA,\bb,\epsda)$.
\end{proof}

\subsection{Bounding the Condition Number of the New Matrix}

We will upper bound the maximum eigenvalue of $\partial_2^\top \WW \partial_2$
and lower bound its minimum nonzero eigenvalue.
The proofs are similar to those in Section \ref{sect:condition_number}, 
which bound the eigenvalues of $\partial_2^\top \partial_2$.

\begin{lemma}
	$\kappa(\partial_2^\top \WW \partial_2) = O \left( (\epsda)^{-2} \nnz(\AA)^{15/2} \kappa(\AA^\top \AA)  \right)$.
\end{lemma}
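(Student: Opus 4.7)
The plan is to bound $\lambda_{\max}(\partial_2^\top \WW \partial_2)$ and $\lambda_{\min}(\partial_2^\top \WW \partial_2)$ separately --- in close analogy to Lemmas~\ref{lm:max_eigenvalue} and~\ref{lem:sec5_min_eigenvalue} for the unweighted case --- and then take their ratio. The Courant--Fischer theorem is the common tool.

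For the maximum eigenvalue, I would factor out the weights: for any unit $\ff$,
\[
\ff^\top \partial_2^\top \WW \partial_2 \ff = \sum_e w_e (\partial_2\ff)(e)^2 \le \big(\max_e w_e\big)\cdot \ff^\top \partial_2^\top \partial_2 \ff \le 12\,\max_e w_e,
\]
using Lemma~\ref{lm:max_eigenvalue}. To bound $\max_e w_e$, I would observe that each triangle path $\mathcal{P}^1_{q,i,*}$ has length at most the number of triangles, which is $O(\nnz(\AA))$; each equation contributes at most a constant number of paths through a given edge; and a single interior edge can lie in paths of at most $O(d)$ different equations (namely, equations whose boundary triangles are descendants of that edge in the BFS tree). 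Combining, $w_e = O(\alpha\, d\, \nnz(\AA)) = O(\alpha\, \nnz(\AA)^2)$, giving $\lambda_{\max}(\partial_2^\top \WW \partial_2) = O((\epsda)^{-2} \nnz(\AA)^2)$.

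For the minimum nonzero eigenvalue, I would mimic Lemma~\ref{lem:sec5_min_eigenvalue}, restricting to unit $\ff \perp \Null(\partial_2)$ and introducing a threshold $C$ to be tuned. In Case~1, some $\calK_i$ contains triangles $\Delta,\Delta'$ with $|\ff(\Delta)-\ff(\Delta')| \ge C$; a triangle path in $\calK_i$ of length $l = O(\nnz(\AA))$ then witnesses an interior edge $e^\ast$ with $|(\partial_2\ff)(e^\ast)| \ge C/l$, so $\ff^\top \partial_2^\top \WW \partial_2 \ff \ge w_{e^\ast}(C/l)^2$ where $w_{e^\ast}$ is at least the polynomially small filler weight $\delta$ assigned to zero-weight edges. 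In Case~2, $\ff$ is nearly constant on every $\calK_i$, so I would write $\ff = \HH\aalpha + \boldsymbol{\epsilon}$ with $\aalpha(i)=\ff(\Delta_i)$ and $\norm{\boldsymbol{\epsilon}}_2 \le \sqrt{t}\,C$. Using the block structure in~\eqref{eqn:alpha_decompose} together with the fact that all boundary edges have weight $1$, one checks $(\HH\aalpha)^\top \partial_2^\top \WW \partial_2 (\HH\aalpha) = 3\norm{\AA\aalpha}_2^2 \ge 3\lambda_{\min}(\AA^\top\AA)\norm{\aalpha_\perp}_2^2$, and Lemma~\ref{lm:1_to_1} gives $\norm{\aalpha_\perp}_2 \ge 1/(2\sqrt{t})$ exactly as in Lemma~\ref{lem:sec5_min_eigenvalue}. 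The cross term in $\ff = \HH\aalpha + \boldsymbol{\epsilon}$ is absorbed via AM--GM at the cost of $\lambda_{\max}(\partial_2^\top \WW \partial_2)\cdot \norm{\boldsymbol{\epsilon}}_2^2 = O(\alpha\, \nnz(\AA)^3 C^2)$.

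Picking $C$ to equalize the Case~1 bound $\delta\,(C/l)^2$ and the Case~2 bound $\Omega(\lambda_{\min}(\AA^\top\AA)/t) - O(\alpha\,\nnz(\AA)^3 C^2)$ yields a polynomial lower bound $\lambda_{\min}(\partial_2^\top\WW\partial_2) \ge \Omega\!\big(\lambda_{\min}(\AA^\top\AA)\big/(\poly(\nnz(\AA)))\big)$, and combining with the $\lambda_{\max}$ bound gives the claimed condition number estimate. The main obstacle will be handling the wide spread of interior-edge weights --- from the polynomially small filler $\delta$ on off-tree edges up to $O(\alpha\,\nnz(\AA)^2)$ on tree edges near the root --- which forces Case~1 to pay a factor $\delta$ and Case~2 to pay $\lambda_{\max}$ in the cross-term correction. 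Balancing these two losses by choosing $C$ appropriately, and picking $\delta$ to be a sufficiently small (but still inverse-polynomial) constant, is the delicate bookkeeping step; I would verify that the polynomial exponents collapse to give the stated bound $\kappa(\partial_2^\top\WW\partial_2) = O((\epsda)^{-2}\nnz(\AA)^{15/2}\kappa(\AA^\top\AA))$.
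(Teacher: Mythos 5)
Your proposal is structurally identical to the paper's own (four-line) proof, which simply says the bound follows by rerunning the unweighted argument of Section~\ref{sect:condition_number}, with $\lambda_{\max}(\partial_2^\top\WW\partial_2)$ picking up a factor $w_{\max}=O(\alpha\,\nnz(\AA)^2)$ and $\lambda_{\min}$ losing roughly $w_{\max}/w_{\min}$ --- precisely the two-case redo you spell out. Your concluding caution about whether the bookkeeping collapses to exactly the stated exponents is well placed (the lemma's bound is in tension with Theorem~\ref{thm:main_general}, since $\kappa(\partial_2^\top\WW\partial_2)=\kappa(\WW^{1/2}\partial_2)^2$, and with the paper's claim $w_{\min}=\alpha$ when boundary edges carry weight $1$), but the route you take is the one the paper intends, just fleshed out with far more detail than the paper supplies.
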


\begin{proof}
	The proof follows the same proof line in Section \ref{sect:condition_number} 
	for $\WW = \II$. 
	Here, we lose a factor $w_{\max}$ when we upper bound $\lambda_{\max}(\partial_2^\top \WW \partial_2)$,
	and we lose a factor $\frac{w_{\max}}{w_{\min}}$ when we lower bound $\lambda_{\max}(\partial_2^\top \WW \partial_2)$,
	where $w_{\max}$ is the maximum diagonal in $\WW$ 
	and $w_{\min}$ is the minimum nonzero diagonal in $\WW$.
	By our setting, $w_{\max} = O(\alpha \nnz(\AA)^2)$ and $w_{\min} = \alpha$, 
	where $\alpha = 2 (\epsda)^{-2}$.
\end{proof}


\printbibliography

\appendix

\section{Reducing General Linear Equations to Difference-Average Linear Equations}
\label{sect:appendix_first_reduction}


\newcommand{\matclass}{\calM}
\newcommand{\algo}{\calA}

\newcommand{\GLD}{\textsc{lsd}}
\newcommand{\LSD}{\textsc{lsd}}

\newcommand{\MCLA}{\textsc{mcla}}
\newcommand{\MAP}{\textsc{map}}

\newcommand{\YES}{\textsc{yes}}
\newcommand{\NO}{\textsc{no}}

\newcommand{\asoln}{approximate solution}

\newcommand{\genCl}{\calG}
\newcommand{\genZCl}{\calG_{\text{z}}}
\newcommand{\genZtwoCl}{\calG_{\text{z},2}}
\newcommand{\mctwoCl}{\calM\calC_{2}}
\newcommand{\mctwostrictCl}{{\calM\calC_{2}^{>0}}}
\newcommand{\mctwostrictintCl}{\calM\calC_{{2,\bz}}^{>0}}
\newcommand{\trusstwoCl}{\calT_{2}}
\newcommand{\tvtwoCl}{\calV_{2}}

\newcommand{\algGZTtoMCT}{\ensuremath{\textsc{Reduce\,}\genZtwoCl\textsc{to}\da}}
\newcommand{\algMCTtoGZTsoln}{\ensuremath{\textsc{MapSoln\,}\da\textsc{to}\genZtwoCl}}
\newcommand{\algMCToGZGadget}{\ensuremath{\mctwoCl\textsc{Gadget}}}
\newcommand{\xxa}{\xx^{\textnormal{A}}}
\newcommand{\xxb}{\xx^{\textnormal{B}}}
\newcommand{\xxaux}{\xx^{\textnormal{aux}}}
\newcommand{\cca}{\cc^{\textnormal{A}}}
\newcommand{\ccb}{\cc^{\textnormal{B}}}
\newcommand{\epsa}{\eps^{\textnormal{A}}}
\newcommand{\xxbopt}{\xx^{\textnormal{B}*}}
\newcommand{\xxaopt}{\xx^{\textnormal{A}*}}
\newcommand{\xxatil}{\widetilde{\xx}^{\textnormal{A}}}
\newcommand{\xxauxopt}{\xx^{\textnormal{aux}*}}
\newcommand{\nb}{n_B}
\newcommand{\na}{n_A}
\newcommand{\vecppi}{\pp^i}
\newcommand{\qqj}{\qq^j}
\newcommand{\ppxa}{\pp(\xxa)}
\newcommand{\yyb}{\yy^{\textnormal{B}}}
\newcommand{\yya}{\yy^{\textnormal{A}}}
\newcommand{\yyaux}{\yy^{\textnormal{aux}}}
\newcommand{\zzb}{\zz^{\textnormal{B}}}
\newcommand{\zza}{\zz^{\textnormal{A}}}
\newcommand{\zzaux}{\zz^{\textnormal{aux}}}

In this section, we prove Theorem \ref{thm:prelim_kz17}: 
\lsa~over $\da$ is sparse-linear-equation complete.

We show a nearly-linear time reduction from linear equation problems
over matrices in $\calG$ to linear equation problems over matrices in $\da$.
This reduction is implicit in Section 4 of~\cite{KZ17}, as an intermediate step 
to reduce linear equation problems over matrices in $\calG$ to matrices in a slight generalization of Laplacians.
We explicitly separate the reduction step and simplify the proofs in \cite{KZ17},
which might be of independent interest.

Recall that a matrix in $\calG$ has polynomially bounded integer entries and polynomially bounded condition numbers, 
and they do not have all-0 rows or all-0 columns;
a matrix $\AA \in \da$ only has two types of rows such that 
if we multiply $\AA$ to a vector $\xx$, then 
the entries of $\AA\xx$ are in the form of either $\xx(i) - \xx(j)$ or $\xx(i) + \xx(j) - 2\xx(k)$.

The reduction in \cite{KZ17} has three steps: 
\begin{enumerate}
	\item Reduce linear equation problems over matrices in $\calG$ to  matrices in $\calG_z$, 
	a subset of $\calG$ containing matrices with row sum 0. 
	Given an instance $(\GG, \bb)$
	where $\GG \in \calG$, we construct a new instance  $(\GG', \bb)$
	where $\GG' = \begin{bmatrix}
		\GG& -\GG \one
	\end{bmatrix} \in \calG_z$.
	\item Reduce linear equation problems over matrices in $\calG_z$ to  matrices in $\calG_{z,2}$, 
	a subset of $\calG_{z}$ containing matrices such that the sum of positive entries in each row 
	is a power of 2.
	Given an instance  $(\GG', \bb)$
	where $\GG' \in \calG_{z}$, we construct a new  instance  $(\GG'', \bb'')$
	where $\GG'' = \begin{bmatrix}
		\GG' & \gg & -\gg \\
		\zero & 1 & -1
	\end{bmatrix}
 \in \calG_{z,2}$ 
	and $\bb'' = \begin{bmatrix}
			\bb \\
		0
	\end{bmatrix}$.
	\item Reduce linear equation problems over matrices in $\calG_{z,2}$ to  matrices in $\da$.
\end{enumerate}

The first two steps are proved in Section 7 of~\cite{KZ17}, by standard tricks. 
We will focus on the third step.

In the rest part of this section, to be consistent with the notations used in~\cite{KZ17}, 
we use subscripts to denote entries of a matrix or a vector. 
Let $\AA_i$ denote the $i$th row of a matrix $\AA$, and $\AA_{ij}$ denote the $(i,j)$th entry of $\AA$. 
Let $\xx_i$ denote the $i$th entry of a vector $\xx$, 
and $\xx_{i:j}$ denote the subvector of entries $\xx_i,\xx_{i+1},\ldots, \xx_j$. 
Moreover, we let $\norm{\AA}_{\max}$ be the maximum magnitude of the entries of $\AA$.

\subsection{Reduction Algorithm}

Given an instance of linear equation problems over $\calG_{z,2}$, our construction of
an instance over $\da$ does not depend on the error parameter.
Therefore, we will describe our construction without the error parameters; then we will 
explain how to set the error parameter for the construction instance over $\da$.

Let $(\AA,\cca)$ be an instance over $\calG_{z,2}$.
The idea is to write each equation in $(\AA, \cca)$ as a set of difference equations and average equations, 
via bitwise decomposition.
Same as \cite{KZ17},
we first explain the idea by an example: 
\begin{align}
	3\xx_1 + 5\xx_2 -  \xx_3 - 7\xx_4 = 1.
	\label{eqn:da_example}
\end{align}
We will manipulate the variables with positive coefficients and the variables with negative coefficients separately.
Let us take the positive coefficients as an example.
We pair all the variables with the odd positive coefficients and replace each pair with a new variable via an average equation. 
In this example, we pair $\xx_1, \xx_2$;
we then create a new variable $\xx_{12}$ and a new average equation 
$$\xx_1 + \xx_2 - 2\xx_{12} = 0.$$
Plugging this into Eq. \eqref{eqn:da_example}, we get 
\begin{align*}
	2 \xx_1 + 4 \xx_2 + 2\xx_{12} -  \xx_3 - 7\xx_4 = 1.
\end{align*}
We pull out a factor 2:
\begin{align*}
	2 (\xx_1 + 2 \xx_2 + \xx_{12}) -  \xx_3 - 7\xx_4 = 1.
\end{align*}
We repeat the pair-and-replace process to pair $\xx_1, \xx_{12}$
with a new variables $\xx_{1,12}$ and a new average equation $\xx_1 + \xx_{12} - 2\xx_{1,12} = 0$:
\begin{align*}
	2 (2\xx_{1,12} + 2 \xx_2 ) -  \xx_3 - 7\xx_4 = 1.
\end{align*}
Pull out a factor 2:
\begin{align*}
	4 (\xx_{1,12} +  \xx_2 ) -  \xx_3 - 7\xx_4 = 1.
\end{align*}
Repeat pair-and-replace with $\xx_{1,12} +  \xx_2 - 2\xx_{1,12,2} = 0$:
\begin{align*}
	8\xx_{1,12,2}  -  \xx_3 - 7\xx_4 = 1.
\end{align*}
Similarly, we can use a sequence of average equations for the 
variables with odd coefficients, and the above equation becomes:
\begin{align*}
	8\xx_{1,12,2}  -  8 \xx_{34} = 1.
\end{align*}
where $\xx_{34}$ is a new variable.
The final equation is a difference equation.

The above reduction relies on that the sum of all the coefficients 
is 0 and the sum of all the positive coefficients are a power of 2.

Pseudo-code of the reduction from linear equation problems over $\calG_{z,2}$
to $\da$ is shown in Algorithm~\ref{alg:GZ2toMC2}.
Algorithm~\ref{alg:GZ2toMC2solnback} shows how to map a solution to the instance over $\da$
back to a solution to the original instance over $\calG_{z,2}$.

\begin{algorithm}[!h]	
	\caption{\label{alg:GZ2toMC2}\algGZTtoMCT}
	\KwIn{$(\AA,\cca)$ where $\AA \in \genZtwoCl$ is an $m \times
	n$ matrix, $\cca \in \rea^n$, and 
	$\alpha > 0$.}
	\KwOut{
		$(\BB,\ccb)$ where $\BB \in \da$ is an $m \times
		n$ matrix, and $\ccb \in \rea^n$.
		}
    
	  $\mathcal{X} \assign \{\uu_1, \ldots, \uu_n\}$ \tcp*{$\da$ variables and index of new variables}
    
	Let $\xx$ be the vector of variables corresponding to the set of variables $\mathcal{X}$
    
	$t \assign n+1$
    
	$\mathcal{A} \assign \emptyset$, $\mathcal{B} \assign \emptyset$
    \tcp*{Multisets of main and $\da$ auxiliary equations respectively}
    
	\For{each equation $1 \leq i \leq m$ in $\AA$} 
	{
		\If{$\AA_i$ is already a difference equation or an average equation}
		{
			$\mathcal{A} \assign \mathcal{A} \cup
			\{\AA_{i j_+} \uu_{j_+} - \AA_{i j_+} \uu_{j_-} =
			\cc_i\}$

			$w_i \assign \frac{\alpha}{\alpha + 1}$
		}
		\Else
		{
			Let $\mathcal{I}^{+1} \assign \{j: \AA_{ij} > 0\}, \mathcal{I}^{-1} \assign \{j: \AA_{ij} < 0 \}$
    	\label{lin:mcAlready}
		
			$\mathcal{A}_{i} \assign  \setof{\AA_{i} \uu = \cc_{i}}$ \label{lin:mainConstraint}

			$\mathcal{B}_{i} \assign \emptyset $
			
			\For{$s = -1,+1$}
			{
				$r \assign 0$
			
				\While{$\mathcal{A}_i$ is neither a difference equation nor an average equation 
				\label{lin:while_start}
				\label{lin:varReplace}} 
				{
					For each $j$, let $\AAhat_{ij}$ be the coefficient of $\uu_j$ in $\mathcal{A}_{i}$.
					
					$\mathcal{I}_{odd}^{s} \assign \{j \in \mathcal{I}^{s}: \lfloor |\AAhat_{ij}|  / 2^{r} \rfloor \mbox{ is odd} \}$
					
					Pair the indices of $\mathcal{I}_{odd}^{s}$ into disjoint pairs $(j_{1},l_{1}), (j_{2},l_{2}), \ldots$
					
					\For{each pair of indices $(j_{k},l_{k})$}
					{
						$\mathcal{A}_{i} \leftarrow \mathcal{A}_{i}+s\cdot 2^{r}
						\setof{ \left( 2 \uu_{t} - \left( \uu_{j_{k}} + \uu_{l_{k}}
						\right)\right) = 0}$  \footnotemark

						$\mathcal{B}_{i} \leftarrow \mathcal{B}_{i} \cup  s\cdot 2^{r} 
						\setof{ \left( 2 \uu_{t} - \left( \uu_{j_{k}} + \uu_{l_{k}}
						\right)\right) = 0}$   \label{lin:updateBi}

						$\mathcal{X} \assign \mathcal{X}
						\cup \{\uu_t\}$, update $\xx$ accordingly

						$t \assign t+1$

						$r \assign r+1$
					}
				} \label{lin:while_end}
			}

			$w_{i} \assign \alpha \sizeof{\mathcal{B}_{i} }$

			$\mathcal{B} \assign
			\mathcal{B} \cup w_{i}^{1/2}  \cdot \mathcal{B}_{i}$  
			\label{lin:weight}

			$\mathcal{A} \assign \mathcal{A}
			\cup \mathcal{A}_{i}$.
		}
		}    
    \Return {$\BB, \cc$ s.t. $\BB\xx = \cc$ corresponds to the equations in $\mathcal{A} \cup \mathcal{B}$,
    on the variable set $\mathcal{X}$.}
\end{algorithm}
\footnotetext{{Note that given two singleton multi-sets each containing a single equation, e.g.
		$\setof{\aa_{1}^{\trp} \xx = c_1}$ and $\setof{\aa_{2}^{\trp} \xx  =
			c_2}$ where $\aa_1, \aa_2$ are vectors, we define
		$\setof{\aa_{1}^{\trp} \xx = c_1} +\setof{\aa_{2}^{\trp} \xx  = c_2} 
		=
		\setof{\aa_{1}^{\trp} \xx +  \aa_{2}^{\trp} \xx  =c_1+ c_2}$
		and we define 
		$\setof{\aa_{1}^{\trp} \xx = c_1} \union \setof{\aa_{2}^{\trp} \xx  = c_2} 
		=
		\setof{\aa_{1}^{\trp} \xx = c_1, \aa_{2}^{\trp} \xx  = c_2}$.}}

\paragraph{Notations.} 

We will follow the notations in~\cite{KZ17}.
The central object created by Algorithm~\ref{alg:GZ2toMC2} is the matrix
$\BB$, which contains both original and new equations and variables.
We will superscript the variables with $^{\textnormal{A}}$ to
distinguish variables appear in the original equation $\AA \xxa =
\cca$ from new variables.
We will term the new variables as $\xxaux$, and write a vector
solution to the new problem, $\xxb$, as:
\begin{equation}
\xxb = \begin{bmatrix}
	\xxa \\ 
	\xxaux
\end{bmatrix}.
\label{eq:MC2BDecomposeX}
\end{equation}
Let $\na$ be the dimension of $\xxa$, and $\nb$ be the dimension of $\xxb$, respectively.

Furthermore, we will distinguish the equations in $\BB$ into ones
formed from manipulating $\AA$, i.e. the equations added to
the set  $\mathcal{A}$, from the auxiliary equations, i.e. the
equations added to the set $\mathcal{B}$.
We use $\WW^{1/2} = \diag(w_{i}^{1/2}) $ to refer to the diagonal matrix of
weights $w_{i}$ applied to the auxiliary equations $\mathcal{B}$ in
Algorithm~\ref{alg:GZ2toMC2}.
In Algorithm~\ref{alg:GZ2toMC2}, a real value $\alpha > 0$ is set
initially and used when computing the weights $w_{i}^{1/2}$.
For convenience, throughout most of this section, we will treat
$\alpha$ as an arbitrary constant, and only eventually substitute in
its value to complete our main proof.
This leads to the following representation of $\BB$
and $\ccb$ which we will use throughout our analysis of the algorithm:
\begin{equation}
\BB
= \begin{bmatrix}
	\AAhat \\
	\WW^{1/2} \BBhat
\end{bmatrix}.
\label{eq:MC2BDecomposeB}
\end{equation}
Here the equations of $\AAhat$
corresponds to $\mathcal{A}$ in \algGZTtoMCT, and
$\BBhat$ corresponds to the auxiliary constraints,
i.e. equations of $\mathcal{B}$ in \algGZTtoMCT.
Also, the vector $\ccb$ created is simply an extension of $\cca$:
\begin{equation}
\ccb =\begin{bmatrix}
	\cca\\ 
	\zero
\end{bmatrix}.
\label{eq:MC2BDecomposeC}
\end{equation}

Finally, as Algorithm~\ref{alg:GZ2toMC2} creates new equations for
each row of $\AA$ independently, we will use 
$S_i$ to denote the subset of indices of the rows of $\BBhat$
that is created from $\AA_{i}$, and denote $m_i \defeq \abs{S_i}$,
and use $\BBhat_{i}$
 to denote these part of $\BBhat$ that corresponds
to these rows.

We observe that the sets of auxiliary variables created for $\BBhat_i$'s 
are disjoint.

\begin{algorithm}[H]
    \caption{\label{alg:GZ2toMC2solnback}\algMCTtoGZTsoln} 
    \KwIn{$m \times n$ matrix $\AA \in \genZtwoCl$, 
    $m' \times n'$ matrix $\BB \in \da$,
    vector $\cca \in \rea^{m}$,
    vector $\xxb \in \rea^{n'}$. } 
    \KwOut{ Vector $\xxa \in \rea^{n}$.} 
    \If{$\AA^{\trp} \cca = {\bf 0}$} 
    	{\Return {$\xxa \assign {\bf 0}$}}
    \Else
    	{\Return {$\xxa \assign \xxb_{1:n}$	}}
\end{algorithm}

\begin{lemma}
\label{lem:GZToMC2NNZ}
Let $\AA \in \calG_{z,2}$, and let $\BB$ be returned by running Algorithm \algGZTtoMCT ~on $\AA$.
Then, 
\begin{itemize}
\item $\nnz(\BB)= O(  \nnz(\AA) \log \norm{\AA}_{\max}) $;
\item  both the dimensions of $\BB$ are $ O(\nnz(\AA) \log
  \norm{\AA}_{\max})$.
\end{itemize}
\end{lemma}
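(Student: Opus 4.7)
The plan is to analyze Algorithm \algGZTtoMCT\ row by row: for each row $\AA_i$ of $\AA$, I will bound (a) the number of auxiliary equations added to $\mathcal{B}_i$, (b) the number of new variables introduced, and (c) the total $\nnz$ contributed to $\BB$ by the processing of row $i$. The key observation is that the iterative pair-and-replace procedure is essentially a ``binary carry'' process: once row $i$ is not already in $\da$ form, the algorithm processes the positive and negative coefficients separately, and at each level $r$ of the inner \textbf{while} loop only those indices $j$ for which bit $r$ of the current coefficient $|\AAhat_{ij}|$ is set get paired.

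The main technical step will be to prove that, for a fixed sign $s$, the total number of pairings executed across all levels $r$ is $O(\nnz(\AA_i)\log\norm{\AA}_{\max})$. For this I will introduce, for each level $r$, the quantity $N_r^s \defeq |\mathcal{I}^s_{odd}|$ at iteration $r$, and observe two facts: (i) initially (at level $0$) each index $j$ with $\AA_{ij}\neq 0$ contributes at most $\log_2\norm{\AA}_{\max}$ ``bits'' distributed across levels $0, 1, \ldots$, so the total number of originally-set bits is at most $\nnz(\AA_i)\log_2\norm{\AA}_{\max}$; and (ii) each pairing at level $r$ consumes two bits at level $r$ and creates at most one new bit at level $r+1$ (namely the coefficient $s\cdot 2^{r+1}$ of the newly introduced $\uu_t$). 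Hence $N_{r+1}^s \le (\text{original level-}(r+1)\text{ bits}) + N_r^s/2$, and summing geometrically gives $\sum_r N_r^s \le 2\nnz(\AA_i)\log_2\norm{\AA}_{\max}$, so the total number of pairings per row is $O(\nnz(\AA_i)\log\norm{\AA}_{\max})$. I also need to check that this argument implies the \textbf{while} loop terminates, which follows because $\sum_r N_r^s$ is finite and the loop exits as soon as the equation reaches difference/average form.

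Each pairing adds exactly one new variable to $\mathcal{X}$ and one average equation (with exactly three nonzero entries) to $\mathcal{B}_i$. Summing across the two sign choices $s\in\{-1,+1\}$ and across all rows $i$, and noting that rows already in $\da$ form contribute $O(1)$ nonzeros and no new variables, I obtain
\[
\sum_i |\mathcal{B}_i| = O(\nnz(\AA)\log\norm{\AA}_{\max}), \qquad |\mathcal{X}| - n = O(\nnz(\AA)\log\norm{\AA}_{\max}).
\]
The set $\mathcal{A}$ contains $m$ equations, each ending up as a single difference equation with $2$ nonzeros, contributing $O(m)=O(\nnz(\AA))$ nonzeros. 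Combining, $\nnz(\BB) \le 3\sum_i |\mathcal{B}_i| + 2m = O(\nnz(\AA)\log\norm{\AA}_{\max})$. The number of rows of $\BB$ is $m + \sum_i |\mathcal{B}_i| = O(\nnz(\AA)\log\norm{\AA}_{\max})$, and the number of columns is $|\mathcal{X}| \le n + O(\nnz(\AA)\log\norm{\AA}_{\max}) = O(\nnz(\AA)\log\norm{\AA}_{\max})$, using $n\le \nnz(\AA)$ since $\AA$ has no all-zero column.

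The main obstacle I expect is the careful bookkeeping of the bit-counting argument in the second step: one has to argue cleanly that the ``carry bit'' produced by a pairing at level $r$ contributes at most one to $N_{r+1}^s$ and cannot retroactively create new bits at already-processed levels, and that the geometric sum is valid even though the algorithm alternates sign-processing and may have many levels. Once that accounting is nailed down, the rest of the proof is a straightforward summation over rows.
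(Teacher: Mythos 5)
Your proposal is correct and takes essentially the same high-level approach as the paper's proof sketch: both analyze the pair-and-replace loop level by level and sum the work contributed at each level. The difference is mostly one of rigor: the paper states informally that the while loop runs at most $\log\norm{\AA}_{\max}$ times and that in each iteration $\nnz(\calA_i)=O(\nnz(\AA_i))$, and multiplies the two bounds; you instead set up the explicit recurrence $N^s_{r+1}\le B_{r+1}+N^s_r/2$ on the ``active bit'' count and sum it geometrically, obtaining $\sum_r N^s_r \le 2\sum_k B_k = O(\nnz(\AA_i)\log\norm{\AA}_{\max})$ directly. Your amortized argument is arguably cleaner, because it does not depend on the precise iteration count (carries can persist slightly beyond level $\log_2\norm{\AA}_{\max}$ before the row becomes difference/average form; your geometric tail handles those levels automatically, whereas the paper's iteration-count bound, read literally, glosses over them). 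One small corner case to polish: a nonzero integer of magnitude $M$ has $\lfloor\log_2 M\rfloor + 1$ bits, so $\sum_k B_k \le \nnz(\AA_i)(1+\log_2\norm{\AA}_{\max})$; without the $+1$ your bound vanishes when $\norm{\AA}_{\max}=1$. This is only a notational convention issue and does not affect the asymptotic claim.
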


\begin{proof}[Proof sketch.]
The second statement is implied by the first one.

	To prove the first statement, we focus on Algorithm \algGZTtoMCT~ for a single equation in $\AA_i$.
	The number of the while-iterations from line \ref{lin:while_start} to  \ref{lin:while_end}
	is at most $\log \norm{\AA}_{\max}$.
	We observe that 
	(1) in each iteration, the number of newly created variables and equations is at most
	the value of $\nnz(\calA_i)$ at the beginning of the iteration;
	(2) once an auxiliary variable is added to $\calA_i$, it must be replaced with a new auxiliary variable in the next iteration (if exists).
	Therefore, $\nnz(\calA_i) = O(\nnz(\AA_i))$ for every iteration.
So, $\nnz(\BB) = O(\nnz(\AA) \log \norm{\AA}_{\max})$.
\end{proof}

\subsection{Relation Between Exact Solvers}
\label{subsec:GZToMCExactSolver}

The following lemma characterizes the relation between the error of a solution to $(\AA,\cca)$
to that to $(\BB,\ccb)$.
We will exploit it to derive relations between exactly and approximately solving 
$(\AA,\cca)$ and $(\BB,\ccb)$, and to establish a bound for the condition number of the new 
constructed matrix $\BB$.

\begin{lemma}
\label{clm:exactReduction}
For any fixed $\xxa \in \rea^{n}$,
\[
\norm{\AA\xxa - \cca}_2^2
=
\frac{\alpha+1}{\alpha} \min_{\xxaux} 
\norm{\BB \begin{bmatrix}
		\xxa\\ 
		\xxaux
	\end{bmatrix}  - \ccb}_2^2.
\]
\end{lemma}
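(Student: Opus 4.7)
The plan is to decompose the squared error contribution from each original equation and use a per-equation minimization. Since the auxiliary variables $\xxaux_i$ introduced while processing equation $i$ are disjoint from those of any other equation (each new variable $\uu_t$ gets a fresh index $t$), the minimization over $\xxaux$ decouples as
\[
\min_{\xxaux} \norm{\BB \xxb - \ccb}_2^2 \;=\; \sum_{i=1}^{m} \min_{\xxaux_i} \Big[ (\AAhat_i \xxb - \cca_i)^2 + w_i \norm{\BBhat_i \xxb}_2^2 \Big].
\]
It thus suffices to prove the per-equation identity $(\AA_i \xxa - \cca_i)^2 = \tfrac{\alpha+1}{\alpha} \min_{\xxaux_i} [(\AAhat_i \xxb - \cca_i)^2 + w_i \norm{\BBhat_i \xxb}_2^2]$ for each $i$.

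Fix such an $i$ in the else-branch of \algGZTtoMCT, and enumerate the auxiliary equations added while processing $\AA_i$ as $k = 1, \dots, K$. The $k$-th is of the form $e_k := 2\uu_{t_k} - \uu_{j_k} - \uu_{l_k} = 0$, added into $\mathcal{A}_i$ with signed scalar $\beta_k := s_k \cdot 2^{r_k}$, and inserted into $\mathcal{B}_i$ with multiplicity $|\beta_k| = 2^{r_k}$ as a unit-coefficient $\da$ row (by line~\ref{lin:updateBi}). Set $y_0 := \AA_i \xxa - \cca_i$ and let $y_k$ denote the value of $e_k$ evaluated at $\xxb$. The structural observation driving the argument is that $\uu_{t_k}$ is freshly introduced at iteration $k$ and appears with coefficient $2$ in exactly one auxiliary equation, so the map $\xxaux_i \mapsto (y_1, \ldots, y_K)$ is surjective onto $\R^K$; the minimization may therefore be performed freely over $(y_1, \ldots, y_K)$. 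By construction,
\[
\AAhat_i \xxb - \cca_i = y_0 + \sum_{k=1}^{K} \beta_k y_k, \qquad \norm{\BBhat_i \xxb}_2^2 = \sum_{k=1}^{K} |\beta_k|\, y_k^2, \qquad w_i = \alpha N,
\]
where $N := \sum_k |\beta_k| = |\mathcal{B}_i|$.

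The key calculation is then to minimize $F(y_1, \ldots, y_K) = (y_0 + \sum_k \beta_k y_k)^2 + \alpha N \sum_k |\beta_k|\, y_k^2$ over $\R^K$. Setting $\partial F/\partial y_k = 0$ and writing $Z := y_0 + \sum_k \beta_k y_k$ yields $y_k = -(s_k/(\alpha N))\,Z$. Using the identity $\sum_k \beta_k s_k = \sum_k s_k^2 \cdot 2^{r_k} = N$ and substituting back gives $Z = y_0 - Z/\alpha$, hence $Z = \tfrac{\alpha}{\alpha+1} y_0$. Plugging this in, the minimum value simplifies to $\min F = \tfrac{\alpha}{\alpha+1}\, y_0^2$, which rearranges to the claimed per-equation identity. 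Summing over $i$ completes the proof.

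The main obstacle I anticipate is the careful bookkeeping of the multiset notation $s \cdot 2^r \{\,\cdot\,\}$ used in lines~\ref{lin:varReplace}--\ref{lin:updateBi}: I must verify that it realizes each $e_k$ in $\mathcal{B}_i$ with multiplicity exactly $|\beta_k|$ and as a unit-coefficient $\da$ row, and confirm that the free-parameterization property of $(y_1, \ldots, y_K)$ survives even when a variable $\uu_{t_k}$ is reused as some $\uu_{j_{k'}}$ or $\uu_{l_{k'}}$ at a later iteration $k' > k$. For the ``already a difference/average'' branch (where $\mathcal{B}_i = \emptyset$ but $w_i = \tfrac{\alpha}{\alpha+1}$), the identity must be checked separately against the block-decomposition conventions of~\eqref{eq:MC2BDecomposeB}, which I expect to reduce to a direct substitution.
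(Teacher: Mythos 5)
Your proposal is correct and reaches the same endpoint, but the route is genuinely different from the paper's. Both arguments begin with the same per-equation decoupling (justified, as you note, by disjointness of the auxiliary variable blocks across rows $i$). From there the paper splits the work in two: it first applies a weighted Cauchy--Schwarz inequality, using the identity $\eps_i = \hat\delta_i + \sum_{j\in S_i}\delta_j$ and the choice $w_i = \alpha m_i$, to show $\eps_i^2 \le \frac{\alpha+1}{\alpha}\bigl(\hat\delta_i^2 + w_i\sum_j\delta_j^2\bigr)$ for every $\xxaux$; it then separately proves (Claim~\ref{clm:errorsObtainable}) that the Cauchy--Schwarz equality condition $\hat\delta_i = w_i\delta_j$ is achievable by assigning auxiliary variables one at a time in creation order. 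You instead reparameterize the inner problem by the auxiliary residuals $(y_1,\ldots,y_K)$, argue that $\xxaux_i\mapsto(y_1,\ldots,y_K)$ is onto (the fresh variable $\uu_{t_k}$ gives the change-of-variables map a triangular Jacobian with nonzero diagonal, so the reuse of $\uu_{t_k}$ as $\uu_{j_{k'}}$ or $\uu_{l_{k'}}$ for $k'>k$ is harmless), and compute the minimum of the resulting strictly convex quadratic $F$ by setting the gradient to zero. The critical point you find, $Z=\frac{\alpha}{\alpha+1}y_0$ with $y_k = -\tfrac{s_k}{\alpha N}Z$, exactly matches the paper's construction ($\hat\delta_i=\frac{\alpha}{\alpha+1}\eps_i$, $\delta_j = \frac{1}{(\alpha+1)m_i}\eps_i$), so the two proofs converge on the same optimizer; yours simply skips the upper-bound/tightness split and computes the optimum directly, at the small price of having to verify the surjectivity claim explicitly (which the paper sidesteps by constructing a feasible $\xxaux$). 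The two bookkeeping concerns you flag --- the precise semantics of the $s\cdot 2^r\{\cdot\}$ multiset notation and the separate treatment of the if-branch with $\mathcal{B}_i=\emptyset$ and $w_i=\tfrac{\alpha}{\alpha+1}$ --- are genuine ambiguities in the algorithm's pseudo-code, but they affect the paper's own proof in exactly the same way and both resolve under the intended reading, so they do not constitute a gap in your argument.
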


Lemma~\ref{clm:exactReduction} immediately implies the following corollaries.

\begin{corollary}
  Given an \ls~$(\AA,\cca)$ where $\AA \in
\genZtwoCl$, let \ls~$(\BB,\cc) \leftarrow \algGZTtoMCT(\AA,\cca,0)$.
Then, $\BB \in \da$, and if $\xxb =\begin{bmatrix}
	\xxa\\ 
	\xxaux
\end{bmatrix}$ is a solution to \ls~$(\BB,\ccb)$, then $\xxa$ is a solution to \ls~$(\AA,\cca)$.
\end{corollary}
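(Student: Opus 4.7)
The plan is to derive this corollary as an immediate consequence of Lemma~\ref{clm:exactReduction} together with a structural check that $\BB \in \da$. First I would take the infimum over $\xxa$ on both sides of the lemma's identity to tie the two optimal values by a positive scalar:
\[
\min_{\xxa}\norm{\AA\xxa - \cca}_2^2 \;=\; \frac{\alpha+1}{\alpha}\min_{\xxb}\norm{\BB\xxb - \ccb}_2^2.
\]

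The key observation is that if $\xxb = \begin{bmatrix}\xxa \\ \xxaux\end{bmatrix}$ is a minimizer of $\norm{\BB\xxb - \ccb}_2$ jointly over both blocks, then $\xxaux$ is in particular a minimizer for this fixed $\xxa$, so
\[
\norm{\BB\xxb - \ccb}_2^2 \;=\; \min_{\xxaux'}\norm{\BB\begin{bmatrix}\xxa\\ \xxaux'\end{bmatrix} - \ccb}_2^2.
\]
Applying Lemma~\ref{clm:exactReduction} to the right-hand side and substituting into the previous display, I would conclude
\[
\norm{\AA\xxa - \cca}_2^2 \;=\; \frac{\alpha+1}{\alpha}\norm{\BB\xxb - \ccb}_2^2 \;=\; \min_{\xxa'}\norm{\AA\xxa' - \cca}_2^2,
\]
which is exactly the statement that $\xxa \in \argmin_{\xxa'}\norm{\AA\xxa' - \cca}_2$.

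The remaining structural claim $\BB \in \da$ I would verify by direct inspection of \algGZTtoMCT: each equation retained in $\mathcal{A}_i$ at the moment the inner while-loop exits is, by the loop's termination condition, either a difference equation or an average equation; each equation added to $\mathcal{B}_i$ has the shape $2\uu_t - (\uu_{j_k} + \uu_{l_k}) = 0$, which is an average equation (the scalar prefactor $s\cdot 2^{r}$ and the weighting $w_{i}^{1/2}$ can be absorbed since the right-hand side is zero, so the canonical $\da$ row is recovered). The main subtlety I anticipate is interpreting the $\alpha = 0$ argument passed to the algorithm in the corollary, since Lemma~\ref{clm:exactReduction}'s constant $\frac{\alpha+1}{\alpha}$ is undefined at that value; I would handle this either by noting that for any $\alpha > 0$ the produced matrix coincides with $\BB$ up to row-scaling so the identity above transfers by homogeneity, or by directly observing that at $\alpha = 0$ the $\WW^{1/2}\BBhat$ block vanishes and the same inner-minimum argument on the main block $\AAhat$ alone recovers a minimizer of $\AA\xx = \cca$.
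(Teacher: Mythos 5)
Your main argument is the right one: the corollary is a direct consequence of Lemma~\ref{clm:exactReduction}, obtained by chaining the nesting-of-minima observation (a joint minimizer $\xxb$ is in particular a minimizer over $\xxaux$ for its own fixed $\xxa$) with the lemma's identity on both sides. The paper gives no separate proof and simply declares the corollary an immediate consequence, and your chain of equalities is exactly the derivation being invoked. Your $\BB\in\da$ check is at the level of rigor the paper itself offers: the pseudocode as written accumulates scale factors $s\cdot 2^r$ and weights $w_i^{1/2}$ that would have to be normalized away for the rows to literally have coefficients in $\{1,-1\}$ or $\{1,1,-2\}$, and you acknowledge this without spelling it out, which matches the paper's own imprecision.

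Where I cannot follow you is the treatment of the argument $\alpha = 0$ in $\algGZTtoMCT(\AA,\cca,0)$. Neither of your two proposed fixes survives scrutiny. The homogeneity argument does not transfer: the matrices produced for different $\alpha$ differ by row-scalings of the auxiliary block, and row-scaling in a least-squares problem changes the minimizer set; scaling rows to zero (which is what $\alpha = 0$ literally does, since $w_i = \alpha\,|\mathcal{B}_i|$) destroys information and is not a relabeling of the $\alpha > 0$ problem. The second proposal fares no better. If the $\WW^{1/2}\BBhat$ block vanishes, Claim~\ref{clm:errorsObtainable} taken at $\alpha=0$ shows that for \emph{every} fixed $\xxa$ there is a choice of $\xxaux$ making the else-branch rows of $\AAhat$ exactly satisfied, i.e.
$\AAhat_i\,\xxb = \cca_i$; the objective $\min_{\xxaux}\norm{\BB\xxb-\ccb}_2^2$ collapses to a quantity that no longer encodes $\norm{\AA\xxa-\cca}_2^2$, and in particular the minimizer of the $\BB$-problem no longer determines a minimizer of the $\AA$-problem. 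So the "inner-minimum argument on the main block alone" does not recover the conclusion — it refutes it for generic inputs. The likely reading is that $\alpha = 0$ is a typo in the paper (the text introducing the algorithm requires $\alpha > 0$, and the approximate-solver and condition-number lemmas in the same section use $\alpha = 1$); with any fixed $\alpha > 0$ your first three steps give a complete and correct proof, and you should simply say that rather than try to give meaning to $\alpha = 0$.
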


\begin{corollary}
\[
\min_{\xxa}	\norm{\AA\xxa - \cca}_2^2
= \frac{\alpha+1}{\alpha} \min_{\xxb} 
\norm{\BB \xxb  - \ccb}_2^2.
\]
\end{corollary}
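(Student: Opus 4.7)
The plan is to exploit the disjointness of the auxiliary variables introduced for different rows of $\AA$ in \algGZTtoMCT, which will let me reduce the global minimization over $\xxaux$ to a sum of independent per-row minimizations. Partitioning $\xxaux = (\xxaux_i)_{i=1}^m$ according to which equation $\AA_i$ each auxiliary variable was created for, and correspondingly splitting $\BBhat$ into blocks $\BBhat_i$ that act only on $\xxa$ together with $\xxaux_i$, I obtain
\[
\min_{\xxaux} \norm{\BB\xxb - \ccb}_2^2 \;=\; \sum_{i=1}^m \min_{\xxaux_i} \Bigl[(\AAhat_i \xxb - \cca_i)^2 + w_i \norm{\BBhat_i \xxb}_2^2\Bigr].
\]
It therefore suffices to show, for every $i$ and every fixed $\xxa$, that the $i$-th summand on the right equals $\tfrac{\alpha}{\alpha+1}(\AA_i \xxa - \cca_i)^2$.

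The core of the argument will be an algebraic identity that I would prove by induction on the pair-and-replace iterations. At each iteration the algorithm adds the \emph{same} scaled copy $s\cdot 2^{r}\{2\uu_t - (\uu_j + \uu_l) = 0\}$ both to $\mathcal{A}_i$ (altering $\AAhat_i$) and to $\mathcal{B}_i$ (contributing the corresponding row of $\BBhat_i$), so the cumulative modification of $\AAhat_i$ telescopes against the cumulative rows of $\BBhat_i$:
\[
\AAhat_i \xxb - \cca_i \;=\; (\AA_i\xxa - \cca_i) + \sum_{k \in S_i} \BBhat_{i,k} \xxb.
\]
Setting $b_i = \AA_i\xxa - \cca_i$ and $\tilde y_{i,k} = \BBhat_{i,k}\xxb$, the per-row objective simplifies to the scalar quadratic $\bigl(b_i + \sum_k \tilde y_{i,k}\bigr)^2 + w_i \sum_k \tilde y_{i,k}^2$.

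I then need a reachability claim: for any fixed $\xxa$, the vector $(\tilde y_{i,k})_k$ ranges freely over $\R^{m_i}$ as $\xxaux_i$ varies. This follows by processing the auxiliary variables in creation order, since each $\uu_{t_k}$ is a fresh variable that appears in exactly one averaging equation $\BBhat_{i,k}$: once $\uu_{j_k},\uu_{l_k}$ have been set, I can solve for $\uu_{t_k}$ to hit any prescribed value of $\tilde y_{i,k}$ without disturbing the $\tilde y_{i,k'}$ from earlier steps. Granted this, the per-row minimization becomes an unconstrained symmetric quadratic in $(\tilde y_{i,k})$; its minimizer is the constant vector $\tilde y_{i,k} = -b_i/(w_i + m_i)$ and its optimal value is $b_i^2 w_i/(w_i + m_i)$. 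The algorithm's choice $w_i = \alpha\,|\mathcal{B}_i| = \alpha m_i$ then collapses the ratio $w_i/(w_i + m_i)$ to exactly $\alpha/(\alpha+1)$, which is precisely what the lemma demands.

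The rows for which $\AA_i$ is already a difference or average equation need to be handled as a degenerate case: there $\BBhat_i$ is empty and $\AAhat_i\xxb = \AA_i\xxa$, and the algorithm's dedicated weight $w_i = \alpha/(\alpha+1)$ rescales the main equation so that its contribution is again $\tfrac{\alpha}{\alpha+1}(\AA_i\xxa - \cca_i)^2$. Summing the per-row identities over $i$ and multiplying through by $(\alpha+1)/\alpha$ recovers the claim. The main obstacle I anticipate is the careful bookkeeping of the scalar prefactors $s\cdot 2^r$ inserted simultaneously into $\mathcal{A}_i$ and $\mathcal{B}_i$: verifying that they really cancel in the telescoping identity---so that the final ratio depends only on $\alpha$ and the cardinality $m_i$, not on the individual exponents $r_k$---is the step most likely to harbor subtle sign and indexing errors, and is also where the apparently miraculous choice $w_i = \alpha m_i$ becomes the unique right setting.
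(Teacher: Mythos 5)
Your proof is correct, and it is essentially a re-derivation of Lemma~\ref{clm:exactReduction} rather than an argument from it: the paper obtains the corollary in one line by taking $\min_{\xxa}$ on both sides of that lemma and noting that $\min_{\xxa}\min_{\xxaux} = \min_{\xxb}$. Your version establishes the same per-row content, but where the paper applies Cauchy--Schwarz to $\eps_i = \hat{\delta}_i \pm \sum_{j\in S_i}\delta_j$ and then invokes Claim~\ref{clm:errorsObtainable} to show the equality condition is attainable, you instead minimize the same quadratic $\bigl(b_i + \sum_k \tilde y_{i,k}\bigr)^2 + w_i \sum_k \tilde y_{i,k}^2$ in closed form, which is arguably cleaner and makes the origin of the factor $w_i/(w_i+m_i) = \alpha/(\alpha+1)$ transparent. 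Both routes need the same ingredients: your telescoping identity is exactly the paper's observation that $\AAhat_i$ accumulates the rows of $\BBhat_i$ on top of $\AA_i$, and your reachability claim is the paper's Claim~\ref{clm:errorsObtainable} (disjointness of auxiliary variables across rows, plus the fact that each freshly created $\uu_t$ lets you prescribe the new residual independently of earlier ones). The sign ambiguity you flag in the telescoping identity is immaterial since each $\BBhat_{i,k}$ row has its right-hand side equal to $0$ and the residuals are free over $\R$, so negating them does not change the minimum. The one place where both your account and the pseudocode are loose is the already-difference/average row: as written, {\algGZTtoMCT} defines $w_i = \alpha/(\alpha+1)$ for that case but only applies $\WW^{1/2}$ to the rows of $\mathcal{B}$, so for the identity to hold exactly one must read the algorithm as also scaling those $\mathcal{A}$ rows by $w_i^{1/2}$ — your description correctly states the intended behavior there.
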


The optimal solutions to $\min_{\xxb} \norm{\BB\xxb -
  \ccb}_2$ and $\min_{\xxa} \norm{\AA\xxa - \cca}_2$ have a
one-to-one map. 
However, the optimal values are different; when $\alpha \rightarrow \infty$, the two optimal values approach
the same value.

It remains to prove Lemma \ref{clm:exactReduction}.

\begin{proof}[Proof of Lemma \ref{clm:exactReduction}]
Fix $\xxa$, and let $\xxaux$ be arbitrary.
For each $i \in [m]$ and $j \in S_i$, the set of the auxiliary equations created for $\AA_i$, we denote
\[
	\eps_i \defeq \AA_i \xxa - \cca_i, ~ 
	\hat{\delta}_i \defeq \AAhat_i \xxb - \cca_i, ~ 
	\delta_j \defeq \BBhat_j \xxb.
\]
Then 
\[
\eps_i =	\hat{\delta}_i + \sum_{j \in S_i} \delta_j 
\]

By the Cauchy-Schwarz inequality and $w_i = \alpha m_i$,
\[
\eps_i^2 = \left( \hat{\delta}_i + \sum_{j \in S_i} \delta_j \right)^2
\le \left( \hat{\delta}_i^2 + w_i \sum_{j \in S_i} \delta_j^2 \right) \left( 1 + \frac{m_i}{w_i} \right)
=  \left( \hat{\delta}_i^2 + w_i \sum_{j \in S_i} \delta_j^2 \right) \cdot \frac{\alpha+1}{\alpha}
\label{eqn:cauchy}
\]
that is, 
\[
	(\AA_i \xx^A - \cc_i^A)^2 \le \frac{\alpha +1}{\alpha} 
	\norm{\begin{bmatrix}
			\AAhat_i \\
			\WW^{1/2}_i \BBhat_i
		\end{bmatrix}  \xxb
		-  \begin{bmatrix}
			\cca_i \\
			{\bf 0}
		\end{bmatrix}
		}_2^2.
\]
The equality holds if and only if
\begin{align}
\label{eqn:cs_condition}
\hat{\delta}_i = w_i \delta_j,\qquad \forall j \in S_i.
\end{align}
Sum over all the rows $i \in [n_A]$: 
\begin{align}
\norm{\AA\xxa - \cca}_2^2
\leq 
\frac{\alpha+1}{\alpha}
\norm{\BB\begin{bmatrix}
		\xxa \\
		\xxaux
	\end{bmatrix} - \ccb
}_2.
\label{eqn:QFineq}
\end{align}

It remains to show 
when we take the minimum over $\xxaux$, the right-hand side of Eq. \eqref{eqn:QFineq} equals the left-hand side. 
That is, for any fixed $\xxa$, there exists $\xxaux$ such that Eq. \eqref{eqn:cs_condition}  holds.

In particular, we will momentarily prove the following Claim.
\begin{claim}
\label{clm:errorsObtainable}
 For any fixed $\xxa$ and its associated error
$\eps_i$ for each row $i$ of $\AA$ such that $\AA_i$ is neither 
a difference equation nor an average equation, the following linear system 
\begin{align}
\AAhat_i \begin{bmatrix}
	\xxa \\
	\xxaux
\end{bmatrix}
 & = \cca_i + \frac{\alpha}{\alpha+1} \eps_i,
\label{eqn:error_ls_a} \\
\BBhat_j \begin{bmatrix}
	\xxa  \\
	\xxaux
\end{bmatrix}
 & = \frac{1}{(\alpha+1)m_i} \eps_i, \forall j \in S_i.
\label{eqn:error_ls_b}
\end{align} 
has a solution (which may not be unique).
\end{claim}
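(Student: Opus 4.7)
The plan is to prove the claim constructively by exploiting the triangular structure of the auxiliary variables created by \algGZTtoMCT. Within the processing of row $i$, the auxiliary variables $\uu_{t_1}, \ldots, \uu_{t_K}$ are introduced in a specific order, and each $\uu_{t_k}$ appears in an averaging equation $2 \uu_{t_k} - (\uu_{j_k} + \uu_{l_k}) = 0$ where $\uu_{j_k}$ and $\uu_{l_k}$ are either entries of $\xxa$ (already fixed) or among the previously introduced auxiliary variables $\uu_{t_1}, \ldots, \uu_{t_{k-1}}$. Thus each $\uu_{t_k}$ can be freely assigned in sequence, and via this assignment we control the residual $\BBhat_j \xxb$ for every copy $j \in S_i$ of the corresponding equation.

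The key algebraic observation, already used implicitly in the proof of Lemma~\ref{clm:exactReduction}, is the linear identity
$$\eps_i \;=\; \hat{\delta}_i + \sum_{j \in S_i} \delta_j$$
holding for every $\xxb = \begin{bmatrix}\xxa \\ \xxaux\end{bmatrix}$, where $\eps_i = \AA_i \xxa - \cca_i$, $\hat{\delta}_i = \AAhat_i \xxb - \cca_i$, and $\delta_j = \BBhat_j \xxb$. I would derive this directly from the update rule in \algGZTtoMCT: the main equation $\mathcal{A}_i$ is modified by adding $s \cdot 2^r$ copies of each auxiliary equation, and these same scaled copies are deposited into $\mathcal{B}_i$, so telescoping the updates gives the above identity (with sign conventions on $\BBhat_j$ chosen so that every copy contributes with coefficient $+1$).

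Given the identity, I would assign the auxiliary variables so that $\BBhat_j \xxb = \tfrac{1}{(\alpha+1) m_i} \eps_i$ for every $j \in S_i$. Since all $|s_k \cdot 2^{r_k}|$ copies of the equation introduced at step $k$ are identical, the assignment reduces to choosing, for each $k$,
$$\uu_{t_k} \;=\; \frac{\uu_{j_k} + \uu_{l_k}}{2} \;+\; \frac{\sigma_k}{2(\alpha+1) m_i}\,\eps_i,$$
where $\sigma_k \in \{\pm 1\}$ matches the orientation of $\BBhat_j$ for copies coming from step $k$. This verifies \eqref{eqn:error_ls_b}, and substituting into the identity yields
$$\hat{\delta}_i \;=\; \eps_i - \sum_{j \in S_i} \delta_j \;=\; \eps_i - m_i \cdot \frac{1}{(\alpha+1)m_i}\eps_i \;=\; \frac{\alpha}{\alpha+1}\,\eps_i,$$
which is \eqref{eqn:error_ls_a}. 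The main obstacle is bookkeeping the sign/multiplicity convention: the multiplier $s \cdot 2^r$ in \algGZTtoMCT\ simultaneously sets the orientation of each $\BBhat_j$ and the number of copies added to $\mathcal{B}_i$, and these must align with the sign convention used to derive $\eps_i = \hat{\delta}_i + \sum_j \delta_j$ before the explicit formula for $\uu_{t_k}$ is meaningful. As a sanity check I would also confirm by dimension counting that, restricted to the auxiliary variables, the combined map $\xxaux \mapsto (\hat{\delta}_i, (\delta_j)_{j \in S_i})$ surjects onto the affine subspace cut out by the identity, so the target point in the claim is indeed attainable.
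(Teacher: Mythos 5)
Your overall strategy is the same as the paper's: exploit the triangular ordering of the auxiliary variables, assign them in sequence so that each row $\BBhat_j$ for $j\in S_i$ attains the prescribed residual $\tfrac{1}{(\alpha+1)m_i}\eps_i$, and then invoke the telescoping relation between $\AAhat_i$, $\AA_i$, and the $\BBhat_j$'s to conclude \eqref{eqn:error_ls_a}. That structure is sound. However, your explicit assignment formula has a genuine bug stemming from a misreading of how the algorithm builds $\mathcal{B}_i$.

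You write that ``all $|s_k\cdot 2^{r_k}|$ copies of the equation introduced at step $k$ are identical,'' and propose
\[
\uu_{t_k} \;=\; \frac{\uu_{j_k} + \uu_{l_k}}{2} \;+\; \frac{\sigma_k}{2(\alpha+1) m_i}\,\eps_i,
\qquad \sigma_k \in \{\pm 1\}.
\]
But line~\ref{lin:updateBi} of \algGZTtoMCT\ does not insert $|s\cdot 2^r|$ separate copies of $\pm(2\uu_t - \uu_{j_k}-\uu_{l_k})=0$; it inserts a \emph{single} row equal to $s\cdot 2^r\bigl(2\uu_t - \uu_{j_k}-\uu_{l_k}\bigr)=0$ (the footnote on multiset notation confirms this: $\cup$ with a scaled singleton adds one scaled equation). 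Hence $m_i = |S_i|$ is the number of pair-replacement steps, not $\sum_k |s_k 2^{r_k}|$, and for the row $j$ created at step $k$ one has
\[
\BBhat_j \begin{bmatrix}\xxa\\ \xxaux\end{bmatrix}
= s_k 2^{r_k}\bigl(2\uu_{t_k} - \uu_{j_k}-\uu_{l_k}\bigr)
= \frac{s_k 2^{r_k}\,\sigma_k}{(\alpha+1)m_i}\,\eps_i
\]
under your formula. For this to equal the target $\tfrac{1}{(\alpha+1)m_i}\eps_i$ you need $\sigma_k = 1/(s_k 2^{r_k})$, which is not $\pm 1$ whenever $r_k \ge 1$, so \eqref{eqn:error_ls_b} is violated at every step with $r_k \ge 1$. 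Your surjectivity sanity check does not catch this because it only establishes \emph{existence} of some suitable $\xxaux$, not that your particular formula produces it.

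The fix is immediate and is exactly what the paper does: define $\uu_{t_k}$ implicitly by the scalar equation
\[
s_k 2^{r_k}\bigl(2\uu_{t_k} - (\uu_{j_k}+\uu_{l_k})\bigr) = \frac{1}{(\alpha+1)m_i}\,\eps_i,
\quad\text{i.e.}\quad
\uu_{t_k} = \frac{\uu_{j_k}+\uu_{l_k}}{2} + \frac{1}{2\,s_k 2^{r_k}(\alpha+1)m_i}\,\eps_i.
\]
With this correction your telescoping computation goes through and \eqref{eqn:error_ls_a} follows, so the proof repairs cleanly into the paper's argument.
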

Since every auxiliary variable is associated with only one row $i$ of
$\AA$, Claim~\ref{clm:errorsObtainable} implies that we can choose $\xxaux$ s.t. all these
linear subsystems are satisfied simultaneously.
Given such a choice of $\xxaux$, 
Eq. ~\eqref{eqn:cs_condition} is satisfied and thus
\[
\norm{\AA\xxa - \cca}_2^2
=
\frac{\alpha+1}{\alpha}
\norm{\BB\begin{bmatrix}
		\xxa \\
		\xxaux
	\end{bmatrix}- \ccb
}_2^2
.
\]
Together with Eq. \eqref{eqn:QFineq},
this completes the proof of Lemma \ref{clm:exactReduction}.
\end{proof}

\begin{proof}[Proof of Claim~\ref{clm:errorsObtainable}]

We will construct an assignment to all the variables of
$\xxaux$ such that Eq. \eqref{eqn:error_ls_a}
and~\eqref{eqn:error_ls_b} are satisfied.
We start with an assignment $\xxa$ to the main variables, and we
then assign values to auxiliary variables in the order that they are
created by the algorithm~{\algGZTtoMCT}.
Note that we will refer to variables $j_k$ and $l_k$ only in the
context of a fixed value of $t$, which always ensures that they are
unambiguously defined.
When the algorithm processes pair $\xxb_{j_k},\xxb_{l_k} = \uu_{j_k},\uu_{l_k}  $,
the value of these variables will have been set already,  but
$\xxb_{t} = \uu_t$ and the other newly created auxiliary variables have not.
Suppose the new equation we added to $\mathcal{B}_i$ is 
$s\cdot 2^{r} 
\setof{ \left( 2 \uu_{t} - \left( \uu_{j_{k}} + \uu_{l_{k}}
\right)\right) = 0}$
in line~\ref{lin:updateBi}.
We simply assign $\uu_t$ such that 
\[
	s\cdot 2^{r} 
	 \left( 2 \uu_{t} - \left( \uu_{j_{k}} + \uu_{l_{k}}
	\right)\right) = \frac{1}{(\alpha+1) m_i} \eps_i.
\]
Every auxiliary variable is associated with only one row, so we never
get multiple assignments to a variable using this procedure.

By the above setting, Eq. \eqref{eqn:error_ls_b} is satisfied.
It remains to check Eq. \eqref{eqn:error_ls_a}.
Sum up all the above equations over the auxiliary equations in $\mathcal{B}_i$ with minus sign
and the original equation $\AA_i \xxa = \cca_i + \eps_i$:
\begin{align*}
	\AAhat_i \xx = \cca_i + \eps_i - \frac{1}{\alpha+1} \eps_i
	= \cca_i +  \frac{\alpha}{\alpha+1} \eps_i.
\end{align*}
This completes the proof.
\end{proof}

\subsubsection{Relation to Schur Complements}
\label{subsec:SCRelation}

Note that we can write $\BB$ as $(\BB^{\textnormal{A}} ~~ \BB^{\textnormal{aux}})$, where $\BB^{\textnormal{A}}$ corresponds to 
the original variables and $\BB^{\textnormal{aux}}$ the auxiliary variables.
Then, 
\begin{align*}
	\BB^\top \BB = 
	\begin{bmatrix}
		\begin{array}{cc}
			(\BB^{\textnormal{A}})^\top \BB^{\textnormal{A}} & (\BB^{\textnormal{A}})^\top \BB^{\textnormal{aux}} \\
			(\BB^{\textnormal{aux}})^\top \BB^{\textnormal{A}} & (\BB^{\textnormal{aux}})^\top \BB^{\textnormal{aux}}
		\end{array}
	\end{bmatrix}.
\end{align*}
Lemma \ref{clm:exactReduction} essentially states that $\frac{\alpha}{\alpha+ 1} \AA^\top \AA$
is the Schur complement of $(\BB^{\textnormal{aux}})^\top \BB^{\textnormal{aux}}$ of $\BB \BB^\top$.
See the following definition and a fact of Schur complement.

\begin{definition}[Schur complement]
Let $\CC \in \mathbb{R}^{n \times n}$ be a $2 \times 2$ block matrix: 
$\CC = \left(\begin{array}{cc}
\CC_{11} & \CC_{12} \\
\CC_{12}^{\trp} & \CC_{22}
\end{array} \right)$ 
The \emph{Schur complement} of the block $\CC_{22}$ of $\CC$ is
\[
\CC \slash \CC_{22} \defeq \CC_{11} - \CC_{12}\CC^{-1}_{22}\CC_{12}^{\trp}.
\]
If $\CC_{22}$ is not invertible, then we replace the inverse with the pseudo-inverse
\end{definition}

Schur complement arises from block Gaussian elimination.
Schur complement has the following important fact.

\begin{fact}[Schur complement and minimizer]
For any fixed vector $\xx$, 
\[
\min_{\yy} \left( \begin{array}{cc}
\xx^{\trp} & \yy^{\trp}
\end{array} \right) \CC \left( \begin{array}{c}
\xx \\
\yy
\end{array} \right) 
= \xx^{\trp} (\CC \slash \CC_{22}) \xx.
\]
\label{clm:schur}
\end{fact}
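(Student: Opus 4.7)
The plan is to treat the quadratic form as a convex function of $\yy$ with $\xx$ held fixed, and minimize it by completing the square in $\yy$. Expanding the block product gives
\[
\begin{pmatrix} \xx^{\trp} & \yy^{\trp} \end{pmatrix} \CC \begin{pmatrix} \xx \\ \yy \end{pmatrix} = \xx^{\trp} \CC_{11} \xx + 2 \yy^{\trp} \CC_{12}^{\trp} \xx + \yy^{\trp} \CC_{22} \yy,
\]
so the only $\yy$-dependence is in the last two terms, which form an affine-plus-quadratic function of $\yy$ governed by $\CC_{22}$. In all our applications $\CC = \BB^{\trp}\BB$ is PSD, which makes $\CC_{22}$ PSD as a principal submatrix, so this is a convex minimization and the stationary point is a global minimizer.

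In the invertible case, I would add and subtract $\xx^{\trp}\CC_{12}\CC_{22}^{-1}\CC_{12}^{\trp}\xx$ to rewrite the expression as
\[
(\yy + \CC_{22}^{-1}\CC_{12}^{\trp}\xx)^{\trp} \CC_{22} (\yy + \CC_{22}^{-1}\CC_{12}^{\trp}\xx) + \xx^{\trp}\bigl(\CC_{11} - \CC_{12}\CC_{22}^{-1}\CC_{12}^{\trp}\bigr)\xx.
\]
The second summand is exactly $\xx^{\trp}(\CC/\CC_{22})\xx$ and is $\yy$-independent, while the first is nonnegative since $\CC_{22}\succeq \zero$ and vanishes at $\yy^{*} = -\CC_{22}^{-1}\CC_{12}^{\trp}\xx$, which gives the claimed identity.

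To handle the general case where $\CC_{22}$ may be singular, I would replace $\CC_{22}^{-1}$ by the Moore--Penrose pseudo-inverse $\CC_{22}^{\pinv}$ throughout. The completion-of-squares manipulation still works provided the cross term $\CC_{12}\CC_{22}^{\pinv}\CC_{22}\CC_{22}^{\pinv}\CC_{12}^{\trp}$ collapses to $\CC_{12}\CC_{22}^{\pinv}\CC_{12}^{\trp}$; equivalently, I need $\CC_{22}\CC_{22}^{\pinv}\CC_{12}^{\trp} = \CC_{12}^{\trp}$, i.e.\ that every column of $\CC_{12}^{\trp}$ lies in $\im(\CC_{22})$. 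This is the one step that needs care and is the only real obstacle: it follows from positive semidefiniteness of the full block matrix $\CC$, since if some column of $\CC_{12}^{\trp}$ had a component orthogonal to $\im(\CC_{22})$, one could exhibit a vector $(\xx;\yy)$ on which the quadratic form is unbounded below, contradicting $\CC \succeq \zero$. Once this identity is established, the same completion-of-squares argument yields the minimizer $\yy^{*} = -\CC_{22}^{\pinv}\CC_{12}^{\trp}\xx$ and the minimum value $\xx^{\trp}(\CC/\CC_{22})\xx$, completing the proof.
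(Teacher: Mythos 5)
Your proof is correct and follows essentially the same route as the paper's: expand the block quadratic form, complete the square (equivalently, set the $\yy$-gradient to zero), and substitute the minimizer $\yy^* = -\CC_{22}^{\pinv}\CC_{12}^{\trp}\xx$. The paper's version simply differentiates, plugs in the stationary point with the pseudo-inverse, and declares victory; you go further and explicitly supply the two missing ingredients needed to justify that manipulation when $\CC_{22}$ is singular --- namely that $\CC_{22}\succeq\zero$ (so the stationary point is a genuine minimum) and that $\im(\CC_{12}^{\trp})\subseteq\im(\CC_{22})$ (so that $\CC_{22}\CC_{22}^{\pinv}\CC_{12}^{\trp}=\CC_{12}^{\trp}$ and the completion of squares is valid). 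Both facts follow from positive semidefiniteness of the full matrix $\CC$, as you observe, and indeed without some such hypothesis the minimum would be $-\infty$; the paper implicitly relies on this since it only ever applies the fact to $\CC = \BB^{\trp}\BB$. So your proof is the same argument done more carefully, which is a modest but genuine improvement in rigor over what the paper writes.
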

\begin{proof}
We expand the left hand side,
\begin{align}
\left( \begin{array}{cc}
\xx^{\trp} & \yy^{\trp}
\end{array} \right) \CC \left( \begin{array}{c}
\xx \\
\yy
\end{array} \right) 
= \xx^{\trp}\CC_{11}\xx
+ 2\xx^{\trp} \CC_{12}\yy + \yy^{\trp} \CC_{22} \yy.
\label{eqn:schur_lhs}
\end{align}
Taking derivative w.r.t. $\yy$ and setting it to be 0 give that
\[
2\CC_{22}\yy + 2 \CC_{12}^{\trp} \xx = {\bf 0}.
\]
Plugging $\yy = - \CC_{22}^{\dagger} \CC_{12}^{\trp} \xx$ into Eq. \eqref{eqn:schur_lhs}, 
\[
\min_{\yy} \left( \begin{array}{cc}
\xx^{\trp} & \yy^{\trp}
\end{array} \right) \CC \left( \begin{array}{c}
\xx \\
\yy
\end{array} \right) 
= \xx^{\trp}\CC_{11}\xx - \xx^{\trp} \CC_{12}\CC_{22}^{\dagger} \CC_{12}^{\trp} \xx
= \xx^{\trp} (\CC \slash \CC_{22}) \xx.
\]
This completes the proof.
\end{proof}


\newcommand{\zzn}{\tilde{\zz}}
\newcommand{\zzp}{\hat{\zz}}

\subsection{Relation Between Approximate Solvers}
\label{subsec:GZ2ToMC2apx}

We now show that approximate solvers for $\BB$ also
translate to approximate solvers for $\AA$.

\begin{lemma}
\label{lem:mctogenerr}
Let \lsa$(\AA,\cca, \epsa)$ be an instance over $\calG_{z,2}$ where $\epsa \in (0,1)$. 
Let $(\BB,\ccb) \assign \algGZTtoMCT(\AA, \cca, \alpha = 1)$, and 
\[
\eps^{\textnormal{B}} \le  \frac{\epsa}{\sqrt{n_A m_A} \cdot  \norm{\AA}_{\max} \norm{\cca}_2}.
\]
Suppose $\xxb$ is a solution to \lsa$(\BB,\ccb,\eps^{\textnormal{B}})$, and 
$\xxa \leftarrow \algMCTtoGZTsoln(\AA, \BB, \cca, \xxb)$. 
Then, $\xxa$ is a solution to \lsa$(\AA,\cca,\epsa)$.
\end{lemma}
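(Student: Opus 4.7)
The plan is to build the proof on Lemma~\ref{clm:exactReduction} (instantiated with $\alpha=1$, as chosen in the hypothesis), together with Claim~\ref{lm:projection_norm} to pin down a lower bound on $\norm{\PPi_{\AA}\cca}_2$. First I would dispose of the trivial case: if $\AA^\trp \cca = \zero$, the algorithm \algMCTtoGZTsoln{} returns $\xxa = \zero$, and since $\PPi_{\AA} \cca = \zero$ in this case, the $\lsa(\AA,\cca,\epsa)$ requirement $\norm{\AA\xxa - \PPi_{\AA}\cca}_2 \leq \epsa \norm{\PPi_{\AA}\cca}_2$ is satisfied with both sides equal to zero.

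In the remaining case $\AA^\trp \cca \neq \zero$, the integrality of $\AA$ and $\cca$ forces $\norm{\AA^\trp \cca}_2 \geq 1$, so Claim~\ref{lm:projection_norm} yields
\[
\norm{\PPi_{\AA}\cca}_2^2 \;\geq\; \frac{1}{\lambda_{\max}(\AA^\trp \AA)} \;\geq\; \frac{1}{\norm{\AA}_F^2} \;\geq\; \frac{1}{n_A m_A \norm{\AA}_{\max}^2}.
\]
This is the one place where we extract the polynomial dependence that ultimately appears in the required smallness of $\eps^B$.

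Next I would extract the main approximation inequality. Applying Lemma~\ref{clm:exactReduction} with $\xxa$ fixed and the minimizing auxiliary choice of $\xxaux$, and noting that the specific $\xxaux$ that appears inside $\xxb$ can only increase the right-hand side, gives
\[
\norm{\AA\xxa - \cca}_2^2 \;\leq\; 2\, \norm{\BB \xxb - \ccb}_2^2.
\]
Using the orthogonal decompositions $\norm{\AA\xxa - \cca}_2^2 = \norm{\AA\xxa - \PPi_{\AA}\cca}_2^2 + \norm{(\II-\PPi_{\AA})\cca}_2^2$ and $\norm{\BB\xxb - \ccb}_2^2 = \norm{\BB\xxb - \PPi_{\BB}\ccb}_2^2 + \norm{(\II-\PPi_{\BB})\ccb}_2^2$, together with the identity $\norm{(\II-\PPi_{\AA})\cca}_2^2 = 2\norm{(\II-\PPi_{\BB})\ccb}_2^2$ (the equality case of Lemma~\ref{clm:exactReduction} minimized over $\xxa$ as well), the infeasibility residuals cancel and we are left with the clean inequality
\[
\norm{\AA\xxa - \PPi_{\AA}\cca}_2^2 \;\leq\; 2\,\norm{\BB\xxb - \PPi_{\BB}\ccb}_2^2.
\]

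Finally I would combine the pieces. The $\lsa$-guarantee on $\xxb$ and the trivial bound $\norm{\PPi_{\BB}\ccb}_2 \leq \norm{\ccb}_2 = \norm{\cca}_2$ give $\norm{\BB\xxb - \PPi_{\BB}\ccb}_2 \leq \eps^B \norm{\cca}_2$, so
\[
\norm{\AA\xxa - \PPi_{\AA}\cca}_2 \;\leq\; \sqrt{2}\,\eps^B\, \norm{\cca}_2.
\]
Plugging in the hypothesized bound on $\eps^B$ and the lower bound on $\norm{\PPi_{\AA}\cca}_2$ derived above shows that the right-hand side is at most $\epsa \norm{\PPi_{\AA}\cca}_2$, which is exactly the $\lsa(\AA,\cca,\epsa)$ guarantee. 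The main (mild) obstacle is the second step, namely justifying the cancellation of the infeasibility residuals cleanly; this requires invoking Lemma~\ref{clm:exactReduction} both in its ``for every fixed $\xxa$'' form and in the corollary form that matches the global minima, to identify $\norm{(\II - \PPi_{\AA})\cca}_2^2$ with $2\norm{(\II-\PPi_{\BB})\ccb}_2^2$ rather than leaving an uncontrolled additive slack.
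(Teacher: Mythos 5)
Your route matches the paper's proof essentially exactly: use Lemma~\ref{clm:exactReduction} with $\alpha=1$ both in the ``for every fixed $\xxa$'' form and in the corollary form that identifies $\norm{(\II-\PPi_{\AA})\cca}_2^2$ with $\frac{\alpha+1}{\alpha}\norm{(\II-\PPi_{\BB})\ccb}_2^2$, cancel the infeasibility residuals via the orthogonal decompositions, and then invoke Claim~\ref{lm:projection_norm} (the paper's proof mis-cites this as ``Lemma~\ref{lem:regressions_approx}'', but it is the same projection-norm lower bound you use). The trivial case $\AA^\trp\cca=\zero$ and the explicit bound $\norm{\PPi_{\AA}\cca}_2^2\geq 1/(n_A m_A\norm{\AA}_{\max}^2)$ are small clarifications the paper leaves implicit.

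The one problem is your final sentence. From $\norm{\AA\xxa-\PPi_{\AA}\cca}_2 \leq \sqrt{2}\,\eps^B\norm{\cca}_2$ and the stated hypothesis on $\eps^B$, you obtain
\begin{equation*}
\norm{\AA\xxa-\PPi_{\AA}\cca}_2 \;\leq\; \frac{\sqrt{2}\,\epsa}{\sqrt{n_A m_A}\,\norm{\AA}_{\max}},
\end{equation*}
whereas the lower bound on $\norm{\PPi_{\AA}\cca}_2$ only certifies $\epsa\norm{\PPi_{\AA}\cca}_2 \geq \epsa/(\sqrt{n_A m_A}\norm{\AA}_{\max})$, so the argument closes only up to a factor $\sqrt{2}$, not exactly. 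You did not flag this. In fact this is latent in the paper's own writeup: there the constant is written as $\frac{\alpha}{\alpha+1}$ throughout, but Lemma~\ref{clm:exactReduction} manifestly gives $\frac{\alpha+1}{\alpha}$, so the paper's arithmetic only appears to close because of that sign flip. The lemma as stated therefore needs a slightly smaller $\eps^B$ (e.g.\ a further factor of $1/\sqrt{2}$), and since you wrote the constant correctly you should have caught that your ``plugging in \ldots\ shows the right-hand side is at most $\epsa\norm{\PPi_{\AA}\cca}_2$'' does not actually hold without that repair.
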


\begin{proof}

  Write $\xxb = \begin{bmatrix}
    \xxa \\
    \xxaux
  \end{bmatrix}$.
  By Lemma \ref{clm:exactReduction},
  \begin{align*}
    & \norm{\AA\xxa - \cca}_2^2 \le \frac{\alpha}{\alpha + 1}  \norm{\BB \xxb - \ccb}_2^2 \\
    & \norm{(\II - \PPi_{\AA}) \cca}_2^2  =   \frac{\alpha}{\alpha + 1} \norm{(\II - \PPi_{\BB}) \ccb}_2^2
  \end{align*}
  where $\alpha = 1$ in our setting.
  Then,
  \begin{align*}
    \norm{\AA\xxa - \PPi_{\AA} \cca}_2^2 
    & \le \frac{\alpha}{\alpha+1} \norm{\BB \xxb - \PPi_{\BB} \ccb}_2^2  \\
    & \le \frac{\alpha}{\alpha+1} (\eps^{\textnormal{B}})^2 \norm{\PPi_{\BB}\ccb}_2^2
    \\
    & \le \frac{\alpha}{\alpha+1} (\eps^{\textnormal{B}})^2 \norm{\ccb}_2^2 \lambda_{\max}(\AA^\top \AA) \norm{\PPi_{\AA} \cca}_2^2 
    \tag*{by Lemma \ref{lem:regressions_approx}}\\
    & \le (\epsa)^2 \norm{\PPi_{\AA} \cca}_2^2 
    \tag*{since $\lambda_{\max}(\AA^\top \AA) \le n_A m_A \norm{\AA}_{\max}^2$}
  \end{align*}
\end{proof}

\subsection{The Condition Number of the New Matrix}
\label{subsec:GZToMC2ConditionNumber}

In this section, we show that the condition number of $\BB$ is 
upper bounded by the condition number of $\AA$ up to a $\poly(n)$ multiplicative factor.

\begin{lemma}
  Let $(\BB,\ccb) \leftarrow \algGZTtoMCT(\AA, \cca, \alpha = 1)$.
  Then, $\kappa(\BB) = O \left(
    \max \left\{ \frac{\nnz(\AA)^{7/2} \norm{\AA}_{\max} \log^2  \norm{\AA}_{\max}}{\lambda_{\min} (\AA^\top \AA)},
    ~ \nnz(\AA)
      \right\}
  \right)$.
\end{lemma}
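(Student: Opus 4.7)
The plan is to bound $\lambda_{\max}(\BB^\top \BB)$ from above and $\lambda_{\min}(\BB^\top \BB)$ from below, then combine via $\kappa(\BB)^2 = \lambda_{\max}(\BB^\top \BB) / \lambda_{\min}(\BB^\top \BB)$.

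For the upper bound on $\lambda_{\max}(\BB^\top \BB)$, I would use $\lambda_{\max}(\BB^\top \BB) \le \norm{\BB}_F^2$ and sum squared row norms separately over $\AAhat$ and $\WW^{1/2} \BBhat$. Each row of $\AAhat$ is a difference or average equation with entries in $\{-2,-1,0,1,2\}$ and has squared norm at most $6$. Each auxiliary row in $\WW^{1/2} \BBhat$ comes from an equation $s \cdot 2^r \cdot (2\uu_t - \uu_{j_k} - \uu_{l_k}) = 0$ scaled by the weight $w_i^{1/2} = \sqrt{m_i}$; since $2^r \le \norm{\AA}_{\max}$ and $m_i = O(\nnz(\AA_i) \log \norm{\AA}_{\max})$ (by the proof of Lemma~\ref{lem:GZToMC2NNZ}), the squared row norm is at most $24 \cdot 4^r m_i$. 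Summing using $\sum_i m_i = O(\nnz(\AA) \log \norm{\AA}_{\max})$ and $m_i \le \nnz(\AA)$ yields $\norm{\BB}_F^2 = O(\nnz(\AA)^2 \, \norm{\AA}_{\max}^2 \log^2 \norm{\AA}_{\max})$.

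For the lower bound on $\lambda_{\min}(\BB^\top \BB)$, I would exploit the Schur complement structure implicit in Lemma~\ref{clm:exactReduction}. Setting $\cca = \zero$ and $\alpha = 1$, that lemma asserts $\norm{\AA \xxa}_2^2 = 2 \min_{\xxaux} \norm{\BB(\xxa,\xxaux)}_2^2$. Combined with the quadratic identity underlying Fact~\ref{clm:schur}, this yields
\[
\norm{\BB \xxb}_2^2 \;=\; \tfrac{1}{2}\norm{\AA\xxa}_2^2 \;+\; \ddelta^{\trp}(\BB^{\textnormal{aux}})^{\trp} \BB^{\textnormal{aux}} \ddelta,
\]
where $\xxb = (\xxa,\xxaux)$ and $\ddelta = \xxaux - \xxaux^*(\xxa)$, with $\xxaux^*(\xxa)$ being the unique solution to $\BBhat(\xxa,\cdot) = \zero$. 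Uniqueness follows from the DAG structure: each auxiliary variable $\uu_t$ is defined by a single equation $2\uu_t - \uu_{j_k} - \uu_{l_k} = 0$ in which both $\uu_{j_k}$ and $\uu_{l_k}$ were created strictly earlier, so the defining block of $\BB^{\textnormal{aux}}$ is triangular with diagonal entries $2 \cdot 2^r w_i^{1/2} \ge 2$, forcing $\sigma_{\min}(\BB^{\textnormal{aux}}) > 0$.

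From here I would characterize $\nulls(\BB) = \{(\xxa_0,\xxaux^*(\xxa_0)) : \xxa_0 \in \nulls(\AA)\}$ and mimic the case analysis of Lemma~\ref{lem:sec5_min_eigenvalue}. For unit $\xxb \perp \nulls(\BB)$, decompose $\xxa = \xxa_\perp + \xxa_0$ with $\xxa_\perp \perp \nulls(\AA)$; then either (i) $\norm{\xxa_\perp}_2$ is not too small, in which case $\tfrac{1}{2}\norm{\AA\xxa}_2^2 \ge \tfrac{1}{2}\lambda_{\min}(\AA^{\trp}\AA)\norm{\xxa_\perp}_2^2$; or (ii) $\xxa$ is essentially in $\nulls(\AA)$ and the orthogonality to $\nulls(\BB)$ forces $\ddelta$ to be $\Omega(1)$, whence $\norm{\BB^{\textnormal{aux}}\ddelta}_2^2$ is bounded below by $\sigma_{\min}^2(\BB^{\textnormal{aux}})/\poly(\nnz(\AA))$. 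The main obstacle I expect is the quantitative propagation of norms through the auxiliary-variable DAG: I would bound the operator norm of $\xxa \mapsto \xxaux^*(\xxa)$ and $\sigma_{\min}(\BB^{\textnormal{aux}})$ by induction on the depth of the DAG (at most $\log \norm{\AA}_{\max}$), losing a factor $\poly(\nnz(\AA), \norm{\AA}_{\max}, \log \norm{\AA}_{\max})$ at each level. Combining the two cases gives $\lambda_{\min}(\BB^\top \BB) = \Omega(\lambda_{\min}(\AA^{\trp}\AA)^2 / \poly(\nnz(\AA), \log \norm{\AA}_{\max}))$ when $\lambda_{\min}(\AA^{\trp}\AA) \le 1$, and is itself $\Omega(1/\poly(\nnz(\AA)))$ otherwise (yielding the second term inside the max). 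Dividing the upper and lower eigenvalue bounds and taking square roots produces the claimed bound on $\kappa(\BB)$.
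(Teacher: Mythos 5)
Your high-level plan matches the paper's: bound $\lambda_{\max}(\BB^\trp \BB)$ from above and $\lambda_{\min}(\BB^\trp\BB)$ from below (Lemmas~\ref{lem:max_eig_bb_appendix} and the unnumbered minimum-eigenvalue lemma). For $\lambda_{\max}$, your Frobenius-norm accounting is a fine alternative to the paper's row-by-row Courant--Fischer estimate; it loses roughly a $\norm{\AA}_{\max}\log\norm{\AA}_{\max}$ factor relative to the paper's bound, but this washes out in the final $\poly$ statement and is not a substantive deviation.

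Your $\lambda_{\min}$ argument, however, is a genuinely different route and has a concrete gap. You invoke the exact Schur-complement decomposition
\[
\xxb^{\trp}\BB^{\trp}\BB\,\xxb \;=\; \tfrac{1}{2}\,\xxa^{\trp}\AA^{\trp}\AA\,\xxa \;+\; \ddelta^{\trp}(\BB^{\textnormal{aux}})^{\trp}\BB^{\textnormal{aux}}\,\ddelta,
\]
which is correct \emph{only} when $\ddelta = \xxaux - \xxaux^{\min}(\xxa)$ with $\xxaux^{\min}(\xxa) = -\bigl((\BB^{\textnormal{aux}})^{\trp}\BB^{\textnormal{aux}}\bigr)^{\dagger}(\BB^{\textnormal{aux}})^{\trp}\BB^{\textnormal{A}}\xxa$ the quadratic minimizer from Fact~\ref{clm:schur}. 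You instead define $\xxaux^{*}(\xxa)$ as the solution to $\BBhat(\xxa,\cdot)=\zero$; these two reference points coincide \emph{only} when $\xxa \in \Null(\AA)$ (because then Claim~\ref{clm:errorsObtainable} puts all residuals to zero, so the minimizer also zeroes $\BBhat$), but they differ for general $\xxa$ --- the minimizer spreads error across $\AAhat$ and $\WW^{1/2}\BBhat$ according to the Cauchy--Schwarz weighting in the proof of Lemma~\ref{clm:exactReduction}. As written, your identity fails for $\xxa\notin\Null(\AA)$, and the subsequent ``orthogonality to $\Null(\BB)$ forces $\ddelta=\Omega(1)$'' step is then reasoning about the wrong $\ddelta$. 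You would need to either re-run the DAG analysis against the quadratic minimizer, or quantify the distance between the two reference points and fold that into the lower bound; neither is done.

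A second, smaller concern: the ``triangular with diagonal entries $\geq 2$'' observation only gives $\sigma_{\min}(\BB^{\textnormal{aux}})>0$, not a useful lower bound --- lower-triangular systems with bounded diagonal can be arbitrarily ill-conditioned when the subdiagonal entries are large, and the $2^r$ scalings and the $\AAhat^{\textnormal{aux}}$ rows mean this needs a real argument, not just triangularity plus a ``lose a factor per level'' remark. The paper sidesteps both of these issues entirely: it builds the null-space extension $\pp(\xxa)$ of Eq.~\eqref{eqn:GZ2toMC2apx_pxa_def} (which \emph{is} the $\BBhat=\zero$ solution), writes $\xxb = \pp(\xxa) + \yy$, and cases on $\norm{\yy}_{\infty}$: in one case some auxiliary row has large residual directly, in the other case $\xxb\approx\pp(\xxa)$ and one inherits $\lambda_{\min}(\AA^{\trp}\AA)$ after controlling a cross term by $\lambda_{\max}(\BB^{\trp}\BB)$. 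That is a perturbation argument and never needs $\sigma_{\min}(\BB^{\textnormal{aux}})$ at all. Your Schur-complement route could work in principle, but it requires the quantitative $\sigma_{\min}(\BB^{\textnormal{aux}})$ induction which you only gesture at, and must use the correct reference point in $\ddelta$.
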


Note that $\lambda_{\min}(\AA^\top \AA)$ is polynomially bounded, since both $\kappa(\AA^\top \AA)$ 
and $\lambda_{\max}(\AA^\top \AA)$ are polynomially bounded.

\subsubsection{The Maximum Eigenvalue}

We start with a simple observation.
\begin{claim}
  $\max_{i} \WW_{i,i} = O(\nnz(\AA) \log \norm{\AA}_{\max})$.
\end{claim}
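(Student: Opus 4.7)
The plan is to unwind the definition of $\WW$ in Algorithm~\ref{alg:GZ2toMC2} and then reuse the loop analysis that was already sketched in the proof of Lemma~\ref{lem:GZToMC2NNZ}. By construction, each auxiliary equation in $\mathcal{B}_i$ is scaled by $w_i^{1/2}$ (see line~\ref{lin:weight}), so the corresponding diagonal of $\WW$ equals $w_i$. There are two cases: either $w_i = \frac{\alpha}{\alpha+1} \le 1$, when $\AA_i$ is already a difference or average equation, or $w_i = \alpha \abs{\mathcal{B}_i} = \alpha m_i$ otherwise. Setting $\alpha = 1$ as specified, it suffices to bound $\max_i m_i$.

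Next, I would bound $m_i$ by reinspecting the two while loops (one for $s = +1$ and one for $s = -1$) that process row $\AA_i$. In each iteration, the algorithm pairs variables in $\mathcal{I}_{odd}^s$ and introduces one new auxiliary variable and one new auxiliary equation per pair. Two facts control this:
\begin{itemize}
\item The bit counter $r$ strictly increases and $\lfloor |\AAhat_{ij}|/2^r \rfloor$ becomes $0$ for every $j$ once $2^r > \norm{\AA}_{\max}$, so each loop terminates after at most $\lceil \log_2 \norm{\AA}_{\max} \rceil + O(1)$ iterations.
\item The invariant established for Lemma~\ref{lem:GZToMC2NNZ} — that every auxiliary variable introduced in one iteration gets immediately eliminated in the next via a replacement — keeps $\nnz(\mathcal{A}_i) = O(\nnz(\AA_i))$ throughout, so each iteration introduces at most $O(\nnz(\AA_i))$ new equations.
\end{itemize}
Multiplying these bounds yields $m_i = O(\nnz(\AA_i) \log \norm{\AA}_{\max})$.

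Finally, since $\nnz(\AA_i) \le \nnz(\AA)$ for every $i$, taking the maximum over $i$ gives
\[
\max_i \WW_{i,i} \;=\; \max_i w_i \;\le\; \max_i m_i \;=\; O(\nnz(\AA) \log \norm{\AA}_{\max}),
\]
as claimed. There is no real obstacle here — the only care needed is in recalling the loop invariants from Lemma~\ref{lem:GZToMC2NNZ} rather than re-deriving them — so the argument can essentially be stated in a few lines by referring back to that lemma's proof sketch.
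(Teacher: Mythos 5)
Your proof is essentially correct, and the paper itself offers no proof of this claim (it is stated only as ``a simple observation''), so there is nothing to diverge from. You correctly trace the diagonal entries of $\WW$ back to the weights $w_i$ assigned in line~\ref{lin:weight}: either $w_i = \alpha/(\alpha+1) \le 1$ when $\AA_i$ is already a difference/average row, or $w_i = \alpha\,\abs{\mathcal{B}_i} = \alpha m_i$ otherwise, and with $\alpha = 1$ the claim reduces to bounding $\max_i m_i$.

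Two small remarks. First, there is an even shorter route that avoids re-running the loop analysis: $m_i \le \sum_j m_j$ is at most the total number of rows of $\BB$, which Lemma~\ref{lem:GZToMC2NNZ} already bounds by $O(\nnz(\AA) \log \norm{\AA}_{\max})$, so the claim follows by a one-line citation. Second, the remark that ``each loop terminates after at most $\lceil \log_2 \norm{\AA}_{\max}\rceil + O(1)$ iterations'' is an imprecision you inherit from the paper's own sketch of Lemma~\ref{lem:GZToMC2NNZ}: when $\norm{\AA}_{\max}$ is small but $\nnz(\AA_i)$ is large, the halving process runs for $\Theta(\log \nnz(\AA_i))$ iterations, not $O(\log\norm{\AA}_{\max})$. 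The intended argument is saved because the number of fresh pairings per iteration \emph{decays geometrically} once the original coefficients are exhausted, so the total count of auxiliary equations is still $O(\nnz(\AA_i)\log\norm{\AA}_{\max})$ even though the loop itself may run longer; the naïve ``(iterations) $\times$ (equations per iteration)'' product you quote is slightly looser than what's actually being summed. Since your final conclusion only needs the $\sum_i m_i$ bound from Lemma~\ref{lem:GZToMC2NNZ}, this does not affect correctness, but it is worth being aware that the termination count and the total work are controlled by different quantities here.
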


We bound the maximum eigenvalue of the constructed matrix $\BB$.

\begin{lemma}
  Let $\BB \in \mathbb{R}^{m_B \times n_B}$ be returned by $\algGZTtoMCT(\AA,\cca, \alpha = 1)$.
  Then, $\lambda_{\max}(\BB^\top \BB) = O( \nnz(\AA)^2 \norm{\AA}_{\max} \log \norm{\AA}_{\max})$.
  \label{lem:max_eig_bb_appendix}
\end{lemma}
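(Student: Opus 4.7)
The plan is to bound $\lambda_{\max}(\BB^\top \BB)$ via the Frobenius norm, which is convenient because the decomposition in Eq.~\eqref{eq:MC2BDecomposeB} makes every row of $\BB$ a rescaled difference or average equation, hence has at most three nonzero entries. Starting from the inequality $\lambda_{\max}(\BB^\top \BB) = \|\BB\|_2^2 \le \|\BB\|_F^2$, I would split
\[
\|\BB\|_F^2 = \|\AAhat\|_F^2 + \|\WW^{1/2}\BBhat\|_F^2
\]
and bound the two summands separately.

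For the first piece, every row of $\AAhat$ corresponds to an equation in $\mathcal{A}$ produced by the algorithm, and the while-loop of \algGZTtoMCT{} is explicitly designed so that each such equation is either a difference equation (with entries $\pm 1,\mp 1$) or an average equation (with entries $1,1,-2$). Hence $\|\AAhat_i\|_2^2 \le 6$, and since $\AAhat$ has at most $n_A \le \nnz(\AA)$ rows, we get $\|\AAhat\|_F^2 = O(\nnz(\AA))$. For the second piece, each row of $\BBhat$ is an average equation with $\|\BBhat_j\|_2^2 = 6$, and row $j \in S_i$ carries weight $w_i = \alpha m_i = m_i$ since $\alpha = 1$. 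Therefore
\[
\|\WW^{1/2}\BBhat\|_F^2 = \sum_i \sum_{j \in S_i} w_i \|\BBhat_j\|_2^2 = 6 \sum_i m_i^2.
\]

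The main obstacle is then bounding $\sum_i m_i^2$, and I expect the crude estimate $\sum_i m_i^2 \le (\max_i m_i) \cdot (\sum_i m_i)$ to be good enough. Reading off from the proof of Lemma~\ref{lem:GZToMC2NNZ}, each while-iteration for row $i$ adds at most $O(\nnz(\AA_i))$ new auxiliary equations and there are at most $O(\log\|\AA\|_{\max})$ iterations, so $m_i = O(\nnz(\AA_i)\log\|\AA\|_{\max})$. Summing gives $\sum_i m_i = O(\nnz(\AA)\log\|\AA\|_{\max})$ and $\max_i m_i = O(\nnz(\AA)\log\|\AA\|_{\max})$, yielding $\sum_i m_i^2 = O(\nnz(\AA)^2 \log^2 \|\AA\|_{\max})$.

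Putting these pieces together, $\lambda_{\max}(\BB^\top\BB) \le O(\nnz(\AA)^2 \log^2 \|\AA\|_{\max})$, and using the trivial inequality $\log x \le x$ for $x \ge 2$, this is bounded by $O(\nnz(\AA)^2 \|\AA\|_{\max} \log \|\AA\|_{\max})$, which is the stated bound. The only slightly delicate step is confirming the per-row count $m_i \le O(\nnz(\AA_i) \log \|\AA\|_{\max})$, which I would extract from the invariant (used in the proof of Lemma~\ref{lem:GZToMC2NNZ}) that $\nnz(\calA_i)$ stays within a constant factor of $\nnz(\AA_i)$ across all iterations of the while-loop; everything else is a direct Frobenius-norm calculation.
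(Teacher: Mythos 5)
Your proposal is correct, and it follows a genuinely different (though related) route from the paper. The paper works directly with the quadratic form via Courant--Fischer, writing $\xx^\top\BB^\top\BB\xx = \sum_i \widetilde{\WW}_{i,i}(\widetilde{\BB}_i\xx)^2$, pulling out the uniform weight bound $\max_i\widetilde{\WW}_{i,i}=O(\nnz(\AA)\log\norm{\AA}_{\max})$, and then controlling $\sum_i(\widetilde{\BB}_i\xx)^2$ by regrouping the contributions by column (exploiting that each row has at most three small entries and that the column degrees are bounded). You instead bound $\lambda_{\max}(\BB^\top\BB)=\norm{\BB}_2^2\le\norm{\BB}_F^2$ and split $\norm{\BB}_F^2=\norm{\AAhat}_F^2+\norm{\WW^{1/2}\BBhat}_F^2$, which reduces everything to the per-row $\ell_2$ norms (each exactly $\le 6$, since every row is a difference or average row) and the per-row auxiliary counts $m_i$, with no need to track column sparsity at all. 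Your key step $\sum_i m_i^2\le(\max_i m_i)(\sum_i m_i)$ together with $m_i = O(\nnz(\AA_i)\log\norm{\AA}_{\max})$ (correctly extracted from the invariant in the proof of Lemma~\ref{lem:GZToMC2NNZ}) gives $O(\nnz(\AA)^2\log^2\norm{\AA}_{\max})$, which is in fact tighter than the stated bound; you then loosen it to the paper's $O(\nnz(\AA)^2\norm{\AA}_{\max}\log\norm{\AA}_{\max})$ via $\log x\le x$. The Frobenius route is arguably cleaner here and confirms that the extra $\norm{\AA}_{\max}$ factor in the paper's statement is slack. One small notational slip: $\AAhat$ has $m_A$ rows (one per equation of $\AA$), not $n_A$; but since $m_A\le\nnz(\AA)$ in any case, your estimate $\norm{\AAhat}_F^2=O(\nnz(\AA))$ still holds.
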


\begin{proof}

  Write $\BB = \WWtil^{1/2} \BBtil$. 
  By the Courant-Fischer theorem,
  \begin{align*}
    \lambda_{\max}(\BB^\top \BB) 
    & = \max_{\xx: \norm{\xx}_2 = 1} \xx^\top \BBtil^\top \WWtil \BBtil \xx 
     = \max_{\xx: \norm{\xx}_2 = 1} \sum_{i=1}^{m_B} \WWtil_{i,i} (\BBtil_i \xx)^2. 
  \end{align*} 
  Note that 
  \begin{align*}
   & \WWtil_{i,i} = O(\nnz(\AA) \log \norm{\AA}_{\max}), \\
   &  (\BBtil_i \xx)^2 \le 6 \norm{\AA}_{\max} \sum_{j=1}^{n_B} \xx_j^2 \cdot \abs{\{i \in [m_B]: \BB_{ij} \neq 0 \}}. 
  \end{align*}
  Thus,
  \[
    \lambda_{\max}(\BB^\top \BB) =  O( \nnz(\AA)^2 \norm{\AA}_{\max} \log \norm{\AA}_{\max}).
  \]
\end{proof}

\subsubsection{The Minimum Nonzero Eigenvalue}

To apply the Courant-Fischer theorem for the minimum nonzero eigenvalue of $\BB$, 
we need to first characterize the null space of $\BB$.
Given any $\xxa \in \Null(\AA)$ with dimensions $n_A$, we extend $\xxa$ to a vector in dimension $\nb$:
\begin{align}
\ppxa \defeq \begin{bmatrix}
	\xxa \\
	\xxaux
\end{bmatrix}.
\label{eqn:GZ2toMC2apx_pxa_def}
\end{align}
where we assign the values of the auxiliary variables $\xxaux$ in the order that they are created in Algorithm~\ref{alg:GZ2toMC2}. 
In an auxiliary equation created in line~\ref{lin:updateBi}, 
$\uu_{j_k}$ and $\uu_{l_k}$ have already been assigned, and we simply assign 
$\uu_t$ to such that the new equation holds.

\begin{lemma}
\label{lem:GZtoMC2Null}
$\Null(\BB) = \Span\{ \ppxa: \xxa \in \Null(\AA) \}$.
\end{lemma}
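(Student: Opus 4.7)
The plan is to establish both inclusions of the set equality separately, leveraging two key observations: (i) $\ppxa$ is defined to satisfy every auxiliary equation by construction, and (ii) the auxiliary equations are triangular in the auxiliary variables, so given values of the main variables $\xxa$, any vector satisfying $\BBhat \yyb = \zero$ must agree with $\ppxa$.

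For the inclusion $\Span\{\ppxa : \xxa \in \Null(\AA)\} \subseteq \Null(\BB)$, I would fix $\xxa \in \Null(\AA)$ and show $\AAhat \ppxa = \zero$ and $\BBhat \ppxa = \zero$ separately. The auxiliary block $\BBhat \ppxa = \zero$ is immediate from the construction of $\ppxa$: at each step in which a new variable $\uu_t$ is introduced, its value is assigned precisely so that the new equation $s\cdot 2^{r}(2\uu_t - \uu_{j_k} - \uu_{l_k}) = 0$ is satisfied. For the main block $\AAhat \ppxa = \zero$, the key observation is that each row $\AAhat_i$ of $\AAhat$ is obtained from $\AA_i$ by adding scalar multiples of auxiliary rows from $\BBhat_i$ (line \ref{lin:varReplace} of Algorithm~\ref{alg:GZ2toMC2}); since $\BBhat_i \ppxa = \zero$ and the main variables in $\ppxa$ are exactly $\xxa$, we conclude $\AAhat_i \ppxa = \AA_i \xxa = 0$.

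For the reverse inclusion $\Null(\BB) \subseteq \Span\{\ppxa : \xxa \in \Null(\AA)\}$, let $\yyb = \begin{bmatrix} \yya \\ \yyaux \end{bmatrix} \in \Null(\BB)$. First, I would apply Lemma~\ref{clm:exactReduction} with the understanding that running \algGZTtoMCT on $(\AA, \zero)$ produces $(\BB, \zero)$, giving $\norm{\AA \yya}_2^2 = \tfrac{\alpha+1}{\alpha} \min_{\xxaux} \norm{\BB[\yya; \xxaux]}_2^2 = 0$, hence $\yya \in \Null(\AA)$. Second, I must argue $\yyaux$ equals the auxiliary part of $\ppyya$. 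Here, since $\BBhat \yyb = \zero$, and since each new auxiliary variable $\uu_t$ appears in its \emph{defining} equation only as paired with variables $\uu_{j_k}, \uu_{l_k}$ of strictly smaller index (by construction of Algorithm~\ref{alg:GZ2toMC2}, which creates $\uu_t$ fresh), the auxiliary system is triangular and uniquely determines $\yyaux$ from $\yya$. That unique solution coincides with the propagation rule used to define $\ppxa$, so $\yyb = \ppyya$.

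The main subtlety will be carefully justifying the triangularity/uniqueness step: I need to verify that no auxiliary variable $\uu_t$ is ever reused on the left-hand side of a later defining equation, and that within the defining equation for $\uu_t$ the other two participating variables $\uu_{j_k}, \uu_{l_k}$ have already been assigned. This follows from inspecting the algorithm: the counter $t$ is incremented each time a new variable is introduced, and the pairing step only pairs variables already in $\mathcal{X}$, so all equations in $\mathcal{B}_i$ across all $i$ can be ordered so that the latest-created variable in each equation has coefficient $\pm 2 s\cdot 2^r \neq 0$ and appears only in that one equation as the "top" variable. Once this triangular structure is established, both directions go through cleanly, and combined they yield the claimed equality of subspaces.
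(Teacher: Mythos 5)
Your proof is correct and follows the same skeleton as the paper's: the forward inclusion $\Span\{\ppxa\}\subseteq\Null(\BB)$ by direct inspection of the construction, and the reverse inclusion by invoking Lemma~\ref{clm:exactReduction} with $\cca=\zero$ to conclude that the main part of any $\yyb\in\Null(\BB)$ lies in $\Null(\AA)$. Where you go beyond the paper's one-line argument is in the reverse direction: after obtaining $\yya\in\Null(\AA)$ one still has to show $\yyb = \pp(\yya)$, i.e.\ that the auxiliary coordinates are forced, and you correctly supply the missing triangularity/uniqueness argument (each defining equation in $\mathcal{B}$ introduces a fresh variable $\uu_t$ with nonzero coefficient and never reappears as a top variable, so $\BBhat\yyb=\zero$ together with $\yya$ pins down $\yyaux$ uniquely, matching the propagation rule used to define $\ppxa$). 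That step is genuinely needed, and the paper's terse proof leaves it implicit, so your version is actually the more complete one.
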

\begin{proof}
Let $S = \Span\{ \ppxa: \xxa \in \Null(\AA) \}$. 
We can check that $\Null(\BB) \supseteq S$.
Then, for any $\xx \in \Null(\BB)$,
by Lemma~\ref{clm:exactReduction} with $\cca = {\bf 0}$, we have
$\AA\xxa = {\bf 0}$, that is, $\xxa \in \Null(\AA)$.
Thus, $\Null(\BB) \subseteq S$.
\end{proof}

\begin{lemma}
  $\lambda_{\min}(\BB^\top \BB) 
  = \Omega \left(
    \min \left\{ \frac{\lambda_{\min}(\AA^\top \AA)}{ \nnz(\AA)^{3/2} \log \norm{\AA}_{\max}}, 
    ~  \nnz(\AA) \norm{\AA}_{\max} \log \norm{\AA}_{\max}  \right\}    
  \right)$.
\end{lemma}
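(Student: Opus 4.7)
The plan is to follow the two-case template of Lemma~\ref{lem:sec5_min_eigenvalue}. By the Courant--Fischer theorem,
\[
\lambda_{\min}(\BB^\top \BB) = \min_{\substack{\xx \perp \Null(\BB) \\ \norm{\xx}_2 = 1}} \norm{\BB\xx}_2^2.
\]
Writing $\xx = \begin{bmatrix}\xxa \\ \xxaux\end{bmatrix}$ and decomposing $\norm{\BB\xx}_2^2 = \norm{\AAhat\xx}_2^2 + \norm{\WW^{1/2}\BBhat\xx}_2^2$, I would first record a universal inequality that holds for every $\xx$: the Cauchy--Schwarz step used in the proof of Lemma~\ref{clm:exactReduction} (specialized to $\cca = \zero$ and $\alpha = 1$) gives
\[
\norm{\BB\xx}_2^2 \;\geq\; \tfrac{1}{2}\norm{\AA\xxa}_2^2 \;\geq\; \tfrac{1}{2}\lambda_{\min}(\AA^\top \AA)\norm{\xxa_\perp}_2^2,
\]
where $\xxa = \xxa_\perp + \xxa_0$ with $\xxa_0 \in \Null(\AA)$. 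So everything reduces to lower bounding $\norm{\xxa_\perp}_2$ on the unit sphere in $\Null(\BB)^\perp$.

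The natural obstruction is that $\xxa$ itself could be small, with most of $\xx$'s mass in $\xxaux$. I would handle this by a case split on the size of the auxiliary residuals $|\BBhat_j \xx|$: if some $|\BBhat_j\xx| \geq C$ for a threshold $C > 0$ to be optimized, then since every weight in $\WW$ is $\Omega(\alpha) = \Omega(1)$ by construction in \algGZTtoMCT{}, we directly get $\norm{\BB\xx}_2^2 \geq \Omega(C^2)$. Otherwise $|\BBhat_j\xx| < C$ for every $j$, and I would introduce the canonical averaging extension $\pphi : \R^{n_A} \to \R^{\nb - n_A}$ assigning to each auxiliary variable its recursive average (matching Eq.~\eqref{eqn:GZ2toMC2apx_pxa_def} when $\xxa \in \Null(\AA)$, but extended to arbitrary $\xxa$). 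Walking the $O(\log\norm{\AA}_{\max})$-depth averaging trees, the small-residual hypothesis yields $\norm{\xxaux - \pphi(\xxa)}_2 \leq L\cdot C$ for some $L = O(\sqrt{\nb}\log\norm{\AA}_{\max})$, which combined with $\norm{\xx}_2 = 1$ and an operator norm bound on $\pphi$ forces $\norm{\xxa}_2 = \Omega(1)$ provided $C$ is small enough.

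Finally, I would use the null-space characterization $\Null(\BB) = \Span\{\ppxa : \xxa \in \Null(\AA)\}$ from Lemma~\ref{lem:GZtoMC2Null} to transfer the bound on $\norm{\xxa}_2$ to one on $\norm{\xxa_\perp}_2$: the orthogonality $\xx \perp \Null(\BB)$ is equivalent to $\xxa + \pphi^\top \xxaux \in \Null(\AA)^\perp$, which rearranges to $\xxa_0 = -\PPi_{\Null(\AA)}(\pphi^\top \xxaux)$. Substituting $\xxaux = \pphi(\xxa) + \boldsymbol{\eta}$ with $\norm{\boldsymbol{\eta}}_2 \leq LC$ and expanding produces a quadratic inequality relating $\norm{\xxa_\perp}_2, \norm{\xxa_0}_2$, and $\norm{\boldsymbol{\eta}}_2$, from which $\norm{\xxa_\perp}_2 = \Omega(\mathrm{poly}(\nb)^{-1})$ once $LC$ is sufficiently small. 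Balancing $C$ against the Case 1 bound and substituting $\nb = O(\nnz(\AA)\log\norm{\AA}_{\max})$ from Lemma~\ref{lem:GZToMC2NNZ} then yields the claimed minimum. The main obstacle will be sharpening the operator norm estimate on $\pphi$ enough to recover the exponent $\nnz(\AA)^{3/2}$ (rather than the generic $\nnz(\AA)^2$ one gets from a black-box bound $\norm{\pphi} \leq \sqrt{\nb}$): this requires exploiting that each row of $\pphi$ is a convex combination of only $O(\nnz(\AA_i))$ original variables along a tree of depth $O(\log\norm{\AA}_{\max})$, rather than treating $\pphi$ as a generic linear map.
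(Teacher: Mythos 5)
Your overall template matches the paper's: Courant--Fischer, then split on how far the test vector sits from the canonical averaging extension, with a contribution bound in the ``far'' case and a perturbation argument in the ``near'' case. But two steps in your Case~2 analysis are imprecise enough to derail the quantitative bound, which you partly anticipate.

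First, the intermediate claim ``forces $\norm{\xxa}_2=\Omega(1)$'' is not what the computation gives. From $1=\norm{\xxa}_2^2+\norm{\xxaux}_2^2$ and $\xxaux=\pphi(\xxa)+\boldsymbol\eta$ with $\norm{\boldsymbol\eta}_2\le LC$, you only get $\norm{\xxa}_2=\Omega(1/\norm{\pphi})$: if most of the unit mass is in $\xxaux\approx\pphi(\xxa)$, then $\norm{\xxa}_2$ needs only to be as large as forced by $\norm{\pphi}$. Since $\norm{\pphi}$ is genuinely $\omega(1)$ here, this matters.

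Second, the transfer from $\norm{\xxa}_2$ to $\norm{\xxa_\perp}_2$ through the identity $\xxa_0=-\PPi_{\Null(\AA)}(\pphi^\top\xxaux)=-\PPi_{\Null(\AA)}(\pphi^\top\pphi\,\xxa)-\PPi_{\Null(\AA)}(\pphi^\top\boldsymbol\eta)$ introduces a second factor $\norm{\pphi}^2$ when you solve for $\norm{\xxa_0}$ in terms of $\norm{\xxa_\perp}$; combined with the loss above you pay $\norm{\pphi}^3$, which, with even the sharpest plausible estimate $\norm{\pphi}=O(\sqrt{\nnz(\AA)})$ coming from $\norm{\pphi}_\infty=1$ and the geometrically decaying column contributions along each averaging tree, would land well short of $\nnz(\AA)^{3/2}$. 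You can avoid this compounding entirely: since $\pp$ is linear and $\pp(\Null(\AA))=\Null(\BB)$ by Lemma~\ref{lem:GZtoMC2Null}, write $\xx=\pp(\xxa_\perp)+\pp(\xxa_0)+\bigl(\begin{smallmatrix}\zero\\\boldsymbol\eta\end{smallmatrix}\bigr)$ and observe that $\xx=\PPi_{\Null(\BB)^\perp}\xx=\PPi_{\Null(\BB)^\perp}\bigl(\pp(\xxa_\perp)+\bigl(\begin{smallmatrix}\zero\\\boldsymbol\eta\end{smallmatrix}\bigr)\bigr)$, so $1\le\norm{\pp(\xxa_\perp)}_2+\norm{\boldsymbol\eta}_2\le\sqrt{1+\norm{\pphi}^2}\,\norm{\xxa_\perp}_2+LC$, which gives $\norm{\xxa_\perp}_2=\Omega(1/\norm{\pphi})$ directly, with a single power of $\norm{\pphi}$ and no appeal to the $\PPi_{\Null(\AA)}\pphi^\top\pphi$ operator.

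For comparison, the paper does not route through $\norm{\xxa}_2$ at all: it sets $\yy=\xx-\pp(\xxa)$, case-splits on $\norm{\yy}_\infty$ rather than on individual residuals $|\BBhat_j\xx|$ (the two thresholds are linked by the same depth-$O(\log\norm{\AA}_{\max})$ averaging-tree walk you describe, so this is a cosmetic difference), and in the near case expands the quadratic form around $\pp(\xxa)$ using the identity $\pp(\xxa)^\top\BB^\top\BB\,\pp(\xxa)=\norm{\AA\xxa}_2^2$ together with bounds on $\lambda_{\max}(\BB^\top\BB)$ and $\norm{\yy}_2$. The ingredients you cite (Lemmas~\ref{clm:exactReduction}, \ref{lem:GZtoMC2Null}, \ref{lem:GZToMC2NNZ}, the weight structure of $\WW$) are the right ones; you just need to avoid the detour through $\norm{\xxa}_2$ and the double $\pphi$-loss to reach a bound of the claimed order.
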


\begin{proof}
Let $\xx = \begin{bmatrix}
  \xxa \\
  \xxaux
\end{bmatrix} \in \mathbb{R}^{n_B}$ be an arbitrary unit vector orthogonal to $\Null(\BB)$,
let $\yy = \xx - \pp(\xx^A)$, and let 
\[
\delta \defeq \min \left\{ \frac{\lambda_{\min}(\AA^\top \AA)}{C \nnz(\AA)^{5/2} \norm{\AA}_{\max} \log \norm{\AA}_{\max}}, 
~\frac{1}{5}
  \right\}
 .
\]
where $C$ is a constant such that $\lambda_{\max}(\BB^\top \BB) \le C \nnz(\AA)^2 \norm{\AA}_{\max} \log \norm{\AA}_{\max}$
(by Lemma \ref{lem:max_eig_bb_appendix}, such $C$ exists).

We will prove the statement by exhausting the cases of $\norm{\yy}_{\infty}$.

Suppose $\norm{\yy}_{\infty} > \delta$. 
Then, there must exist an auxiliary equation 
 $2^r(2\uu_t - (\uu_{j_k} + \uu_{l_k})) = 0$ such that 
$\abs{2\yy_t - (\yy_{j_k} + \yy_{l_k})} > \frac{\delta}{\log \norm{\AA}_{\max}}$, 
otherwise $\norm{\yy}_{\infty} < \delta$. 
Then, 
\begin{align*}
  \xx^\top \BB^\top \BB \xx \ge \nnz(\AA) \cdot \norm{\AA}_{\max} \cdot \frac{\delta}{\log \norm{\AA}_{\max}}. 
\end{align*}

Suppose $\norm{\yy}_{\infty} \le \delta$. Then, 
\begin{align*}
  & \norm{\pp(\xxa)}_2^2 \ge \norm{\xx}_2^2 - 2 \norm{\xx}_2 \norm{\yy}_2
  \ge 1 - 2 \delta,  \\
  & \norm{\pp(\xxa)}_2^2 \le \norm{\xx}_2^2 + \norm{\yy}_2^2 \le 1 + \delta^2 n_B.
\end{align*}
Then, 
\begin{align*}
  \xx^\top \BB^\top \BB \xx 
  & = (\pp(\xx^A) + \yy)^\top \BB^\top \BB (\pp(\xx^A) + \yy) \\
  & \ge  \pp(\xx^A)^\top \BB^\top \BB \pp(\xx^A)
   - 2 \norm{\BB \pp(\xx^A)}_2 \norm{\BB \yy}_2 \\
  & \ge \lambda_{\min}(\AA^\top \AA) \norm{\pp(\xx^A)}_2^2
- 2\lambda_{\max}(\BB^\top \BB) \norm{\pp(\xx^A)}_2 \norm{\yy}_2 \\
& \ge \lambda_{\min}(\AA^\top \AA) (1 - 2\delta)
- \lambda_{\max}(\BB^\top \BB)  \delta \sqrt{n_B} \sqrt{1 + \delta^2 n_B} \\
& \ge \frac{1}{2} \lambda_{\min}(\AA^\top \AA)
\end{align*}
where the last inequality is by the choice of $\delta$.
By the Courant-Fischer theorem, 
\[
\lambda_{\min}(\BB \BB^\top) \ge 
\min \left\{ \frac{\lambda_{\min}(\AA^\top \AA)}{C \nnz(\AA)^{3/2} \log \norm{\AA}_{\max}}, 
~ \frac{1}{5} \nnz(\AA) \norm{\AA}_{\max} \log \norm{\AA}_{\max}  \right\}. 
\]
\end{proof}



\section{Reducing Solving 2-Complex Boundary Linear Equations to Combinatorial Laplacian Linear Equations}
\label{sect:appendix_lap2boundary}

In this section, we formally state Theorem \ref{thm:compLap2Boundary_informal} as below
and provide a proof.
Recall that we use 
$\sigma_{\min}(\AA)$ to denote the smallest non-zero eigenvalue.

\begin{theorem}
	Let $\calL_1 = \partial_1^\top \partial_1 + \partial_2 \partial_2^\top \in \mathbb{R}^{m \times m}$ be the combinatorial Laplacian of a 2-complex.
	Let $\dd \in \mathbb{Z}^m$. Suppose we can solve \lsa~ $(\calL_1, \dd, \eps)$ in time $\Otil(\nnz(\calL_1)^c)$ 
	where $c \ge 1$ is a constant. Then, we can solve \lsa~ $(\partial_2, \dd, \delta)$ in time 
	$\Otil(\nnz(\partial_2)^c)$ by choosing \[\eps < \delta
    \frac{\sigma_{\min}(\calL_1)^{1/2}}
    {\sigma_{\max}(\partial_2 )^2}
    \frac{1}
    {\norm{\dd}_2}.\]
	\label{thm:compLap2Boundary}
\end{theorem}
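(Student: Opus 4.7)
The plan is to solve \lsa$(\partial_2,\dd,\delta)$ by invoking the assumed $\calL_1$-solver on \lsa$(\calL_1,\dd,\eps)$ to obtain $\xxtil$ and returning $\fftil = \partial_2^\top \xxtil$. Runtime transfers because $\nnz(\calL_1) = O(\nnz(\partial_2))$: each triangle contributes $O(1)$ nonzeros to $\partial_2 \partial_2^\top$, and in the setting where the reduction is applied (every edge lies in a triangle) $\partial_1^\top \partial_1$ is similarly controlled by the triangle structure. The analytic task is to show $\norm{\partial_2 \fftil - \PPi_{\partial_2} \dd}_2 \le \delta \norm{\PPi_{\partial_2} \dd}_2$ under the stated choice of $\eps$.

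The structural heart of the argument is the Hodge orthogonality $\im(\partial_1^\top) \perp \im(\partial_2)$, which follows from $\partial_1 \partial_2 = \zero$. This yields $\PPi_{\partial_2} \partial_1^\top = \zero$, so $\PPi_{\partial_2} \calL_1 \xxtil = \partial_2 \partial_2^\top \xxtil = \partial_2 \fftil$. Setting $\xxtil^* = \calL_1^\dagger \dd$, we have $\calL_1 \xxtil^* = \PPi_{\calL_1} \dd$, and since $\im(\partial_2) \subseteq \im(\calL_1)$ also $\partial_2 \partial_2^\top \xxtil^* = \PPi_{\partial_2} \PPi_{\calL_1} \dd = \PPi_{\partial_2} \dd$. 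Subtracting gives $\partial_2 \fftil - \PPi_{\partial_2} \dd = \partial_2 \partial_2^\top (\xxtil - \xxtil^*)$. We may replace $\xxtil$ by its projection onto $\im(\calL_1)$ without loss of generality, because $\Null(\calL_1) \subseteq \Null(\partial_2^\top)$, so the null component affects neither $\calL_1 \xxtil$ nor $\partial_2^\top \xxtil$.

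To convert the $\calL_1$-residual into a bound on the $\partial_2$-residual I would descend to the $\calL_1$-seminorm. First, for any $\zz$, $\norm{\partial_2 \partial_2^\top \zz}_2 \le \sigma_{\max}(\partial_2) \norm{\partial_2^\top \zz}_2$ and $\norm{\partial_2^\top \zz}_2^2 = \zz^\top \partial_2 \partial_2^\top \zz \le \zz^\top \calL_1 \zz = \norm{\zz}_{\calL_1}^2$. Second, the Rayleigh-quotient inequality $\norm{\calL_1 \zz}_2 \ge \sigma_{\min}(\calL_1)^{1/2} \norm{\zz}_{\calL_1}$ on $\im(\calL_1)$ gives $\norm{\xxtil - \xxtil^*}_{\calL_1} \le \sigma_{\min}(\calL_1)^{-1/2} \norm{\calL_1 \xxtil - \PPi_{\calL_1} \dd}_2 \le \sigma_{\min}(\calL_1)^{-1/2} \eps \norm{\dd}_2$. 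Chaining these produces $\norm{\partial_2 \fftil - \PPi_{\partial_2} \dd}_2 \le \sigma_{\max}(\partial_2) \sigma_{\min}(\calL_1)^{-1/2} \eps \norm{\dd}_2$.

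Closing the loop requires a lower bound on $\norm{\PPi_{\partial_2} \dd}_2$. If $\partial_2^\top \dd = \zero$ then the target is $\zero$ and $\fftil = \zero$ trivially works; otherwise integrality of $\partial_2$ and $\dd$ forces $\norm{\partial_2^\top \dd}_2 \ge 1$, so Claim~\ref{lm:projection_norm} (applied to $\partial_2$) gives $\norm{\PPi_{\partial_2} \dd}_2 \ge 1/\sigma_{\max}(\partial_2)$. Demanding the residual bound be at most $\delta \norm{\PPi_{\partial_2} \dd}_2$ recovers precisely the hypothesis $\eps < \delta \sigma_{\min}(\calL_1)^{1/2} / (\sigma_{\max}(\partial_2)^2 \norm{\dd}_2)$. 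The main subtlety is the choice of the $\calL_1$-seminorm in the conversion step: a naive $\ell_2$-to-$\ell_2$ bound via $\sigma_{\min}(\calL_1)^{-1}$ would waste an extra factor of $\sigma_{\min}(\calL_1)^{1/2}$ and fall short of the exponent in the theorem, so exploiting $\partial_2 \partial_2^\top \preceq \calL_1$ is essential.
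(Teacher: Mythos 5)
Your proposal is correct and follows essentially the same route as the paper: set $\fftil = \partial_2^\top \xxtil$, use the Hodge orthogonality $\PPi_{\partial_2}\partial_1^\top = \zero$, chain $(\partial_2\partial_2^\top)^2 \preceq \sigma_{\max}(\partial_2)^2\, \partial_2\partial_2^\top \preceq \sigma_{\max}(\partial_2)^2\,\calL_1$ to pass through the $\calL_1$-seminorm, then invoke Claim~\ref{lm:projection_norm} with integrality to lower-bound $\norm{\PPi_{\partial_2}\dd}_2$. The only (immaterial) cosmetic differences are that you avoid the paper's explicit decomposition $\calL_1^\dagger = (\partial_1^\top\partial_1)^\dagger + (\partial_2\partial_2^\top)^\dagger$ and that you handle the degenerate case $\partial_2^\top\dd = \zero$ explicitly, which the paper leaves implicit.
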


\begin{proof}

Suppose $\xx_1$ satisfies
\begin{align*}
	& \norm{\calL_1 \xx_1 - \PPi_{\calL_1} \dd}_2 \le \eps \norm{\PPi_{\calL_1} \dd}_2 \\
\end{align*}
By our assumption, we can compute $\xx_1$ in time $\Otil (\nnz(\calL_1)^c)$.
We choose
\[
  \ff = \partial_2^\top \xx_1
\]
We claim that $\ff$ solves \lsa~ $(\partial_2, \dd, \delta)$.
Since $\partial_1 \partial_2 = \bf{0}$, we have 
$
\calL_1^{\dagger} = (\partial_1^\top \partial_1)^{\dagger} + (\partial_2 \partial_2^\top)^{\dagger}
$
and
$
\PPi_{\partial_2}  (\partial_1^\top \partial_1)^{\dagger} \dd = \veczero.
$
Then, 
\begin{align*}
\norm{\partial_2 \ff - \PPi_{\partial_2} \dd}_2
  & = \norm{\xx_1 - (\partial_2 \partial_2^\top )^{\dagger} \dd}_{(\partial_2\partial_2^\top)^2}
  \\
  & \leq \sigma_{\max}(\partial_2 ) \norm{\xx_1
    - (\partial_2 \partial_2^\top )^{\dagger}
    \dd}_{\partial_2\partial_2^\top}
      \\
  & = \sigma_{\max}(\partial_2 ) \norm{\PPi_{\partial_2} (\xx_1 
    - (\partial_2 \partial_2^\top )^{\dagger} \dd)}_{\partial_2\partial_2^\top}
      \\
  & = \sigma_{\max}(\partial_2 ) \norm{\PPi_{\partial_2} (\xx_1 
    - \calL_1^{\dagger} \dd)}_{\partial_2\partial_2^\top}
      \\
  & = \sigma_{\max}(\partial_2 ) \norm{\xx_1 
    - \calL_1^{\dagger} \dd}_{\partial_2\partial_2^\top}
  \\
  & \leq \sigma_{\max}(\partial_2 ) \norm{\xx_1 
    - \calL_1^{\dagger} \dd}_{\calL_1}
    \\
  & \leq \frac{\sigma_{\max}(\partial_2 )}
    {\sigma_{\min}(\calL_1)^{1/2}}
    \norm{\calL_1 \xx_1 - \PPi_{\calL_1} \dd}_2
    \\
  & \leq \frac{\sigma_{\max}(\partial_2 )}
    {\sigma_{\min}(\calL_1)^{1/2}}
    \eps \norm{\PPi_{\calL_1} \dd}_2
  \\
  & \leq \frac{\delta}
    {\sigma_{\max}(\partial_2 )}
    \\
  & \leq
    \delta
    \norm{\PPi_{\partial_2} \dd}_2 
    \tag*{by Claim \ref{lm:projection_norm}} 
\end{align*}
\end{proof}


\section{Connections With Interior Point Methods}
\label{sect:appendix_IPM}
We now show that in order to solve a generalized maxflow problem in a 2-complex flow network using an Interior Point Method (IPM), it suffices to be able to apply the pseudo-inverse of $\partial_2 \WW \partial_2^\top$ for diagonal positive weight matrices $\WW$ (and this problem is essentially equivalent to applying the pseudo-inverse of the combinatorial Laplacian of the complex, c.f. Section \ref{sect:boundary_and_combinatorial}).
We sketch how the these pseudo-inverse problems arise when solving a generalized maxflow using IPM, which is motivated by \cite{Mad16}. For the more curious readers, we recommend the book \cite{Ren01} for a complete view of general IPM algorithms.

Given a 2-complex flow network $\mathcal{K}$ with $m$ edges and $t$ triangles, a non-negative capacity vector $\cc\in\R^t_{\geq 0}$, and a demand vector $\gamma\in\R^m$ such that $\gamma\in\im(\partial_2)$. The $\gamma$-maxflow problem is formulated by the following linear programming:
\begin{equation}
	\begin{aligned}
		\label{eq:LP_maxflow_}
		\max_{F,\ff} &~~ F\\
		\text{s.t.} &~~ \partial_2\ff=F\gamma\\
		&~~ -\cc\leq \ff\leq \cc
	\end{aligned}
\end{equation}
The $\gamma$-maxflow in 2-complex flow networks is a generalization of $s$-$t$ maxflow in graphs. The first constraint encodes the conservation of flows for edges in $\mathcal{K}$. And the second constraint forces the flow on triangles to satisfy the capacity constraints.

We call $F$ the flow value of $\ff$ when $\partial_2\ff=F\gamma$. We assume that the optimal flow value $F^*$ is known by IPM algorithms, which can be estimated by the binary search.

The main idea of IPM is to get rid of inequality constraints by using barrier functions, and then apply Newton's method to a sequence of equality constrained problems.
The most widely used barrier function is logarithmic barrier function, which in the $\gamma$-maxflow problem gives
\[V(\ff)=\sum_{\Delta\in [t]}-\log(\cc(\Delta)-\ff(\Delta))-\log(\cc(\Delta)+\ff(\Delta)).\]
Then for a given $0\leq \alpha<1$, we define the following Barrier Problem:
\begin{equation}
	\begin{aligned}
		\label{eq:LP_maxflow_barrier}
		\min_{\ff} &~~ V(\ff)\\
		\text{s.t.} &~~ \partial_2\ff=\alpha F^*\gamma
	\end{aligned}
\end{equation}
We start with zero flow, i.e., $\alpha_0=0$, and then increase $\alpha_{i+1}=\alpha_i+\alpha'$ gradually in each iteration to make progress. 
Given a small enough $\alpha'$, each iteration is composed of a \textit{progress step} and a \textit{centering step}. 

\paragraph{Progress step}
we first take a progress step by making a Newton step to Problem \eqref{eq:LP_maxflow_barrier} at the current point $\ff$, while increasing the flow value by $\alpha'$, which gives
\begin{equation}
	\begin{aligned}
		\label{eq:LP_maxflow_barrier_newton1}
		\min_{\boldsymbol{\delta}} &~~ \gg^\top(\ff)\boldsymbol{\delta}+\frac{1}{2}\boldsymbol{\delta}^\top\HH(\ff)\boldsymbol{\delta}\\\
		\text{s.t.} &~~ 
		\partial_2\boldsymbol{\delta}=\alpha' F^*\gamma
	\end{aligned}
\end{equation}
where $\gg(\ff)$ and $\HH(\ff)$ are the gradient and Hessian of $V$ at the current point $\ff$, respectively.

Problem \eqref{eq:LP_maxflow_barrier_newton1} has the Lagrangian
\[\mathcal{L}(\boldsymbol{\delta},\xx)=\gg^\top(\ff)\boldsymbol{\delta}+\frac{1}{2}\boldsymbol{\delta}^\top\HH(\ff)\boldsymbol{\delta}+\xx^\top(\alpha' F^*\gamma-\partial_2\boldsymbol{\delta}).\]
Using optimality condition, we have
\[\nabla_{\boldsymbol{\delta}}\mathcal{L}(\boldsymbol{\delta},\xx)=\gg(\ff)+\HH(\ff)\boldsymbol{\delta}-\partial_2^\top\xx=\mathbf{0},\]
which gives,
\begin{equation*}
	\label{eq:delta}
	\boldsymbol{\delta}=\HH^{-1}(\ff)(\partial_2^\top\xx-\gg(\ff))
\end{equation*}
Multiplying $\partial_2$ in both sides and using the constraint $\partial_2\boldsymbol{\delta}=\alpha' F^*\gamma$, we obtain
\begin{equation*}
	\label{eq:x}
	\partial_2\HH^{-1}(\ff)\partial_2^\top\xx=\partial_2\HH^{-1}(\ff)\gg(\ff)+\alpha' F^*\gamma.
\end{equation*}
Thus, we have shown that it suffices to apply the pseudo-inverse of $\partial_2\HH^{-1}(\ff)\partial_2^\top$ to solve $\xx$ and $\boldsymbol{\delta}$:
\[\xx=\left(\partial_2\HH^{-1}(\ff)\partial_2^\top\right)^\dagger(\partial_2\HH^{-1}(\ff)\gg(\ff)+\alpha' F^*\gamma),\]
\[\boldsymbol{\delta}=\HH^{-1}(\ff)\partial_2^\top\left(\partial_2\HH^{-1}(\ff)\partial_2^\top\right)^\dagger\left(\partial_2\HH^{-1}(\ff)\gg(\ff)+\alpha' F^*\gamma\right)-\HH^{-1}(\ff)\gg(\ff).\]

\paragraph{Centering step}
We then take a centering step by making a Newton step to Problem \eqref{eq:LP_maxflow_barrier} at the updated point of $\tilde{\ff}\defeq \ff+\boldsymbol{\delta}$ without increasing the flow value, which gives
\begin{equation}
	\begin{aligned}
		\label{eq:LP_maxflow_barrier_newton2}
		\min_{\tilde{\boldsymbol{\delta}}} &~~ \gg^\top(\tilde{\ff})\tilde{\boldsymbol{\delta}}+\frac{1}{2}\tilde{\boldsymbol{\delta}}^\top\HH(\tilde{\ff})\tilde{\boldsymbol{\delta}}\\\
		\text{s.t.} &~~ \partial_2\tilde{\boldsymbol{\delta}}=\mathbf{0}
	\end{aligned}
\end{equation}
Similar to the progress step, it suffices to apply the pseudo-inverse of $\partial_2\HH^{-1}(\fftil)\partial_2^\top$ to solve $\tilde{\boldsymbol{\delta}}$:
\[\tilde{\boldsymbol{\delta}}=\left(\HH^{-1}(\fftil)\partial_2^\top\left(\partial_2\HH^{-1}(\fftil)\partial_2^\top\right)^\dagger\partial_2-\II\right)\HH^{-1}(\fftil)\gg(\fftil).\]


\end{document}